\DeclareTextSymbolDefault{\textquotedbl}{T1}
\providecommand{\tabularnewline}{\\}
\numberwithin{equation}{section}
\numberwithin{figure}{section}
\theoremstyle{plain}
\newtheorem{thm}{\protect\theoremname}
\theoremstyle{remark}
\newtheorem{rem}[thm]{\protect\remarkname}
\theoremstyle{definition}
\newtheorem{defn}[thm]{\protect\definitionname}
\theoremstyle{plain}
\newtheorem{prop}[thm]{\protect\propositionname}
\theoremstyle{plain}
\newtheorem{cor}[thm]{\protect\corollaryname}
\theoremstyle{plain}
\newtheorem{lem}[thm]{\protect\lemmaname}
\newtheorem{assumption}{Assumption}
\providecommand{\corollaryname}{Corollary}
\providecommand{\definitionname}{Definition}
\providecommand{\lemmaname}{Lemma}
\providecommand{\propositionname}{Proposition}
\providecommand{\remarkname}{Remark}
\providecommand{\theoremname}{Theorem}
\begin{document}
\title{Analytic theory of coupled-cavity traveling wave tubes}
\author{Alexander Figotin}
\address{University of California at Irvine, CA 92967}
\begin{abstract}
Coupled-cavity traveling wave tube (CCTWT) is a high power microwave
(HPM) vacuum electronic device used to amplify radio-frequency (RF)
signals. CCTWTS have numerous applications, including radar, radio
navigation, space communication, television, radio repeaters, and
charged particle accelerators. The microwave-generating interactions
in CCTWTs take place mostly in coupled resonant cavities positioned
periodically along the electron beam axis. Operational features of
a CCTWT particularly the amplification mechanism are similar to those
of a multicavity klystron (MCK). We advance here a Lagrangian field
theory of CCTWTs with the space being represented by one-dimensional
continuum. The theory integrates into it the space-charge effects
including the so-called debunching (electron-to-electron repulsion).
The corresponding Euler-Lagrange field equations are ODEs with coefficients
varying periodically in the space. Utilizing the system periodicity
we develop the instrumental features of the Floquet theory including
the monodromy matrix and its Floquet multipliers. We use them to derive
closed form expressions for a number of physically significant quantities.
Those include in particular the dispersion relations and the frequency
dependent gain foundational to the RF signal amplification. Serpentine
(folded, corrugated) traveling wave tubes are very similar to CCTWTs
and our theory applies to them also. 
\end{abstract}

\keywords{Coupled-cavity traveling wave tubes, serpentine (folded, corrugated)
traveling wave tubes, instability, amplification, gain.}
\maketitle

\section{Introduction\label{sec:intcctwt}}

We start with the general principles of the microwave radiation generation
and the amplification of RF signals, \cite[4]{Shev}:
\begin{quotation}
``ANY generating or amplifying device converts d.c. energy into high-frequency
electric field energy, and this conversion is effected by means of
an electron beam. All energy exchanges between the electron beam and
the alternating electric field are a result of acceleration or retardation
of the electrons. The kinetic energy of the electrons is converted
into electromagnetic energy, and vice versa. Therefore, although the
mechanisms of various devices are different, in each of them power
is transferred from the constant voltage source to the alternating
electromagnetic field. This is brought about in the oscillatory system
by means of a density-modulated electron beam in which electrons are
accelerated in the constant electric field, and retarded in the alternating
electric field. Density modulation of the electron beam makes it possible
to retard a greater number of electrons than are accelerated by the
same alternating field, thus producing the transfer of energy.''
\end{quotation}
The last sentence in the above quote underlines the critical role
played by the density modulation of the electron beam (known also
as ``electron bunching'') in the energy transfer from the electron
beam to the electromagnetic (EM) radiation.

A coupled-cavity traveling wave tube (CCTWT) shown schematically in
Fig. \ref{fig:cctwt} is the primary subject we pursue here. The CCTWT
is special type of traveling wave tube (TWT) that utilizes coupled-cavity
structure (CCS) as a slow-wave structure (SWS), \cite[15]{Gilm1},
\cite[4]{MAEAD}. The CCS commonly is a periodic linear chain of several
tens of cavities coupled by coupling holes or slots and a beam tunnel,
\cite[8.7.5]{Tsim}. The cavities can be similar to those in klystrons.
As to the physical implementation cavities are often constructed of
sections of a slow-wave structure that are made resonant by suitable
terminations. The quality factor of each cavity is required to be
sufficiently high so that the RF field distribution in each separate
cavity is substantially unaffected by the interaction with the beam,
\cite{ChoWes}.
\begin{figure}[h]
\centering{}\hspace{1.3cm}\includegraphics[bb=0bp 120bp 1280bp 620bp,clip,scale=0.4]{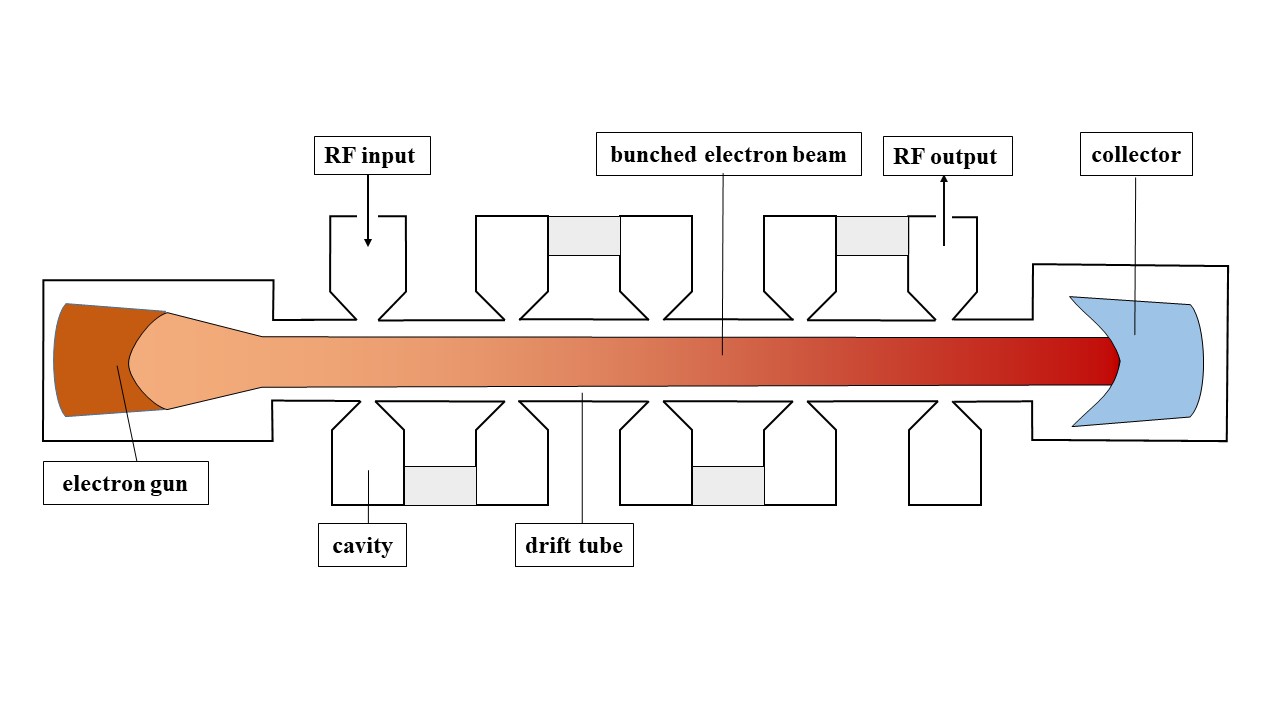}\caption{\label{fig:cctwt} Schematic representation of a coupled-cavity traveling
wave tube (CCTWT) composed of a periodic array of coupled cavities
(often of toroidal shape) interacting with the pencil-like electron
beam. The interaction causes the electron bunching and consequent
amplification of the RF signal.}
\end{figure}

By its very design the CCS is mechanically and thermally more robust
than a helix, which is often used as the SWS, allowing much greater
average power, especially in the short-wave bands of microwave range,
\cite{ChoCra}, \cite{GraParArm}, \cite[8.3, 8.7.5]{Tsim}. Serpentine
(folded, corrugated) TWTs are based on the corresponding waveguides
that permit electron interaction below the velocity of light, \cite[15.1, 15.2]{Gilm1}.
They are very similar to CCTWTs and the results of our studies apply
also to serpentine (folded, corrugated) TWTs.

A distinct and important feature of a CCTWT is that the interaction
between the e-beam and the \emph{coupled-cavity structure} (CCS),
particularly the electron velocity modulation by high-frequency EM
field, is limited mostly to the EM cavities positioned periodically
along the e-beam axis. The cavity properties that are significant
for an effective interaction with the e-beam are as follows, \cite[2]{Shev}: 
\begin{quotation}
``In order to be used in an electron tube, a cavity resonator must
have a region with a relatively strong high-frequency field which
is polarized along the direction of electron flow. This region should,
in the majority of cases, be so small that the electron transit time
is less than the period of change of the field. Hollow toroidal resonators
satisfy these conditions. Toroidal resonators consist of cylinders
with a very prominent \textquotedbl bulge\textquotedbl{} in the middle.''
\end{quotation}
The region in the above quote is commonly referred to as the \emph{cavity
gap} or just \emph{gap}, and it is there the electron velocity is
modulated leading to the electron bunching and consequent RF signal
amplification. If $\mathring{v}$ is the stationary (dc) velocity
of the electron flow and $l_{\mathrm{g}}$ is the length of the cavity
gap then the mentioned condition of smallness of $l_{\mathrm{g}}$
and the electron transit time $\tau_{g}=\frac{l_{\mathrm{g}}}{\mathring{v}}$
can be written as
\begin{equation}
l_{\mathrm{g}}<\frac{2\pi\mathring{v}}{\omega},\label{eq:transL1a}
\end{equation}
It is a common assumption for one-dimensional models for charge-waves
that the e-beam is not dense in the sense that the operational frequency
$\omega$ satisfies, \cite[p. 277]{Tsim}:
\begin{equation}
\omega_{\mathrm{p}}\ll\omega,\label{eq:transL1b}
\end{equation}
where $\omega_{\mathrm{p}}$ is the relevant plasma frequency. Then
combing inequalities (\ref{eq:transL1b}) and (\ref{eq:transL1a})
we obtain the following upper bound on the gap length:
\begin{equation}
l_{\mathrm{g}}<\frac{2\pi\mathring{v}}{\omega}\ll\lambda_{\mathrm{p}}=\frac{2\pi\mathring{v}}{\omega_{\mathrm{p}}}.\label{eq:transL1c}
\end{equation}
\emph{Idealized theories including the one we advance here assume
that the narrow cavity gaps are just of zero width} \emph{corresponding
to zero transit time} \emph{of the electron, }\cite[II.5]{Shev},
\cite[III.3]{Werne}. That is we make the following simplifying assumption:
\begin{equation}
l_{\mathrm{g}}=0,\quad\tau_{g}=\frac{l_{\mathrm{g}}}{\mathring{v}}=0.\label{eq:transL1d}
\end{equation}

The primary subject of our studies here is the construction of one-dimensional
Lagrangian field theory of a coupled-cavity traveling wave tube (CCTWT)
a schematic sketch of which is shown in Fig. \ref{fig:cctwt} (compare
it with Fig. \ref{fig:mck} with a schematic sketch of a \emph{multicavity
klystron (MCK)}). This theory integrates into it (i) our one-dimensional
Lagrangian field theory for TWTs introduced and studied in \cite[4, 24]{FigTWTbk}
and reviewed in Section \ref{sec:twtmod}; (ii) one-dimensional Lagrangian
field theory for multicavity klystron we developed in \cite{FigKly}
in reviewed in Section \ref{sec:mckrev}. The theory takes into account
the space-charge effects, and it applies also to serpentine (folded,
corrugated) traveling wave tubes.

This paper is organized as follows. In Section \ref{sec:twtmod},
we review concisely the one-dimensional Lagrangian field theory for
TWTs introduced and studied in \cite[4, 24]{FigTWTbk}. In Section
\ref{sec:cctwtmod}, we construct the Lagrangian of the CCTWT, derive
the corresponding Euler-Lagrange equations and introduce the CCTWT
constitutive subsystems: coupled cavity structure and the e-beam.
In Section \ref{sec:cctwtsol}, we use the Floquet theory to study
solutions to the Euler-Lagrange equations. In particular we construct
the monodromy matrix. In Section \ref{sec:floqmul}, we analyze the
Floquet multiplies which are solution to the characteristic equations.
In Section \ref{sec:disprel} we construct the dispersion relations
and study their properties. In Section \ref{sec:cctwtgain}, we derive
expressions for the frequency dependent gain associated with the CCTWT
eigenmodes. In Section \ref{sec:mckrev}, we review concisely the
one-dimensional Lagrangian field theory for multicavity klystrons
developed in \cite{FigKly} that allows to see some of its features
in the CCTWT. In Section \ref{sec:ccs}, we study couple-cavity structure
when it is not coupled to the e-beam. That is allows us to see some
of its features in the properties of the CCTWT. In a numbers of appendices
we review for the reader's convenience a number of mathematical and
physical subjects relevant to the analysis of the CCTWT. The kinetic
and field points of view on the gap interaction is considered in Section
\ref{sec:fitkit}. The Lagrangian variational framework of our analytical
theory is developed in Section \ref{sec:lagvar}. In Section \ref{sec:degpol4},
we consider special polynomials of the forth degree and their root
degeneracies that are useful for our studies of the CCTWT exceptional
points of degeneracy. In a number of Appendices we review some mathematical
and physical subject relevant to our studies.

While quoting monographs, we identify the relevant sections as follows.
Reference {[}X,Y{]} refers to Section/Chapter ``Y'' of monograph
(article) ``X'', whereas {[}X, p. Y{]} refers to page ``Y'' of
monograph (article) ``X''. For instance, reference {[}2, VI.3{]}
refers to monograph {[}2{]}, Section VI.3; reference {[}2, p. 131{]}
refers to page 131 of monograph {[}2{]}.

\section{Sketch of the analytic model of the traveling wave tube\label{sec:twtmod}}

When constructing the CCTWT Lagrangian, we use the elements of the
analytic theory of TWTs developed in \cite[4, 24]{FigTWTbk}. The
purpose of this section is to introduce those elements as well as
the relevant variables of the analytic model of TWT. TWT converts
the energy of the electron beam (e-beam) into the EM energy of the
amplified RF signal. A schematic sketch of typical TWT is shown in
Fig. \ref{fig:TWT1}. To facilitate the energy conversion and signal
amplification, the e-beam is enclosed in the so-called \emph{slow
wave structure} (SWS), that supports waves that are slow enough to
effectively interact with the electron flow. As a result of this interaction,
the kinetic energy of electrons is converted into the EM energy stored
in the field, \cite{Gilm1}, \cite{Tsim}, \cite[2.2]{Nusi}, \cite[4]{SchaB}.
Consequently, the \emph{key operational principle of a TWT is a positive
feedback interaction between the slow-wave structure and the flow
of electrons}. The physical mechanism of the radiation generation
and its amplification is the electron bunching caused by the acceleration
and deceleration of electrons along the e-beam (see quotes in Section
\ref{sec:intcctwt}).
\begin{figure}[h]
\centering{}\includegraphics[scale=0.5]{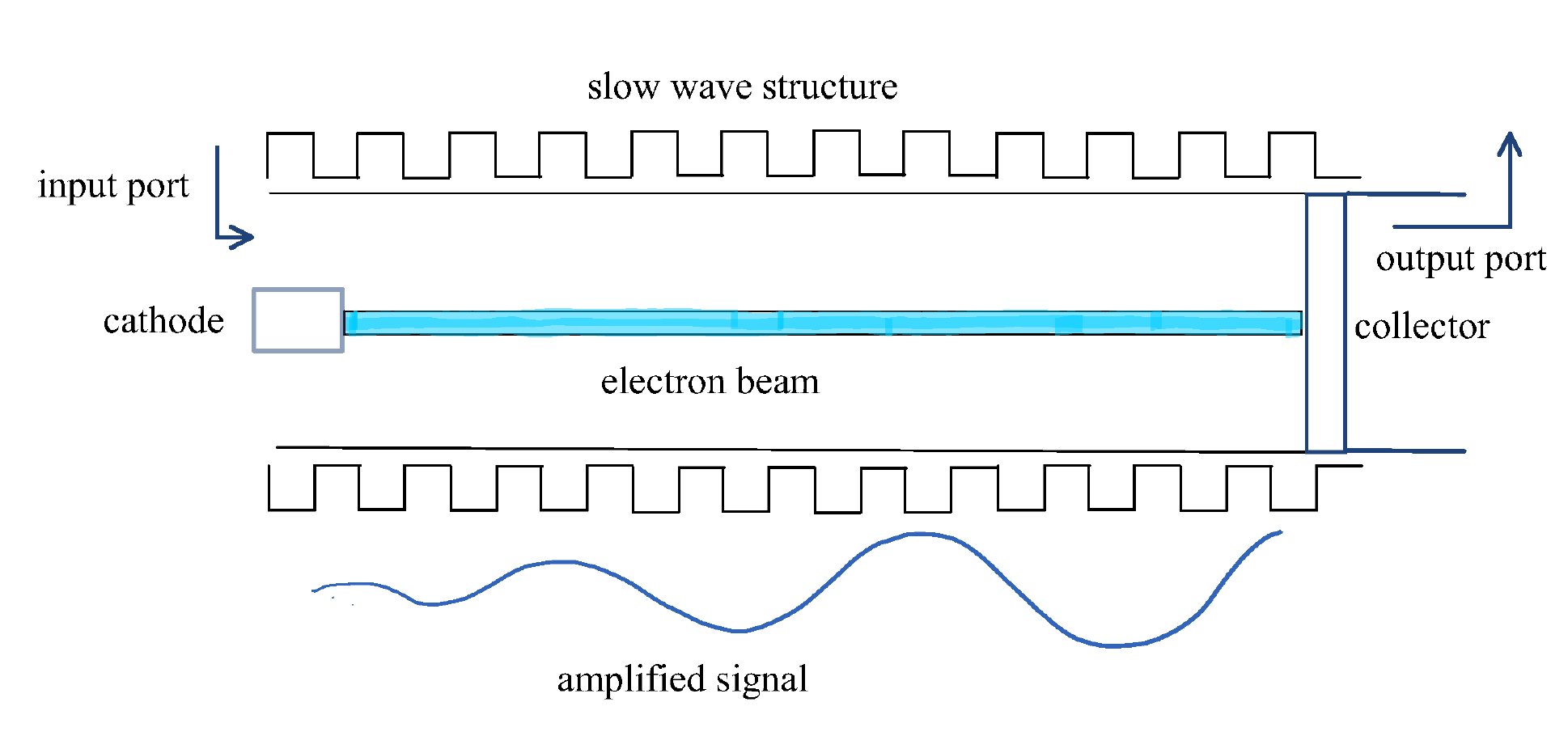}\caption{\label{fig:TWT1} The upper picture is a schematic representation
of a traveling wave tube. The lower picture shows an RF perturbation
in the form of a space-charge wave which is amplified exponentially
as it propagates through the traveling wave tube.}
\end{figure}

A typical TWT consists of a vacuum tube containing the e-beam that
passes down the middle of an SWS such as an RF circuit. It operates
as follows. The left end of the RF circuit is fed with a low-powered
RF signal to be amplified. The SWS electromagnetic field acts upon
the e-beam causing electron bunching and the formation of the so-called
\emph{space-space-charge wave}. In turn, the electromagnetic field
generated by the space-charge wave induces more current back into
the RF circuit with a consequent enhancement of electron bunching.
As a result, the EM field is amplified as the RF signal passes down
the structure until a saturation regime is reached and a large RF
signal is collected at the output. The role of the SWS is to provide
slow-wave modes to match up with the velocity of the electrons in
the e-beam. This velocity is usually a small fraction of the speed
of light. Importantly, synchronism is required for effective in-phase
interaction between the SWS and the e-beam with optimal extraction
of the kinetic energy of the electrons. A typical simple SWS is the
helix, which reduces the speed of propagation according to its pitch.
The TWT is designed so that the RF signal travels along the tube at
nearly the same speed as electrons in the e-beam to facilitate effective
coupling. Technical details on the designs and operation of TWTs can
be found in \cite{Gilm1}, \cite[4]{Nusi} \cite{PierTWT}, \cite{Tsim}.
As for a rich and interesting history of traveling wave tubes, we
refer the reader to \cite{MAEAD} and references therein.

An effective mathematical model for a TWT interacting with the e-beam
was introduced by Pierce \cite[I]{Pier51}, \cite{PierTWT}. The Pierce
model is one-dimensional; it accounts for the wave amplification,
energy extraction from the e-beam and its conversion into microwave
radiation in the TWT \cite{Gilm1}, \cite{Gilm}, \cite[4]{Nusi},
\cite[4]{SchaB}, \cite{Tsim}. This model captures remarkably well
significant features of the wave amplification and the beam-wave energy
transfer, and is still used for basic design estimates. In our paper
\cite{FigRey1}, we have constructed a Lagrangian field theory by
generalizing and extending the Pierce theory to the case of a possibly
inhomogeneous MTL coupled to the e-beam. This work was extended to
an analytic theory of multi-stream electron beams in traveling wave
tubes in \cite{FigTWTbk}. We concisely review here this theory. According
to the simplest version of the theory an ideal TWT is represented
by a single-stream electron beam (e-beam) interacting with a single
transmission line (TL) just as in the Pierce model \cite[I]{Pier51}.
The main parameter describing the single-stream e-beam is e-beam intensity
\begin{equation}
\beta=\frac{\sigma_{\mathrm{B}}}{4\pi}R_{\mathrm{sc}}^{2}\omega_{\mathrm{p}}^{2}=\frac{e^{2}}{m}R_{\mathrm{sc}}^{2}\sigma_{\mathrm{B}}\mathring{n},\quad\omega_{\mathrm{p}}^{2}=\frac{4\pi\mathring{n}e^{2}}{m},\label{eq:T1B1betas1a}
\end{equation}
where $-e$ is the electron charge with $e>0$, $m$ is the electron
mass, $\omega_{\mathrm{p}}$ is the e-beam plasma frequency, $\sigma_{\mathrm{B}}$
is the area of the cross-section of the e-beam, s $\mathring{v}>0$
is stationary velocity of electrons in the e-beam and $\mathring{n}$
is the density of the number of electrons. The constant $R_{\mathrm{sc}}$
is the\emph{ }plasma frequency reduction factor that accounts phenomenologically
for finite dimensions of the e-beam cylinder as well as geometric
features of the slow-wave structure, \cite{BraMih}, \cite[9.2]{Gilm1},
\cite[3.3.3]{Nusi}. The frequency
\begin{equation}
\omega_{\mathrm{rp}}=R_{\mathrm{sc}}\omega_{\mathrm{p}}\label{eq:omred1a}
\end{equation}
is known as reduced plasma frequency, \cite[9.2]{Gilm1}.

Assume the Gaussian system of units of the physical dimensions of
a complete set of the e-beam parameters, as in Tables \ref{tab:e-beam-units}
and \ref{tab:ebeam-par}.
\begin{table}[h]
\centering{}%
\begin{tabular}{|l||l||l|}
\hline 
\noalign{\vskip\doublerulesep}
Frequency & Plasma frequency & $\omega_{\mathrm{p}}=\sqrt{\frac{4\pi\mathring{n}e^{2}}{m}}$\tabularnewline[0.2cm]
\hline 
\hline 
\noalign{\vskip\doublerulesep}
Velocity & e-beam velocity & $\mathring{v}$\tabularnewline[0.2cm]
\hline 
\hline 
\noalign{\vskip\doublerulesep}
Wavenumber &  & $k_{\mathrm{q}}=\frac{\omega_{\mathrm{rp}}}{\mathring{v}}=\frac{R_{\mathrm{sc}}\omega_{\mathrm{p}}}{\mathring{v}}$\tabularnewline[0.2cm]
\hline 
\hline 
\noalign{\vskip\doublerulesep}
Length & Wavelength for $k_{\mathrm{q}}$ & $\lambda_{\mathrm{rp}}=\frac{2\pi\mathring{v}}{\omega_{\mathrm{rp}}},\:\omega_{\mathrm{rp}}=R_{\mathrm{sc}}\omega_{\mathrm{p}}$\tabularnewline[0.2cm]
\hline 
\noalign{\vskip\doublerulesep}
Time & Wave time period & $\mathring{\tau}=\frac{2\pi}{\omega_{\mathrm{p}}}$\tabularnewline[0.2cm]
\hline 
\end{tabular}\vspace{0.3cm}
\caption{\label{tab:e-beam-units} Natural units relevant to the e-beam.}
\end{table}
\begin{table}
\centering{}%
\begin{tabular}{|r||r||r|}
\hline 
\noalign{\vskip\doublerulesep}
$i$ & current & $\frac{\left[\text{charge}\right]}{\left[\text{time}\right]}$\tabularnewline[0.2cm]
\hline 
\noalign{\vskip\doublerulesep}
$q$ & charge & $\left[\text{charge}\right]$\tabularnewline[0.2cm]
\hline 
\noalign{\vskip\doublerulesep}
$\mathring{n}$ & number of electrons p/u of volume & $\frac{\left[\text{1}\right]}{\left[\text{length}\right]^{3}}$\tabularnewline[0.2cm]
\hline 
\noalign{\vskip\doublerulesep}
$\lambda_{\mathrm{rp}}=\frac{2\pi\mathring{v}}{\omega_{\mathrm{rp}}},\:\omega_{\mathrm{rp}}=R_{\mathrm{sc}}\omega_{\mathrm{p}}$ & the electron plasma wavelength & $\left[\text{length}\right]$\tabularnewline[0.2cm]
\hline 
\noalign{\vskip\doublerulesep}
$g_{\mathrm{B}}=\frac{\sigma_{\mathrm{B}}}{4\lambda_{\mathrm{rp}}}$ & the e-beam spatial scale & $\left[\text{length}\right]$\tabularnewline[0.2cm]
\hline 
\noalign{\vskip\doublerulesep}
$\beta=\frac{\sigma_{\mathrm{B}}}{4\pi}R_{\mathrm{sc}}^{2}\omega_{\mathrm{p}}^{2}=\frac{e^{2}}{m}R_{\mathrm{sc}}^{2}\sigma_{\mathrm{B}}\mathring{n}$ & e-beam intensity & $\frac{\left[\text{length}\right]^{2}}{\left[\text{time}\right]^{2}}$\tabularnewline[0.2cm]
\hline 
\noalign{\vskip\doublerulesep}
$\beta^{\prime}=\frac{\beta}{\mathring{v}^{2}}=\frac{\pi\sigma_{\mathrm{B}}}{\lambda_{\mathrm{rp}}^{2}}=\frac{4\pi g_{\mathrm{B}}}{\lambda_{\mathrm{rp}}}$ & dimensionless e-beam intensity & $\left[\text{dim-less}\right]$\tabularnewline[0.2cm]
\hline 
\end{tabular}\vspace{0.3cm}
\caption{\label{tab:ebeam-par}Physical dimensions of the e-beam parameters.
Abbreviations: dimensionless \textendash{} dim-less, p/u \textendash{}
per unit.}
\end{table}

We would like to point to an important spatial scale related to the
e-beam, namely
\begin{equation}
\lambda_{\mathrm{rp}}=\frac{2\pi\mathring{v}}{R_{\mathrm{sc}}\omega_{\mathrm{p}}},\quad\omega_{\mathrm{rp}}=R_{\mathrm{sc}}\omega_{\mathrm{p}},\label{eq:aBvom1a}
\end{equation}
which is the distance passed by an electron for the time period $\frac{2\pi}{\omega_{\mathrm{rp}}}$
associated with the plasma oscillations at the reduced plasma frequency
$\omega_{\mathrm{rp}}$. This scale is well known in the theory of
klystrons and is referred to as \emph{the electron plasma wavelength},
\cite[9.2]{Gilm1}. Another spatial scale related to the e-beam that
arises in our analysis later on is
\begin{equation}
g_{\mathrm{B}}=\frac{\sigma_{\mathrm{B}}}{4\lambda_{\mathrm{rp}}},\label{eq:Bvom1ba}
\end{equation}
and we will refer to it as\emph{ e-beam spatial scale}. Using these
spatial scales we obtain the following representation for the dimensionless
form $\beta^{\prime}$ of the e-beam intensity
\begin{equation}
\beta^{\prime}=\frac{\beta}{\mathring{v}^{2}}=\frac{\pi\sigma_{\mathrm{B}}}{\lambda_{\mathrm{rp}}^{2}}=\frac{4\pi g_{\mathrm{B}}}{\lambda_{\mathrm{rp}}}.\label{eq:Bvom1b}
\end{equation}

As for the single transmission line, its shunt capacitance per unit
of length is a real number $C>0$ and its inductance per unit of length
is another real number $L>0$. The coupling constant $0<b\leq1$ is
also a number, see \cite[3]{FigTWTbk} for more details. The TL single
characteristic velocity $w$ and the single \emph{TL principal coefficient}
$\theta$ are defined by
\begin{equation}
w=\frac{1}{\sqrt{CL}},\quad\theta=\frac{b^{2}}{C}.\label{eq:T1B1betas1b}
\end{equation}
Following \cite[3]{FigTWTbk}, we assume that
\begin{equation}
0<\mathring{v}<w.\label{eq:T1B1betas1c}
\end{equation}

\subsection{TWT Lagrangian and evolution equations\label{subsec:twtlagev}}

Following the developments in \cite{FigTWTbk}, we introduce the \emph{TWT
principal parameter} $\bar{\gamma}=\theta\beta$. This parameter in
view of equations (\ref{eq:T1B1betas1a}) and (\ref{eq:T1B1betas1b})
can be represented as follows
\begin{equation}
\gamma=\theta\beta=\frac{b^{2}}{C}\frac{\sigma_{\mathrm{B}}}{4\pi}R_{\mathrm{sc}}^{2}\omega_{\mathrm{p}}^{2}=\frac{b^{2}}{C}\frac{e^{2}}{m}R_{\mathrm{sc}}^{2}\sigma_{\mathrm{B}}\mathring{n},\quad\theta=\frac{b^{2}}{C},\quad\beta=\frac{e^{2}}{m}R_{\mathrm{sc}}^{2}\sigma_{\mathrm{B}}\mathring{n}.\label{eq:T1B1be1a}
\end{equation}
The TWT Lagrangian $\mathcal{L}{}_{\mathrm{TB}}$ in the simplest
case of a single transmission line and one stream e-beam is of the
following form, \cite[4, 24]{FigTWTbk}:
\begin{gather}
\mathcal{L}\left(\left\{ Q\right\} ,\left\{ q\right\} \right)=\mathcal{L}_{\mathrm{Tb}}\left(\left\{ Q\right\} ,\left\{ q\right\} \right)+\mathcal{L}_{\mathrm{B}}\left(\left\{ q\right\} \right),\label{eq:T1B1be1b}\\
\mathcal{L}_{\mathrm{Tb}}=\frac{L}{2}\left(\partial_{t}Q\right)^{2}-\frac{1}{2C}\left(\partial_{z}Q+b\partial_{z}q\right)^{2},\;\mathcal{L}_{\mathrm{B}}=\frac{1}{2\beta}\left(\partial_{t}q+\mathring{v}\partial_{z}q\right)^{2}-\frac{2\pi}{\sigma_{\mathrm{B}}}q^{2},\nonumber 
\end{gather}
where where $b>0$ is a coupling coefficient and
\begin{gather}
\left\{ Q\right\} =Q,\partial_{z}Q,\partial_{t}Q,\quad Q=Q\left(z,t\right);\quad\left\{ q\right\} =q,\partial_{z}q,\partial_{t}q,\quad q=q\left(z,t\right),\label{eq:T1B1be1c}
\end{gather}
where $q\left(z,t\right)$ and $Q\left(z,t\right)$\emph{ are charges}
associated respectively with the e-beam and the TL. The charges defined
as time integrals of the corresponding e-beam currents $i(z,t)$ and
TL current $I(z,t)$, that is
\begin{equation}
q(z,t)=\int^{t}q(z,t^{\prime})\,\mathrm{d}t^{\prime},\quad.Q(z,t)=\int^{t}I(z,t^{\prime})\,\mathrm{d}t^{\prime}.\label{eq:T1B1be1d}
\end{equation}
Note that the term $-\frac{2\pi}{\sigma_{\mathrm{B}}}q^{2}$ in the
Lagrangian $\mathcal{L}_{\mathrm{B}}$ defined in equations (\ref{eq:T1B1be1b})
represents space-charge effects including the so-called debunching
(electron-to-electron repulsion). The corresponding Euler-Lagrange
equations is represented by the following system of second-order differential
equations
\begin{gather}
L\partial_{t}^{2}Q-\partial_{z}\left[C^{-1}\left(\partial_{z}Q+b\partial_{z}q\right)\right]=0,\label{eq:T1B1be1e}\\
\frac{1}{\beta}\left(\partial_{t}+\mathring{v}\partial_{z}\right)^{2}q+\frac{4\pi}{\sigma_{\mathrm{B}}}q-b\partial_{z}\left[C^{-1}\left(\partial_{z}Q+b\partial_{z}q\right)\right]=0,\label{eq:T1B1be1f}
\end{gather}
where $\mathring{v}$ is the stationary velocity of electrons in the
e-beam, $\sigma_{\mathrm{B}}$ is the area of the cross-section of
the e-beam and $\beta$ is the e-beam intensity defined by equations
(\ref{eq:T1B1be1a}).

\subsection{Space-charge wave velocity and electron density fields\label{subsec:cwvel}}

Following to the field theory constructions in \cite[22]{FigTWTbk}
we consider the total electron density $N=\mathring{n}+n$ and $V=\mathring{v}+v$
where $\mathring{n}$ and $\mathring{v}$ are respectively densities
of the electron number and the electron velocity of the stationary
dc electron flow, and $n=n\left(z,t\right)$ and $v=v\left(z,t\right)$
are respectively position and time dependent ac densities of the electron
number and the electron velocity of the space-charge wave field $q=q\left(z,t\right)$
and the electric field $E=E\left(z,t\right)$. To comply with the
linear theory approximation we assume that $n$ and $v$ to be relatively
small:
\begin{equation}
\left|n\right|\ll\mathring{n},\quad\left|v\right|\ll\mathring{v}.\label{eq:nnvvE1a}
\end{equation}
We consider also the ac current density field $j=j(z,t)$ that satisfies
\begin{equation}
j=j(z,t)=-e\left(\mathring{n}v+\mathring{v}n\right),\quad\partial_{t}\left(-en\right)+\partial_{z}j=0.\label{eq:chavar1a}
\end{equation}
Then the following relations between charge density field $q=q\left(z,t\right)$
and fields $n=n\left(z,t\right)$ and $v=v\left(z,t\right)$ hold,
\cite[22.2]{FigTWTbk}:
\begin{equation}
\partial_{z}q=\sigma_{\mathrm{B}}en,\qquad\partial_{t}q=\sigma_{\mathrm{B}}j=J,\label{eq:chavar1aa}
\end{equation}
where $J$ is the e-beam current. The first equation (\ref{eq:chavar1aa})
readily implies
\begin{equation}
n=\frac{1}{\sigma_{\mathrm{B}}e}\partial_{z}q.\label{eq:chavar1ab}
\end{equation}

The relations between charge variable $q=q(z,t)$ defined in Section
\ref{subsec:twtlagev}, the velocity $v=v(z,t)$ and associated with
it current $J_{v}=J_{v}(z,t)$ are as follows, \cite[22.2]{FigTWTbk}:
\begin{equation}
e\sigma_{\mathrm{B}}\mathring{n}v=-D_{t}q,\qquad J_{v}=-e\sigma_{\mathrm{B}}\mathring{n}v=D_{t}q,\quad D_{t}=\partial_{t}+\mathring{v}\partial_{z}\label{eq:charvar1b}
\end{equation}
where $D_{t}$ is the so-called \emph{material time derivative}. The
second equation in (\ref{eq:charvar1b}) implies evidently that current
$J_{v}$ is exactly $D_{t}q$, whereas the first equation in (\ref{eq:charvar1b})
yields the following representations of the velocity $v$
\begin{equation}
v=-\frac{1}{e\sigma_{\mathrm{B}}\mathring{n}}D_{t}q=-\frac{J_{v}}{e\sigma_{\mathrm{B}}\mathring{n}}.\label{eq:chavar1ba}
\end{equation}
The electric field $E=E(z,t)$ associated with the space-charge wave
satisfies the Poisson equation, \cite[22.2]{FigTWTbk}:
\begin{equation}
\partial_{z}E=-\frac{4\pi}{\sigma_{\mathrm{B}}}\partial_{z}q=-4\pi en.\label{eq:charvar1c}
\end{equation}
Under the additional natural assumption ``if there is no charges
there is no electric field'', that is, $\bar{q}=0$ must imply $E=0$,
the above equation yields
\begin{equation}
E=-\frac{4\pi}{\sigma_{\mathrm{B}}}\bar{q.}\label{eq:charvar1d}
\end{equation}
If we introduce the e-beam voltage $V_{\mathrm{b}}=V_{\mathrm{b}}\left(z,t\right)=-\partial_{z}E$
then the Poisson equation (\ref{eq:charvar1c}) can be recast as
\begin{equation}
\partial_{z}^{2}V_{\mathrm{b}}=4\pi en.\label{eq:charvar1f}
\end{equation}

\section{Analytic model of coupled-cavity traveling wave tube\label{sec:cctwtmod}}

When integrating into the mathematical model significant features
of the CCTWT, we make a number of simplifying assumptions. In particular,
we use the following basic assumptions of one-dimensional model of
space-charge waves in velocity-modulated beams, \cite[7.6.1]{Tsim}:
(i) all quantities of interest depend only on a single space variable
$z$; (ii) the electric field has only an $z$-component; (iii) there
are no transverse velocities of electrons; (iv) ac values are small
compared with dc values; (v) electrons have a constant dc velocity
which is much smaller than the speed of light, and (vi) electron beams
are nondense. The list of preliminary assumptions of our ideal model
for the CCTWT is as follows.

\begin{assumption} \label{ass:idealmod}(ideal model of the e-beam
and the TL interaction).
\begin{enumerate}
\item E-beam is represented by a cylinder of infinitesimally small radius
having as its axis the $z$-axis (see Fig. \ref{fig:cctwt}).
\item Coupled-cavity structure (CCS) is represented mathematically by a
periodic array of adjacent segments of a transmission line (TL) of
length $a>0$ connected by cavities at at points $a\ell$, $\ell\in\mathbb{Z}$
by cavities.
\item Every cavity carries shunt capacitance $c_{0}$. The e-beam interacts
with the CCS exclusively through the cavities located at a discrete
set of equidistant points, that is the lattice
\begin{equation}
a\mathbb{Z}:\mathbb{Z}=\left\{ \ldots,-2,-1,0,1,2,\ldots\right\} ,\label{eq:aZep1a}
\end{equation}
where $a>0$ and we refer to this parameter as the \emph{CCS period
or just period. The cavity width $l_{\mathrm{g}}$ and the corresponding
transit time $\tau_{g}$ are assumed to be zero, see equations (\ref{eq:transL1d})
and comments above it.}
\end{enumerate}
\end{assumption}

The CCTWT state is described by charges $q=q\left(z,t\right)$ and
$Q=Q\left(z,t\right)$ for respectively the e-beam and the TL defined
as the time integrals of the corresponding currents
\begin{equation}
Q=Q\left(z,t\right)=\int^{t}I\left(z,t^{\prime}\right)\,\mathrm{d}t^{\prime},\quad q=q\left(z,t\right)=\int^{t}i\left(z,t^{\prime}\right)\,\mathrm{d}t^{\prime}.\label{eq:aZep1b}
\end{equation}
Since according to the formulated above assumptions the interaction
should occur only at the discrete set $a\mathbb{Z}$ (lattice) of
points embedded into one-dimensional continuum of real numbers $\mathbb{R}$
some degree of singularity of functions $Q\left(z,t\right)$ and $q\left(z,t\right)$
is expected. As the analysis shows it is appropriate to impose the
following \emph{jump and continuity conditions} on charge functions
$Q\left(z,t\right)$ and $q\left(z,t\right)$.

\begin{assumption} \label{ass:jumpcon}(jump-continuity of charge
functions).
\begin{enumerate}
\item Functions $Q\left(z,t\right)$ and $q\left(z,t\right)$ and their
time derivatives $\partial_{t}^{j}Q\left(z,t\right)$ and $\partial_{t}^{j}q\left(z,t\right)$
for $j=1,2$ are continuous for all real $t$ and $z$.
\item Derivatives $\partial_{t}^{j}Q\left(z,t\right)$, $\partial_{t}^{j}Q\left(z,t\right)$,
$\partial_{z}^{j}q\left(z,t\right)$, $\partial_{z}^{j}q\left(z,t\right)$
for $j=1,2$, and the mixed derivatives $\partial_{z}\partial_{t}Q\left(z,t\right)=\partial_{t}\partial_{z}Q\left(z,t\right)$,
$\partial_{z}\partial_{t}q\left(z,t\right)=\partial_{t}\partial_{z}q\left(z,t\right)$
exist and continuous for all real real $t$ and $z$ except for the
interaction points in the lattice $a\mathbb{Z}$.
\item Let for a function $F\left(z\right)$ and a real number $b$ symbols
$F$$\left(b-0\right)$ and $F$$\left(b+0\right)$ stand for its
left and right limit at $b$ assuming their existence, that is
\begin{equation}
F\left(b\pm0\right)=\lim_{z\rightarrow b\pm0}F\left(z\right).\label{eq:limpm1a}
\end{equation}
Let us also denote by $\left[F\right]\left(b\right)$ the jump of
function $F\left(z\right)$ at $b$, that is
\begin{equation}
\left[F\right]\left(b\right)=F\left(b+0\right)-F\left(b-0\right).\label{eq:limpm1b}
\end{equation}
The following right and left limits exist
\begin{equation}
\partial_{z}^{j}Q\left(a\ell\pm0,t\right),\quad\partial_{z}^{j}q\left(a\ell\pm0,t\right),\;j=1,2;\;\ell\in\mathbb{Z},\label{eq:limpm1c}
\end{equation}
and these limits are continuously differentiable functions of $t$.
\emph{The values $\partial_{z}Q\left(a\ell\pm0,t\right)$ as well
as $\partial_{z}q\left(a\ell\pm0,t\right)$ can be different and consequently
the jumps $\left[\partial_{z}Q\right]\left(a\ell,t\right)$ and $\left[\partial_{z}q\right]\left(a\ell,t\right)$
can be nonzero}.
\end{enumerate}
\end{assumption}
\begin{rem}[physical significance of jumps]
\emph{ }\label{rem:physjump}Though according to Assumption \ref{ass:idealmod}
we neglect the widths of the EM cavities their interaction with the
electron flow is represented through jumps $\left[\partial_{z}q\right]\left(a\ell,t\right)$
which are of the direct physical significance. Indeed, the field interpretation
of the kinetic properties of the electron flow in Section \ref{subsec:fiekin},
namely equations (\ref{eq:naLvaL1a}), imply
\begin{equation}
\left[\partial_{z}q\right]\left(a\ell,t\right)=\sigma_{\mathrm{B}}e\left[n\right]\left(a\ell,t\right).\label{eq:dzqsn1a}
\end{equation}
Equation (\ref{eq:dzqsn1a}) shows that jump $\left[\partial_{z}q\right]\left(a\ell,t\right)$
up to a multiplicative constant $\frac{1}{\sigma_{\mathrm{B}}e}$
represents jump $\left[n\right]\left(a\ell,t\right)=\frac{\left[\partial_{z}q\right]\left(a\ell,t\right)}{\sigma_{\mathrm{B}}e}$
in the number of electron density \emph{manifesting the electron bunching}
that occurs in the EM cavity centered at $a\ell$. In view of equations
(\ref{eq:naLvaL1b}) we also have $\left[v\right]\left(a\ell,t\right)=-\frac{\mathring{v}\left[\partial_{z}q\right]\left(a\ell,t\right)}{e\sigma_{\mathrm{B}}\mathring{n}}$
manifesting the ac electron velocity modulation in the EM cavity centered
at $a\ell$. 
\end{rem}

The physical dimensions of quantities related to the cavities and
the TL are summarized respectively in Tables \ref{tab:cavity-par}
and \ref{tab:MTL-phys-dim}.
\begin{table}
\centering{}%
\begin{tabular}{|r||r||r|}
\hline 
\noalign{\vskip\doublerulesep}
$I$ & Current & $\frac{\left[\text{charge}\right]}{\left[\text{time}\right]}$\tabularnewline
\hline 
\noalign{\vskip\doublerulesep}
$Q$ & Charge & $\left[\text{charge}\right]$\tabularnewline
\hline 
\noalign{\vskip\doublerulesep}
$c_{0}$ & Cavity capacitance & $\left[\text{length}\right]$\tabularnewline
\hline 
\noalign{\vskip\doublerulesep}
$l_{0}$ & Cavity inductance & $\frac{\left[\text{time}\right]^{2}}{\left[\text{length}\right]}$\tabularnewline
\hline 
\noalign{\vskip\doublerulesep}
$b$ & Coupling parameter & $\left[\text{dim-less}\right]$\tabularnewline
\hline 
\end{tabular}\vspace{0.3cm}
\caption{\label{tab:cavity-par}Physical dimensions of cavity related quantities.
Abbreviations: dimensionless \textendash{} dim-less}
\end{table}
\begin{table}
\centering{}%
\begin{tabular}{|r||r||r|}
\hline 
\noalign{\vskip\doublerulesep}
$I$ & Current & $\frac{\left[\text{charge}\right]}{\left[\text{time}\right]}$\tabularnewline
\hline 
\noalign{\vskip\doublerulesep}
$Q$ & Charge & $\left[\text{charge}\right]$\tabularnewline
\hline 
\noalign{\vskip\doublerulesep}
$C$ & Shunt capacitance p/u of length & $\left[\text{dim-less}\right]$\tabularnewline
\hline 
\noalign{\vskip\doublerulesep}
$L$ & Series inductance p/u of length & $\frac{\left[\text{time}\right]^{2}}{\left[\text{length}\right]^{2}}$\tabularnewline
\hline 
\end{tabular}\vspace{0.3cm}
\caption{Physical dimensions of the TL related quantities. Abbreviations: dimensionless
\textendash{} dim-less, p/u \textendash{} per unit.\label{tab:MTL-phys-dim}}
\end{table}

\subsection{CCTWT Lagrangian and the Euler-Lagrange equations\label{subsec:cctwt-lag}}

To simplify expressions we use the following notations:
\begin{equation}
\left\{ Q\right\} =Q,\:\partial_{z}Q,\:\partial_{t}Q,\quad Q=Q\left(z,t\right);\quad\left\{ q\right\} =q,\:\partial_{z}q,\:\partial_{t}q,\quad q=q\left(z,t\right),\label{eq:aZep1Q}
\end{equation}
\begin{equation}
x=\left[\begin{array}{r}
Q\\
q
\end{array}\right],\quad\left\{ x\right\} =\left[\begin{array}{r}
Q\\
q
\end{array}\right],\:\left[\begin{array}{r}
\partial_{z}Q\\
\partial_{z}q
\end{array}\right],\:\left[\begin{array}{r}
\partial_{t}Q\\
\partial_{t}q
\end{array}\right].\label{eq:eZep1x}
\end{equation}
The CCTWT Lagrangian $\mathcal{L}$ is defined as the sum of its three
components: (i) $\mathcal{L}_{\mathrm{T}}$ is the TL Lagrangian,
(ii) $\mathcal{L}_{\mathrm{B}}$ is the e-beam Lagrangian; (iii) $\mathcal{L}_{\mathrm{TB}}$
represent the TL and e-beam interaction Lagrangian. That is,

\begin{equation}
\mathcal{L}\left(\left\{ x\right\} \right)=\mathcal{L}_{\mathrm{T}}\left(\left\{ Q\right\} \right)+\mathcal{L}_{\mathrm{B}}\left(\left\{ q\right\} \right)+\mathcal{L}_{\mathrm{TB}}\left(x\right),\label{eq:aZep1L}
\end{equation}
where we used notations (\ref{eq:aZep1Q}) and (\ref{eq:eZep1x}).
The expressions for $\mathcal{L}_{\mathrm{T}}$ and $\mathcal{L}_{\mathrm{B}}$
are similar to the TWT Lagrangian components in equations (\ref{eq:T1B1be1b}),
(\ref{eq:T1B1be1c}), namely

\begin{gather}
\mathcal{L}_{\mathrm{T}}\left(\left\{ Q\right\} \right)=\left.\mathcal{L}_{\mathrm{Tb}}\right|_{b=0}=\frac{L}{2}\left(\partial_{t}Q\right)^{2}-\frac{1}{2C}\left(\partial_{z}Q\right)^{2},\quad\mathcal{L}_{\mathrm{B}}\left(\left\{ q\right\} \right)=\frac{1}{2\beta}\left(\partial_{t}q+\mathring{v}\partial_{z}q\right)^{2}-\frac{2\pi}{\sigma_{\mathrm{B}}}q^{2},\label{eq:aZep1c}
\end{gather}
where $\mathcal{L}_{\mathrm{Tb}}$ is defined by equation (\ref{eq:T1B1be1b})
and the interaction Lagrangian $\mathcal{L}_{\mathrm{TB}}$ is defined
by
\begin{equation}
\mathcal{L}_{\mathrm{TB}}\left(x\right)=\sum_{\ell=-\infty}^{\infty}\delta\left(z-a\ell\right)\left\{ \frac{l_{0}}{2}\left(\partial_{t}Q\left(a\ell\right)\right)^{2}-\frac{1}{2c_{0}}\left[Q\left(a\ell\right)+bq\left(a\ell\right)\right]^{2}\right\} .\label{eq:aZep1d}
\end{equation}
Parameters $C$ and $L$ are respectively distributed shunt capacitance
and inductance of the TL, $\sigma_{\mathrm{B}}$ is the area of the
cross-section and $\beta$ is the e-beam intensity defined in Section
\ref{sec:twtmod}. Lagrangian $\mathcal{L}_{\mathrm{B}}$ in equations
(\ref{eq:T1B1be1c}) represents the e-beam and the term $-\frac{2\pi}{\sigma_{\mathrm{B}}}q^{2}$
models the space-charge effects including the so-called debunching
(electron-to-electron repulsion). Lagrangian $\mathcal{L}_{\mathrm{Tb}}$
in equations (\ref{eq:T1B1be1b}) integrates into it the interactions
between the TL and the e-beam whereas for the CCTWT the Lagrangian
$\mathcal{L}_{\mathrm{T}}$ corresponds to the decoupled TL. This
is why we set $\mathcal{L}_{\mathrm{T}}$ to be $\mathcal{L}_{\mathrm{Tb}}$
for $b=0$. Note also that (i) expression (\ref{eq:aZep1d}) for the
interaction Lagrangian $\mathcal{L}_{\mathrm{TB}}$ is limited by
design to the interaction points $a\ell$ as indicated by delta functions
$\delta\left(z-a\ell\right)$ and (ii) the factors before delta functions
$\delta\left(z-a\ell\right)$ are expressions similar to density $\mathcal{L}_{\mathrm{Tb}}$
in equations (\ref{eq:T1B1be1b}) adapted to lattice $a\mathbb{Z}$
of discrete interaction points $a\ell$; (iii) capacitance $c_{0}$
is of the most significance for the interaction between the TL and
the e-beam and we refer to it as the \emph{cavity capacitance}. \emph{It
follows from equations (\ref{eq:aZep1L}), (\ref{eq:aZep1c}) and
(\ref{eq:aZep1d}) that $\mathcal{L}$ is a periodic Lagrangian of
the period $a$. }

As we derive in Section \ref{sec:lagvar} the Euler-Lagrange (EL)
equations for points $z$ outside the lattice $a\mathbb{Z}$ are

\begin{gather}
L\partial_{t}^{2}Q-C^{-1}\partial_{z}^{2}Q=0,\quad\frac{1}{\beta}\left(\partial_{t}+\mathring{v}\partial_{z}\right)^{2}q+\frac{4\pi}{\sigma_{\mathrm{B}}}q=0,\quad z\neq a\ell,\quad\ell\in\mathbb{Z},\label{eq:eZep1e}
\end{gather}
or equivalently
\begin{gather}
\partial_{z}^{2}Q-\frac{1}{w^{2}}\partial_{t}^{2}Q=0,\quad\left(\frac{1}{v}\partial_{t}+\partial_{z}\right)^{2}q+\frac{4\pi\beta}{\sigma_{\mathrm{B}}\mathring{v}^{2}}q=0,\quad z\neq a\ell,\quad\ell\in\mathbb{Z}.\label{eq:eZep1ea}
\end{gather}
The EL equations at the interaction points $a\ell$ are
\begin{equation}
\left[Q\right]\left(a\ell\right)=0,\quad\left[q\right]\left(a\ell\right)=0,\label{eq:eZep1fa}
\end{equation}
\begin{gather}
\left[\partial_{z}Q\right]\left(a\ell\right)=C_{0}\left[\left(\frac{\partial_{t}^{2}}{\omega_{0}^{2}}+1\right)Q\left(a\ell\right)+bq\left(a\ell\right)\right],\quad\left[\partial_{z}q\right]\left(a\ell\right)=-\frac{b\beta_{0}}{\mathring{v}^{2}}\left[Q\left(a\ell\right)+bq\left(a\ell\right)\right],\label{eq:eZep1f}
\end{gather}
where we make use of parameters
\begin{equation}
C_{0}=\frac{C}{c_{0}},\quad\omega_{0}=\frac{1}{\sqrt{l_{0}c_{0}}},\quad\beta_{0}=\frac{\beta}{c_{0}},\label{eq:ezep1h}
\end{equation}
and jumps $\left[Q\right]\left(a\ell\right)$, $\left[q\right]\left(a\ell\right)$,
$\left[\partial_{z}Q\right]\left(a\ell\right)$ and $\left[\partial_{z}q\right]\left(a\ell\right)$
are defined by equation (\ref{eq:limpm1b}). We refer to $\beta_{0}$
as \emph{nodal e-beam interaction parameter}. Note that equations
(\ref{eq:eZep1fa}) is just an acknowledgment of the continuity of
charges $Q\left(z,t\right)$ and $q\left(z,t\right)$ at the interaction
points in consistency with Assumption \ref{ass:jumpcon}. Equations
(\ref{eq:eZep1fa}), (\ref{eq:eZep1f}) can be viewed as the boundary
conditions at the interaction points that are complementary to the
differential equations (\ref{eq:eZep1e}) and (\ref{eq:eZep1ea}).

\emph{The Euler-Lagrange differential equations (\ref{eq:eZep1e}),
(\ref{eq:eZep1ea}) together with the boundary conditions (\ref{eq:eZep1fa}),
(\ref{eq:eZep1f}) form the complete set of equation describing the
CCTWT evolution.} Boundary conditions (\ref{eq:eZep1fa}), (\ref{eq:eZep1f})
can be recast into the following matrix form:
\begin{equation}
\left[\partial_{z}x\right]\left(a\ell\right)=P_{\mathrm{TB}}x\left(a\ell\right),\;x=\left[\begin{array}{r}
Q\\
q
\end{array}\right],\;P_{\mathrm{TB}}=\left[\begin{array}{rr}
C_{0}\left(\frac{\partial_{t}^{2}}{\omega_{0}^{2}}+1\right) & C_{0}b\\
-\frac{b\beta_{0}}{\mathring{v}^{2}} & -\frac{b^{2}\beta_{0}}{\mathring{v}^{2}}
\end{array}\right],\label{eq:eZep1g}
\end{equation}
where parameters $C_{0}$ and $\beta_{0}$ are defined by equations
(\ref{eq:ezep1h}). Hence, the complete set of the boundary conditions
at interaction points $a\ell$ can be concisely written as
\begin{gather}
x\left(+a\ell\right)=x\left(-a\ell\right),\quad\partial_{z}x\left(+a\ell\right)=\partial_{z}x\left(-a\ell\right)+P_{\mathrm{TB}}x.\label{eq:eZep2a}
\end{gather}
Consequently
\begin{equation}
X\left(a\ell+0\right)=\mathsf{S}_{\mathrm{b}}X\left(a\ell-0\right),\quad\mathsf{S}_{\mathrm{b}}=\left[\begin{array}{rr}
\mathbb{I} & 0\\
P_{\mathrm{TB}} & \mathbb{I}
\end{array}\right],\quad X=\left[\begin{array}{r}
x\\
\partial_{z}x
\end{array}\right],\quad x=\left[\begin{array}{r}
Q\\
q
\end{array}\right],\label{eq:eZep2b}
\end{equation}
where $\mathbb{I}$ is $2\times2$ identity matrix and in view of
equations (\ref{eq:eZep1g}) we have
\begin{equation}
\mathsf{S}_{\mathrm{b}}=\left[\begin{array}{rr}
\mathbb{I} & 0\\
P_{\mathrm{TB}} & \mathbb{I}
\end{array}\right]=\left[\begin{array}{rrrr}
1 & 0 & 0 & 0\\
0 & 1 & 0 & 0\\
C_{0}\left(\frac{\partial_{t}^{2}}{\omega_{0}^{2}}+1\right) & C_{0}b & 1 & 0\\
-\frac{b\beta_{0}}{\mathring{v}^{2}} & -\frac{b^{2}\beta_{0}}{\mathring{v}^{2}} & 0 & 1
\end{array}\right],\quad C_{0}=\frac{C}{c_{0}},\quad\beta_{0}=\frac{\beta}{c_{0}}.\label{eq:eZep2c}
\end{equation}

\subsection{Natural units and dimensionless parameters\label{subsec:dim-par}}

The natural units relevant to the e-beam in CCTWT are shown in Table
\ref{tab:units}.
\begin{table}[h]
\centering{}%
\begin{tabular}{|r||r||r|}
\hline 
\noalign{\vskip\doublerulesep}
Velocity & e-beam velocity & $\mathring{v}$\tabularnewline[0.2cm]
\hline 
\noalign{\vskip\doublerulesep}
Length & period & $a$\tabularnewline[0.2cm]
\hline 
\noalign{\vskip\doublerulesep}
Length & Wavelength & $\mathring{\lambda}=\frac{1}{\mathring{k}}=\frac{\mathring{v}}{\omega_{\mathrm{p}}}$\tabularnewline[0.2cm]
\hline 
\noalign{\vskip\doublerulesep}
Frequency & Period frequency & $\omega_{a}=\frac{\mathring{v}}{a}$\tabularnewline[0.2cm]
\hline 
\noalign{\vskip\doublerulesep}
Frequency & Plasma frequency & $\omega_{\mathrm{p}}=\sqrt{\frac{4\pi\mathring{n}e^{2}}{m}}$\tabularnewline[0.2cm]
\hline 
\hline 
\noalign{\vskip\doublerulesep}
Time & Time of passing the period $a$ & $\frac{1}{\omega_{a}}$=$\frac{a}{\mathring{v}}$\tabularnewline[0.2cm]
\hline 
\hline 
\noalign{\vskip\doublerulesep}
Time & Plasma oscillation time period & $\mathring{\tau}=\frac{1}{\omega_{\mathrm{p}}}$\tabularnewline[0.2cm]
\hline 
\end{tabular}\vspace{0.3cm}
\caption{\label{tab:units} Natural units relevant to the e-beam in CCTWT.}
\end{table}

Another variables that arise in our analysis are
\begin{gather}
\lambda_{\mathrm{rp}}=\frac{2\pi\mathring{v}}{\omega_{\mathrm{rp}}},\quad\:\omega_{\mathrm{rp}}=R_{\mathrm{sc}}\omega_{\mathrm{p}},\quad\chi=\frac{w}{\mathring{v}}=\frac{1}{\mathring{v}\sqrt{CL}},\label{eq:unitL1a}\\
f_{\mathrm{B}}=\sqrt{\frac{4\pi\beta}{\sigma_{\mathrm{B}}\mathring{v}^{2}}}=\frac{\omega_{\mathrm{rp}}}{\mathring{v}}=\frac{2\pi}{\lambda_{\mathrm{rp}}},\quad C_{0}=\frac{C}{c_{0}},\quad\beta_{0}=\frac{\beta}{c_{0}},\nonumber 
\end{gather}
where $\omega_{\mathrm{rp}}$ and $\lambda_{\mathrm{rp}}$ are respectively
the \emph{reduced plasma frequency} and \emph{the electron plasma
wavelength}.

The dimensionless variables of importance are
\begin{equation}
z^{\prime}=\frac{z}{a},\quad\partial_{z^{\prime}}=a\partial_{z},\quad t^{\prime}=\frac{\mathring{v}}{a}t,\quad\partial_{t^{\prime}}=\frac{a}{\mathring{v}}\partial_{t},\quad\omega^{\prime}=\frac{\omega}{\omega_{a}}=\frac{a}{2\pi\mathring{v}}\omega,\quad\omega_{0}^{\prime}=\frac{\omega_{0}}{\omega_{a}}=\frac{a}{2\pi\mathring{v}}\omega_{0},\label{eq:unitL1b}
\end{equation}
\begin{equation}
L^{\prime}=\mathring{v}^{2}L,\quad C^{\prime}=C,\quad\beta^{\prime}=\frac{\beta}{\mathring{v}^{2}},\quad\sigma_{\mathrm{B}}^{\prime}=\frac{\sigma_{\mathrm{B}}}{a^{2}},\quad c_{0}^{\prime}=\frac{c_{0}}{a},\quad l_{0}^{\prime}=\frac{\mathring{v}^{2}}{a}l_{0},\label{eq:unitL1c}
\end{equation}
\begin{equation}
C_{0}^{\prime}=\frac{C^{\prime}}{c_{0}^{\prime}}=\frac{aC}{c_{0}}=aC_{0},\quad\beta_{0}^{\prime}=\frac{\beta^{\prime}}{c_{0}^{\prime}}=\frac{a\beta}{c_{0}\mathring{v}^{2}},\quad f_{\mathrm{B}}^{\prime}=af_{\mathrm{B}}=\sqrt{\frac{4\pi\beta^{\prime}}{\sigma_{\mathrm{B}}^{\prime}}}=\frac{a\omega_{\mathrm{rp}}}{\mathring{v}}=\frac{2\pi\omega_{\mathrm{rp}}}{\omega_{a}}=\frac{2\pi a}{\lambda_{\mathrm{rp}}}.\label{eq:unitL1d}
\end{equation}
Notice also that since $w=\chi v$
\begin{equation}
L^{\prime}C^{\prime}=\frac{\mathring{v}^{2}}{w^{2}}=\frac{1}{\chi^{2}},\quad a\delta\left(z\right)=\delta\left(z^{\prime}\right),\quad z=az^{\prime}.\label{eq:unitL1e}
\end{equation}
For the reader convenience we collected in Table \ref{tab:cctwt-par}
all significant parameters associated with CCTWT.
\begin{table}
\centering{}%
\begin{tabular}{|r||r||r|}
\hline 
\noalign{\vskip\doublerulesep}
$a$ & the MCK period & $\left[\text{length}\right]$\tabularnewline[0.2cm]
\hline 
\noalign{\vskip\doublerulesep}
$\mathring{v}$ & the e-beam stationary velocity & $\frac{\left[\text{length}\right]}{\left[\text{time}\right]}$\tabularnewline[0.2cm]
\hline 
\noalign{\vskip\doublerulesep}
$\omega_{a}=\frac{2\pi\mathring{v}}{a}$ & the period frequency & $\frac{\left[\text{1}\right]}{\left[\text{time}\right]}$\tabularnewline[0.2cm]
\hline 
\noalign{\vskip\doublerulesep}
$\omega_{\mathrm{p}}=\sqrt{\frac{4\pi\mathring{n}e^{2}}{m}}$ & the plasma frequency & $\frac{\left[\text{1}\right]}{\left[\text{time}\right]}$\tabularnewline[0.2cm]
\hline 
\noalign{\vskip\doublerulesep}
$\lambda_{\mathrm{rp}}=\frac{2\pi\mathring{v}}{\omega_{\mathrm{rp}}},\:\omega_{\mathrm{rp}}=R_{\mathrm{sc}}\omega_{\mathrm{p}}$ & the electron plasma wavelength & $\left[\text{length}\right]$\tabularnewline[0.2cm]
\hline 
\noalign{\vskip\doublerulesep}
$g_{\mathrm{B}}=\frac{\sigma_{\mathrm{B}}}{4\lambda_{\mathrm{rp}}}$ & the e-beam spatial scale & $\left[\text{length}\right]$\tabularnewline[0.2cm]
\hline 
\noalign{\vskip\doublerulesep}
$f^{\prime}=af=\frac{2\pi\omega_{\mathrm{rp}}}{\omega_{a}}=\frac{2\pi a}{\lambda_{\mathrm{rp}}}$ & normalized period in units of $\frac{\lambda_{\mathrm{rp}}}{2\pi}$ & $\left[\text{dim-less}\right]$\tabularnewline[0.2cm]
\hline 
\noalign{\vskip\doublerulesep}
$\mathring{n}$ & the number of electrons p/u of volume & $\frac{\left[\text{1}\right]}{\left[\text{length}\right]^{3}}$\tabularnewline[0.2cm]
\hline 
\noalign{\vskip\doublerulesep}
$c_{0},\;l_{0}$ & the cavity capacitance, inductance & $\left[\text{length}\right]$$,\;\frac{\left[\text{time}\right]^{2}}{\left[\text{length}\right]}$\tabularnewline[0.2cm]
\hline 
\noalign{\vskip\doublerulesep}
$\omega_{0}=\frac{1}{\sqrt{l_{0}c_{0}}}$ & the cavity resonant frequency & $\frac{\left[\text{1}\right]}{\left[\text{time}\right]}$\tabularnewline[0.2cm]
\hline 
\noalign{\vskip\doublerulesep}
$\beta=\frac{\sigma_{\mathrm{B}}R_{\mathrm{sc}}^{2}\omega_{\mathrm{p}}^{2}}{4\pi}=\frac{e^{2}R_{\mathrm{sc}}^{2}\sigma_{\mathrm{B}}\mathring{n}}{m}=\frac{\pi\sigma_{\mathrm{B}}\mathring{v}^{2}}{\lambda_{\mathrm{rp}}^{2}}$ & the e-beam intensity & $\frac{\left[\text{length}\right]^{2}}{\left[\text{time}\right]^{2}}$\tabularnewline[0.2cm]
\hline 
\noalign{\vskip\doublerulesep}
$\beta^{\prime}=\frac{\beta}{\mathring{v}^{2}}=\frac{\pi\sigma_{\mathrm{B}}}{\lambda_{\mathrm{rp}}^{2}}=\frac{4\pi g_{\mathrm{B}}}{\lambda_{\mathrm{rp}}}$ & dim-less e-beam intensity & $\left[\text{dim-less}\right]$\tabularnewline[0.2cm]
\hline 
\noalign{\vskip\doublerulesep}
$\beta_{0}^{\prime}=\frac{\beta^{\prime}}{c_{0}^{\prime}}=\frac{a\beta}{c_{0}\mathring{v}^{2}}$ & the first interaction par. & $\left[\text{dim-less}\right]$\tabularnewline[0.2cm]
\hline 
\noalign{\vskip\doublerulesep}
$B\left(\omega\right)=B_{0}\frac{\omega^{2}}{\omega^{2}-\omega_{0}^{2}},\quad B_{0}=b^{2}\beta_{0}^{\prime}$ & the second interaction par. & $\left[\text{dim-less}\right]$\tabularnewline[0.2cm]
\hline 
\noalign{\vskip\doublerulesep}
$K_{0}=\frac{B_{0}}{2f}=\frac{b^{2}\beta_{0}^{\prime}}{2f}=\frac{b^{2}\sigma_{\mathrm{B}}}{4\lambda_{\mathrm{rp}}c_{0}}=\frac{b^{2}g_{\mathrm{B}}}{c_{0}}$ & the gain coefficient & $\left[\text{dim-less}\right]$\tabularnewline[0.2cm]
\hline 
\noalign{\vskip\doublerulesep}
$K\left(\omega\right)=\frac{B\left(\omega\right)}{2f}=K_{0}\frac{\omega^{2}}{\omega^{2}-\omega_{0}^{2}}$ & the gain parameter & $\left[\text{dim-less}\right]$\tabularnewline[0.2cm]
\hline 
\noalign{\vskip\doublerulesep}
$C_{0}^{\prime}=\frac{C^{\prime}}{c_{0}^{\prime}}=\frac{aC}{c_{0}}=aC_{0}$ & the capacitance parameter & $\left[\text{dim-less}\right]$\tabularnewline[0.2cm]
\hline 
\end{tabular}\vspace{0.3cm}
\caption{\label{tab:cctwt-par} The CCTWT significant parameters. Abbreviations:
dimensionless: dim-less, p/u: per unit, par.: parameter. For the sake
of simplicity of the notation, we often omit \textquotedblleft prime\textquotedblright{}
super-index indicating that the dimensionless version of the relevant
parameter is involved when it is clear from the context.}
\end{table}

\subsection{Euler-Lagrange equations in dimensionless variables\label{subsec:cctwtEL-dim}}

The component Lagrangians represented in dimensionless variables are
as follows:
\begin{equation}
\mathcal{L}_{\mathrm{T}}^{\prime}=\frac{L^{\prime}}{2}\left(\partial_{t^{\prime}}Q\right)^{2}-\frac{1}{2C^{\prime}}\left(\partial_{z^{\prime}}Q\right)^{2},\quad\mathcal{L}_{\mathrm{B}}^{\prime}=\frac{1}{2\beta^{\prime}}\left(\partial_{t^{\prime}}q+\partial_{z^{\prime}}q\right)^{2}-\frac{2\pi}{\sigma_{\mathrm{B}}^{\prime}}q^{2},\quad\ell\in\mathbb{Z},\label{Lagdim1a}
\end{equation}
\begin{equation}
\mathcal{L}_{\mathrm{TB}}^{\prime}=-\frac{1}{2c_{0}^{\prime}}\sum_{\ell=-\infty}^{\infty}\delta\left(z^{\prime}-\ell\right)\left[Q\left(\ell\right)+bq\left(\ell\right)\right]^{2},\label{eq:Lagdim1b}
\end{equation}
The corresponding EL equations are
\begin{gather}
\partial_{t^{\prime}}^{2}Q-\frac{1}{\chi^{2}}\partial_{z^{\prime}}^{2}Q=0,\quad\left(\partial_{t^{\prime}}+\partial_{z^{\prime}}\right)^{2}q+f_{\mathrm{B}}^{\prime2}q=0,\quad z^{\prime}\neq\ell;\quad f_{\mathrm{B}}^{\prime}=\sqrt{\frac{4\pi\beta^{\prime}}{\sigma_{\mathrm{B}}^{\prime}}}=\frac{R_{\mathrm{sc}}\omega_{\mathrm{p}}}{\omega_{a}},\label{eq:Lagdim1c}
\end{gather}
\begin{equation}
\left[\partial_{z^{\prime}}Q\right]\left(\ell\right)=C_{0}^{\prime}\left[\left(\frac{\partial_{t^{\prime}}^{2}}{\omega_{0}^{\prime2}}+1\right)Q\left(\ell\right)+bq\left(\ell\right)\right],\quad\left[\partial_{z^{\prime}}q\right]\left(\ell\right)=-b\beta_{0}^{\prime}\left[Q\left(\ell\right)+bq\left(\ell\right)\right],\quad\ell\in\mathbb{Z}.\label{eq:Lagdim1d}
\end{equation}

\emph{To simplify notations, we will omit the prime symbol identifying
the dimensionless variables in equations but rather simply will acknowledge
their dimensionless form}. Hence, we will use from now on the following
\emph{dimensionless form of the EL equations} (\ref{eq:Lagdim1c})
and (\ref{eq:unitL1d}):
\begin{gather}
\partial_{t}^{2}Q-\frac{1}{\chi^{2}}\partial_{z}^{2}Q=0,\quad\left(\partial_{t}+\partial_{z}\right)^{2}q+f^{2}q=0,\quad z\neq\ell,\quad\ell\in\mathbb{Z};\quad f=\frac{R_{\mathrm{sc}}\omega_{\mathrm{p}}}{\omega_{a}},\label{eq:Lagdim1e}
\end{gather}
\begin{equation}
\left[\partial_{z}Q\right]\left(\ell\right)=C_{0}\left[\left(\frac{\partial_{t}^{2}}{\omega_{0}^{2}}+1\right)Q\left(\ell\right)+bq\left(\ell\right)\right],\quad\left[\partial_{z}q\right]\left(\ell\right)=-b\beta_{0}\left[Q\left(\ell\right)+bq\left(\ell\right)\right],\quad\ell\in\mathbb{Z}.\label{eq:Lagdim1f}
\end{equation}

Equations (\ref{eq:Lagdim1e}), (\ref{eq:Lagdim1f}) are linear partial
differential equations in time and space variables. Their analysis
is simplified considerably if we recast them as equations in frequency
and space variable. With that in mind we apply the Fourier transform
in $t$ (see Appendix \ref{sec:four}) to equations (\ref{eq:Lagdim1e}),
(\ref{eq:Lagdim1f}) and obtain the following equations
\begin{equation}
\partial_{z}^{2}\check{Q}+\frac{\omega^{2}}{\chi^{2}}\check{Q}=0,\quad\left(\partial_{z}-\mathrm{i}\omega\right)^{2}\check{q}+f^{2}\check{q}=0,\quad z\neq\ell,\label{eq:Lagdim2a}
\end{equation}
\begin{equation}
\left[\partial_{z}Q\right]\left(\ell\right)=C_{0}\left[\frac{\omega_{0}^{2}-\omega^{2}}{\omega_{0}^{2}}Q\left(\ell\right)+bq\left(\ell\right)\right],\quad\left[\partial_{z}q\right]\left(\ell\right)=-b\beta_{0}\left[Q\left(\ell\right)+bq\left(\ell\right)\right],\quad\ell\in\mathbb{Z},\label{eq:Lagdim2aa}
\end{equation}
where $\check{Q}$ and $\check{q}$ are the time Fourier transform
of the corresponding quantities. Equations (\ref{eq:Lagdim2a}) and
(\ref{eq:Lagdim2aa}) are ODE equations in space variable $z$ with
frequency dependent coefficients.

To apply the constructions of the Floquet theory reviewed in Appendix
\ref{sec:floquet} we recast system of equations (\ref{eq:Lagdim2a}),
(\ref{eq:Lagdim2aa}) yet another time into the following manifestly
spatially periodic vector ODE:
\begin{equation}
\partial_{z}^{2}x+A_{1}\partial_{z}x+A_{0}x+\sum_{\ell=-\infty}^{\infty}\delta\left(z-\ell\right)P_{\mathrm{TB}}x=0,\quad x=\left[\begin{array}{r}
\check{Q}\\
\check{q}
\end{array}\right],\label{eq:Lagdim2b}
\end{equation}
where $2\times2$ matrices $A_{j}$ and $P_{\mathrm{TB}}$ are defined
by
\begin{equation}
A_{1}=A_{1}\left(\omega\right)=\left[\begin{array}{rr}
0 & 0\\
0 & -2\mathrm{i}\omega
\end{array}\right],\quad A_{0}=A_{0}\left(\omega\right)=\left[\begin{array}{rr}
\frac{\omega^{2}}{\chi^{2}} & 0\\
0 & f^{2}-\omega^{2}
\end{array}\right],\label{eq:Lagdim2c}
\end{equation}
\begin{equation}
P_{\mathrm{TB}}=P_{\mathrm{TB}}\left(\omega\right)=\left[\begin{array}{rr}
C_{0}\frac{\omega_{0}^{2}-\omega^{2}}{\omega_{0}^{2}} & C_{0}b\\
-b\beta_{0} & -b^{2}\beta_{0}
\end{array}\right].\label{eq:Lagdim2d}
\end{equation}
\emph{Equations (\ref{eq:Lagdim2b})-(\ref{eq:Lagdim2d}) are evidently
the second-order vector ODE with spatially periodic frequency dependent
singular matrix potential $\sum_{\ell=-\infty}^{\infty}\delta\left(z-\ell\right)P_{\mathrm{TB}}\left(\omega\right)$.
These equations becomes is the object of our studies below.}

According to Appendix \ref{sec:dif-jord} the second-order differential
equation (\ref{eq:Lagdim2b})-(\ref{eq:Lagdim2d}) is equivalent to
the first-order spatially periodic differential equation of the form:
\begin{gather}
\partial_{z}X=A\left(z\right)X,\quad A\left(z\right)=A\left(z,\omega\right)=\left[\begin{array}{rr}
0 & \mathbb{I}\\
-A_{0}-P\left(z\right) & -A_{1}
\end{array}\right],\quad X=\left[\begin{array}{r}
x\\
\partial_{z}x
\end{array}\right],\label{eq:Lagdim2e}\\
P\left(z\right)=P\left(z,\omega\right)=\sum_{\ell=-\infty}^{\infty}\delta\left(z-\ell\right)P_{\mathrm{TB}}\left(\omega\right),\nonumber 
\end{gather}
where frequency dependent matrices $A_{0}$, $A_{1}$ and $P_{\mathrm{TB}}$
satisfy equations (\ref{eq:Lagdim2c}) and (\ref{eq:Lagdim2d}).

Using results of Appendix \ref{subsec:speHam} we find that the spatially
periodic vector ODE (\ref{eq:Lagdim2e}) is Hamiltonian with the following
choice of nonsingular Hermitian matrix $G=G\left(\omega\right)$:
\begin{equation}
G=G^{*}=\left[\begin{array}{rrrr}
0 & 0 & \mathrm{i} & 0\\
0 & -\frac{2\omega C_{0}}{\beta_{0}} & 0 & -\mathrm{i}\frac{C_{0}}{\beta_{0}}\\
-\mathrm{i} & 0 & 0 & 0\\
0 & \mathrm{i}\frac{C_{0}}{\beta_{0}} & 0 & 0
\end{array}\right],\quad\det\left\{ G\right\} =\frac{C_{0}^{2}}{\beta_{0}^{2}}.\label{eq:Lagdim3a}
\end{equation}
Indeed, it is an elementary exercise to verify that for each value
of $z$ matrix $A\left(z\right)$ is $G$-skew-Hermitian, that is
\begin{equation}
GA\left(z\right)+A^{*}\left(z\right)G=0.\label{eq:Lagdim3b}
\end{equation}
Then if $\Phi\left(z\right)$ is the matrizant of the Hamiltonian
equation (\ref{eq:Lagdim2e}) then according to results of Appendix
\ref{sec:Ham} $\Phi\left(z\right)$ is $G$-unitary matrix satisfying
\begin{equation}
\Phi^{*}\left(z\right)G\Phi\left(z\right)=G.\label{eq:Lagdim3c}
\end{equation}
 Consequently its spectrum $\sigma\left\{ \Phi\left(z\right)\right\} $
is invariant with respect to the inversion transformation $\zeta\rightarrow\frac{1}{\bar{\zeta}}$,
that is it is symmetric with respect to the unit circle: 
\begin{equation}
\zeta\in\sigma\left\{ \Phi\left(z\right)\right\} \Rightarrow\frac{1}{\bar{\zeta}}\in\sigma\left\{ \Phi\left(z\right)\right\} .\label{eq:Lagdim3d}
\end{equation}

\subsection{CCTWT subsystems: the coupled cavity structure and the e-beam\label{subsec:ccsebeam}}

It is instructive to take a view on the CCTWT system as a composition
of its integral components which are the \emph{coupled cavity structure
(CCS) and the electron beam (e-beam)}. It comes at no surprise that
special features of the CCS and the e-beam are manifested in fundamental
properties of the CCTWT justifying their thorough analysis. This section
provides the initial steps of the analysis whereas more detailed studies
of the CCS features are pursued in Section \ref{sec:ccs}.

One way to identify the CCS and the e-beam components of the CCTWT
is to use its analysis carried out in previous sections setting there
the coupling coefficient $b$ to be zero. With that in mind we consider
monodromy matrix matrix $\mathscr{T}$ defined by equations (\ref{eq:monS1c})-(\ref{eq:monS1e})
and set there $b=0$. To separate variables relevant to the CCS and
the e-beam we use permutation matrix $P_{23}$ defined by equation
(\ref{eq:ATB2sc}) and transform $\left.\mathscr{T}\right|_{b=0}$
as follows:
\begin{equation}
P_{23}\left.\mathscr{T}\right|_{b=0}P_{23}^{-1}=\left[\begin{array}{rr}
\mathscr{T}_{\mathrm{C}} & 0\\
0 & \mathscr{T}_{\mathrm{B}}
\end{array}\right],\label{eq:PTP1a}
\end{equation}
where $\mathscr{T}_{\mathrm{C}}$ and $\mathscr{T}_{\mathrm{B}}$
are $2\times2$ matrices defined by
\begin{equation}
\mathscr{T}_{\mathrm{C}}=\left[\begin{array}{rr}
\cos\left(\frac{{\it \omega}}{{\it \chi}}\right) & \frac{{\it \chi}}{{\it \omega}}\sin\left(\frac{{\it \omega}}{{\it \chi}}\right)\\
\left(1-\frac{\omega^{2}}{\omega_{0}^{2}}\right)C_{0}\cos\left(\frac{{\it \omega}}{{\it \chi}}\right)-\frac{{\it \omega}}{{\it \chi}}\sin\left(\frac{{\it \omega}}{{\it \chi}}\right) & \left(1-\frac{\omega^{2}}{\omega_{0}^{2}}\right)\frac{C_{0}{\it \chi}}{{\it \omega}}\sin\left(\frac{{\it \omega}}{{\it \chi}}\right)+\cos\left(\frac{{\it \omega}}{{\it \chi}}\right)
\end{array}\right],\label{eq:PTP1b}
\end{equation}
\begin{equation}
\mathscr{T}_{\mathrm{B}}={\it {\rm e}^{\mathrm{i}\omega}}\left[\begin{array}{rr}
\cos\left(f\right)-\mathrm{i}{\it \frac{\sin\left(f\right)}{f}\omega} & \frac{\sin\left(f\right){\it {\rm e}^{\mathrm{i}\omega}}}{f}\\
\frac{\left({\it \omega}^{2}-f^{2}\right)\sin\left(f\right)}{f} & \left(\cos\left(f\right)+\mathrm{i}{\it \frac{\sin\left(f\right)}{f}\omega}\right){\it {\rm e}^{\mathrm{i}\omega}}
\end{array}\right].\label{eq:PTP1c}
\end{equation}
Evidently $\mathscr{T}_{\mathrm{C}}$ defined by equation (\ref{eq:PTP1b})
is the CCS monodromy matrix and $\mathscr{T}_{\mathrm{B}}$ defined
by equation (\ref{eq:PTP1c}) is the e-beam monodromy matrix.

It follows from equation (\ref{eq:PTP1b}) for matrix $\mathscr{T}_{\mathrm{C}}$
that the corresponding characteristic equation $\det\left\{ \mathscr{T}_{\mathrm{C}}-s\mathbb{I}\right\} =0$
for the Floquet multipliers $s$ (see Appendix \ref{sec:floquet})
is
\begin{gather}
\det\left\{ \mathscr{T}_{\mathrm{C}}-s\mathbb{I}\right\} =s^{2}+2W_{\mathrm{C}}\left(\omega\right)s+1=0,\quad s=\exp\left\{ \mathrm{i}k\right\} ,\label{eq:PTP1d}\\
W_{\mathrm{C}}\left(\omega\right)=\frac{C_{0}}{2}\left(\frac{\omega}{\omega_{0}^{2}}-\frac{1}{\omega}\right)\chi\sin\left(\frac{\omega}{\chi}\right)-\cos\left(\omega\right).\nonumber 
\end{gather}
Real-valued function $W_{\mathrm{C}}\left(\omega\right)$ in the second
equation in (\ref{eq:PTP1d}) plays an important role in the analysis
of the CCS and its plot is depicted in Fig. \ref{fig:dis-ccs-wom2}(b).
We refer to $W_{\mathrm{C}}\left(\omega\right)$ as the \emph{CCS
instability parameter} for as we will find that it completely determines
if the Floquet multipliers satisfy the instability criterion $\left|s\right|>1$.

It also follows from equation (\ref{eq:PTP1c}) for matrix $\mathscr{T}_{\mathrm{B}}$
that the corresponding characteristic equation $\det\left\{ \mathscr{T}_{\mathrm{B}}-s\mathbb{I}\right\} =0$
for the Floquet multipliers $s$ is
\begin{equation}
\det\left\{ \mathscr{T}_{\mathrm{B}}-s\mathbb{I}\right\} =s^{2}-2\cos\left(f\right){\it {\rm e}^{\mathrm{i}\omega}}s+{\it {\rm e}^{2\mathrm{i}\omega}}=0,\quad s=\exp\left\{ \mathrm{i}k\right\} .\label{eq:PTP1e}
\end{equation}
In view of equations (\ref{eq:PTP1a}), (\ref{eq:PTP1d}) and (\ref{eq:PTP1e})
the following factorization holds for the characteristic function
of the monodromy matrix $\left.\mathscr{T}\right|_{b=0}$ of the decoupled
system 
\begin{equation}
\det\left\{ \left.\mathscr{T}\right|_{b=0}-s\mathbb{I}\right\} =\left(s^{2}+2W_{\mathrm{C}}\left(\omega\right)s+1\right)\left(s^{2}-2\cos\left(f\right){\it {\rm e}^{\mathrm{i}\omega}}s+{\it {\rm e}^{2\mathrm{i}\omega}}\right).\label{eq:PTP1f}
\end{equation}

\section{Solutions to the coupled-cavity TWT equations\label{sec:cctwtsol}}

Dimensionless form of the EL equations (\ref{eq:Lagdim2b})-(\ref{eq:Lagdim2d})
and their solutions can be analyzed by applying the Floquet theory
reviewed in Appendix \ref{sec:floquet}. To use the Floquet theory
we recast first the second-order vector ODE as the first-order vector
ODE following to our review on subject in Appendix \ref{sec:dif-jord}.

\subsection{Solutions to the Euler-Lagrange equation inside the period}

We begin with introducing expressions for (i) the characteristic scalar
polynomial $A_{\mathrm{T}}\left(s\right)$ associated with the first
equation in (\ref{eq:Lagdim2a}) for the TL; and (ii) the characteristic
scalar polynomial $A_{\mathrm{B}}\left(s\right)$ associated with
the second equation in (\ref{eq:Lagdim2a}) for the e-beam:

\begin{equation}
A_{\mathrm{T}}\left(s\right)=\frac{{\it \omega}^{2}}{{\it \chi}^{2}}+s^{2},\quad A_{\mathrm{B}}=s^{2}-2\mathrm{i}{\it \omega}s+f^{2}-{\it {\it \omega}^{2}}.\label{eq:ATBs1a}
\end{equation}
Note that in the accordance with the general theory of differential
equations (see Appendices \ref{sec:co-mat}, \ref{sec:mat-poly} and
\ref{sec:dif-jord}) the spectral parameter $s$ in expressions for
the characteristic polynomials $A_{\mathrm{T}}\left(s\right)$ and
$A_{\mathrm{B}}\left(s\right)$ represents symbolically the differential
operator $\partial_{z}$.

The $2\times2$ companion matrix $\mathscr{C}_{\mathrm{T}}$ of the
scalar characteristic polynomial $A_{\mathrm{T}}\left(s\right)$ (see
Appendices \ref{sec:co-mat}, \ref{sec:mat-poly} and \ref{sec:dif-jord})
is
\begin{equation}
\mathscr{C}_{\mathrm{T}}=\left[\begin{array}{rr}
0 & 1\\
-\frac{{\it \omega}^{2}}{{\it \chi}^{2}} & 0
\end{array}\right]=\mathscr{Z}_{\mathrm{T}}\left[\begin{array}{rr}
\mathrm{i}\frac{{\it \omega}}{{\it \chi}} & 0\\
0 & -\mathrm{i}\frac{{\it \omega}}{{\it \chi}}
\end{array}\right]\mathscr{Z}_{\mathrm{T}}^{-1},\quad\mathscr{Z}_{\mathrm{T}}=\left[\begin{array}{rr}
-\mathrm{i}\frac{{\it \chi}}{{\it \omega}} & \mathrm{i}\frac{{\it \chi}}{{\it \omega}}\\
1 & 1
\end{array}\right],\quad x=\left[\begin{array}{r}
Q\\
\partial_{z}Q
\end{array}\right],\label{eq:ATBs1b}
\end{equation}
where the columns of matrix $\mathscr{Z}_{\mathrm{T}}$ are eigenvectors
of the companion matrix $\mathscr{C}_{\mathrm{T}}$ with the corresponding
eigenvalues being the relevant entries of the diagonal matrix in equations
(\ref{eq:ATBs1b}). Expression of vector $x$ in equations (\ref{eq:ATBs1b})
clarifies the meaning of the entries of relevant matrices. Consequently,
the exponent $\exp\left\{ z\mathscr{C}_{\mathrm{T}}\right\} $ which
is the fundamental matrix solution to the first-order ODE associated
with first equation in (\ref{eq:Lagdim2a}) satisfies
\begin{equation}
\exp\left\{ z\mathscr{C}_{\mathrm{T}}\right\} =\left[\begin{array}{rr}
\cos\left(\frac{{\it \omega z}}{{\it \chi}}\right) & \frac{{\it \chi}}{{\it \omega}}\sin\left(\frac{{\it \omega z}}{{\it \chi}}\right)\\
-\frac{{\it \omega}}{{\it \chi}}\sin\left(\frac{{\it \omega}}{{\it \chi}}\right) & \cos\left(\frac{z{\it \omega}}{{\it \chi}}\right)
\end{array}\right]=\mathscr{Z}_{\mathrm{T}}\exp\left\{ \left[\begin{array}{rr}
\mathrm{i}\frac{{\it \omega z}}{{\it \chi}} & 0\\
0 & -\mathrm{i}\frac{{\it \omega z}}{{\it \chi}}
\end{array}\right]\right\} \mathscr{Z}_{\mathrm{T}}^{-1}.\label{eq:ATBs1c}
\end{equation}
The $2\times2$ companion matrix $\mathscr{C}_{\mathrm{B}}$ of the
scalar characteristic polynomial $A_{\mathrm{B}}\left(s\right)$
\begin{gather}
\mathscr{C}_{\mathrm{B}}=\left[\begin{array}{rr}
0 & 1\\
{\it {\it \omega}^{2}}-f^{2} & 2\mathrm{i}{\it \omega}
\end{array}\right]=\mathscr{Z}_{\mathrm{B}}\left[\begin{array}{rr}
\mathrm{i}\left(\omega-f\right) & 0\\
0 & \mathrm{i}\left(\omega+f\right)
\end{array}\right]\mathscr{Z}_{\mathrm{B}}^{-1},\quad\mathscr{Z}_{\mathrm{B}}=\left[\begin{array}{rr}
-\frac{\mathrm{i}}{\omega-f} & -\frac{\mathrm{i}}{\omega+f}\\
1 & 1
\end{array}\right],\label{eq:ATBs1d}\\
x=\left[\begin{array}{r}
q\\
\partial_{z}q
\end{array}\right],\nonumber 
\end{gather}
where the columns of matrix $\mathscr{Z}_{\mathrm{B}}$ are eigenvectors
of the companion matrix $\mathscr{C}_{\mathrm{B}}$ with the corresponding
eigenvalues being the relevant entries of the diagonal matrix in equations
(\ref{eq:ATBs1d}). Expression of vector $x$ in equations (\ref{eq:ATBs1d})
clarifies the meaning of the entries of relevant matrices. Consequently,
the exponent $\exp\left\{ z\mathscr{C}_{\mathrm{B}}\right\} $ which
is the fundamental matrix solution to the first-order ODE associated
with the second equation in (\ref{eq:Lagdim2a}) satisfies
\begin{gather}
\exp\left\{ z\mathscr{C}_{\mathrm{B}}\right\} =\frac{1}{f}\exp\left\{ \mathrm{i}\omega z\right\} \left[\begin{array}{rr}
f\cos\left(fz\right)-\mathrm{i}{\it \omega}\sin\left(fz\right) & \sin\left(fz\right)\\
\left({\it {\it \omega}^{2}}-f^{2}\right)\sin\left(fz\right) & f\cos\left(fz\right)+\mathrm{i}{\it \omega}\sin\left(fz\right)
\end{array}\right]=\label{eq:ATBs1e}\\
=\mathscr{Z}_{\mathrm{B}}\exp\left\{ \left[\begin{array}{rr}
\mathrm{i}\left(\omega-f\right)z & 0\\
0 & \mathrm{i}\left(\omega+f\right)z
\end{array}\right]\right\} \mathscr{Z}_{\mathrm{B}}^{-1}.\nonumber 
\end{gather}
The $2\times2$ matrix characteristic polynomial $A_{\mathrm{TB}}\left(s\right)$
of \emph{non-interacting} $TL$ and the e-beam is the following diagonal
matrix polynomial
\begin{equation}
A_{\mathrm{TB}}\left(s\right)=\left[\begin{array}{rr}
A_{\mathrm{T}}\left(s\right) & 0\\
0 & A_{\mathrm{B}}\left(s\right)
\end{array}\right]=\left[\begin{array}{rr}
\frac{{\it \omega}^{2}}{{\it \chi}^{2}}+s^{2} & 0\\
0 & s^{2}-2\mathrm{i}{\it \omega}s+f^{2}-{\it {\it \omega}^{2}}
\end{array}\right].\label{eq:ATBs1f}
\end{equation}
The $4\times4$ companion matrix of the matrix polynomial $A_{\mathrm{TB}}\left(s\right)$
$4\times4$ is (see Appendices \ref{sec:co-mat}, \ref{sec:mat-poly}
and \ref{sec:dif-jord})
\begin{equation}
\mathscr{C}_{\mathrm{TB}}=\left[\begin{array}{rrrr}
0 & 0 & 1 & 0\\
0 & 0 & 0 & 1\\
-\frac{{\it \omega}^{2}}{{\it \chi}^{2}} & 0 & 0 & 0\\
0 & {\it {\it \omega}^{2}}-f^{2} & 0 & 2\mathrm{i}{\it \omega}
\end{array}\right],\quad X=\left[\begin{array}{r}
Q\\
q\\
\partial_{z}Q\\
\partial_{z}q
\end{array}\right],\label{eq:ATB1s1g}
\end{equation}
where vector $X$ clarifies the meaning of the entries of matrix $\mathscr{C}_{\mathrm{TB}}$.

The exponential $\exp\left\{ z\mathscr{C}_{\mathrm{TB}}\right\} $
of the component matrix $\mathscr{C}_{\mathrm{TB}}$ is the fundamental
matrix solution to the first-order ODE associated with the system
of equations (\ref{eq:Lagdim2a}) and it satisfies
\begin{gather}
\exp\left\{ z\mathscr{C}_{\mathrm{TB}}\right\} =\label{eq:ATBs2a}\\
\left[\begin{array}{rrrr}
\cos\left(\frac{{\it \omega z}}{{\it \chi}}\right) & 0 & \frac{{\it \chi}}{{\it \omega}}\sin\left(\frac{{\it \omega z}}{{\it \chi}}\right) & 0\\
0 & e^{\mathrm{i}\omega z}\left[\cos\left(fz\right)-{\it \mathrm{i}\frac{\omega}{f}}\sin\left(fz\right)\right] & 0 & \frac{1}{f}e^{\mathrm{i}\omega}\sin\left(fz\right)\\
-\frac{{\it \omega}}{{\it \chi}}\sin\left(\frac{{\it \omega z}}{{\it \chi}}\right) & 0 & \cos\left(\frac{{\it \omega z}}{{\it \chi}}\right) & 0\\
0 & \frac{1}{f}e^{\mathrm{i}\omega z}\left({\it {\it \omega}^{2}}-f^{2}\right)\sin\left(zf\right) & 0 & e^{\mathrm{i}\omega z}\left[\cos\left(fz\right)-{\it \mathrm{i}\frac{\omega}{f}}\sin\left(fz\right)\right]
\end{array}\right].\nonumber 
\end{gather}
In particular for $z=1,$ which is the period in dimensionless variables,
we get
\begin{gather}
\exp\left\{ \mathscr{C}_{\mathrm{TB}}\right\} =\label{eq:ATBs2b}\\
\left[\begin{array}{rrrr}
\cos\left(\frac{{\it \omega}}{{\it \chi}}\right) & 0 & \frac{{\it \chi}}{{\it \omega}}\sin\left(\frac{{\it \omega}}{{\it \chi}}\right) & 0\\
0 & e^{\mathrm{i}\omega}\left[\cos\left(f\right)-{\it \mathrm{i}\frac{\omega}{f}}\sin\left(f\right)\right] & 0 & \frac{1}{f}e^{\mathrm{i}\omega}\sin\left(f\right)\\
-\frac{{\it \omega}}{{\it \chi}}\sin\left(\frac{{\it \omega}}{{\it \chi}}\right) & 0 & \cos\left(\frac{{\it \omega}}{{\it \chi}}\right) & 0\\
0 & \frac{1}{f}e^{\mathrm{i}\omega}\left({\it {\it \omega}^{2}}-f^{2}\right)\sin\left(f\right) & 0 & e^{\mathrm{i}\omega}\left[\cos\left(f\right)-{\it \mathrm{i}\frac{\omega}{f}}\sin\left(f\right)\right]
\end{array}\right]\nonumber 
\end{gather}
Using $4\times4$ matrix $P_{23}$ that permutes the second and the
third coordinates, that is
\begin{equation}
P_{23}=\left[\begin{array}{rrrr}
1 & 0 & 0 & 0\\
0 & 0 & 1 & 0\\
0 & 1 & 0 & 0\\
0 & 0 & 0 & 1
\end{array}\right]=P_{23}^{-1},\label{eq:ATB2sc}
\end{equation}
we get the following representation of $\exp\left\{ z\mathscr{C}_{\mathrm{TB}}\right\} $
in terms and $\exp\left\{ z\mathscr{C}_{\mathrm{T}}\right\} $ and
$\exp\left\{ z\mathscr{C}_{\mathrm{B}}\right\} $:
\begin{equation}
\exp\left\{ z\mathscr{C}_{\mathrm{TB}}\right\} =P_{23}\left[\begin{array}{rr}
\exp\left\{ z\mathscr{C}_{\mathrm{T}}\right\}  & 0\\
0 & \exp\left\{ z\mathscr{C}_{\mathrm{B}}\right\} 
\end{array}\right]P_{23}^{-1}.\label{eq:ATBs2d}
\end{equation}
One can also verify the correctness of the identity (\ref{eq:ATBs2d})
by tedious by straightforward evaluation.

\subsection{The boundary conditions and the monodromy matrix in dimensionless
variables\label{subsec:mon-mat4}}

The fundamental matrix solution $\exp\left\{ z\mathscr{C}_{\mathrm{TB}}\right\} $
represented by equation (\ref{eq:ATBs2d}) provides the solution of
the relevant first-order vector ODE strictly inside the period $\left(0,1\right)$.
The complete solution on the interval $\left(0,1\right]$ that includes
the boundary $1$ has to account for the boundary jump conditions
represented by equations (\ref{eq:Lagdim1f}). These boundary conditions
are equivalent to
\begin{equation}
X\left(\ell+0\right)=\mathsf{S}_{\mathrm{b}}X\left(\ell-0\right),\quad\mathsf{S}_{\mathrm{b}}=\left[\begin{array}{rr}
\mathbb{I} & 0\\
P_{\mathrm{TB}} & \mathbb{I}
\end{array}\right],\quad X=\left[\begin{array}{r}
x\\
\partial_{z}x
\end{array}\right],\quad x=\left[\begin{array}{r}
\check{Q}\\
\check{q}
\end{array}\right],\label{eq:bmonX1a}
\end{equation}
where the\emph{ boundary matrix} $\mathsf{S}_{\mathrm{b}}$ satisfies
\begin{equation}
\mathsf{S}_{\mathrm{b}}=\left[\begin{array}{rr}
\mathbb{I} & 0\\
P_{\mathrm{TB}} & \mathbb{I}
\end{array}\right]=\left[\begin{array}{rrrr}
1 & 0 & 0 & 0\\
0 & 1 & 0 & 0\\
\left(1-\frac{\omega^{2}}{\omega_{0}^{2}}\right)C_{0} & C_{0}b & 1 & 0\\
-b\beta_{0} & -b^{2}\beta_{0} & 0 & 1
\end{array}\right].\label{eq:bmonX1b}
\end{equation}
The Floquet theory (see Appendix \ref{sec:floquet}) when applied
to the first-order ODE equivalent to the EL equations (\ref{eq:Lagdim2b})-(\ref{eq:Lagdim2d})
yields the the following equations for the fundamental $4\times4$
matrix solution $\Phi\left(z\right)$:
\begin{gather}
\Phi\left(z\right)=\exp\left\{ \left(z-\ell\right)\mathscr{C}_{\mathrm{TB}}\right\} \Phi\left(\ell+0\right),\quad\ell<z<\ell+1,\label{eq:bmonX1c}\\
\Phi\left(0+0\right)=\mathbb{I},\quad\Phi\left(\ell+0\right)=\mathsf{S}_{\mathrm{b}}\Phi\left(\ell-0\right),\quad\ell\in\mathbb{Z},\nonumber 
\end{gather}
where $\exp\left\{ z\mathscr{C}_{\mathrm{TB}}\right\} $ is defined
by equation (\ref{eq:ATBs2a}). In particular, according to the Floquet
theorem \ref{thm:floquet} the monodromy matrix is $\mathscr{T}=\Phi\left(1+0\right)$
and in view of equations (\ref{eq:bmonX1b}) and (\ref{eq:bmonX1c})
we have
\begin{equation}
\mathscr{T}=\Phi\left(1+0\right)=\mathsf{S}_{\mathrm{b}}\exp\left\{ \mathscr{C}_{\mathrm{TB}}\right\} =\left[\begin{array}{rr}
\mathscr{T}_{11} & \mathscr{T}_{12}\\
\mathscr{T}_{21} & \mathscr{T}_{22}
\end{array}\right],\label{eq:monS1b}
\end{equation}
where $\exp\left\{ \mathscr{C}_{\mathrm{TB}}\right\} $ is defined
by equation (\ref{eq:ATBs2b}) and $2\times2$ matrix blocks of $\mathscr{T}$
are as follows
\begin{equation}
\mathscr{T}_{11}=\left[\begin{array}{cc}
\cos\left(\frac{{\it \omega}}{{\it \chi}}\right) & 0\\
0 & \left(\cos\left(f\right)-\mathrm{i}{\it \frac{\sin\left(f\right)}{f}\omega}\right){\it {\rm e}^{\mathrm{i}\omega}}
\end{array}\right],\quad\mathscr{T}_{12}=\left[\begin{array}{cc}
\frac{{\it \chi}}{{\it \omega}}\sin\left(\frac{{\it \omega}}{{\it \chi}}\right) & 0\\
0 & \frac{\sin\left(f\right){\it {\rm e}^{\mathrm{i}\omega}}}{f}
\end{array}\right],\label{eq:monS1c}
\end{equation}
\begin{equation}
\mathscr{T}_{21}=\left[\begin{array}{cc}
\left(1-\frac{\omega^{2}}{\omega_{0}^{2}}\right)C_{0}\cos\left(\frac{{\it \omega}}{{\it \chi}}\right)-\frac{{\it \omega}}{{\it \chi}}\sin\left(\frac{{\it \omega}}{{\it \chi}}\right) & C_{0}b\left(\cos\left(f\right)-\mathrm{i}{\it \frac{\sin\left(f\right)}{f}\omega}\right){\it {\rm e}^{\mathrm{i}\omega}}\\
-\beta_{0}b\cos\left(\frac{{\it \omega}}{{\it \chi}}\right) & \left[\beta_{0}b^{2}\left(\mathrm{i}{\it \frac{\sin\left(f\right)}{f}\omega-\cos\left(f\right)}\right)+\frac{\left({\it \omega}^{2}-f^{2}\right)\sin\left(f\right)}{f}\right]{\it {\rm e}^{\mathrm{i}\omega}}
\end{array}\right],\label{eq:monS1d}
\end{equation}
\begin{equation}
\mathscr{T}_{22}=\left[\begin{array}{cc}
\left(1-\frac{\omega^{2}}{\omega_{0}^{2}}\right)\frac{C_{0}{\it \chi}}{{\it \omega}}\sin\left(\frac{{\it \omega}}{{\it \chi}}\right)+\cos\left(\frac{{\it \omega}}{{\it \chi}}\right) & \frac{C_{0}b\sin\left(f\right){\it {\rm e}^{\mathrm{i}\omega}}}{f}\\
-\frac{\beta_{0}b\chi}{{\it \omega}}\sin\left(\frac{{\it \omega}}{{\it \chi}}\right) & \left[\left(\mathrm{i}{\it \frac{\sin\left(f\right)}{f}\omega+\cos\left(f\right)}\right)-\frac{\beta_{0}b^{2}\sin\left(f\right)}{f}\right]{\it {\rm e}^{\mathrm{i}\omega}}
\end{array}\right],\label{eq:monS1e}
\end{equation}
and the involved above dimensionless constants satisfy (see Section
\ref{subsec:dim-par})
\begin{equation}
\chi=\frac{w}{\mathring{v}}=\frac{1}{\mathring{v}\sqrt{CL}},\quad C_{0}=\frac{aC}{c_{0}},\quad\beta_{0}=\frac{a\beta}{c_{0}\mathring{v}^{2}},\quad f=\frac{aR_{\mathrm{sc}}\omega_{\mathrm{p}}}{\mathring{v}}=\frac{R_{\mathrm{sc}}\omega_{\mathrm{p}}}{\omega_{a}}.\label{eq:monS1g}
\end{equation}

An important special and simpler case of the monodromy matrix $\mathscr{T}$
defined by equations (\ref{eq:monS1c})-(\ref{eq:monS1e}) is when
the following equations hold
\begin{equation}
\chi=1,\quad b=1.\label{eq:monS2a}
\end{equation}
The condition $\chi=1$ according to equations (\ref{eq:unitL1a})
is equivalent to the equality of the phase velocity $w$ associated
with TL and the e-beam stationary flow velocity $\mathring{v}$. Equation
$b=1$ signifies the maximal coupling between the TL and e-beam at
interaction points. In this case the monodromy matrix $\mathscr{T}$
turns into
\begin{equation}
\mathscr{T}_{1}=\left[\begin{array}{rr}
\mathscr{T}_{11} & \mathscr{T}_{12}\\
\mathscr{T}_{21} & \mathscr{T}_{22}
\end{array}\right],\quad\chi=1,\quad b=1,\label{eq:monS2b}
\end{equation}
where
\begin{equation}
\mathscr{T}_{11}=\left[\begin{array}{cc}
\cos\left({\it \omega}\right) & 0\\
0 & \left(\cos\left(f\right)-\mathrm{i}{\it \frac{\sin\left(f\right)}{f}\omega}\right){\it {\rm e}^{\mathrm{i}\omega}}
\end{array}\right],\quad\mathscr{T}_{12}=\left[\begin{array}{cc}
\frac{\sin\left({\it \omega}\right)}{{\it \omega}} & 0\\
0 & \frac{\sin\left(f\right){\it {\rm e}^{\mathrm{i}\omega}}}{f}
\end{array}\right],\label{eq:monS2c}
\end{equation}
\begin{equation}
\mathscr{T}_{21}=\left[\begin{array}{cc}
{\it \left(1-\frac{\omega^{2}}{\omega_{0}^{2}}\right)C_{0}}\cos\left({\it \omega}\right)-{\it \omega}\sin\left({\it \omega}\right) & {\it C_{0}}\left(\cos\left(f\right)-\mathrm{i}{\it \frac{\sin\left(f\right)}{f}\omega}\right){\it {\rm e}^{\mathrm{i}\omega}}\\
-{\it \beta_{0}}\cos\left({\it \omega}\right) & \left[{\it \beta_{0}}\left(\mathrm{i}{\it \frac{\sin\left(f\right)}{f}\omega-\cos\left(f\right)}\right)+\frac{\left({\it \omega}^{2}-f^{2}\right)\sin\left(f\right)}{f}\right]{\it {\rm e}^{\mathrm{i}\omega}}
\end{array}\right],\label{eq:monS2d}
\end{equation}
\begin{equation}
\mathscr{T}_{22}=\left[\begin{array}{cc}
\left(1-\frac{\omega^{2}}{\omega_{0}^{2}}\right)\frac{{\it C_{0}}\sin\left({\it \omega}\right)}{{\it \omega}}+\cos\left({\it \omega}\right) & \frac{{\it C_{0}}\sin\left(f\right){\it {\rm e}^{\mathrm{i}\omega}}}{f}\\
-\frac{{\it \beta_{0}}\sin\left({\it \omega}\right)}{{\it \omega}} & \left[\left(\mathrm{i}{\it \frac{\sin\left(f\right)}{f}\omega+\cos\left(f\right)}\right)-\frac{{\it \beta_{0}}\sin\left(f\right)}{f}\right]{\it {\rm e}^{\mathrm{i}\omega}}
\end{array}\right],\label{eq:monS2e}
\end{equation}
where dimensionless constants ${\it C_{0}}$, ${\it \beta_{0}}$ and
$f$ satisfy equations (\ref{eq:monS1g}).

\section{Characteristic equation, the Floquet multipliers and the dispersion
relation\label{sec:floqmul}}

We turn now to the four \emph{Floquet multipliers} $s$ which are
the eigenvalues of the CCTWT monodromy matrix $\mathscr{T}$ defined
by equations (\ref{eq:monS1b})-(\ref{eq:monS1e}). Consequently,
$s$ are solutions to \emph{characteristic equation} $\det\left\{ \mathscr{T}-s\mathbb{I}\right\} =0$
which is the following polynomial equation of the order 4:

\begin{equation}
S^{4}+c_{3}S^{3}+c_{2}S^{2}+\bar{c}_{3}S+1=0,\quad s=\exp\left\{ \mathrm{i}k\right\} ,\quad S=s{\rm e}^{-\mathrm{i}{\it \frac{\omega}{2}}}=\exp\left\{ \mathrm{i}\left(k-{\it \frac{\omega}{2}}\right)\right\} ,\label{eq:monS2f}
\end{equation}
where $k$ is the wavenumber that can be real or complex-valued and
coefficients $c_{3}$ and $c_{2}$ satisfy
\begin{equation}
c_{3}=2{\rm e}^{\frac{\mathrm{i}}{2}\omega}b_{f}^{\infty}+{\rm e}^{-\frac{\mathrm{i}}{2}\omega}\left[C_{0}\frac{\chi}{\omega}\sin\left(\frac{\omega}{\chi}\right)\left(\frac{\omega^{2}}{\omega_{0}^{2}}-1\right)-2\cos\left(\frac{\omega}{\chi}\right)\right],\label{eq:monS2g}
\end{equation}
\begin{equation}
c_{2}=2b_{f}^{\infty}\left[C_{0}\frac{\chi}{\omega}\sin\left(\frac{\omega}{\chi}\right)\frac{\omega^{2}}{\omega_{0}^{2}}-2\,\cos\left(\frac{\omega}{\chi}\right)\right]+2\cos\left(f\right)C_{0}\frac{\chi}{\omega}\sin\left(\frac{\omega}{\chi}\right)+2\,\cos\left({\it \omega}\right),\label{eq:monS2h}
\end{equation}
where
\begin{equation}
b_{f}^{\infty}=K_{0}\sin\left(f\right)-\cos\left(f\right),\quad K_{0}=\frac{b^{2}\beta_{0}}{2f}=\frac{b^{2}g_{\mathrm{B}}}{c_{0}},\quad g_{\mathrm{B}}=\frac{\sigma_{\mathrm{B}}}{4\lambda_{\mathrm{rp}}}.\label{eq:monS2j}
\end{equation}
Note that the quantity $b_{f}^{\infty}$ in equation (\ref{eq:monS2j})
arises also in the theory of the MCK reviewed in Section \ref{sec:mckrev}
(see equation \ref{eq:bfKinf1s}). Note also that coefficient $c_{2}$
defined by equation (\ref{eq:monS2h}) is manifestly real. The utility
of representing the Floquet multipliers $s$ in the form $s=S{\rm e}^{\mathrm{i}{\it \frac{\omega}{2}}}$
is explained by the fact that equation (\ref{eq:monS2f}) for $S$
possesses a manifest symmetry: if $S$ is a solution to equation (\ref{eq:monS2f})
then $\frac{1}{\bar{S}}$ is its solution as well. The forth-order
polynomials carrying this special symmetry are considered in Section
\ref{sec:degpol4}.

It what follows to simplify analytical evaluations we make the following
assumption.

\begin{assumption} (exact synchronism) \label{ass:cavcoup}. To assure
efficient cavity coupling we assume the so-called exact synchronism
condition, that is $\chi=1$ meaning that TL velocity $w$ equals
exactly to the e-beam stationary velocity $\mathring{v}$, namely
$w=\mathring{v}$. It is also convenient to choose frequency units
so that $\omega_{0}=1$. Combining these two conditions we assume
\begin{equation}
\chi=1,\quad\omega_{0}=1.\label{eq:ccTL2f}
\end{equation}

\end{assumption}
\begin{figure}[h]
\begin{centering}
\includegraphics[scale=0.12]{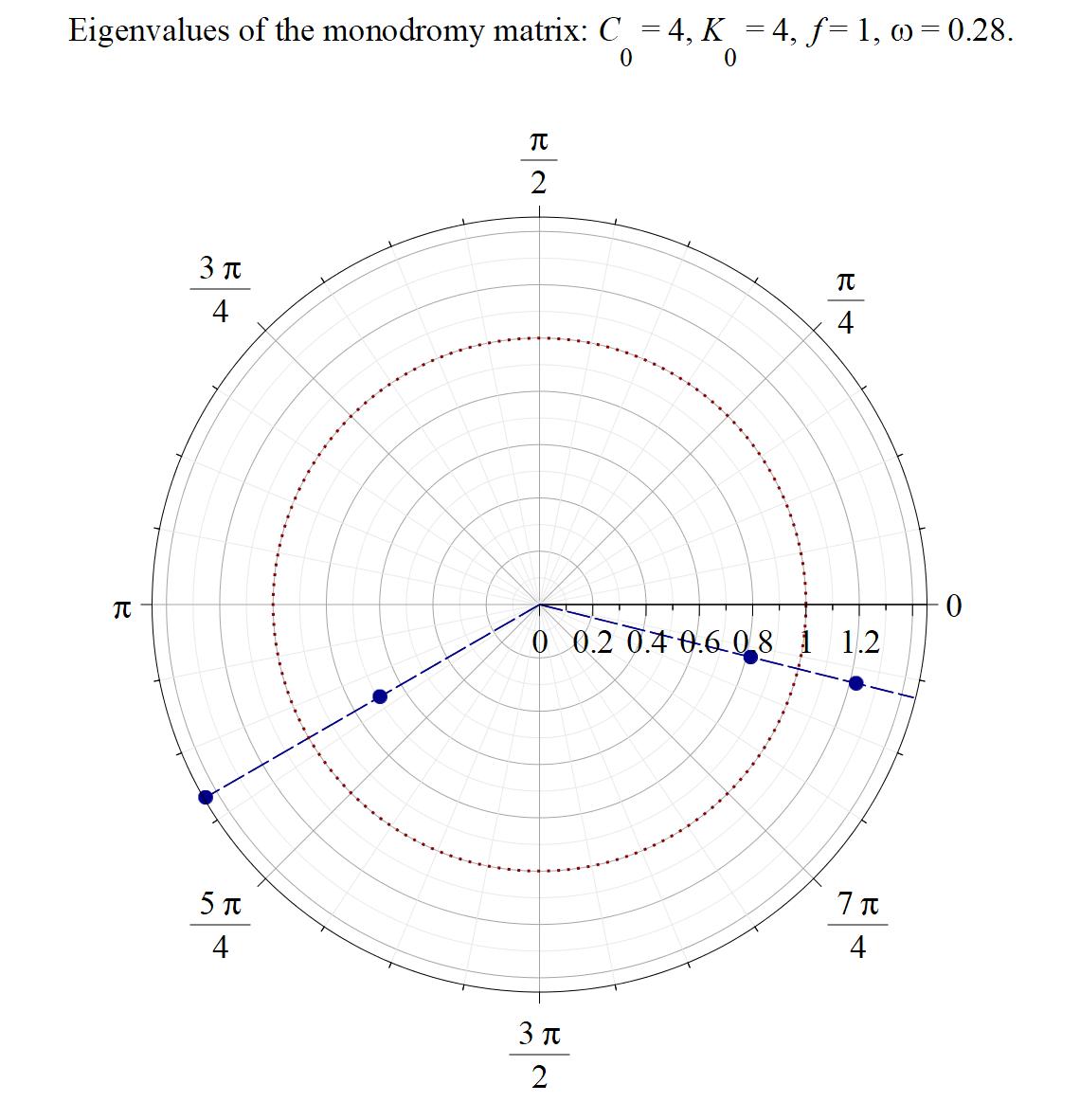}\hspace{0.2cm}\includegraphics[scale=0.115]{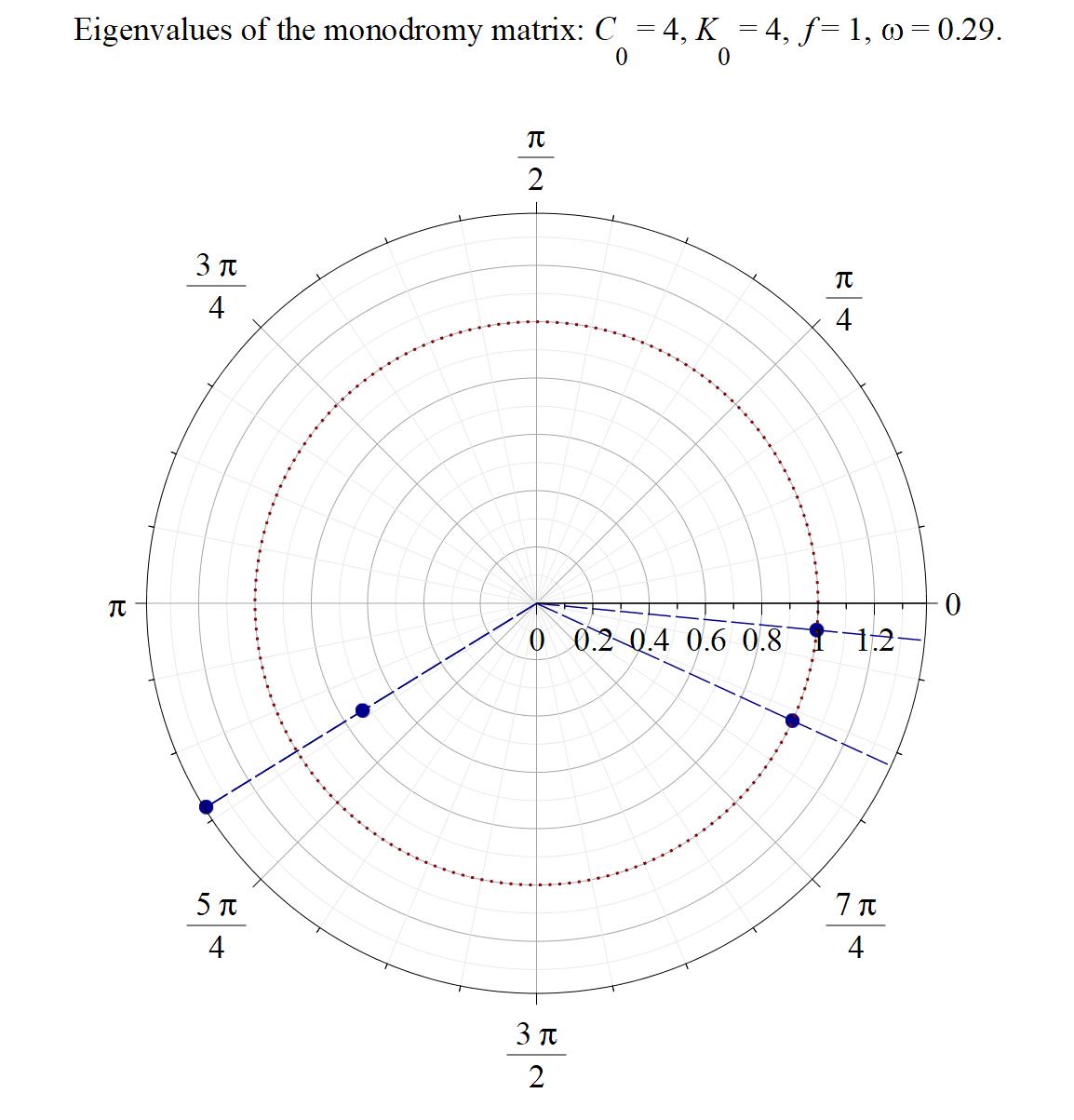}\hspace{0.2cm}\includegraphics[scale=0.115]{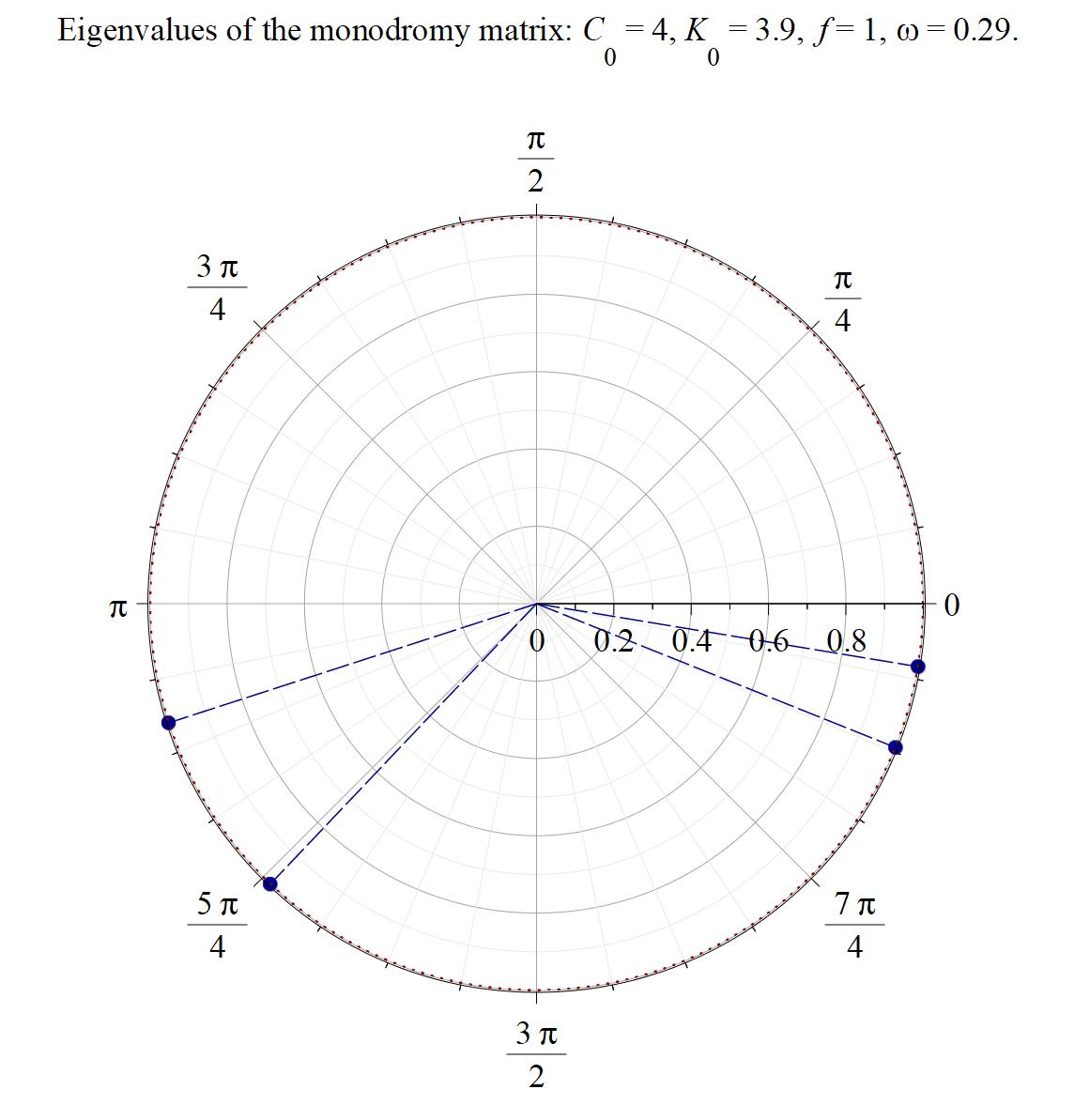}
\par\end{centering}
\centering{}(a)\hspace{4.3cm}(b)\hspace{4.3cm}(c)\caption{\label{fig:eig-mon-serp} The plots of the four complex eigenvalues
(the Floquet multipliers) $s=Se^{\mathrm{i}\frac{\omega}{2}}$ that
solve the characteristic equation (\ref{eq:monS2f}) for the monodromy
matrix $\mathscr{T}_{1}$ defined by equations (\ref{eq:monS2b})-(\ref{eq:monS2e})
in case when $\chi=1$, $\omega_{0}=1$, $f=1$, and $C_{0}=4$: (a)
$K_{0}=4$, $\omega=0.28$; (b) $K_{0}=4$, $\omega=0.29$ ; (c) $K_{0}=3.9,\omega=0.29$.
The horizontal and vertical axes represent respectively $\Re\left\{ s\right\} $
and $\Im\left\{ s\right\} $. The eigenvalues are shown by solid (blue)
dots.}
\end{figure}

\section{Dispersion relations\label{sec:disprel}}

The CCTWT evolution is governed by spatially periodic ODE (\ref{eq:Lagdim2b})-(\ref{eq:Lagdim2e})
implying that the dispersion relations as the relations between frequency
$\omega$ and wavenumber $k$ is constructed based on the Floquet
theory reviewed in Appendix \ref{sec:floquet}. Specifically in view
of the relation $s=\exp\left\{ \mathrm{i}k\right\} $ between the
Floquet multiplier $s$ and the wave number $k$ (see Section \ref{sec:floquet}
and Remark \ref{rem:disprel}) \emph{the characteristic equation (\ref{eq:monS2f})-
(\ref{eq:monS2j}) can be viewed as an expression of the dispersion
relations between the frequency $\omega$ and the wavenumber $k$
and we will refer to it as the CCTWT dispersion relations or just
the dispersion relations.}

Under simplifying exact synchronism Assumption \ref{ass:cavcoup},
that is $\chi=1$ and $\omega_{0}=1$, the dispersion relations described
by equations (\ref{eq:monS2f})-(\ref{eq:monS2j}) turn into
\begin{equation}
S^{4}+c_{3}S^{3}+c_{2}S^{2}+\bar{c}_{3}S+1=0,\quad s=\exp\left\{ \mathrm{i}k\right\} ,\quad S=\exp\left\{ \mathrm{i}\left(k-{\it \frac{\omega}{2}}\right)\right\} ,\label{eq:disp4S1a}
\end{equation}
\begin{equation}
c_{3}=2{\rm e}^{\frac{\mathrm{i}}{2}\omega}b_{f}^{\infty}+{\rm e}^{-\frac{\mathrm{i}}{2}\omega}\left[C_{0}\frac{\omega^{2}-1}{\omega}\sin\left(\omega\right)-2\cos\left(\omega\right)\right],\label{eq:disp4S1b}
\end{equation}
\begin{equation}
c_{2}=2b_{f}^{\infty}\left[C_{0}\omega\sin\left(\omega\right)-2\,\cos\left(\omega\right)\right]+2\cos\left(f\right)C_{0}\frac{\sin\left(\omega\right)}{\omega}+2\,\cos\left({\it \omega}\right).\label{eq:disp4S1c}
\end{equation}
where
\begin{equation}
b_{f}^{\infty}=K_{0}\sin\left(f\right)-\cos\left(f\right),\quad K_{0}=\frac{b^{2}\beta_{0}}{2f}=\frac{b^{2}g_{\mathrm{B}}}{c_{0}},\quad g_{\mathrm{B}}=\frac{\sigma_{\mathrm{B}}}{4\lambda_{\mathrm{rp}}}.\label{eq:disp4S1d}
\end{equation}

Yet another form of the dispersion relations (\ref{eq:disp4S1a}),
under the same exact synchronism assumption $\chi=1$ and $\omega_{0}=1,$
is its high-frequency form, namely
\begin{equation}
D\left(\omega,k\right)=D^{\left(0\right)}\left(\omega,k\right)+\frac{D^{\left(1\right)}\left(\omega,k\right)}{\omega}+\frac{D^{\left(2\right)}\left(\omega,k\right)}{\omega^{2}}=0,\label{eq:disDD1a}
\end{equation}
where
\begin{equation}
D^{\left(0\right)}\left(\omega,k\right)={\it C_{0}\sin\left(\omega\right)}\left(b_{f}^{\infty}+\cos\left(\omega-k\right)\right),\label{eq:disDD1b}
\end{equation}
\begin{equation}
D^{\left(1\right)}\left(\omega,k\right)=2\left(\cos\left(k\right)-\cos\left(\omega\right)\right){\it b_{f}^{\infty}+\cos\left(2\,k-\omega\right)}-\cos\left(k-2\,\omega\right)+\cos\left(\omega\right)-\cos\left(k\right),\label{eq:disDD1c}
\end{equation}
\emph{
\begin{equation}
D^{\left(2\right)}\left(\omega,k\right)=\left(\cos\left(f\right)-\cos\left(k-\omega\right)\right)\sin\left(\omega\right){\it {\it C_{0}}},\label{eq:disDD1d}
\end{equation}
}where $b_{f}^{\infty}$ is defined by equations (\ref{eq:disp4S1d}).
We refer to function $D\left(\omega,k\right)$ in equation (\ref{eq:disDD1a})
as \emph{the CCTWT dispersion function}.

There exists a remarkable in its simplicity relations between the
CCTWT dispersion function $D\left(\omega,k\right)$ and the dispersion
functions $D_{\mathrm{C}}\left(\omega,k\right)$ and $D_{\mathrm{K}}\left(\omega,k\right)$
for respectively the CCS and the MCK systems. These relations can
be verified by tedious but elementary algebraic evaluations and they
are subjects of the following theorem.
\begin{thm}[dispersion function factorization]
\label{thm:dispfac} Let us assume that $\chi=\omega_{0}=1$. Let
the CCTWT, the CCS and the MCK dispersion functions $D\left(\omega,k\right)$,
$D_{\mathrm{C}}\left(\omega,k\right)$ and $D_{\mathrm{K}}\left(\omega,k\right)$
be defined by respectively equations (\ref{eq:disp4S1a}), (\ref{eq:monTc1f})
and (\ref{eq:DKomk1a}). Then the following identity hold:
\begin{gather}
D\left(\omega,k\right)-D_{\mathrm{C}}\left(\omega,k\right)D_{\mathrm{K}}\left(\omega,k\right)=\frac{K_{0}}{\omega}\left[\frac{2\left(\cos\left(\omega\right)-\cos\left(k\right)\right)}{\omega^{2}-1}-\frac{C_{0}\sin\left(\omega\right)\left(1-\sin\left(f\right)\right)}{\omega}\right],\label{eq:disDD2a}\\
K_{0}=\frac{b^{2}\beta_{0}}{2f}=\frac{b^{2}g_{\mathrm{B}}}{c_{0}},\quad g_{\mathrm{B}}=\frac{\sigma_{\mathrm{B}}}{4\lambda_{\mathrm{rp}}}.\label{eq:disDD2aK}
\end{gather}
In the case of the high-frequency approximation the following identity
holds:
\begin{equation}
D^{\left(0\right)}\left(\omega,k\right)=D_{\mathrm{C}}^{\left(0\right)}\left(\omega,k\right)D_{\mathrm{K}}^{\left(0\right)}\left(\omega,k\right)={\it C_{0}\sin\left(\omega\right)}\left(b_{f}^{\infty}+\cos\left(\omega-k\right)\right).\label{eq:disDD2b}
\end{equation}
The dispersion function identities (\ref{eq:disDD2a}) and (\ref{eq:disDD2b})
signify a very particular way the CCS and the MCK subsystems are coupled
and integrated into the CCTWT system. The right-hand side of the identity
(\ref{eq:disDD2a}) can be naturally viewed as a measure of coupling
between the CCS and the MCK subsystems
\end{thm}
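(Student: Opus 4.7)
The plan is to prove both identities by direct algebraic computation, exploiting the explicit block-diagonal structure of the decoupled monodromy matrix established in equation (\ref{eq:PTP1a}). First I would write down closed-form expressions for the CCS and MCK dispersion functions under the exact synchronism assumption $\chi=\omega_{0}=1$. From the CCS characteristic equation (\ref{eq:PTP1d}), substituting $s=e^{\mathrm{i}k}$ and normalizing gives $D_{\mathrm{C}}(\omega,k)=2\cos k+2W_{\mathrm{C}}(\omega)$ with $W_{\mathrm{C}}(\omega)=\tfrac{C_{0}}{2}\bigl(\omega-\tfrac{1}{\omega}\bigr)\sin(\omega)-\cos(\omega)$, while $D_{\mathrm{K}}(\omega,k)$ from (\ref{eq:DKomk1a}) will have an analogous form built from $b_{f}^{\infty}=K_{0}\sin(f)-\cos(f)$. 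The CCTWT dispersion function $D(\omega,k)$ is already given in the $1/\omega$-expanded form (\ref{eq:disDD1a})--(\ref{eq:disDD1d}), and one should reorganize $D_{\mathrm{C}}D_{\mathrm{K}}$ into the same expansion in powers of $1/\omega$.

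I would then verify the leading-order identity (\ref{eq:disDD2b}) first, since it serves both as a sanity check and as a skeleton for the full identity. The factor $C_{0}\sin(\omega)$ appearing in $D^{(0)}(\omega,k)$ should come from $D_{\mathrm{C}}^{(0)}$ (the $\omega$-divergent piece of $W_{\mathrm{C}}$ multiplied through), while the factor $b_{f}^{\infty}+\cos(\omega-k)$ should come from $D_{\mathrm{K}}^{(0)}$. Collecting the leading-$\omega$ contributions in the expansion and comparing with (\ref{eq:disDD1b}) yields (\ref{eq:disDD2b}) directly.

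For the main identity (\ref{eq:disDD2a}), I would form the difference $D(\omega,k)-D_{\mathrm{C}}(\omega,k)D_{\mathrm{K}}(\omega,k)$ and expand in powers of $1/\omega$. The $O(1)$ piece cancels by (\ref{eq:disDD2b}), so the surviving terms all carry at least one factor of $1/\omega$. Using the product-to-sum trigonometric identities $2\cos\alpha\cos\beta=\cos(\alpha-\beta)+\cos(\alpha+\beta)$ to rewrite terms of the form $\cos(k)\cos(\omega)$ and $\cos(k)\sin(\omega)$ in $D_{\mathrm{C}}D_{\mathrm{K}}$, and matching them against the corresponding $D^{(1)}$ and $D^{(2)}$ contributions, most summands should cancel in pairs. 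What remains is the explicit expression on the right of (\ref{eq:disDD2a}): the first bracketed term $2(\cos\omega-\cos k)/(\omega^{2}-1)$ arises from the $W_{\mathrm{C}}$-contributions proportional to $(\omega-1/\omega)^{-1}$ after combining with $b_{f}^{\infty}\cos k$ pieces in the MCK factor, and the term $-C_{0}\sin(\omega)(1-\sin f)/\omega$ collects the residual $K_{0}$-weighted contributions from the interaction.

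The main obstacle will be bookkeeping rather than conceptual: $D_{\mathrm{C}}D_{\mathrm{K}}$ expands into roughly a dozen trigonometric products, and extracting the $K_{0}/\omega$ and $K_{0}C_{0}/\omega$ surviving terms requires carefully tracking the sign of each contribution and correctly applying the identity for $\cos(k-\omega)$ that appears in $D^{(2)}$. A computer-algebra verification (Maple or Mathematica) is the natural safeguard, and the theorem's own wording ("tedious but elementary") signals that this is the intended route.
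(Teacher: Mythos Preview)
Your approach is essentially the one the paper indicates: the text immediately preceding the theorem says the identities ``can be verified by tedious but elementary algebraic evaluations,'' and no further argument is given. Direct expansion of $D-D_{\mathrm{C}}D_{\mathrm{K}}$ in powers of $1/\omega$, with trigonometric product-to-sum simplification, is exactly what is intended.

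One concrete slip to fix before you start computing: your expression $D_{\mathrm{C}}(\omega,k)=2\cos k+2W_{\mathrm{C}}(\omega)$ is off by a factor of $\omega$. The paper's $D_{\mathrm{C}}$ is defined in (\ref{eq:monTc1f}) as the result of dividing the characteristic relation $\cos k+W_{\mathrm{C}}(\omega)=0$ by $\omega$, so that
\[
D_{\mathrm{C}}(\omega,k)=C_{0}\sin(\omega)+\frac{2\bigl(\cos k-\cos\omega\bigr)}{\omega}-\frac{C_{0}\sin(\omega)}{\omega^{2}},\qquad D_{\mathrm{C}}^{(0)}(\omega,k)=C_{0}\sin(\omega).
\]
Without this normalization the leading-order product $D_{\mathrm{C}}^{(0)}D_{\mathrm{K}}^{(0)}$ would carry an extra $\omega$ and (\ref{eq:disDD2b}) would not match. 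Once you use the correct $D_{\mathrm{C}}$ and the MCK function $D_{\mathrm{K}}(\omega,k)=\cos(\omega-k)+b_{f}^{\infty}+K_{0}\sin(f)/(\omega^{2}-1)$ from (\ref{eq:DKomk1a}) (with $\omega_{0}=1$), the rest of your plan goes through as described.
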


\begin{rem}[graphical confirmation of the dispersion factorization]
\label{rem:dispfac} The statements of the Theorem \ref{thm:dispfac}
are well illustrated by Figs. \ref{fig:dis-serp-K0}(f), \ref{fig:dis-serp-C0}(f)
, \ref{fig:dis-serp-f}(f) and \ref{fig:disp-serp-t} when compared
with Figure \ref{fig:dis-ccs-wom2} for the CCS and Figure \ref{fig:mck-disp3s}
for the MCK. One can confidently identify in the CCTWT dispersion-instability
graphs the patterns of the dispersion-instability graphs of its integral
components - the CCS and the MCK.
\end{rem}

Let us consider now the conventional dispersion relation assuming
that $k$ and $\omega$ must be real numbers. Then dividing equation
(\ref{eq:disp4S1a}) by $S^{2}$ and carrying elementary transformations
we arrive at the following trigonometric form of the conventional
dispersion relation:
\begin{gather}
\cos\left(2k-\omega\right)+\left|c_{3}\right|\cos\left(k-\frac{\omega}{2}+\alpha\right)=-\frac{c_{2}}{2},\label{eq:disp4S1e}\\
c_{3}=\left|c_{3}\right|\exp\left\{ \mathrm{i}\alpha\right\} ,\quad S=\exp\left\{ \mathrm{i}\left(k-\frac{\omega}{2}\right)\right\} ,\quad k,\omega\in\mathbb{R},\nonumber 
\end{gather}
where $c_{3}=c_{3}\left(\omega\right)$, $c_{2}=c_{2}\left(\omega\right)$
and $\alpha=\arg\left\{ c_{3}\left(\omega\right)\right\} $ are frequency
dependent parameters satisfying equations (\ref{eq:disp4S1b})-(\ref{eq:disp4S1d}). 

\subsection{Graphical representation of the dispersion relations}

As to the graphical representation of the dispersion relation recall
that the conventional dispersion relations are defined as the relations
between real-valued frequency $\omega$ and real-valued wavenumber
$k$ associated with the relevant eigenmodes. In the case of interest
$k$ can be complex-valued and to represent all system modes geometrically
we follow to \cite[7]{FigTWTbk}. First, we parametrize every mode
of the system uniquely by the pair $\left(k\left(\omega\right),\omega\right)$
where $\omega$ is its frequency and $k\left(\omega\right)$ is its
wavenumber. If $k\left(\omega\right)$ is degenerate, it is counted
a number of times according to its multiplicity. In view of the importance
to us of the mode instability, that is, when $\Im\left\{ k\left(\omega\right)\right\} \neq0$,
we partition all the system modes represented by pairs $\left(k\left(\omega\right),\omega\right)$
into two distinct classes \textendash{} oscillatory modes and unstable
ones \textendash{} based on whether the wavenumber $k\left(\omega\right)$
is real- or complex-valued with $\Im\left\{ k\left(\omega\right)\right\} \neq0$.
We refer to a mode (eigenmode) of the system as an \emph{oscillatory
mode} if its wavenumber $k\left(\omega\right)$ is real-valued. We
associate with such an oscillatory mode point $\left(k\left(\omega\right),\omega\right)$
in the $k\omega$-plane with $k$ being the horizontal axis and $\omega$
being the vertical one. Similarly, we refer to a mode (eigenmode)
of the system as a \emph{(convective) unstable mode} if its wavenumber
$k$ is complex-valued with a nonzero imaginary part, that is, $\Im\left\{ k\left(\omega\right)\right\} \neq0$.
We associate with such an unstable mode point $\left(\Re\left\{ k\left(\omega\right)\right\} ,\omega\right)$
in the $k\omega$-plane. Since we consider here only \emph{convective
unstable modes}, we refer to them shortly as \emph{unstable modes}.
Notice that every point $\left(\Re\left\{ k\left(\omega\right)\right\} ,\omega\right)$
is in fact associated with two complex conjugate system modes with
$\pm\Im\left\{ k\left(\omega\right)\right\} $.
\begin{figure}[h]
\begin{centering}
\includegraphics[scale=0.12]{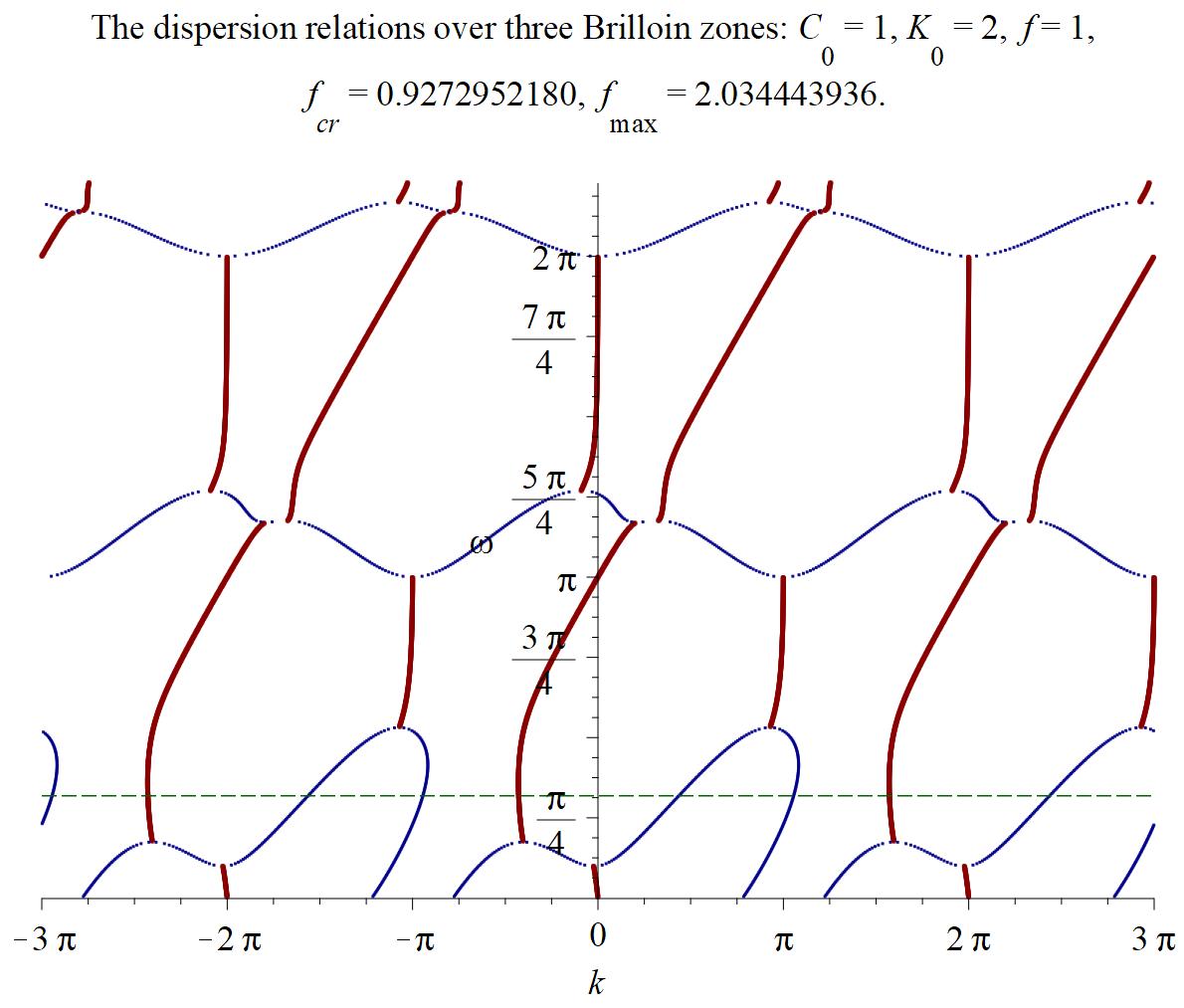}\hspace{0.2cm}\includegraphics[scale=0.12]{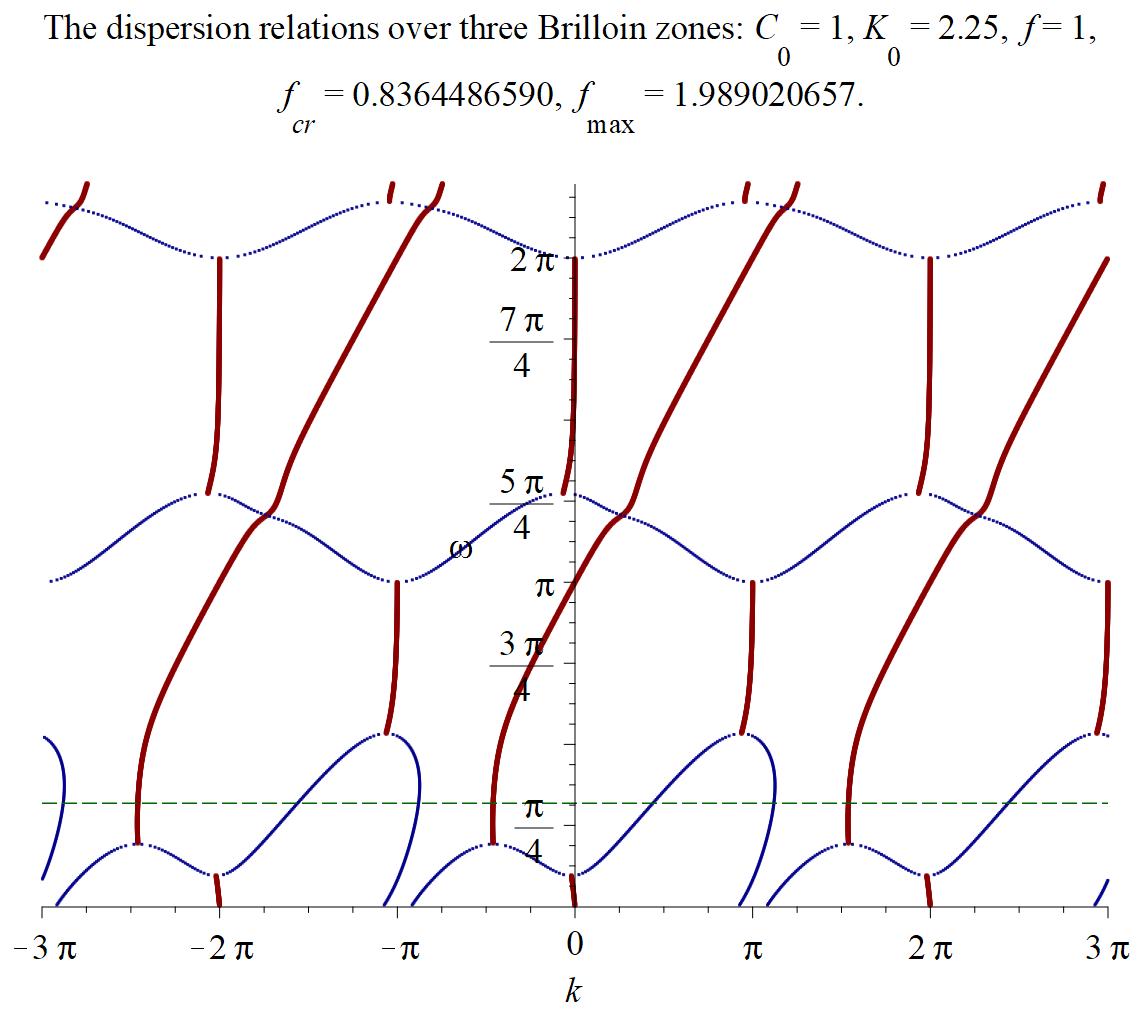}\hspace{0.2cm}\includegraphics[scale=0.12]{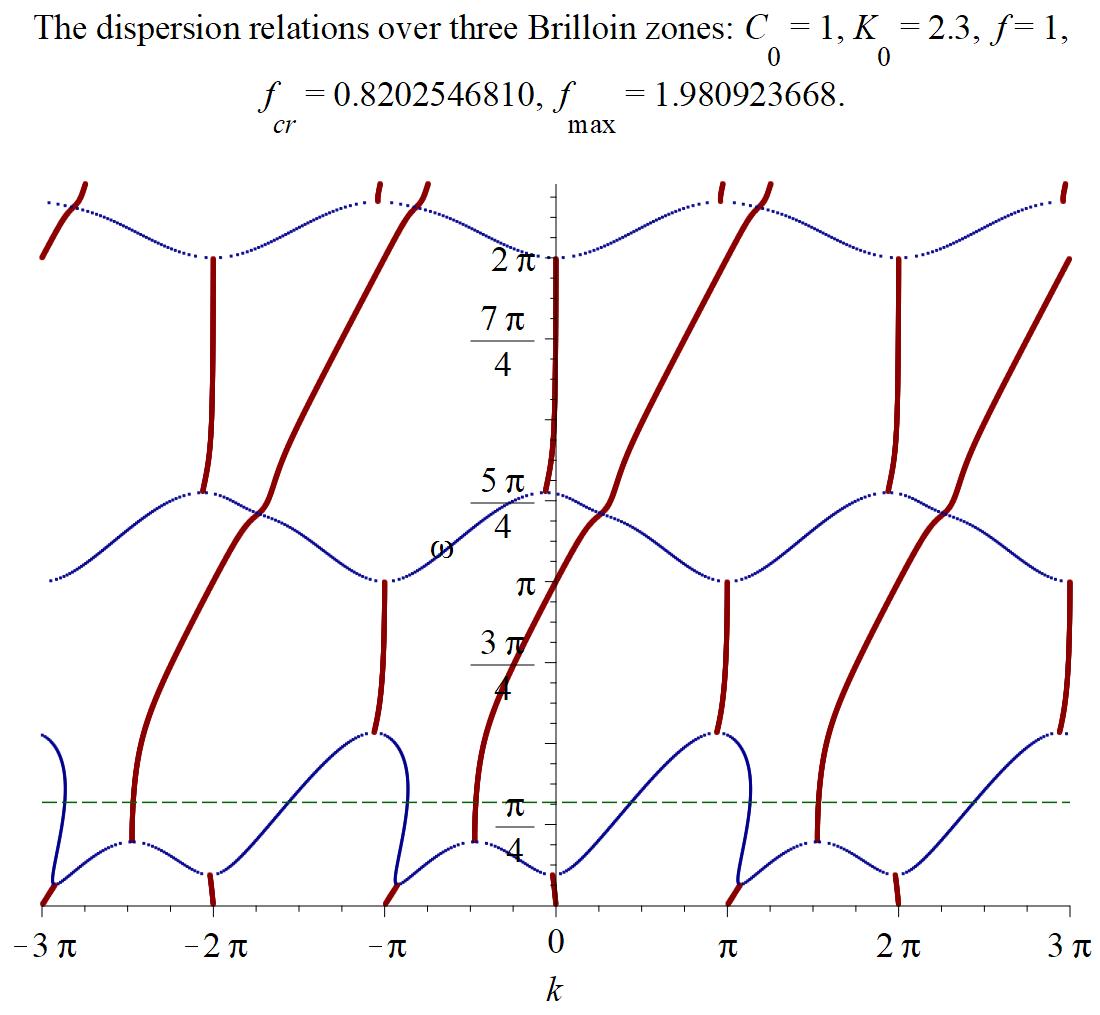}
\par\end{centering}
\begin{centering}
(a)\hspace{4.3cm}(b)\hspace{4.3cm}(c)
\par\end{centering}
\begin{centering}
\includegraphics[scale=0.12]{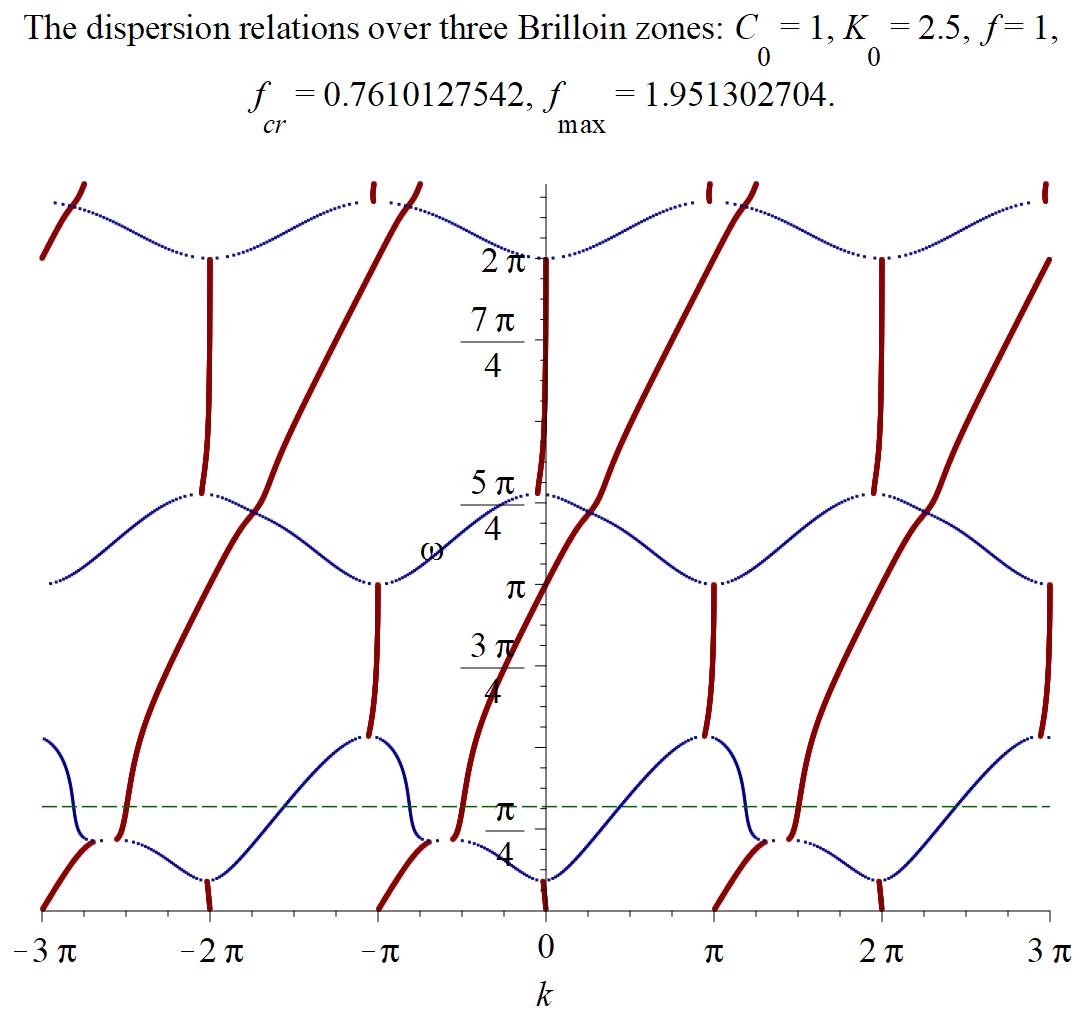}\hspace{0.2cm}\includegraphics[scale=0.12]{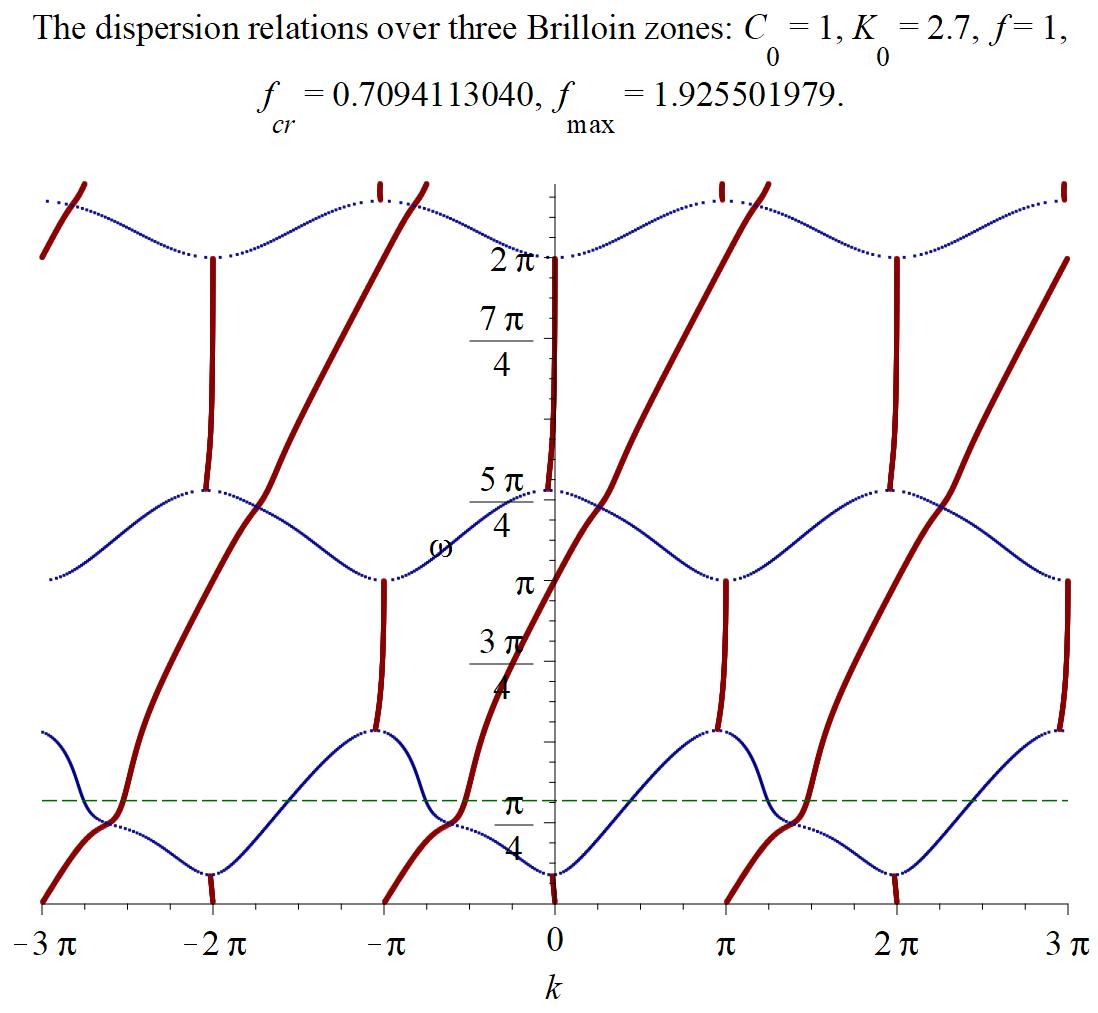}\hspace{0.2cm}\includegraphics[scale=0.12]{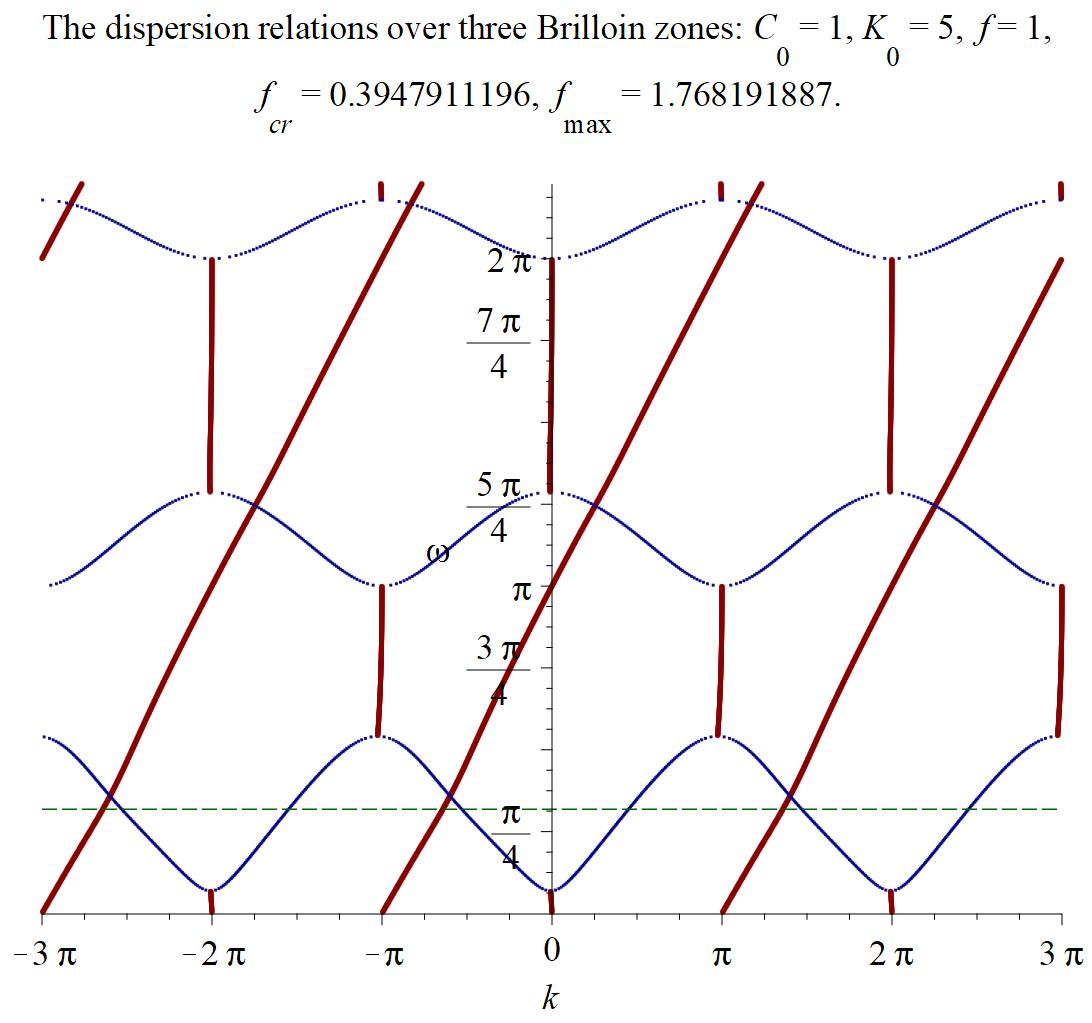}
\par\end{centering}
\centering{}(d)\hspace{4.3cm}(e)\hspace{4.3cm}(f)\caption{\label{fig:dis-serp-K0} The dispersion-instability graphs for the
CCTWT as the gain coefficient $K_{0}$ varies. In all plots the horizontal
and vertical axes represent respectively $\Re\left\{ k\right\} $
and $\omega$. Each of the plots shows 3 bands of the dispersion of
the CCTWT described by equations (\ref{eq:monS2f})-(\ref{eq:monS2j})
over 3 Brillouin zones $\Re\left\{ k\right\} \in\left[-3\pi,3\pi\right]$
for $\chi=1$, $\omega_{0}=1$, $f=1$, and $C_{0}=1$: (a) $K_{0}=2$,
$f_{\mathrm{cr}}\protect\cong0.927292180$ and $f_{\mathrm{max}}\protect\cong2.034$;
(b) $K_{0}=2.25$, $f_{\mathrm{cr}}\protect\cong0.837$ and $f_{\mathrm{max}}\protect\cong1.99$;
(c) $K_{0}=2.3$, $f_{\mathrm{cr}}\protect\cong0.820$ and $f_{\mathrm{max}}\protect\cong1.981$;
(d) $K_{0}=2.5$ , $f_{\mathrm{cr}}\protect\cong0.761$ and $f_{\mathrm{max}}\protect\cong1.951$;
(e) $K_{0}=2.7$, $f_{\mathrm{cr}}\protect\cong0.709$ and $f_{\mathrm{max}}\protect\cong1.926$;
(f) $K_{0}=5$, $f_{\mathrm{cr}}\protect\cong0.395$ and $f_{\mathrm{max}}\protect\cong1.768$
(see Theorem \ref{thm:dispfac}). When $\Im\left\{ k_{\pm}\left(\omega\right)\right\} =0$,
that is the case of oscillatory modes, and $\Re\left\{ k\left(\omega\right)\right\} =k\left(\omega\right)$
the corresponding branches are shown as solid (blue) curves. When
$\Im\left\{ k_{\pm}\left(\omega\right)\right\} \protect\neq0$, that
is there is an instability, and $\Re\left\{ k_{+}\left(\omega\right)\right\} =\Re\left\{ k_{-}\left(\omega\right)\right\} $
then the corresponding branches overlap, they are shown as bold solid
curves in brown color, and each point of these branches represents
exactly two modes with complex-conjugate wave numbers $k_{\pm}$.}
\end{figure}

Based on the above discussion, we represent the set of all oscillatory
and unstable modes of the system geometrically by the set of the corresponding
modal points $\left(k\left(\omega\right),\omega\right)$ and $\left(\Re\left\{ k\left(\omega\right)\right\} ,\omega\right)$
in the $k\omega$-plane. We name this set the \emph{dispersion-instability
graph}. To distinguish graphically points $\left(k\left(\omega\right),\omega\right)$
associated oscillatory modes when $k\left(\omega\right)$ is real-valued
from points $\left(\Re\left\{ k\left(\omega\right)\right\} ,\omega\right)$
associated unstable modes when $k\left(\omega\right)$ is complex-valued
with $\Im\left\{ k\left(\omega\right)\right\} \neq0$ we show points
$\Im\left\{ k\left(\omega\right)\right\} =0$ in blue color whereas
points with $\Im\left\{ k\left(\omega\right)\right\} \neq0$ are shown
in brown color. We remind once again that every point $\left(\omega,\Re\left\{ k\left(\omega\right)\right\} \right)$
with $\Im\left\{ k\left(\omega\right)\right\} \neq0$ represents exactly
two complex conjugate unstable modes associated with $\pm\Im\left\{ k\left(\omega\right)\right\} $.

When $\Im\left\{ k_{\pm}\left(\omega\right)\right\} \neq0$ and $\Re\left\{ k_{+}\left(\omega\right)\right\} =\Re\left\{ k_{-}\left(\omega\right)\right\} $
and consequently the corresponding branches overlap with each point
on the segments representing two modes with complex-conjugate wave
numbers $k_{\pm}$. These branches represent exponentially growing
or decaying in the space modes and shown in plot (c) in brown color.

We generated three sets of dispersion-instability graphs for the CCTWT
shown in Figs. \ref{fig:dis-serp-K0}, \ref{fig:dis-serp-C0} and
\ref{fig:dis-serp-f} to demonstrate their dependence on the gain
coefficient $K_{0}$, the capacitance parameter $C_{0}$ and the normalized
period $f$ as they vary in indicated ranges. Figs. \ref{fig:dis-serp-K0}(f),
\ref{fig:dis-serp-C0}(f), \ref{fig:dis-serp-f}(f) and \ref{fig:disp-serp-t}
when compared with Figure \ref{fig:dis-ccs-wom2} for the CCS and
Figure \ref{fig:mck-disp3s} for the MCK clearly indicate that the
CCTWT dispersion-instability graph is composed of the dispersion-instability
graphs of its integral components - the CCS and the MCK. The later
is important since the CCS and the MCK are significantly simpler systems
compare to the original CCTWT.
\begin{figure}[h]
\begin{centering}
\includegraphics[scale=0.12]{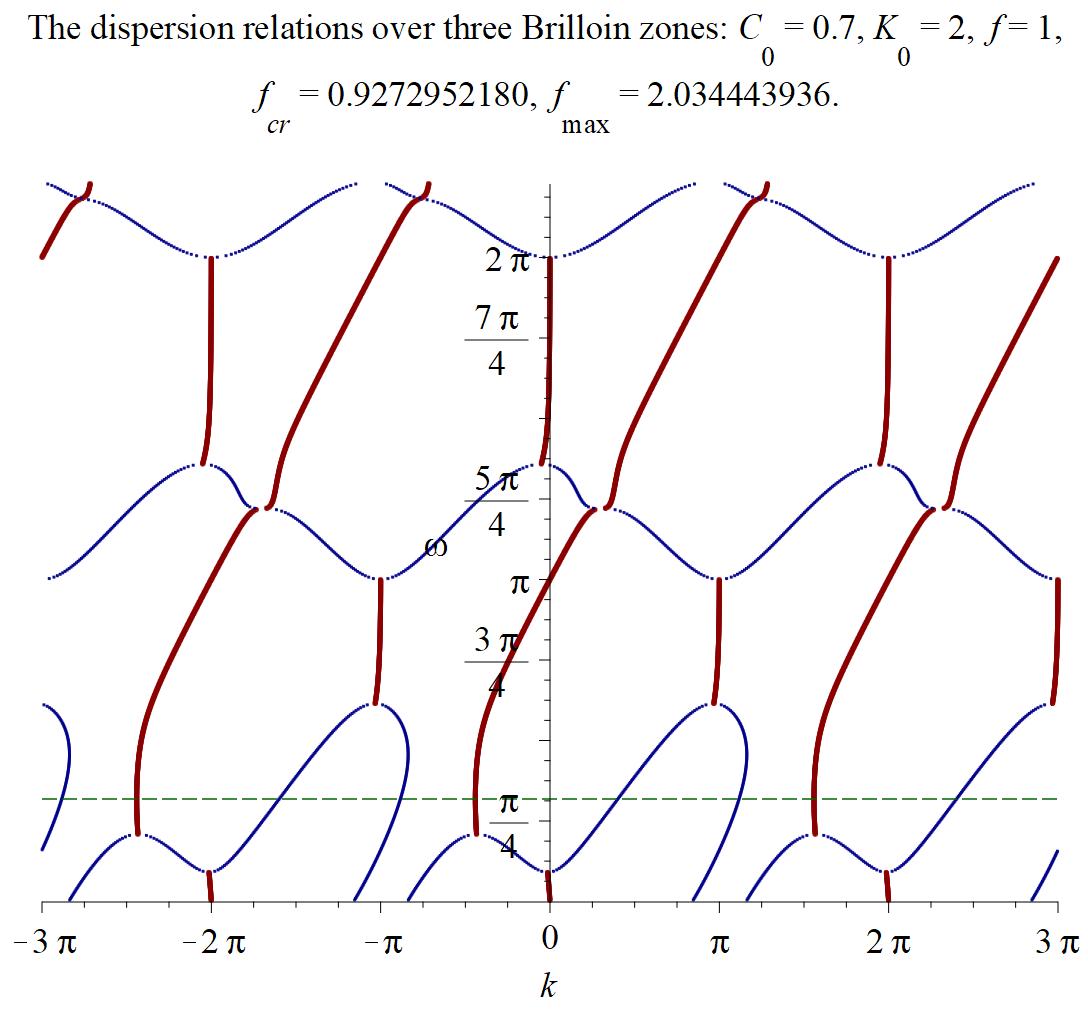}\hspace{0.2cm}\includegraphics[scale=0.12]{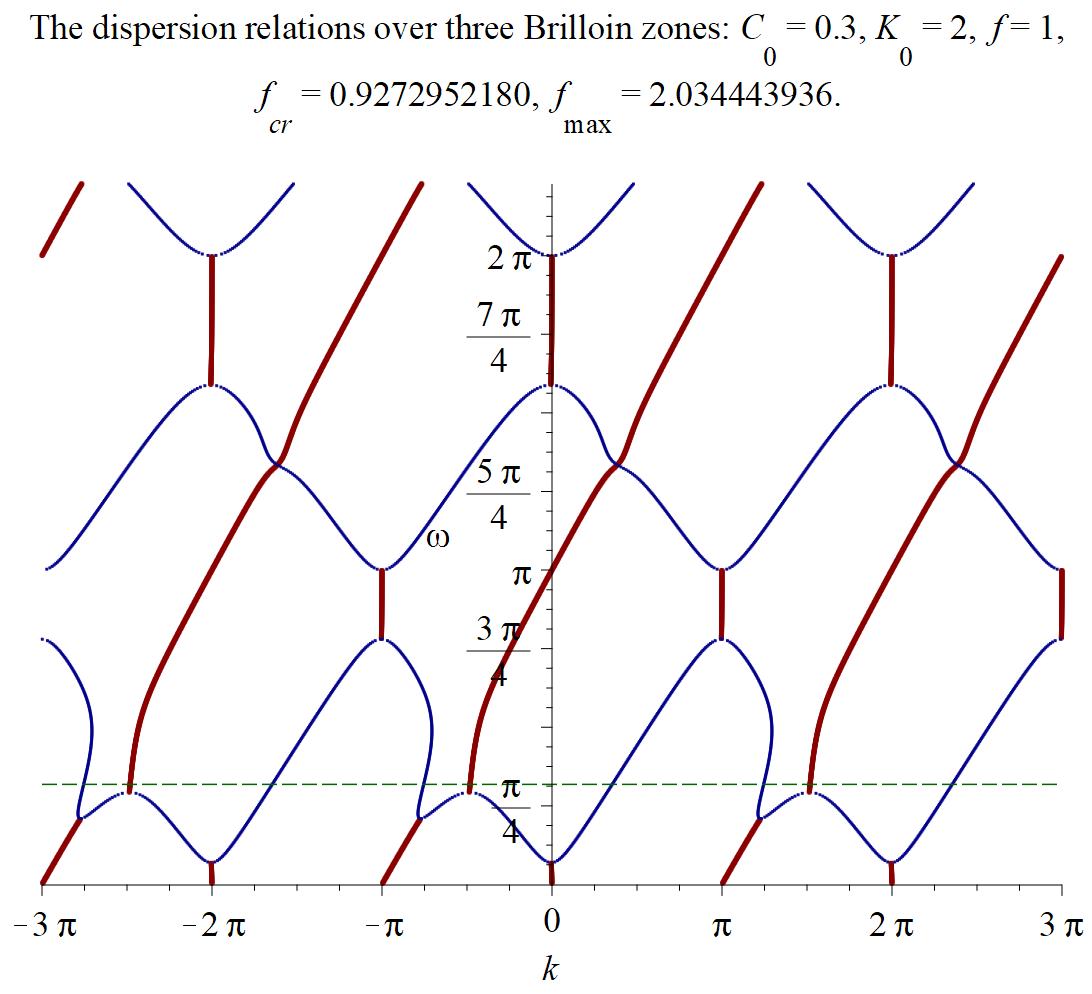}\hspace{0.2cm}\includegraphics[scale=0.12]{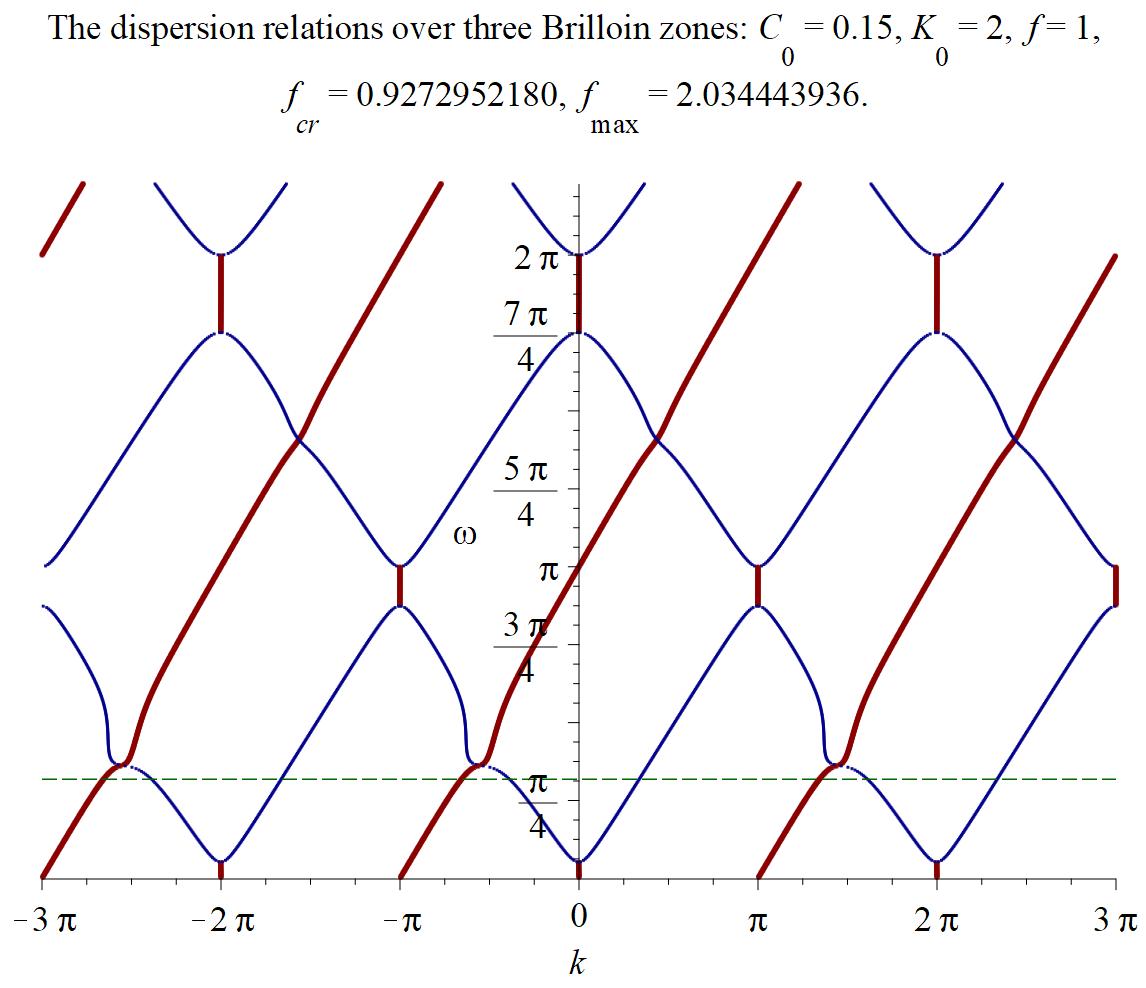}
\par\end{centering}
\begin{centering}
(a)\hspace{4.3cm}(b)\hspace{4.3cm}(c)
\par\end{centering}
\begin{centering}
\includegraphics[scale=0.12]{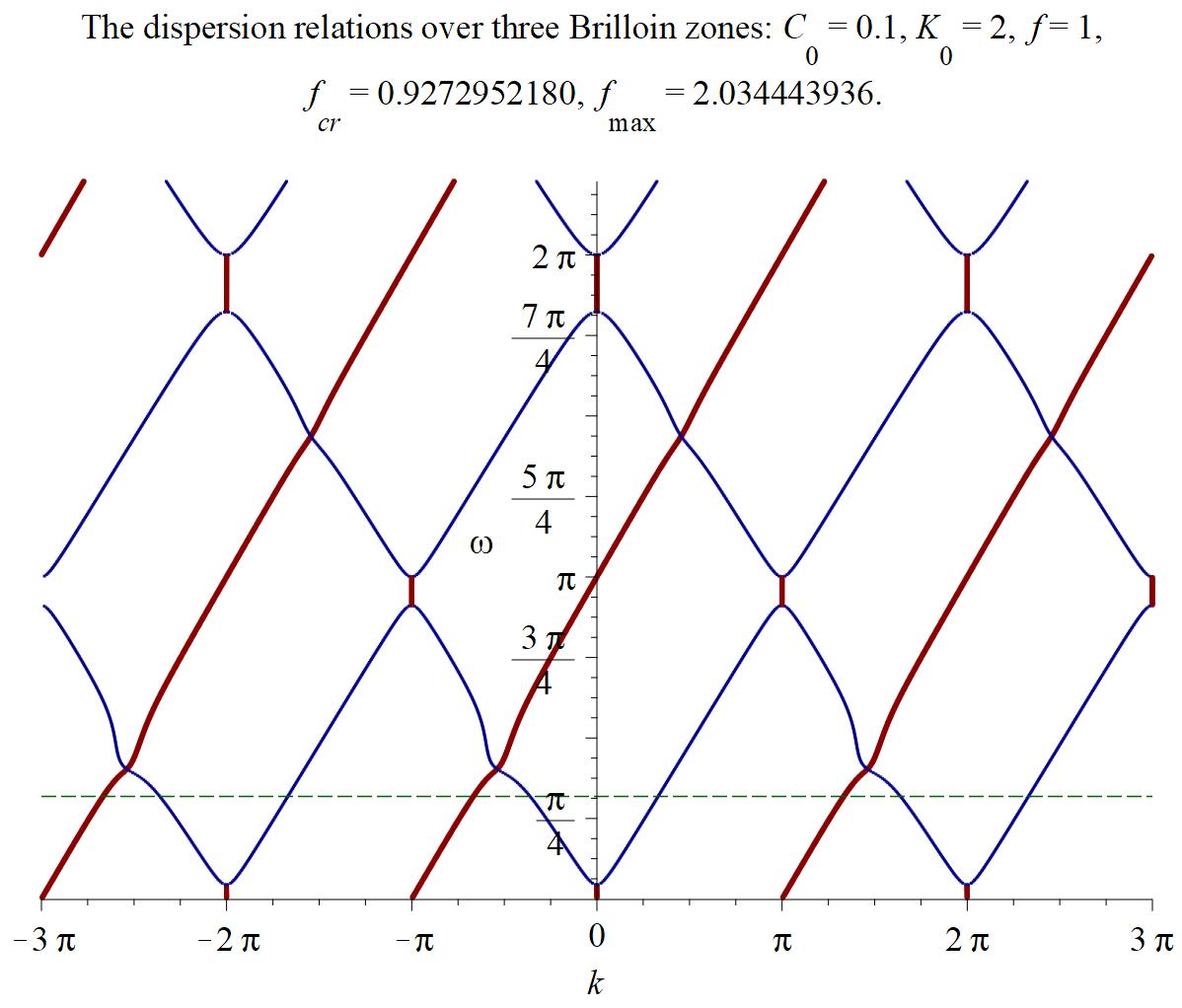}\hspace{0.2cm}\includegraphics[scale=0.12]{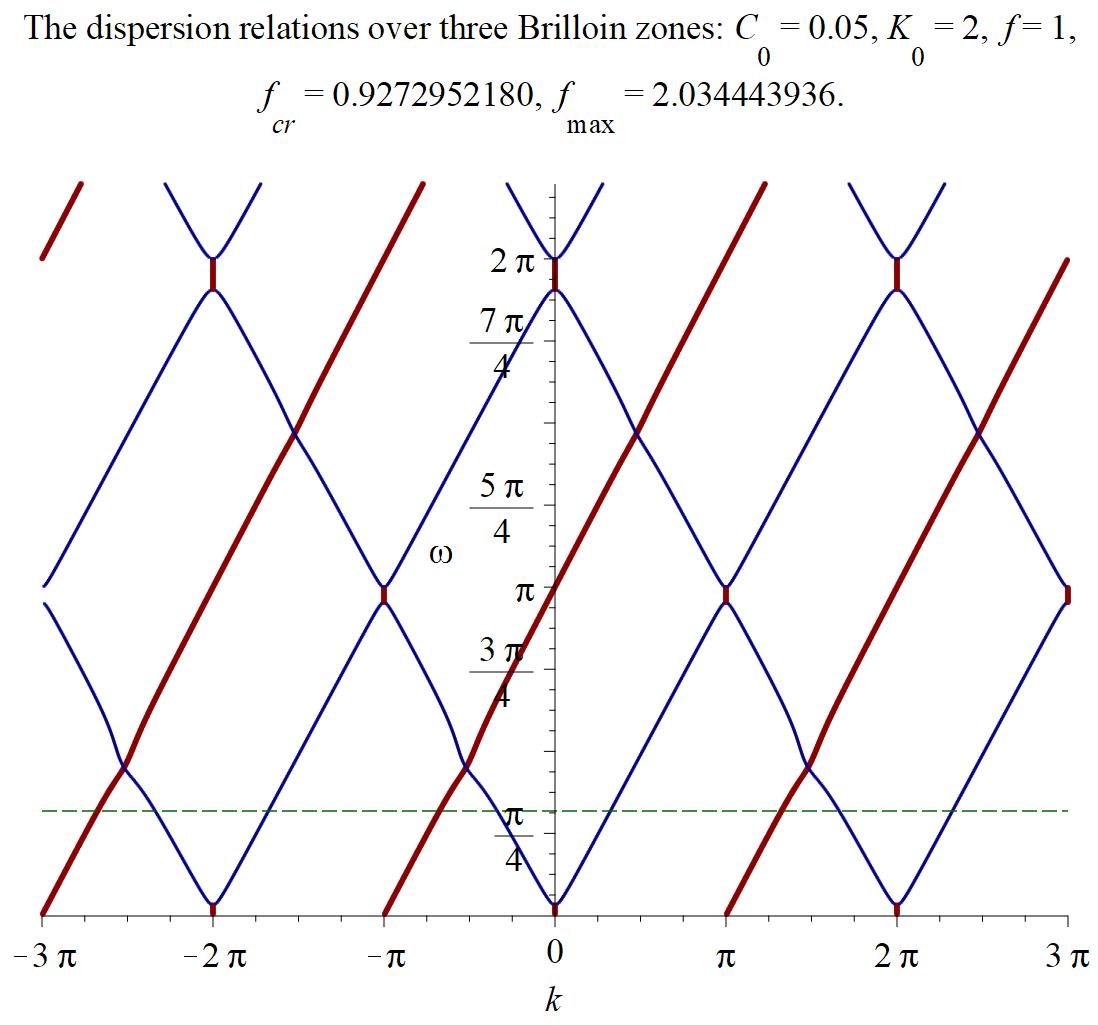}\hspace{0.2cm}\includegraphics[scale=0.12]{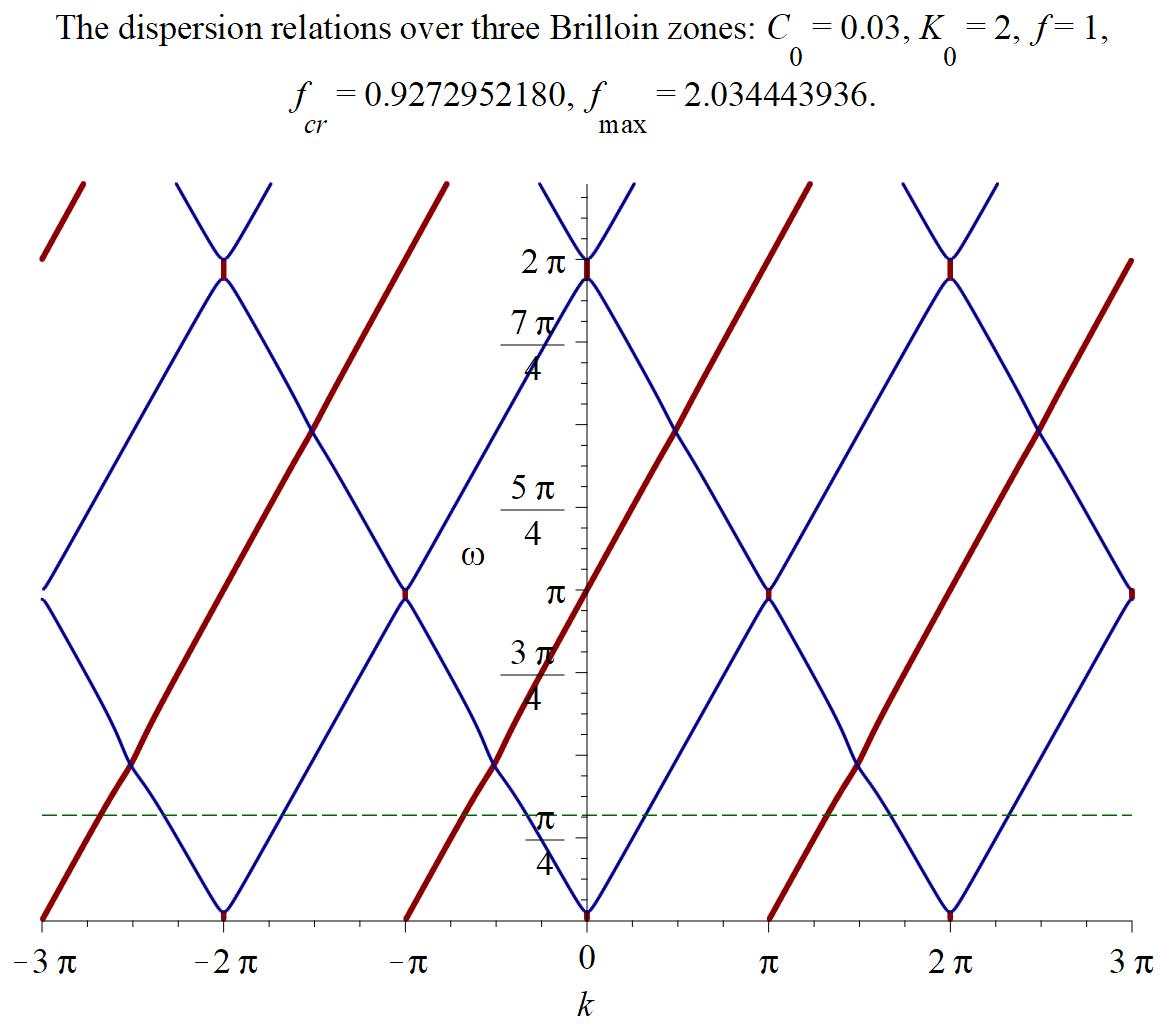}
\par\end{centering}
\centering{}(d)\hspace{4.3cm}(e)\hspace{4.3cm}(f)\caption{\label{fig:dis-serp-C0} The dispersion-instability graphs for the
CCTWT as the capacitance parameter $C_{0}$ varies. In all plots the
horizontal and vertical axes represent respectively $\Re\left\{ k\right\} $
and $\omega$. Each of the plots shows 3 bands of the dispersion of
the CCTWT described by equations (\ref{eq:monS2f})-(\ref{eq:monS2j})
over 3 Brillouin zones $\Re\left\{ k\right\} \in\left[-3\pi,3\pi\right]$
for $\chi=1$, $\omega_{0}=1$, $f=1$, and $K_{0}=2$, $f_{\mathrm{cr}}\protect\cong0.927292180$
and $f_{\mathrm{max}}\protect\cong2.034$ : (a) $C_{0}=0.7$ ; (b)
$C_{0}=0.3$ ; (c) $C_{0}=0.15$ ; (d) $C_{0}=0.1$ ; (e) $C_{0}=0.05$
; (f) $C_{0}=0.03$ (see Theorem \ref{thm:dispfac}). When $\Im\left\{ k_{\pm}\left(\omega\right)\right\} =0$,
that is the case of oscillatory modes, and $\Re\left\{ k\left(\omega\right)\right\} =k\left(\omega\right)$
the corresponding branches are shown as solid (blue) curves. When
$\Im\left\{ k_{\pm}\left(\omega\right)\right\} \protect\neq0$, that
is there is an instability, and $\Re\left\{ k_{+}\left(\omega\right)\right\} =\Re\left\{ k_{-}\left(\omega\right)\right\} $
then the corresponding branches overlap, they are shown as bold solid
(brown) curves in brown color, and each point of these branches represents
exactly two modes with complex-conjugate wave numbers $k_{\pm}$.}
\end{figure}
\begin{figure}[h]
\begin{centering}
\includegraphics[scale=0.12]{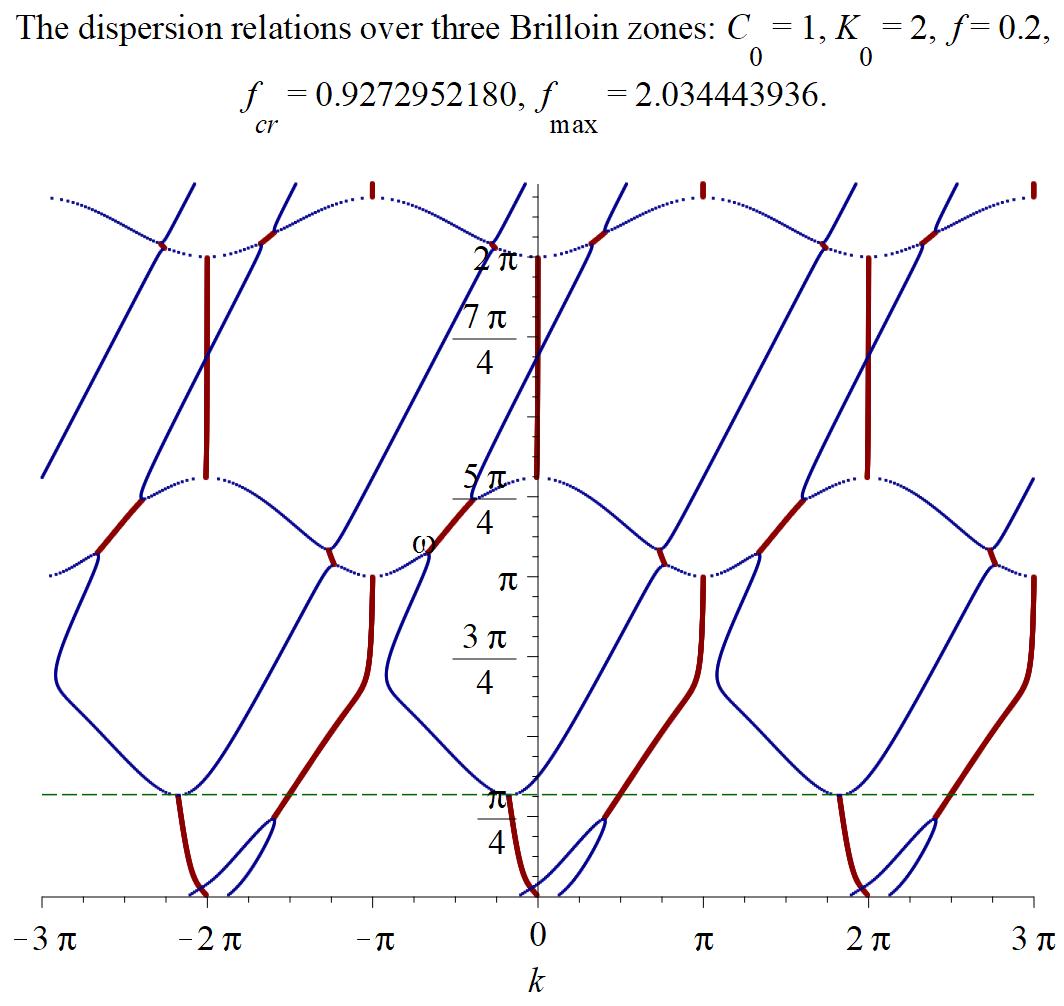}\hspace{0.2cm}\includegraphics[scale=0.12]{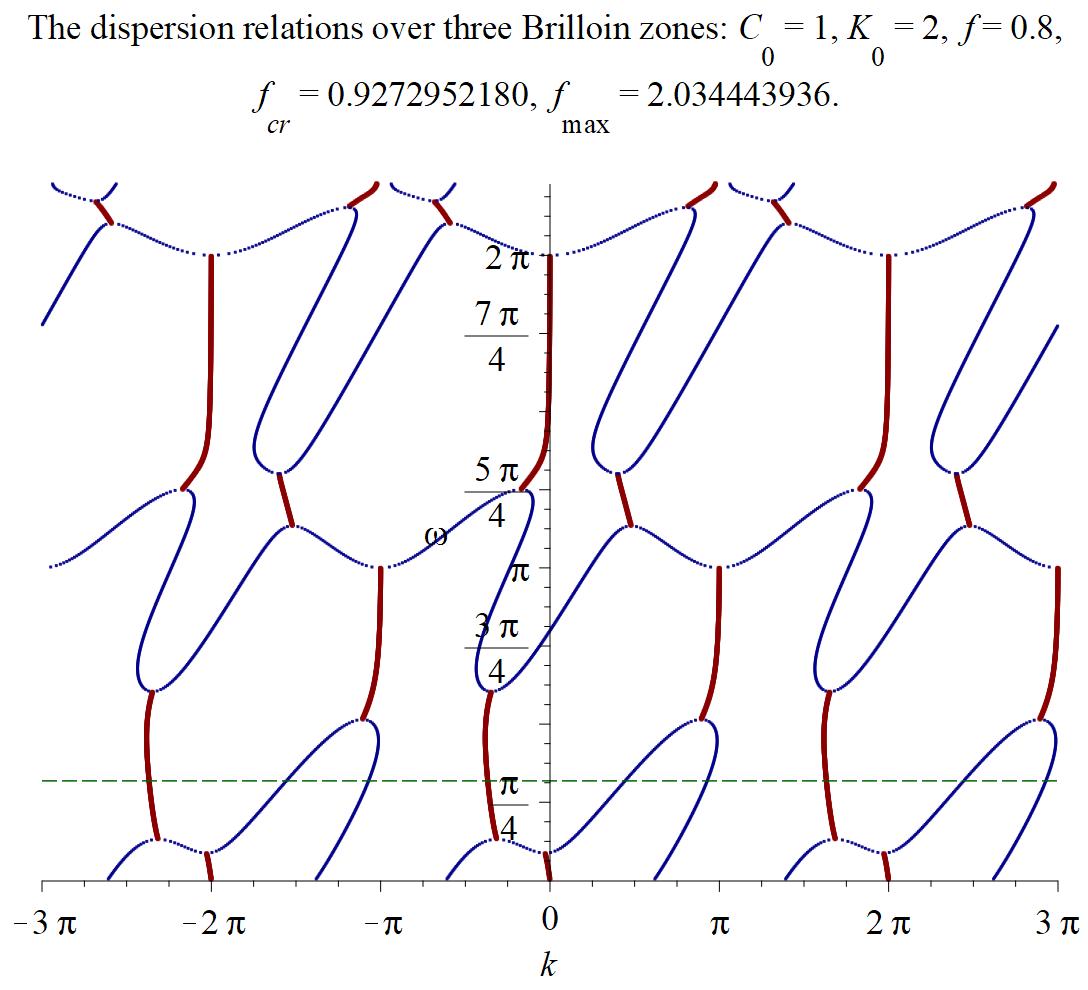}\hspace{0.2cm}\includegraphics[scale=0.12]{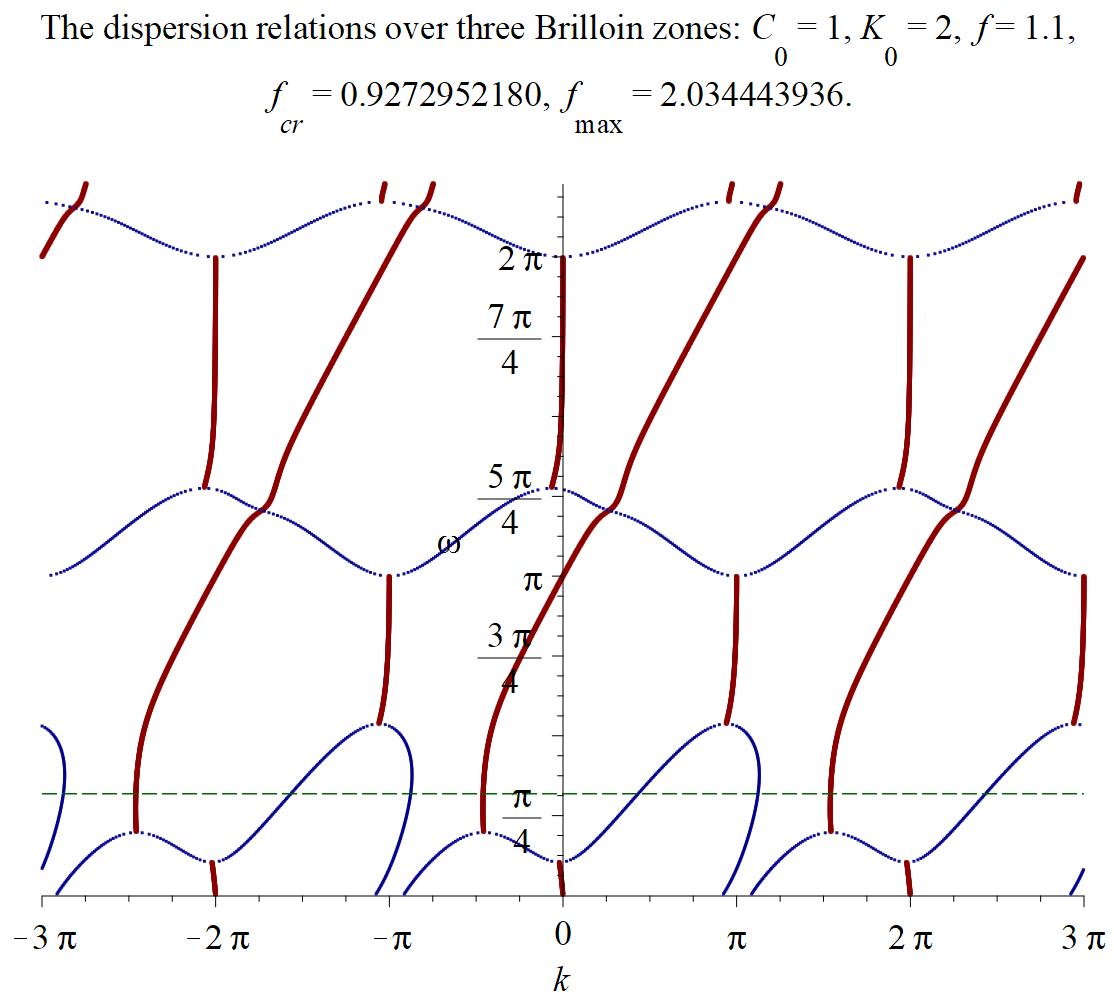}
\par\end{centering}
\begin{centering}
(a)\hspace{4.3cm}(b)\hspace{4.3cm}(c)
\par\end{centering}
\begin{centering}
\includegraphics[scale=0.12]{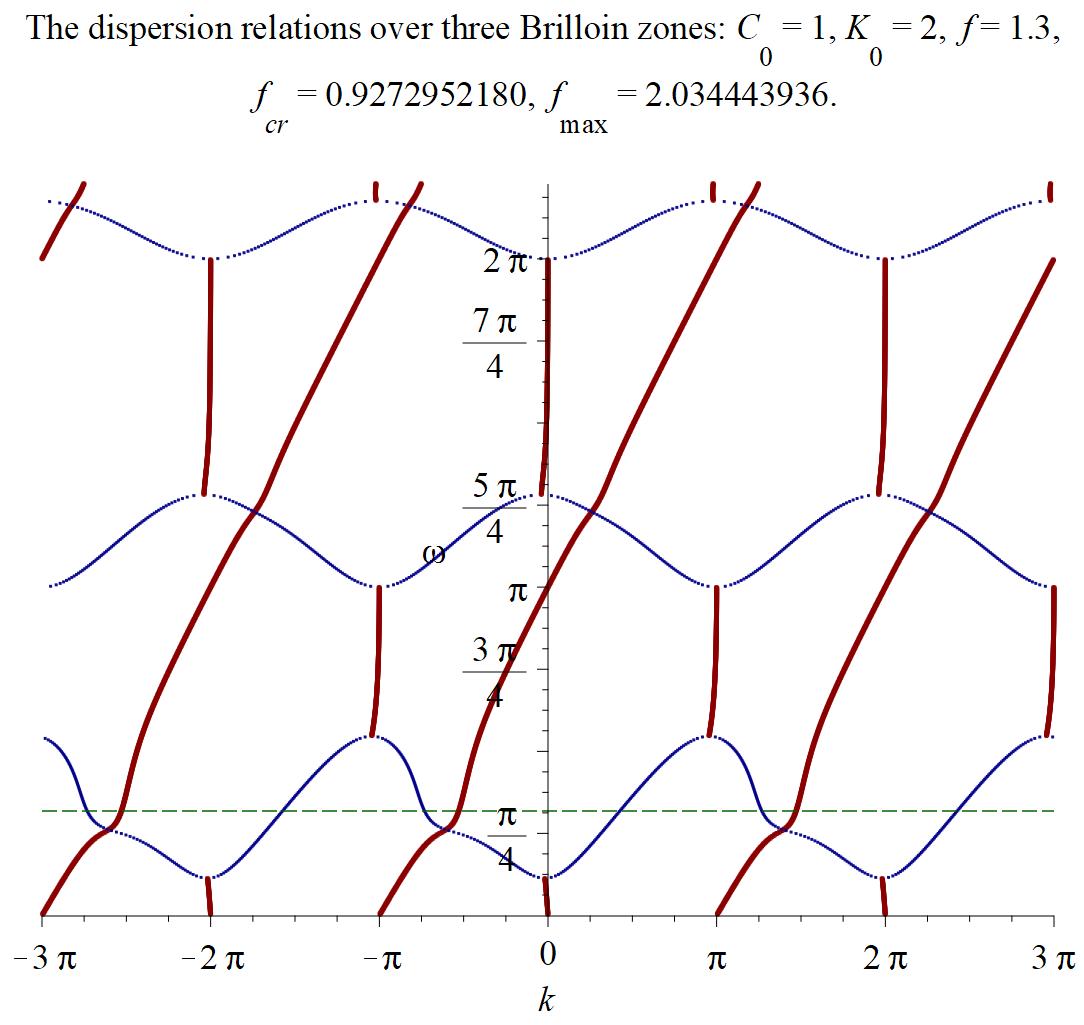}\hspace{0.2cm}\includegraphics[scale=0.12]{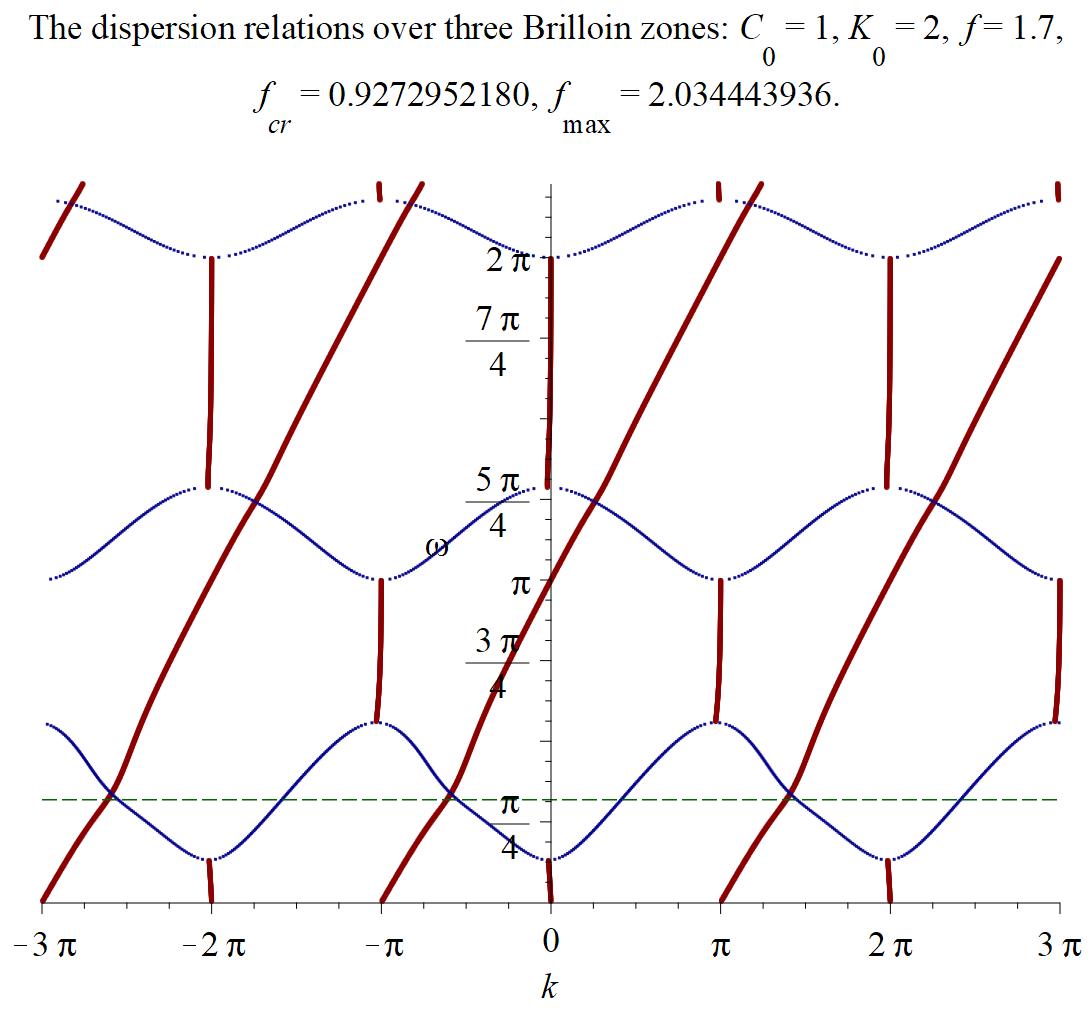}\hspace{0.2cm}\includegraphics[scale=0.12]{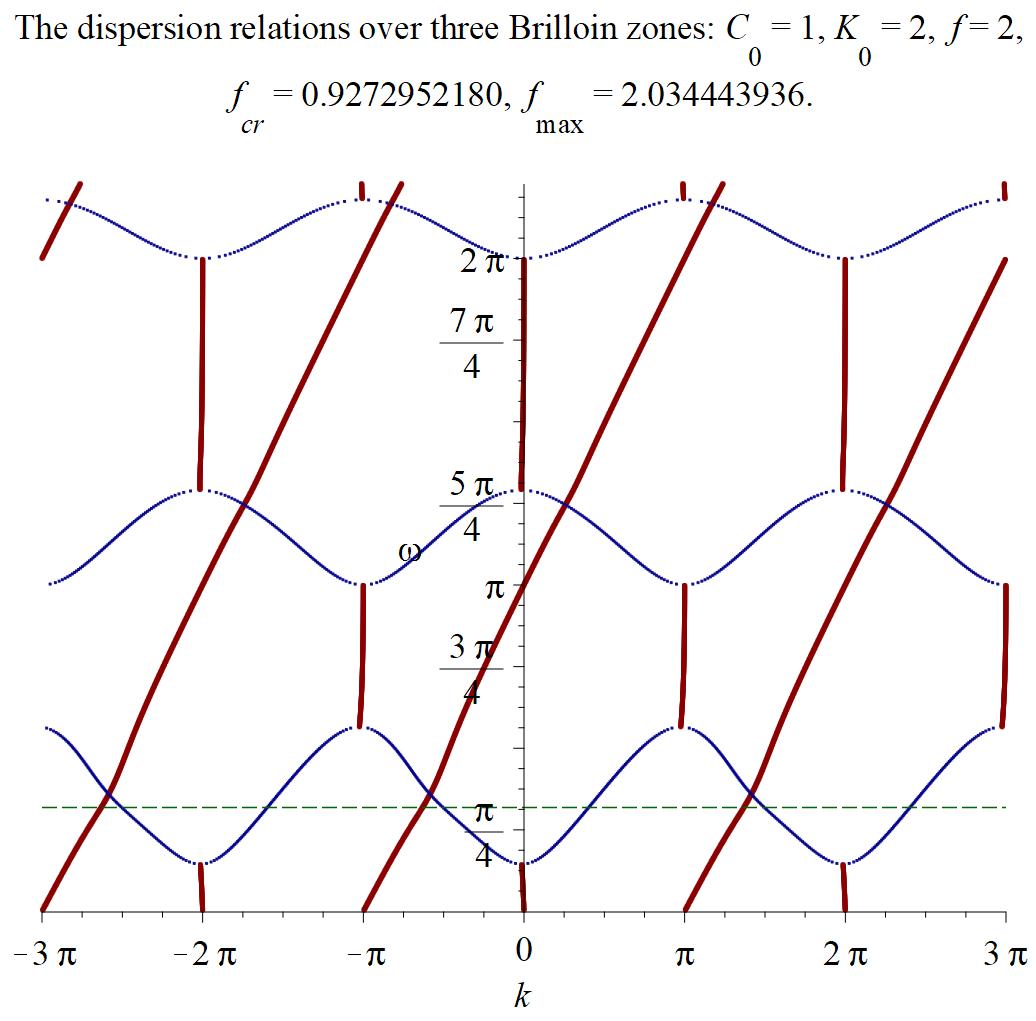}
\par\end{centering}
\centering{}(d)\hspace{4.3cm}(e)\hspace{4.3cm}(f)\caption{\label{fig:dis-serp-f} The dispersion-instability graphs for the
CCTWT as the normalized period $f$ varies. In all plots the horizontal
and vertical axes represent respectively $\Re\left\{ k\right\} $
and $\omega$. Each of the plots shows 3 bands of the dispersion of
the CCTWT described by equations (\ref{eq:monS2f})-(\ref{eq:monS2j})
over 3 Brillouin zones $\Re\left\{ k\right\} \in\left[-3\pi,3\pi\right]$
for $\chi=1$, $\omega_{0}=1$, $C_{0}=1$, $K_{0}=2$, $f_{\mathrm{cr}}\protect\cong0.927292180$
and $f_{\mathrm{max}}\protect\cong2.034$: (a) $f=0.2$; (b) $f=0.8$;
(c) $f=1.1$; (d) $f=1.3$; (e) $f=1.7$; (f) $f=2$ (see Theorem
\ref{thm:dispfac}). When $\Im\left\{ k_{\pm}\left(\omega\right)\right\} =0$,
that is the case of oscillatory modes, and $\Re\left\{ k\left(\omega\right)\right\} =k\left(\omega\right)$
the corresponding branches are shown as solid (blue) curves. When
$\Im\left\{ k_{\pm}\left(\omega\right)\right\} \protect\neq0$, that
is there is an instability, and $\Re\left\{ k_{+}\left(\omega\right)\right\} =\Re\left\{ k_{-}\left(\omega\right)\right\} $
then the corresponding branches overlap, they are shown as bold solid
(brown) curves in brown color, and each point of these branches represents
exactly two modes with complex-conjugate wave numbers $k_{\pm}$.}
\end{figure}
\begin{figure}[h]
\centering{}\includegraphics[scale=0.2]{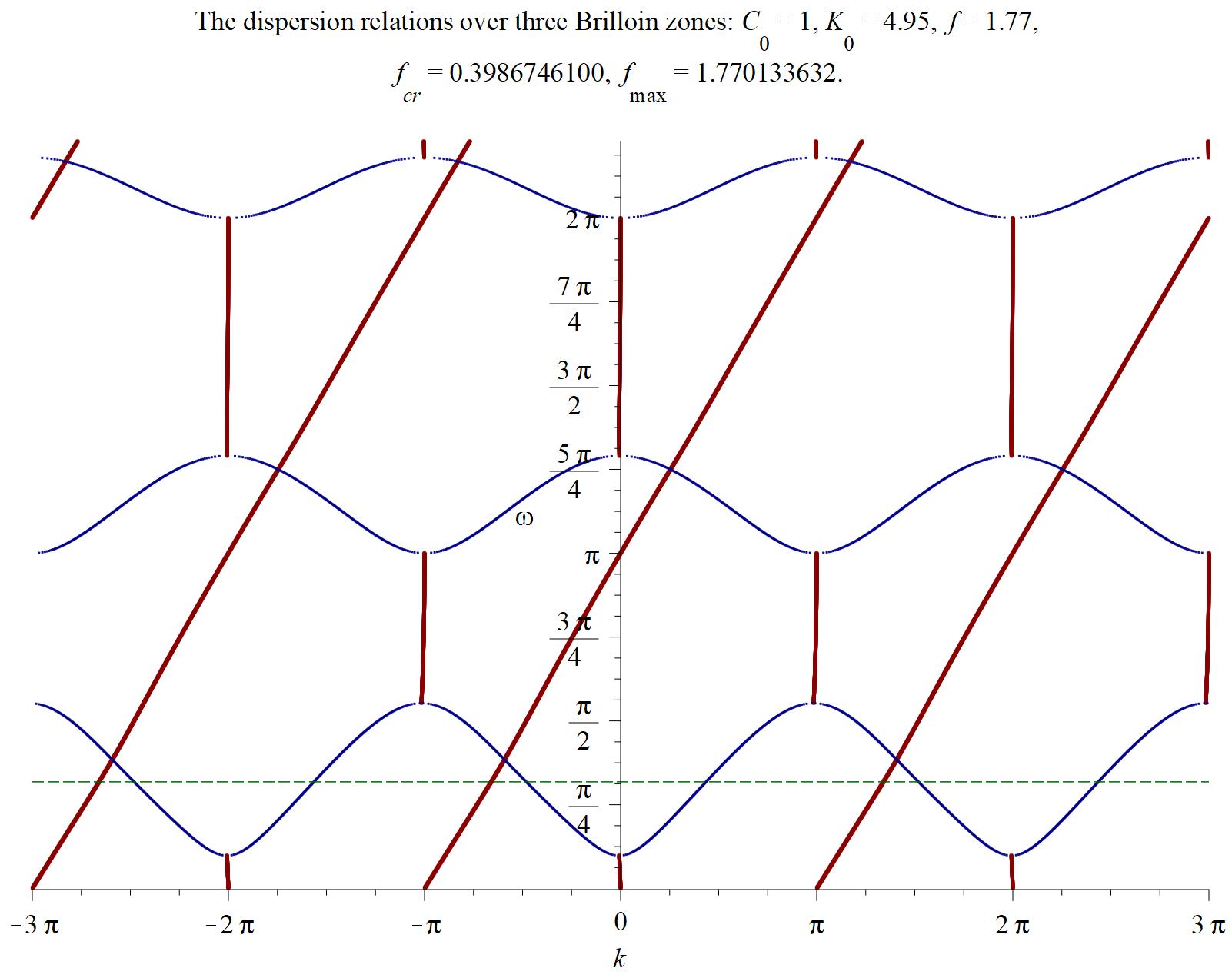}\caption{\label{fig:disp-serp-t} The dispersion-instability graphs for the
CCTWT as the normalized period $f$ varies (see Theorem \ref{thm:dispfac}
and Remark \ref{rem:dispfac}). The horizontal and vertical axes represent
respectively $\Re\left\{ k\right\} $ and $\omega$. The plot shows
3 bands of the dispersion of the CCTWT described by equations (\ref{eq:monS2f})-(\ref{eq:monS2j})
over 3 Brillouin zones $\Re\left\{ k\right\} \in\left[-3\pi,3\pi\right]$
for $\chi=1$, $\omega_{0}=1$, $C_{0}=1$, $f\protect\cong1.77$,
$K_{0}=K_{0\mathrm{T}}=4.95$ (typical value) and consequently $f_{\mathrm{cr}}\protect\cong0.397$,
$f_{\mathrm{max}}\protect\cong1.770$. When $\Im\left\{ k_{\pm}\left(\omega\right)\right\} =0$,
that is the case of oscillatory modes, and $\Re\left\{ k\left(\omega\right)\right\} =k\left(\omega\right)$
the corresponding branches are shown as solid (blue) curves. When
$\Im\left\{ k_{\pm}\left(\omega\right)\right\} \protect\neq0$, that
is there is an instability, and $\Re\left\{ k_{+}\left(\omega\right)\right\} =\Re\left\{ k_{-}\left(\omega\right)\right\} $
then the corresponding branches overlap, they are shown as bold solid
(brown) curves in brown color, and each point of these branches represents
exactly two modes with complex-conjugate wave numbers $k_{\pm}$.
(see Theorem \ref{thm:dispfac} and Remark \ref{rem:dispfac})}
\end{figure}

\subsection{Exceptional points of degeneracy\label{subsec:cctwt-epd}}

Jordan eigenvector degeneracy, which is a degeneracy of the system
evolution matrix when not only some eigenvalues coincide but the corresponding
eigenvectors coincide also, is sometimes referred to as exceptional
point of degeneracy (EPD), \cite[II.1]{Kato}. Our prior studies of
traveling wave tubes (TWT) in \cite[4, 7, 13, 14, 54, 55]{FigTWTbk}
demonstrate that TWTs always have EPDs. A particularly important class
of applications of EPDs is sensing, \cite{CheN}, \cite{KNAC}, \cite{OGC},
\cite{Wie}, \cite{Wie1}. For applications of EPDs for traveling
wave tubes see \cite{FigtwtEPD}, \cite{OTC}, \cite{OVFC}, \cite{OVFC1},
\cite{VOFC}.

Applying the results of Appendix \ref{sec:degpol4} particularly the
system of equations (\ref{eq:speSab1d}) we obtain the following trigonometric
form of equations for \emph{exceptional points of degeneracy (EPDs)}:
\begin{gather}
\cos\left(2k-\omega\right)+\left|c_{3}\right|\cos\left(k-\frac{\omega}{2}+\alpha\right)=-\frac{c_{2}}{2},\quad2\sin\left(2k-\omega\right)+\left|c_{3}\right|\sin\left(k-\frac{\omega}{2}+\alpha\right)=0,\label{eq:epdkom1a}\\
c_{3}=\left|c_{3}\right|\exp\left\{ \mathrm{i}\alpha\right\} \quad S=\exp\left\{ \mathrm{i}\left(k-\frac{\omega}{2}\right)\right\} ,\quad k,\omega\in\mathbb{R}.\nonumber 
\end{gather}
Note that the first equation in (\ref{eq:epdkom1a}) is identical
to the trigonometric form (\ref{eq:disp4S1e}) of the CCTWT dispersion
relations.

Figure \ref{fig:dis-serp-deg1} shows examples of the dispersion-instability
graphs with EPDs as points which are the points of the transition
to instability. In particular, Figure \ref{fig:dis-serp-deg1}(c)
when compared with Figure \ref{fig:dis-ccs-wom2} for the CCS and
Figure \ref{fig:mck-disp3s} for the MCK indicates convincingly that
the components of the CCTWT dispersion-instability graph can be attributed
to the dispersion-instability graphs of its integral components -
the CCS and the MCK (see Theorem \ref{thm:dispfac} and Remark \ref{rem:dispfac}).
\begin{figure}[h]
\begin{centering}
\includegraphics[scale=0.12]{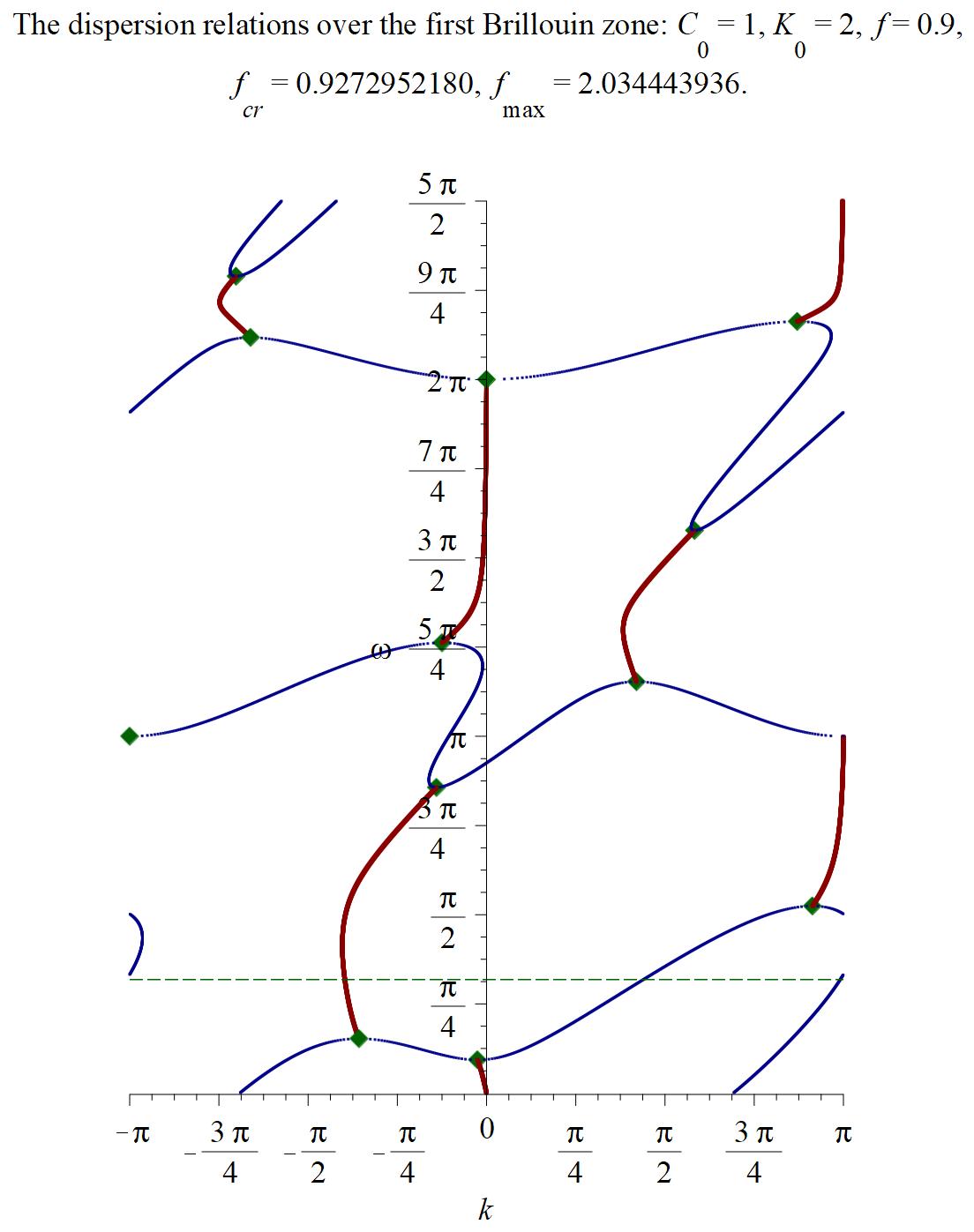}\hspace{0.3cm}\includegraphics[scale=0.12]{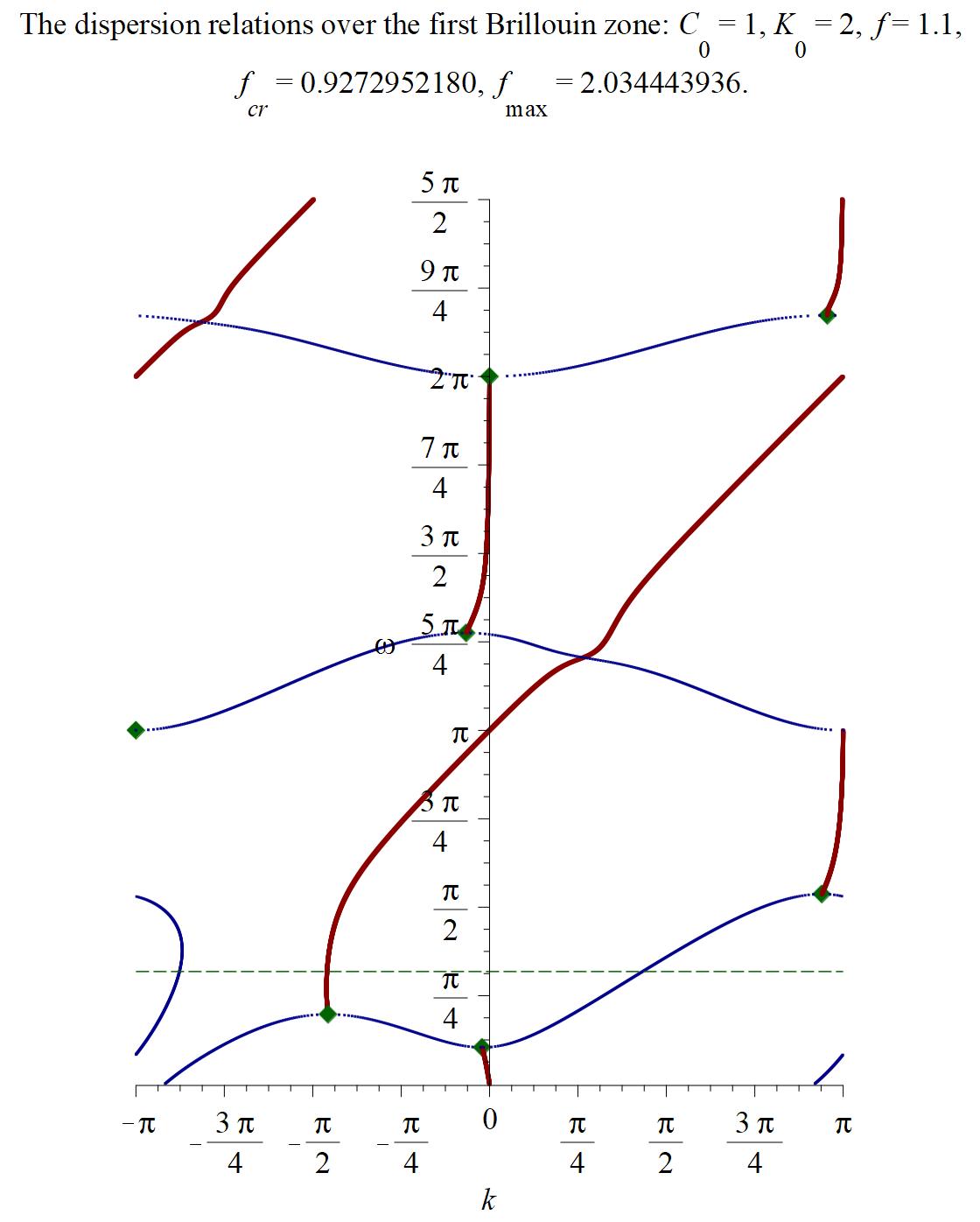}\hspace{0.2cm}\includegraphics[scale=0.12]{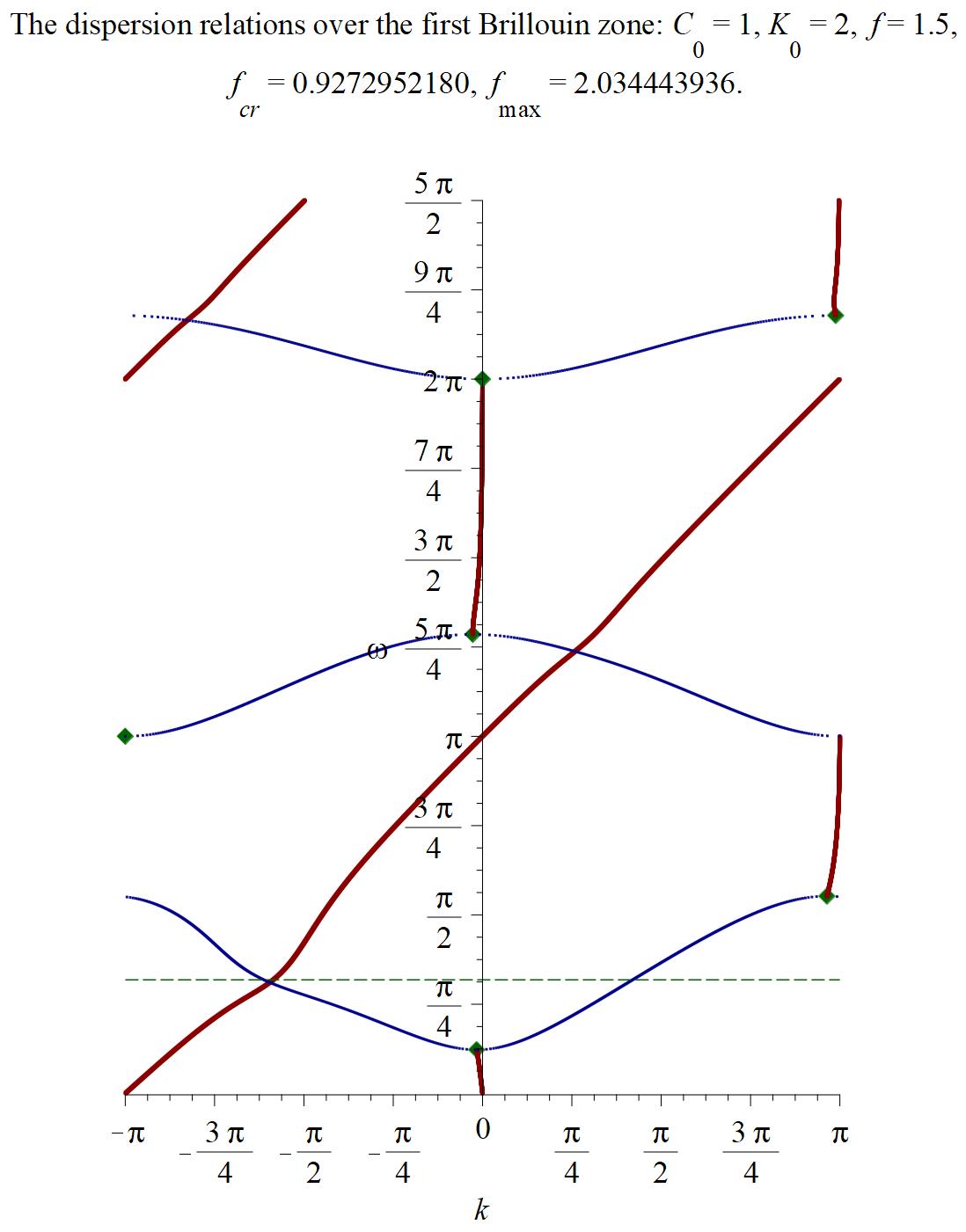}
\par\end{centering}
\centering{}(a)\hspace{4.1cm}(b)\hspace{4.1cm}(c)\caption{\label{fig:dis-serp-deg1} The dispersion-instability graphs for the
CCTWT showing the degeneracy (transition to instability) points as
diamond (green) dots for $C_{0}=1$, $K_{0}=2$ and consequently $f_{\mathrm{cr}}\protect\cong0.927$,
$f_{\mathrm{max}}\protect\cong1.770$: (a) $f=0.9<f_{\mathrm{cr}}$
; (b) $f=1.1>f_{\mathrm{cr}}$; (c) $f=1.5>f_{\mathrm{cr}}$ (see
Theorem \ref{thm:dispfac} and Remark \ref{rem:dispfac}). In all
plots the horizontal and vertical axes represent respectively $\Re\left\{ k\right\} $
and $\omega$. Solid (grin) diamond dots identify points of the transition
from the instability to the stability which are also EPD points}
\end{figure}

\section{Gain expression in terms of the Floquet multipliers\label{sec:cctwtgain}}

Based on the prior analysis we introduce the CCTWT gain $G$ in $\mathrm{dB}$
per one period as the rate of the exponential growth of the CCTWT
eigenmodes associated with the Floquet multipliers $s$ which are
the solutions to the characteristic equations (\ref{eq:monS2f})-(\ref{eq:monS2j}),
namely
\begin{equation}
G=G\left(f,\omega,K_{0}\right)=20\left|\log\left(\left|s\left(f,\omega,K_{0}\right)\right|\right)\right|.\label{eq:GdBgain1a}
\end{equation}
Consequently, to analyze the gain expression (\ref{eq:GdBgain1a})
we have to turn to the indicated characteristic equations.

In turns out that the gain expression (\ref{eq:GdBgain1a}) can be
significantly simplified under exact synchronism Assumption \ref{ass:cavcoup},
that is $\chi=1$ and $\omega_{0}=1$. Under these conditions the
CCTWT characteristic equations (\ref{eq:monS2f})-(\ref{eq:monS2j})
can be recast into the following form:
\begin{equation}
P=P\left(f,\omega,K_{0}\right)=s^{4}+p_{3}s^{3}+p_{2}s^{2}+p_{1}s+1=0,\label{eq:Pgas1a}
\end{equation}
where the coefficients of the \emph{CCTWT characteristic polynomial}
$P$ have the following expressions:
\begin{equation}
p_{3}=2{\rm e}^{\mathrm{i}{\it \omega}}b_{f}^{\infty}+C_{0}\frac{\omega^{2}-1}{\omega}\sin\left(\omega\right)-2\,\cos\left({\it \omega}\right),\label{eq:Pgas1b}
\end{equation}
\begin{gather}
p_{2}={\rm e}^{\mathrm{i}{\it \omega}}\left\{ 2b_{f}^{\infty}\left[C_{0}\omega\sin\left(\omega\right)-2\,\cos\left(\omega\right)\right]+2\cos\left(f\right)C_{0}\frac{\sin\left(\omega\right)}{\omega}+2\cos\left({\it \omega}\right)\right\} ,\label{eq:Pgas1c}
\end{gather}
\begin{equation}
p_{1}={\rm e}^{2\mathrm{i}{\it \omega}}\left[C_{0}\frac{\omega^{2}-1}{\omega}\sin\left(\omega\right)-2\,\cos\left({\it \omega}\right)\right]+2{\rm e}^{\mathrm{i}{\it \omega}}b_{f}^{\infty}={\rm e}^{2\mathrm{i}{\it \omega}}\bar{p}_{3},\label{eq:Pgasd}
\end{equation}
where
\begin{equation}
b_{f}^{\infty}=K_{0}\sin\left(f\right)-\cos\left(f\right),\quad K_{0}=\frac{b^{2}\beta_{0}}{2f}=\frac{b^{2}g_{\mathrm{B}}}{c_{0}},\quad g_{\mathrm{B}}=\frac{\sigma_{\mathrm{B}}}{4\lambda_{\mathrm{rp}}}.\label{eq:Pgas1e}
\end{equation}
According to equations (\ref{eq:monTc1b}) the CCS characteristic
polynomial $P_{\mathrm{C}}$ has the following expression:
\begin{equation}
P_{\mathrm{C}}\left(\omega,s\right)=s^{2}+\left[C_{0}\left(\omega-\frac{1}{\omega}\right)\sin\left(\omega\right)-2\cos\left(\omega\right)\right]s+1=0.\label{eq:Pgas2a}
\end{equation}
The MCK characteristic equation (\ref{eq:bfKpsi1bas}) can be recast
as follows:
\begin{equation}
P_{\mathrm{K}}\left(\omega,s\right)=s^{2}+2{\rm e}^{\mathrm{i}{\it \omega}}\left(\frac{2K_{0}\sin\left(f\right)\omega^{2}}{\omega^{2}-1}-\cos\left(f\right)\right)s+{\rm e}^{2\mathrm{i}{\it \omega}}=0,\label{eq:Pgas2b}
\end{equation}
and we refer to $P_{\mathrm{K}}$ as the \emph{MCK characteristic
polynomial}.

Using the definitions (\ref{eq:Pgas1a})-(\ref{eq:Pgas1a}), (\ref{eq:Pgas2a})
and (\ref{eq:Pgas2b}) for respectively the characteristic polynomials
$P\left(f,\omega,K_{0}\right)$, $P_{\mathrm{C}}\left(\omega,s\right)$
and $P_{\mathrm{K}}\left(\omega,s\right)$ one can identify their
leading terms $P^{\left(0\right)}$, $P_{\mathrm{C}}^{\left(0\right)}\left(\omega,s\right)$
and $P_{\mathrm{K}}^{\left(0\right)}$ as $\omega\rightarrow\infty$
which are:
\begin{equation}
P^{\left(0\right)}\left(\omega,s\right)={\it C_{0}\sin\left(\omega\right)s}\left(s^{2}+2b_{f}^{\infty}e^{\mathrm{i}\omega}s+{\rm e}^{2\mathrm{i}{\it \omega}}\right),\label{eq:Pgas2c}
\end{equation}
\begin{equation}
P_{\mathrm{C}}^{\left(0\right)}\left(\omega,s\right)=C_{0}\sin\left(\omega\right)s,\quad P_{\mathrm{K}}^{\left(0\right)}\left(\omega,s\right)=s^{2}+2b_{f}^{\infty}e^{\mathrm{i}\omega}s+{\rm e}^{2\mathrm{i}{\it \omega}},\label{eq:Pgas2d}
\end{equation}
where $b_{f}^{\infty}$ is defined by equations (\ref{eq:Pgas1e}).

Just as in the case of the dispersion relations that we analyzed in
Section \ref{sec:disprel} there are simple relations between the
CCTWT characteristic function $P\left(\omega,s\right)$ and the characteristic
polynomials $P_{\mathrm{C}}\left(\omega,s\right)$ and $P_{\mathrm{K}}\left(\omega,s\right)$
for respectively the CCS and MCK systems. These relations can be verified
by tedious but elementary algebraic evaluations and they are subjects
of the following theorem that relates the characteristic polynomials
for CCTTX, CCS and MCK systems.
\begin{thm}[characteristic polynomial factorization]
\label{thm:charfac} Let us assume that $\chi=\omega_{0}=1$. Let
the CCTWT, the CCS and the MCK dispersion functions $P\left(\omega,s\right)$,
$P_{\mathrm{C}}\left(\omega,s\right)$ and $P_{\mathrm{K}}\left(\omega,s\right)$
be defined by respectively equations (\ref{eq:Pgas1a})-(\ref{eq:Pgas1a}),
(\ref{eq:Pgas2a}) and (\ref{eq:Pgas2b}). Then the following identity
hold:
\begin{gather}
P\left(\omega,s\right)-P_{\mathrm{C}}\left(\omega,s\right)P_{\mathrm{K}}\left(\omega,s\right)=-\frac{2K_{0}\sin\left(f\right){\rm e}^{\mathrm{i}{\it \omega}}s}{\omega}\frac{\left(s^{2}-2\cos\left(\omega\right)s+1\right)}{\omega^{2}-1},\label{eq:charPP1a}\\
K_{0}=\frac{b^{2}\beta_{0}}{2f}=\frac{b^{2}g_{\mathrm{B}}}{c_{0}},\quad g_{\mathrm{B}}=\frac{\sigma_{\mathrm{B}}}{4\lambda_{\mathrm{rp}}}.\label{eq:charPP1b}
\end{gather}
In the case of the high-frequency approximation the following identity
holds:
\begin{equation}
P^{\left(0\right)}\left(\omega,s\right)=P_{\mathrm{C}}^{\left(0\right)}\left(\omega,s\right)P_{\mathrm{K}}^{\left(0\right)}\left(\omega,s\right)={\it C_{0}\sin\left(\omega\right)s}\left(s^{2}+2b_{f}^{\infty}e^{\mathrm{i}\omega}s+{\rm e}^{2\mathrm{i}{\it \omega}}\right).\label{eq:disDD2b-1}
\end{equation}
The identities (\ref{eq:charPP1a}) and (\ref{eq:disDD2b-1}) represent
a particular way the CCS and the MCK subsystems are coupled and integrated
into the CCTWT system. The right-hand side of the identity (\ref{eq:charPP1a})
can be naturally viewed as a measure of coupling between the CCS and
the MCK subsystems
\end{thm}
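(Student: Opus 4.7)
The proof is purely algebraic: both sides of the identity \eqref{eq:charPP1a} are rational functions of $\omega$ (with a single pole at $\omega^2 = 1$) whose numerators are polynomials in $s$ of degree at most four. The plan is to compute $P_{\mathrm{C}}(\omega,s)P_{\mathrm{K}}(\omega,s)$ explicitly, subtract the result from $P(\omega,s)$, and simplify.

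First, I would expand the product $P_{\mathrm{C}}\cdot P_{\mathrm{K}}$ using the defining equations \eqref{eq:Pgas2a} and \eqref{eq:Pgas2b}. Writing
\[
P_{\mathrm{C}}(\omega,s) = s^2 + \alpha_{\mathrm{C}}(\omega)\, s + 1, \qquad
P_{\mathrm{K}}(\omega,s) = s^2 + \alpha_{\mathrm{K}}(\omega)\, s + e^{2\mathrm{i}\omega},
\]
with
\[
\alpha_{\mathrm{C}}(\omega) = C_0\!\left(\omega - \tfrac{1}{\omega}\right)\sin(\omega) - 2\cos(\omega), \qquad
\alpha_{\mathrm{K}}(\omega) = 2\mathrm{e}^{\mathrm{i}\omega}\!\left(\tfrac{2 K_0 \sin(f)\,\omega^2}{\omega^2-1} - \cos(f)\right),
\]
the product has coefficients in $s$ that can be read off as: $s^4$-coefficient $=1$, $s^3$-coefficient $=\alpha_{\mathrm{C}}+\alpha_{\mathrm{K}}$, $s^2$-coefficient $=1+e^{2\mathrm{i}\omega}+\alpha_{\mathrm{C}}\alpha_{\mathrm{K}}$, $s^1$-coefficient $=\alpha_{\mathrm{C}}e^{2\mathrm{i}\omega}+\alpha_{\mathrm{K}}$, and $s^0$-coefficient $= e^{2\mathrm{i}\omega}$. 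Since the constant term of $P(\omega,s)$ is $1$ and its leading coefficient is $1$, the difference $P - P_{\mathrm{C}}P_{\mathrm{K}}$ has the form $(1 - e^{2\mathrm{i}\omega}) + (\text{terms in }s,s^2,s^3)$.

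Next, I would match each coefficient of $s^j$ in $P(\omega,s)$ (from \eqref{eq:Pgas1b}--\eqref{eq:Pgasd}) against the corresponding coefficient of $P_{\mathrm{C}}P_{\mathrm{K}}$. For the $s^3$-coefficient I would decompose $b_f^\infty = K_0\sin(f)-\cos(f)$ from \eqref{eq:Pgas1e} and separate the pieces carrying $\cos(f)$ from those carrying $K_0\sin(f)$. The $\cos(f)$-pieces together with $\alpha_{\mathrm{C}}$ should reproduce $p_3$ up to a term proportional to $K_0\sin(f)$ times $e^{\mathrm{i}\omega}\cdot\frac{1}{\omega^2-1}$; similarly for $p_1 = e^{2\mathrm{i}\omega}\bar{p}_3$. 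The $s^2$-coefficient requires matching $p_2$ with $1+e^{2\mathrm{i}\omega}+\alpha_{\mathrm{C}}\alpha_{\mathrm{K}}$; here the cross-term $\alpha_{\mathrm{C}}\alpha_{\mathrm{K}}$ produces contributions with and without $K_0\sin(f)$, and the $K_0\sin(f)$-part must combine with the analogous contributions from the $s^3$ and $s^1$ levels to yield the claimed right-hand side. The key algebraic identity that should do most of the work is
\[
\omega\sin(\omega) - \frac{\sin(\omega)}{\omega} = \frac{(\omega^2-1)\sin(\omega)}{\omega},
\]
which converts the $\alpha_{\mathrm{C}}$-factor $C_0(\omega-1/\omega)\sin(\omega)$ into a form compatible with the $\omega^2-1$ denominator of $\alpha_{\mathrm{K}}$.

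Collecting all the $K_0\sin(f)$-contributions across the four coefficients, I expect the residual $P - P_{\mathrm{C}}P_{\mathrm{K}}$ to factor as
\[
-\frac{2K_0 \sin(f)\, e^{\mathrm{i}\omega}}{\omega(\omega^2-1)}\bigl(c_3^\ast s^3 + c_2^\ast s^2 + c_1^\ast s\bigr),
\]
and comparison with $s(s^2 - 2\cos(\omega)s + 1)$ will pin down the coefficients; in particular the $s$ and $s^3$ terms will be equal (with coefficient $1$) by the symmetry $p_1 = e^{2\mathrm{i}\omega}\bar p_3$, reflecting the $s\leftrightarrow 1/s$ symmetry of $P$ after the shift used in \eqref{eq:monS2f}. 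The high-frequency identity \eqref{eq:disDD2b-1} is then immediate: retaining only the $O(\omega)$ terms in $\alpha_{\mathrm{C}}$ and the $O(1)$ terms in $\alpha_{\mathrm{K}}$, the product $P_{\mathrm{C}}^{(0)} P_{\mathrm{K}}^{(0)}$ reduces to $C_0\sin(\omega)\,s\,(s^2 + 2 b_f^\infty e^{\mathrm{i}\omega}s + e^{2\mathrm{i}\omega})$, matching $P^{(0)}$ in \eqref{eq:Pgas2c}.

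The main obstacle is purely bookkeeping: keeping track of the $e^{\mathrm{i}\omega}$ factors and the $(\omega^2-1)^{-1}$ denominators while verifying that all $K_0\sin(f)$-terms not accounted for on the left-hand side assemble exactly into $s^2 - 2\cos(\omega)s + 1$ on the right. A clean way to control this is to first verify \eqref{eq:disDD2b-1} (one line), then write $P - P_{\mathrm{C}}P_{\mathrm{K}}$ in the form $\frac{1}{\omega(\omega^2-1)}\,R(\omega,s)$, observe that $R$ is a trigonometric polynomial, and check that $R(\omega,s)/(-2K_0\sin(f)e^{\mathrm{i}\omega})$ is independent of $C_0$ and of $\cos(f)$ by inspection of the coefficient structure; the remaining polynomial in $s$ must then be $s^3 - 2\cos(\omega)s^2 + s$, verified at the endpoints $s=1$ and $s = e^{\mathrm{i}\omega}$ or by direct coefficient comparison.
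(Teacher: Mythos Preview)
Your approach is exactly what the paper does: it offers no argument beyond the remark that the identities ``can be verified by tedious but elementary algebraic evaluations,'' so direct coefficient-by-coefficient comparison of $P$ against $P_{\mathrm{C}}P_{\mathrm{K}}$ is the intended route.

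One point to clean up before you start the bookkeeping: you record that the constant term of $P(\omega,s)$ in \eqref{eq:Pgas1a} is $1$, giving a nonzero $s^0$ contribution $(1-e^{2\mathrm{i}\omega})$ to the difference, whereas the right-hand side of \eqref{eq:charPP1a} has no constant term. This is a typo in \eqref{eq:Pgas1a}: the constant term of the characteristic polynomial of $\mathscr{T}$ is $\det\{\mathscr{T}\}=\det\{\mathsf{S}_{\mathrm{b}}\}\det\{\exp(\mathscr{C}_{\mathrm{TB}})\}=e^{\operatorname{tr}\mathscr{C}_{\mathrm{TB}}}=e^{2\mathrm{i}\omega}$, as is also clear from substituting $S=s\,e^{-\mathrm{i}\omega/2}$ into \eqref{eq:monS2f} and clearing the overall factor (and from the decoupled factorization \eqref{eq:PTP1f}). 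With the constant term corrected to $e^{2\mathrm{i}\omega}$ the $s^0$ and $s^4$ coefficients cancel identically, and the rest of your verification goes through as you describe.
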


\begin{rem}[graphical confirmation of the characteristic polynomial factorization]
\label{rem:dispfac-1} The statements of the Theorem \ref{thm:charfac}
are well illustrated by Figures \ref{fig:gain-cctwt1} and \ref{fig:gain-cctwt2}
when compared with Figure \ref{fig:gain-ccs} for the CCS and Figure
\ref{fig:mck-gain1s} for the MCK. One can confidently recognize in
components of the graph of the gain CCTWT the patents of the graphs
for the gain of the CCS and the MCK.
\end{rem}

Let us consider now the convent
\begin{figure}[h]
\begin{centering}
\includegraphics[scale=0.1]{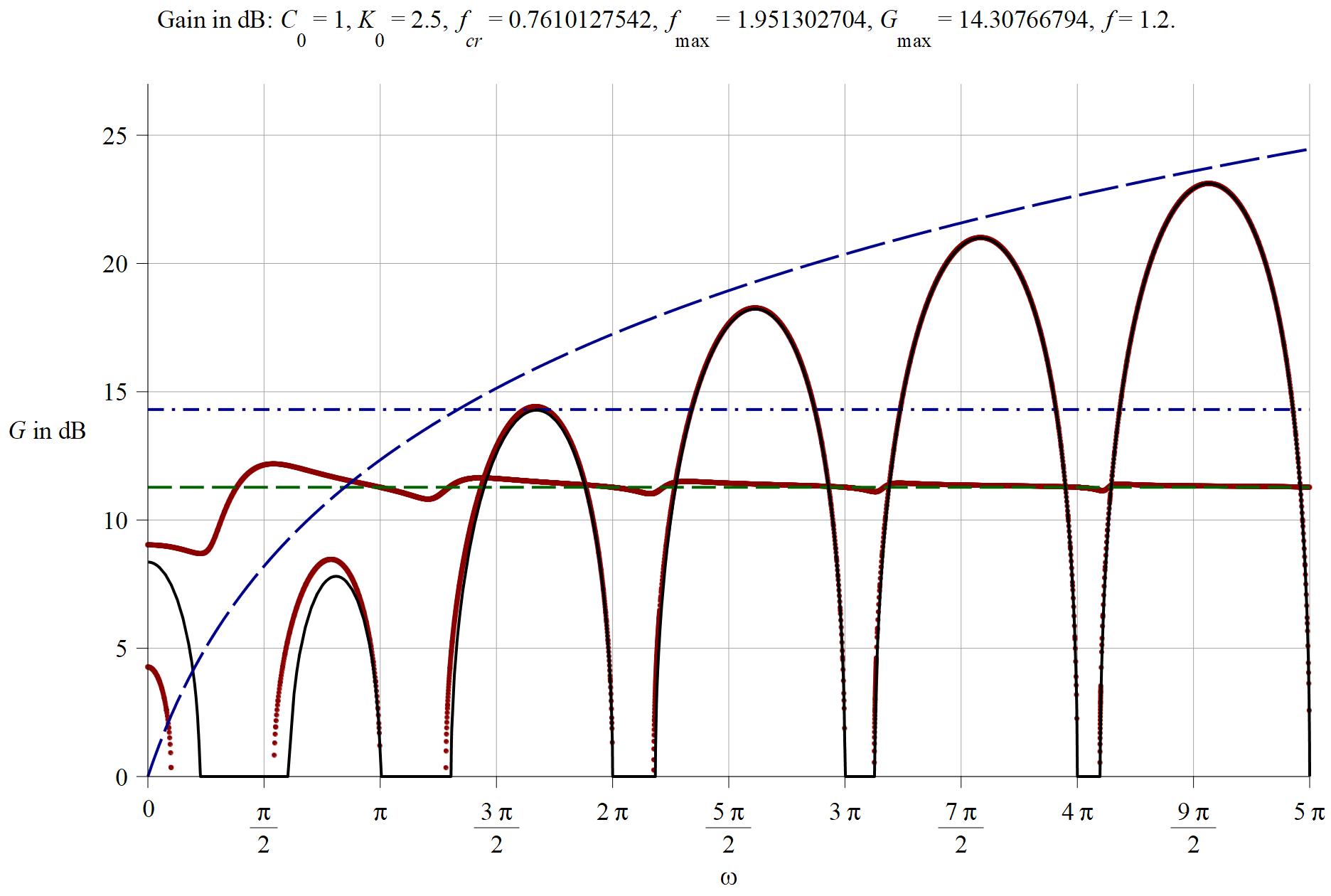}\hspace{1cm}\includegraphics[scale=0.1]{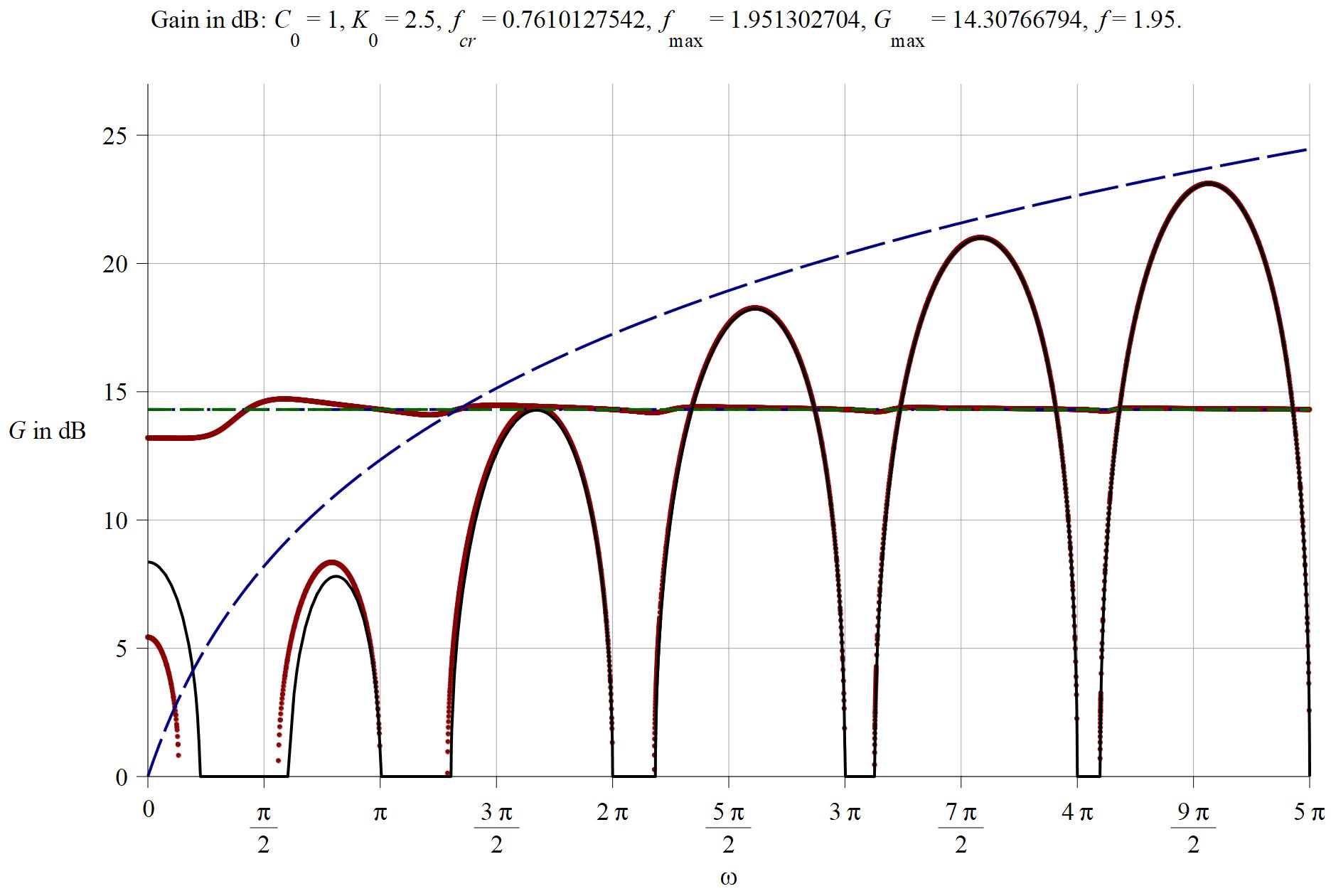}
\par\end{centering}
\centering{}(a)\hspace{7cm}(b)\caption{\label{fig:gain-cctwt1} Plots of gain $G$ per one period in $\mathrm{dB}$
as a function of frequency $\omega$ defined by equations (\ref{eq:GdBgain1a})
for $\omega_{0}=1$, $K_{0}=2.5$ and consequently $f_{\mathrm{cr}}\protect\cong0.7610127542$,
$f_{\mathrm{max}}\protect\cong1.951302704$, $G_{\mathrm{max}}=14.30766794$
(see Section \ref{subsec:mck-mon-gain} for the definition of the
MCK quantities $f_{\mathrm{cr}}$, $f_{\mathrm{max}}$ and $G_{\mathrm{max}}$)
and: (a) $f=1.2>f_{\mathrm{cr}}$; (b) $f=1.95\approx f_{\mathrm{max}}$.
In all plots the horizontal and vertical axes represent respectively
frequency $\omega$ and gain $G$ in $\mathrm{dB}$. The solid (brown)
curves represent gain $G$ as a function of frequency $\omega$; the
dashed horizontal (blue) line $G=G_{\mathrm{max}}$ represents the
maximal $G_{\mathrm{max}}$value of $G$ in the high frequency limit
(see Section \ref{subsec:mck-mon-gain}); the dashed horizontal (green)
line represents the value of $G$ in the high frequency limit for
given value of $f$ (see Section \ref{subsec:mck-mon-gain}). The
envelope of the local maxima of the gain for large values of frequency
$\omega$ behaves as $20\left|\log\left(C_{0}\omega\right)\right|$
(see captions to Fig. \ref{fig:gain-ccs}) and it is is shown as dashed
(blue) curve.}
\end{figure}
\begin{figure}[h]
\centering{}\includegraphics[scale=0.2]{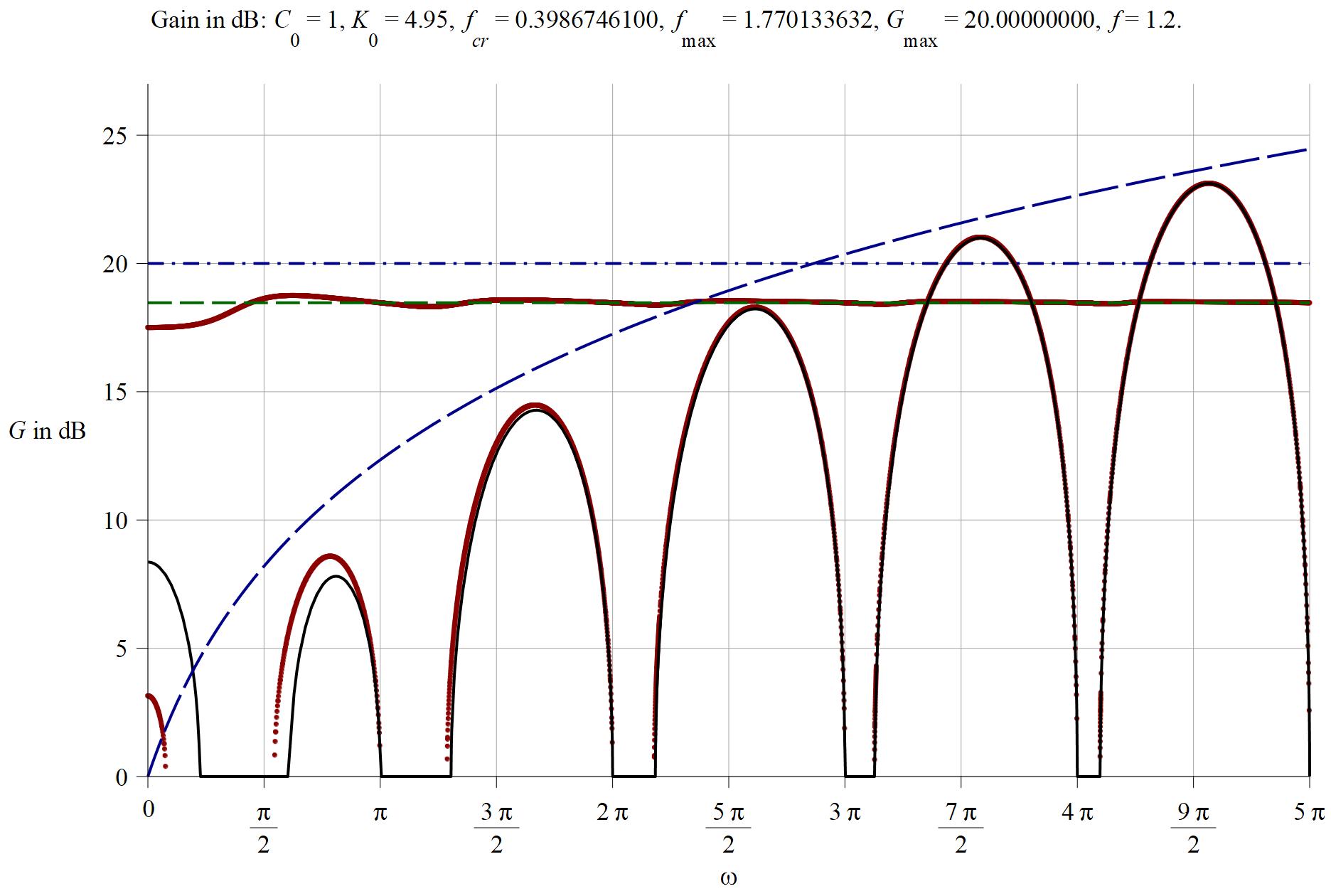}\caption{\label{fig:gain-cctwt2} Plot of gain $G$ per one period in $\mathrm{dB}$
as a function of frequency $\omega$ defined by equations (\ref{eq:ccWom2d})
for $\omega_{0}=1$, $K_{0}=K_{0\mathrm{T}}=4.95$ and consequently
$f_{\mathrm{cr}}\protect\cong0.3986746100$, $f_{\mathrm{max}}\protect\cong1.770133632$,
$G_{\mathrm{max}}=20$ (see Section \ref{subsec:mck-mon-gain} for
the definition of the MCK quantities $f_{\mathrm{cr}}$, $f_{\mathrm{max}}$
and $G_{\mathrm{max}}$) and $f=1.2>f_{\mathrm{cr}}$. The horizontal
and vertical axes represent respectively frequency $\omega$ and gain
$G$ in $\mathrm{dB}$. The solid (brown) curves represent gain $G$
as a function of frequency $\omega$; the dashed horizontal (blue)
line $G=G_{\mathrm{max}}$ represents the maximal $G_{\mathrm{max}}$value
of $G$ in the high frequency limit (see Section \ref{subsec:mck-mon-gain});
the dashed horizontal (green) line represents the value of $G$ in
the high frequency limit for $f=f_{\mathrm{cr}}$ (see Section \ref{subsec:mck-mon-gain}).
The envelope of the local maxima of the gain for large values of frequency
$\omega$ behaves as $20\left|\log\left(C_{0}\omega\right)\right|$
(see captions to Fig. \ref{fig:gain-ccs}) and it is shown as dashed
(blue) curve.}
\end{figure}

\begin{rem}[amplification in stopbands]
\label{rem:ampstop} E-beam interactions in periodic slow-wave structures
were studied by V. Solntsev in \cite{Solnt}. Under the condition
of exact synchronism as in our Assumption \ref{ass:cavcoup} the amplification
was observed in \emph{stopbands}, known also as spectral gaps in the
system (oscillatory) spectrum. Our theory accounts for this general
spectral phenomenon too as indicated by growing in magnitude ``bumps''
in Figures \ref{fig:gain-cctwt1} and \ref{fig:gain-cctwt2}. One
can also see similar bumps in Figure \ref{fig:gain-ccs} for the CCS.
\end{rem}

It is instructive to relate and compare the frequency dependent gain
$G$ per one period in $\mathrm{dB}$ for CCTWT defined by equations
(\ref{eq:GdBgain1a}), (\ref{eq:monS2f})-(\ref{eq:monS2j}) with
its expressions by equation (\ref{eq:sSbf1cs}) for MCK gain in Section
\ref{subsec:mck-mon-gain} and equations (\ref{eq:ccWom2d}) for CCS
gain $G_{\mathrm{C}}=G_{\mathrm{C}}\left(\omega,C_{0}\right)$.

\section{Sketch of the multicavity klystron analytical model\label{sec:mckrev}}

Usage of cavity resonators in the klystron was a revolutionary idea
of Hansen and the Varians, \cite[7.1]{Tsim}. In the pursuit of higher
power and efficiency the original design of Vairan klystrons evolve
significantly over years featuring today multiple cavities and multiple
electron beam, \cite[7.7]{Tsim}. The advantages of klystrons are
their high power and efficiency, potentially wide bandwidth, phase
and amplitude stability, \cite[9.1]{BenSweScha}.

The construction of an analytic model for the multicavity klystron
(MCK) in \cite{FigKly} utilizes elements of the analytic model of
the traveling wave tube (TWT) introduced and studied in our monograph
\cite[4, 24]{FigTWTbk}, see Section \ref{sec:twtmod}. Multicavity
klystron, known also as cascade amplifier, \cite[IIb]{Werne}, is
composed of the e-beam interacting with a periodic array of electromagnetic
cavities, see Fig. \ref{fig:mck}. Consequently the MCK can be naturally
viewed as a subsystem of the CCTWT that contributes to the properties
of CCTWT.
\begin{figure}[h]
\centering{}\hspace{1.2cm}\includegraphics[bb=0bp 120bp 1280bp 620bp,clip,scale=0.4]{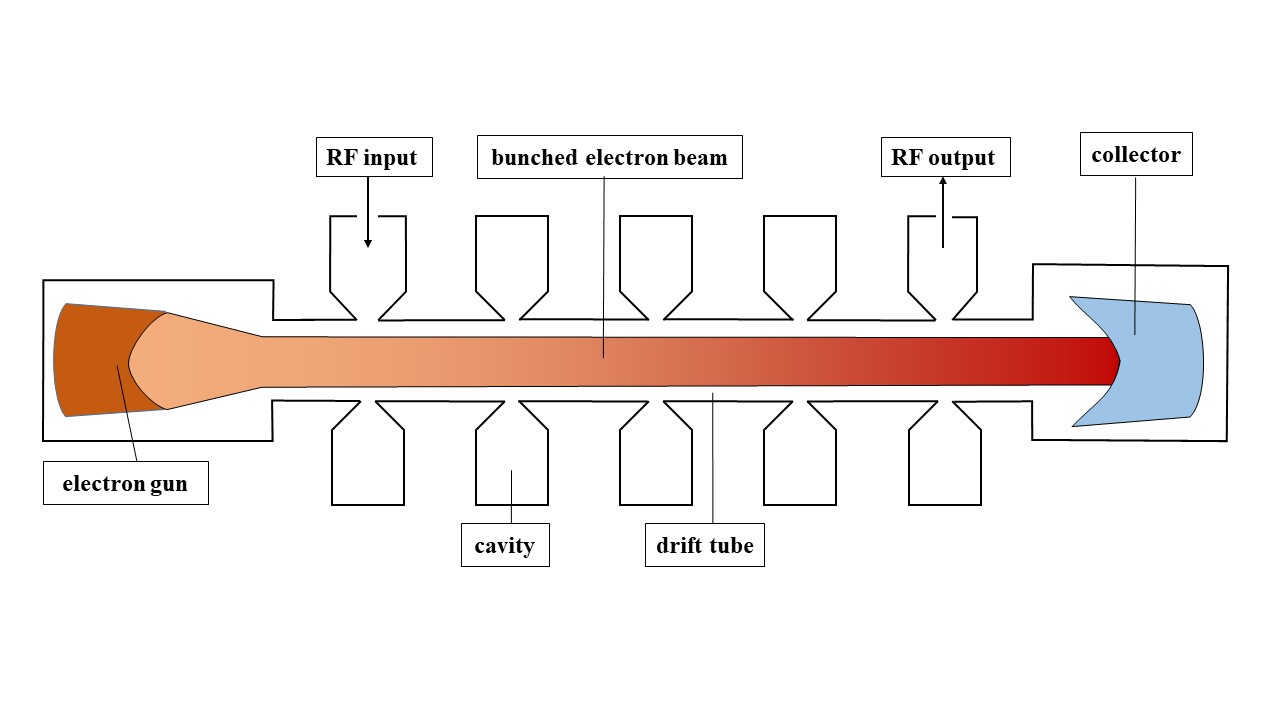}\caption{\label{fig:mck} A schematic representation of a multicavity klystron
(MCK) that exploits constructive interaction between the pencil-like
electron beam and an array of electromagnetic cavities (often of toroidal
shape). The interaction causes the electron bunching and consequent
amplification of the RF signal.}
\end{figure}

\subsection{The Euler-Lagrange equations in dimensionless variables\label{subsec:mck-EL}}

As to basic variables related to the e-beam and the klystron cavities
we refer the reader to Sections \ref{sec:twtmod}, Section \ref{subsec:dim-par}
and Tables \ref{tab:ebeam-par}, \ref{tab:cavity-par}Table \ref{tab:cctwt-par}
. The dimensionless form of the $\mathcal{L}^{\prime}$ of the Lagrangians
is as follows:
\begin{equation}
\mathcal{L}^{\prime}=\mathcal{L}_{\mathrm{B}}^{\prime}+\mathcal{L}_{\mathrm{CB}}^{\prime};\quad\mathcal{L}_{\mathrm{B}}^{\prime}=\frac{1}{2\beta^{\prime}}\left(\partial_{t^{\prime}}q+\partial_{z^{\prime}}q\right)^{2}-\frac{2\pi}{\sigma_{\mathrm{B}}^{\prime}}q^{2},\quad\ell\in\mathbb{Z},\label{eq:Lagdim1ak}
\end{equation}
\begin{equation}
\mathcal{L}_{\mathrm{CB}}^{\prime}=\sum_{\ell=-\infty}^{\infty}\delta\left(z^{\prime}-\ell\right)\left\{ \frac{l_{0}^{\prime}}{2}\left(\partial_{t}Q\left(a\ell\right)\right)^{2}-\frac{1}{2c_{0}^{\prime}}\left[Q\left(a\ell\right)+bq\left(a\ell\right)\right]^{2}\right\} .\label{eq:Lagdim1bk}
\end{equation}
\emph{Just as we did before to simplify notations we will omit prime
symbol in equations but rather will simply acknowledge their dimensionless
form}. The dimensionless form of the EL equations for the MCK is
\begin{gather}
\left(\partial_{t}+\partial_{z}\right)^{2}q+f^{2}q=0,\quad z\neq\ell,\quad\ell\in\mathbb{Z};\quad f=\frac{2\pi R_{\mathrm{sc}}\omega_{\mathrm{p}}}{\omega_{a}}=\frac{2\pi a}{\lambda_{\mathrm{rp}}},\label{eq:Lagdim1ek}
\end{gather}
\begin{gather}
\partial_{\prime}^{2}Q\left(a\ell\right)+\omega_{0}^{2}\left[Q\left(a\ell\right)+bq\left(a\ell\right)\right]=0,\quad\omega_{0}=\frac{1}{\sqrt{l_{0}c_{0}}},\quad\beta_{0}=\frac{\beta}{c_{0}},\label{eq:Lagdim1fk}\\
\left[\partial_{z}q\right]\left(a\ell\right)=-b\beta_{0}\left[Q\left(a\ell\right)+bq\left(a\ell\right)\right],\quad\ell\in\mathbb{Z}.\nonumber 
\end{gather}
Note that term $-\frac{2\pi}{\sigma_{\mathrm{B}}^{\prime}}q^{2}$
in the Lagrangian $\mathcal{L}_{\mathrm{B}}^{\prime}$ defined in
equations (\ref{eq:Lagdim1ak}) represents space-charge effects including
the so-called debunching (electron-to-electron repulsion).

The Fourier transform in $t$ (see Appendix \ref{sec:four}) of equations
(\ref{eq:Lagdim1ek}), (\ref{eq:Lagdim1fk}) is
\begin{equation}
\left(\partial_{z}-\mathrm{i}\omega\right)^{2}\check{q}+f^{2}\check{q}=0,\quad z\neq\ell,\label{eq:Lagdim2ak}
\end{equation}
subjects to the boundary conditions at the interaction points
\begin{gather}
\left[\check{q}\right]\left(a\ell\right)=0,\quad\left[\partial_{z}\check{q}\right]\left(a\ell\right)=-B\left(\omega\right)\check{q}\left(a\ell\right)\quad\ell\in\mathbb{Z},\label{eq:Lagdim2aak}
\end{gather}
where $\check{q}$ is the time Fourier transform of $q$ and $B\left(\omega\right)$
is an important parameter defined by
\begin{equation}
B=B\left(\omega\right)=\frac{b^{2}\beta_{0}\omega^{2}}{\omega^{2}-\omega_{0}^{2}}=B_{0}\frac{\omega^{2}}{\omega^{2}-\omega_{0}^{2}},\quad B_{0}=b^{2}\beta_{0}=\frac{b^{2}\beta}{c_{0}},\label{eq:Lagdim2B1}
\end{equation}
we refer to it as\emph{ cavity e-beam interaction parameter}. The
Fourier transform in time of equation (\ref{eq:Lagdim1fk}) yields
\begin{equation}
\check{Q}\left(a\ell\right)=\frac{\omega_{0}^{2}}{\omega^{2}-\omega_{0}^{2}}b\check{q}\left(a\ell\right),\quad\ell\in\mathbb{Z},\label{eq:Lagdim2abk}
\end{equation}
where $\check{Q}$ is the time Fourier transform of $Q$, and equation
(\ref{eq:Lagdim2abk}) was used to obtain the second equation in (\ref{eq:Lagdim2aak}).

Boundary conditions (\ref{eq:Lagdim2aak}) can be recast into the
matrix form as follows
\begin{equation}
X\left(a\ell+0\right)=\mathsf{S}_{\mathrm{b}}X\left(a\ell-0\right),\quad\mathsf{S}_{\mathrm{b}}=\left[\begin{array}{rr}
1 & 0\\
-B\left(\omega\right) & 1
\end{array}\right],\quad X=\left[\begin{array}{r}
\check{q}\\
\partial_{z}\check{q}
\end{array}\right],\quad B\left(\omega\right)=\frac{b^{2}\beta_{0}\omega^{2}}{\omega^{2}-\omega_{0}^{2}}.\label{eq:Lagdim2ack}
\end{equation}

In order to use the standard form of the Floquet theory reviewed in
Appendix \ref{sec:floquet} we recast the ordinary differential equations
(\ref{eq:Lagdim2ak}) with boundary (interface) conditions (\ref{eq:Lagdim2aak})
as the following single second-order ordinary differential equation
with singular, frequency dependent, periodic potential:
\begin{equation}
\partial_{z}^{2}\check{q}-2\mathrm{i}\omega\partial_{z}\check{q}+\left(f^{2}-\omega^{2}\right)\check{q}-B\left(\omega\right)p\left(z\right)\check{q}=0,\quad p\left(z\right)=\sum_{\ell=-\infty}^{\infty}\delta\left(z-\ell\right),\quad\check{q}=\check{q}\left(z\right),\label{eq:Lagdim4a}
\end{equation}
where the second interaction parameter $B\left(\omega\right)$ is
defined by equation (\ref{eq:Lagdim2B1}).

Analysis of equations (\ref{eq:Lagdim4a}) based on the Floquet theory
(see Appendix \ref{sec:floquet}) becomes now the primary subject
of studies. The second-order ordinary differential equation (\ref{eq:Lagdim4a})
can in turn be recast into the following matrix ordinary differential
equation
\begin{gather}
\partial_{z}X=A_{\mathrm{K}}\left(z\right)X,\quad A_{\mathrm{K}}\left(z\right)=A_{\mathrm{K}}\left(z,\omega\right)=\left[\begin{array}{rr}
0 & 1\\
\omega^{2}-f^{2}+B\left(\omega\right)p\left(z\right) & 2\mathrm{i}\omega
\end{array}\right],\quad X=\left[\begin{array}{r}
q\\
\partial_{z}q
\end{array}\right],\label{eq:Lagdim4b}\\
B\left(\omega\right)=\frac{b^{2}\beta_{0}\omega^{2}}{\omega^{2}-\omega_{0}^{2}}=2fK_{0}\frac{\omega^{2}}{\omega^{2}-\omega_{0}^{2}},\quad p\left(z\right)=\sum_{\ell=-\infty}^{\infty}\delta\left(z-\ell\right).\nonumber 
\end{gather}
Note that normalized period $f=\frac{2\pi a}{\lambda_{\mathrm{rp}}}$
and the MCK gain coefficient $K_{0}=\frac{b^{2}g_{\mathrm{B}}}{c_{0}}$
play particularly significant roles for the MCK properties. 

One can verify by straightforward evaluation that equation (\ref{eq:Lagdim4a})
has the Hamiltonian structure (see Appendix \ref{sec:Ham}) with the
following selection for the metric matrix
\begin{equation}
G_{\mathrm{K}}=G_{\mathrm{K}}^{*}=\left[\begin{array}{rr}
2\omega & \mathrm{i}\\
-\mathrm{i} & 0
\end{array}\right],\quad G_{\mathrm{K}}^{-1}=\left[\begin{array}{rr}
0 & \mathrm{i}\\
-\mathrm{i} & -2\omega
\end{array}\right],\quad\det\left[G_{\mathrm{K}}\right]=-1.\label{eq:Lagdim4c}
\end{equation}
The eigenvalues are eigenvectors of metric matrix $G_{\mathrm{K}}$
are as follows:
\begin{equation}
\omega+\sqrt{\omega^{2}+1},\;\left[\begin{array}{r}
\frac{\mathrm{i}}{-\omega+\sqrt{\omega^{2}+1}}\\
1
\end{array}\right];\quad\omega-\sqrt{\omega^{2}+1},\;\left[\begin{array}{r}
-\frac{\mathrm{i}}{\omega+\sqrt{\omega^{2}+1}}\\
1
\end{array}\right].\label{eq:Lagdim4d}
\end{equation}
Using expressions (\ref{eq:Lagdim4b}) and (\ref{eq:Lagdim4c}) for
respectively matrices $A\left(z\right)$ and $G$ one can readily
verify that $A\left(z\right)$ is $G$-skew-Hermitian matrix, that
is
\begin{equation}
G_{\mathrm{K}}A_{\mathrm{K}}\left(z\right)+A_{\mathrm{K}}^{*}\left(z\right)G_{\mathrm{K}}=0,\label{eq:Lagdim4e}
\end{equation}
and that according to Appendix \ref{sec:Ham} implies that the system
(\ref{eq:Lagdim4b}) is Hamiltonian. Consequently, according to Appendix
\ref{sec:Ham} the matrizant $\Phi_{\mathrm{K}}\left(z\right)$ of
the Hamiltonian system (\ref{eq:Lagdim4b}) ) is $G_{\mathrm{K}}$-unitary
and its spectrum $\sigma\left\{ \Phi_{\mathrm{K}}\left(z\right)\right\} $
is symmetric with respect to the unit circle, that is
\begin{equation}
\Phi_{\mathrm{K}}^{*}\left(z\right)G_{\mathrm{K}}\Phi_{\mathrm{K}}\left(z\right)=G,\quad\zeta\in\sigma\left\{ \Phi_{\mathrm{K}}\left(z\right)\right\} \Rightarrow\frac{1}{\bar{\zeta}}\in\sigma\left\{ \Phi_{\mathrm{K}}\left(z\right)\right\} .\label{eq:Lagdim4f}
\end{equation}

\subsection{The monodromy matrix, the dispersion-instability relations and the
gain\label{subsec:mck-mon-gain}}

The MCK monodromy matrix $\mathscr{T}_{\mathrm{K}}$ (see Appendix
\ref{sec:floquet}) is as follows:

\begin{gather}
\mathscr{T}_{\mathrm{K}}=e^{\mathrm{i}\omega}\left[\begin{array}{rr}
\cos\left(f\right)-{\it \mathrm{i}\omega}\mathrm{sinc}\,\left(f\right) & \mathrm{sinc}\,\left(f\right)\\
\mathrm{sinc}\,\left(f\right)\omega^{2}+2{\it \mathrm{i}\omega\left(\cos\left(f\right)+b_{f}\right)}-\frac{2b_{f}\cos\left(f\right)+\cos^{2}\left(f\right)+1}{\mathrm{sinc}\,\left(f\right)} & \mathrm{i}\omega\mathrm{sinc}\,\left(f\right)-\cos\left(f\right)-2b_{f}
\end{array}\right],\label{eq:abcha3dkLs}
\end{gather}
where
\begin{equation}
b_{f}=\frac{B\left(\omega\right)}{2}\mathrm{sinc}\,\left(f\right)-\cos\left(f\right),\quad\mathrm{sinc}\,\left(f\right)=\frac{\sin\left(f\right)}{f},\quad B\left(\omega\right)=\frac{b^{2}\beta_{0}\omega^{2}}{\omega^{2}-\omega_{0}^{2}}.\label{eq:abcha3ekLs}
\end{equation}
We assume that the MCK normalized period $f=\frac{2\pi a}{\lambda_{\mathrm{rp}}}$
, an important parameter that effects the instability, satisfies the
following ineqalities.

\begin{assumption} \label{ass:fpi}(smaller MCK period). The MCK
normalized period $f$ satisfies the following bounds:
\begin{equation}
0<f=\frac{2\pi a}{\lambda_{\mathrm{rp}}}<\pi.\label{eq:bfspm2as}
\end{equation}

\end{assumption}

The MCK gain is defined by the following expression:
\begin{gather}
G=G\left(f,\omega,K_{0}\right)=\left\{ \begin{array}{ccc}
20\left|\log\left(\left|\left|b_{f}\right|+\sqrt{b_{f}^{2}-1}\right|\right)\right| & \text{if} & \left|b_{f}\right|>1\\
0 & \text{if} & \left|b_{f}\right|\leq1
\end{array}\right.,\label{eq:sSbf1cs}\\
b_{f}=b_{f}\left(\omega\right)=K\left(\omega\right)\sin\left(f\right)-\cos\left(f\right),\quad K\left(\omega\right)=K_{0}\frac{\omega^{2}}{\omega^{2}-\omega_{0}^{2}},\quad K_{0}=\frac{b^{2}\beta_{0}}{2f}=\frac{b^{2}g_{\mathrm{B}}}{c_{0}}.\nonumber 
\end{gather}
Note that the following high-frequency decomposition holds for the
instability parameter $b_{f}$:
\begin{equation}
b_{f}\left(\omega\right)=b_{f}^{\infty}+K_{0}\frac{\omega_{0}^{2}}{\omega^{2}-\omega_{0}^{2}},\label{eq:bfKinf0s}
\end{equation}
where

\begin{equation}
b_{f}^{\infty}=\lim_{\omega\rightarrow\infty}b_{f}\left(\omega\right)=K_{0}\sin\left(f\right)-\cos\left(f\right),\quad K_{0}=\frac{b^{2}g_{\mathrm{B}}}{c_{0}},\label{eq:bfKinf1s}
\end{equation}

It turns out that the high-frequency limit $b_{f}^{\infty}$ of instability
parameter $b_{f}$ defined by equation (\ref{eq:bfKinf1s}) plays
significant role in the analysis of the MCK instability and its gain.
In particular, there exists a unique value $f_{\mathrm{cr}}$ on interval
$\left(0,\pi\right)$ of the normalized period $f$ such that

\begin{equation}
b_{f_{\mathrm{cr}}}^{\infty}=1,\quad0<f_{\mathrm{cr}}<\pi,\label{eq:bfomf1cs}
\end{equation}
and we refer to it as the\emph{ critical value} and the following
representation holds
\begin{equation}
f_{\mathrm{cr}}=2\arctan\left(\frac{1}{K_{0}}\right),\quad\text{ where }K_{0}=\frac{b^{2}g_{\mathrm{B}}}{c_{0}},\quad g_{\mathrm{B}}=\frac{\sigma_{\mathrm{B}}}{4\lambda_{\mathrm{rp}}},\quad\left|\arctan\left(*\right)\right|<\frac{\pi}{2}.\label{eq:bfomf1ds}
\end{equation}
The significance of the critical value $f_{\mathrm{cr}}$ is that
for for $f_{\mathrm{cr}}<f<\pi$ any $\omega>\omega_{0}$ is an instability
frequency. One can see that in Figure \ref{fig:mck-gain1s} showing
the frequency dependence of the gain $G$ and its asymptotic behavior
as $\omega\rightarrow+\infty$.

The maximal value
\begin{equation}
G_{\mathrm{max}}=20\left|\log\left(\left|K_{0}+\sqrt{K_{0}^{2}+1}\right|\right)\right|=20\frac{\ln\left(K_{0}+\sqrt{K_{0}^{2}+1}\right)}{\ln\left(10\right)}.\label{eq:sSbf2bs}
\end{equation}
of gain $G$ is attained at $f=f_{\mathrm{max}}$ that satisfies

\begin{equation}
f_{\mathrm{max}}=\pi-\arctan\left(K_{0}\right),\quad f_{\mathrm{cr}}<f_{\mathrm{max}}<\pi.\label{eq:sSbf2as}
\end{equation}
\begin{figure}[h]
\begin{centering}
\includegraphics[scale=0.35]{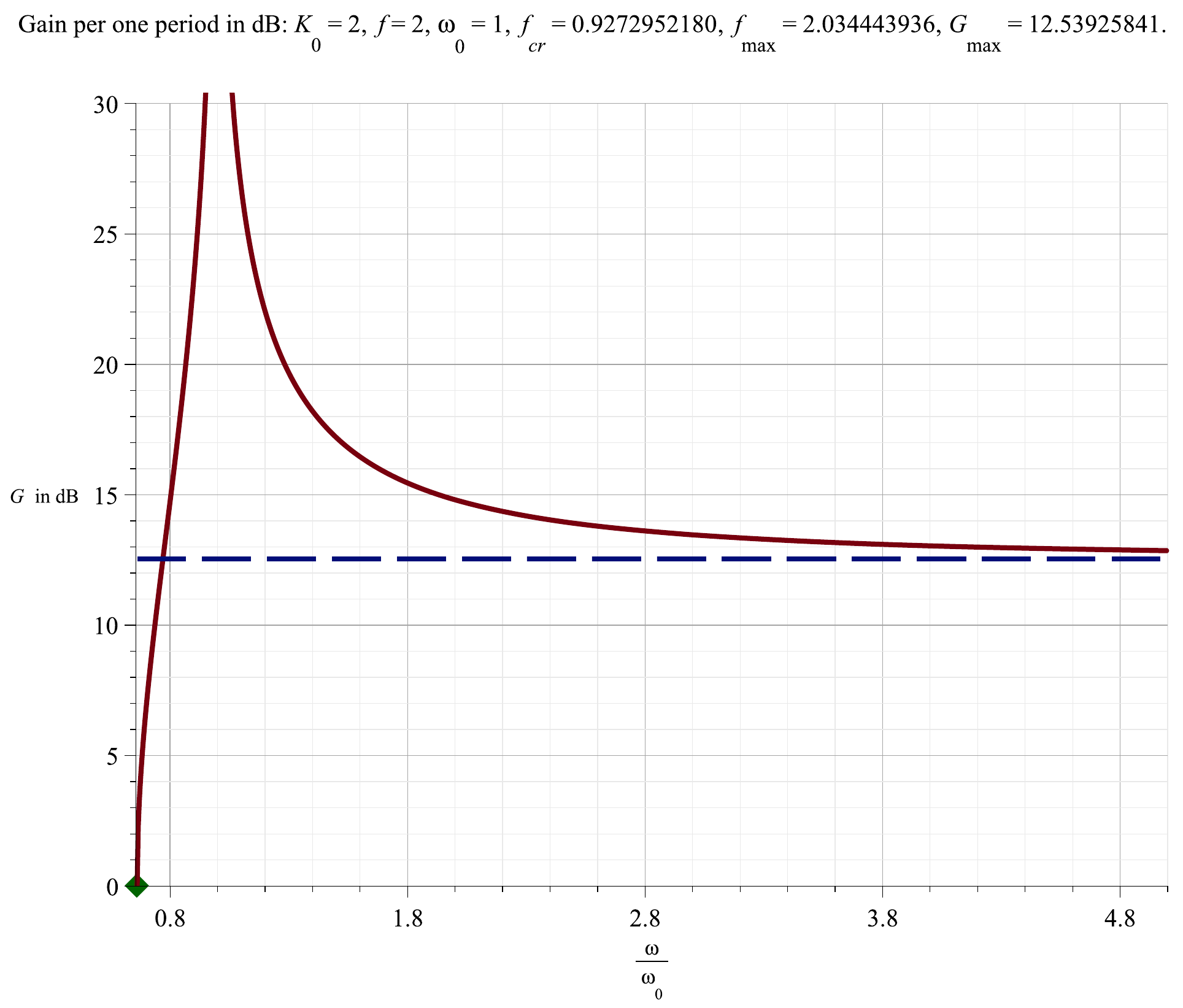}\hspace{1cm}\includegraphics[scale=0.36]{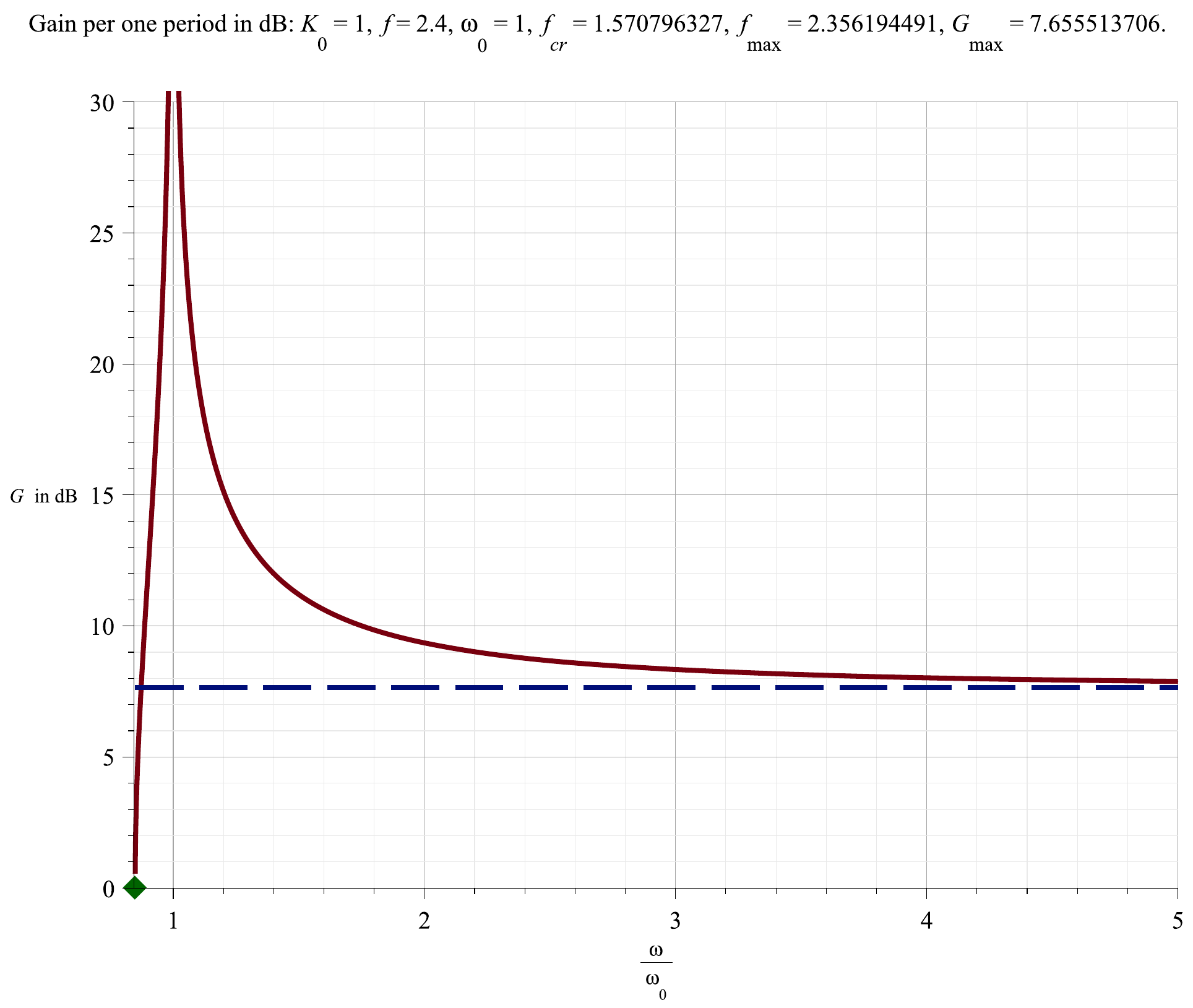}
\par\end{centering}
\centering{}\hspace{1cm}(a)\hspace{7.5cm}(b)\caption{\label{fig:mck-gain1s} Plots of gain $G$ as a function of frequency
$\omega$ defined by equations (\ref{eq:sSbf1cs}) for $\omega_{0}=1$
and (a) $K_{0}=2$, $f=2>f_{\mathrm{cr}}$ with $f_{\mathrm{cr}}\protect\cong0.9272952180$,
$f_{\mathrm{max}}\protect\cong2.034443936$ and $G_{\mathrm{max}}\protect\cong12.53925841$;
(b) $K_{0}=1$, $f=2.4>f_{\mathrm{cr}}$ with $f_{\mathrm{cr}}\protect\cong1.570796327$,
$f_{\mathrm{max}}\protect\cong2.356194491$ and $G_{\mathrm{max}}\protect\cong7.655513706$.
In all plots the horizontal and vertical axes represent respectively
frequency $\omega$ and gain $G$ in $\mathrm{dB}$. The solid (brown)
curves represent gain $G$ as a function of frequency $\omega$, the
dashed (blue) line $G=G_{\mathrm{max}}$ represents the maximal $G_{\mathrm{max}}$value
of $G$ in the high frequency limit. The diamond solid (green) dots
mark the values of $\varOmega_{f}^{-}$ which is the lower frequency
boundary of the instability interval.}
\end{figure}
\begin{figure}[h]
\begin{centering}
\includegraphics[scale=0.6]{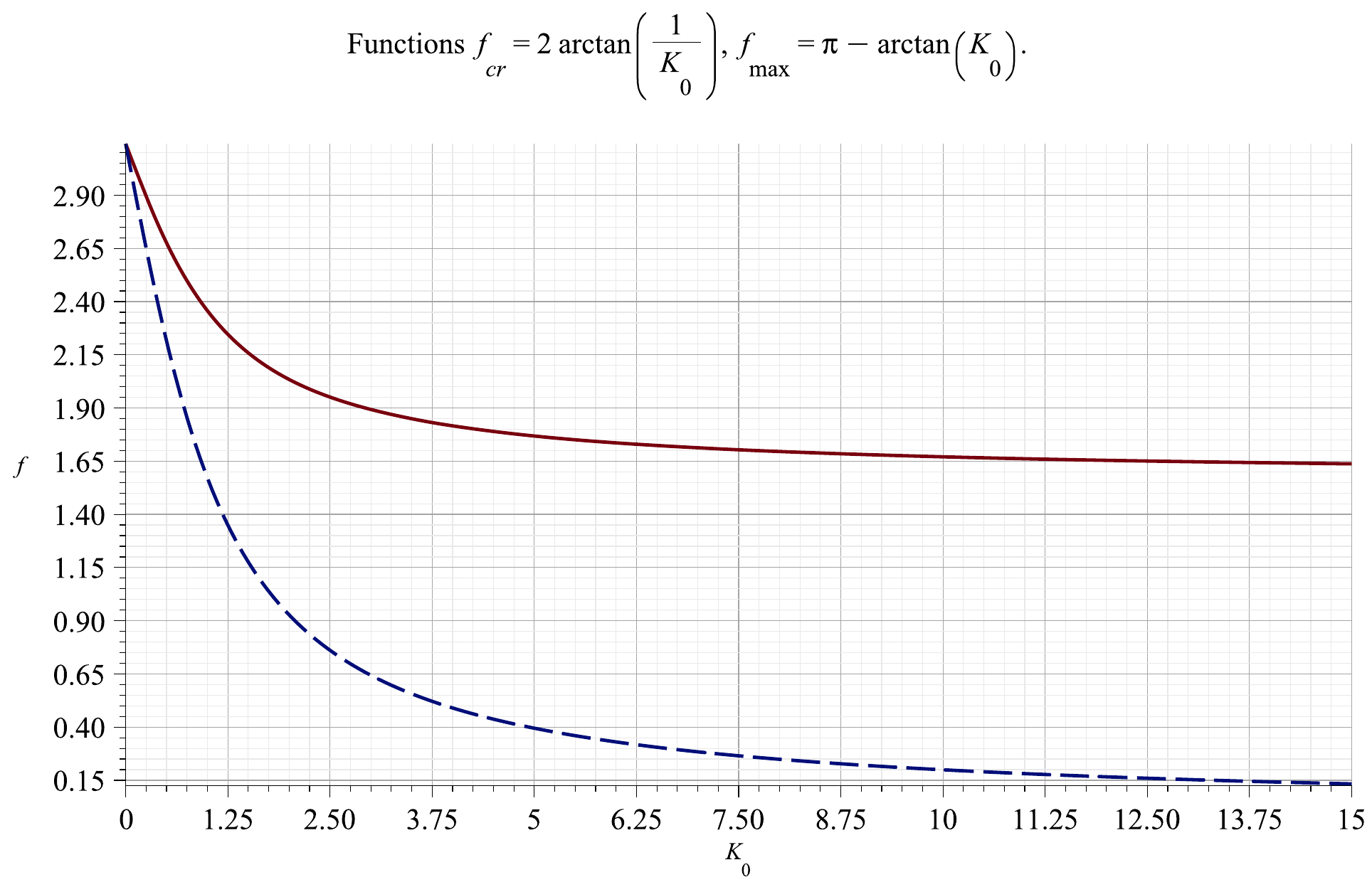}
\par\end{centering}
\centering{}\caption{\label{fig:mck-fcr-fop} Plots of $f_{\mathrm{max}}=\pi-\arctan\left(K_{0}\right)$
as solid (brown) curve and $f_{\mathrm{cr}}=2\arctan\left(\frac{1}{K_{0}}\right)$
as dashed (blue) curve. The horizontal and vertical axes represent
respectively $K_{0}$ and $f$.}
\end{figure}

Let $s=\exp\left\{ \mathrm{i}k\right\} $ where $k$ is the wave number
be the Floquet multiplier of the monodromy matrix $\mathscr{T}_{\mathrm{K}}$
defined by equations (\ref{eq:abcha3dkLs}), (\ref{eq:abcha3ekLs}),
(see Section \ref{sec:floquet} and Remark \ref{rem:disprel}). Then
the two Floquet multipliers $s_{\pm}$ are solutions to the characteristic
equation $\det\left\{ \mathscr{T}_{\mathrm{K}}-s\mathbb{I}\right\} =0$
which is, \cite{FigKly}:
\begin{gather}
s=e^{\mathrm{i}k}=e^{\mathrm{i}\omega}S:\;S^{2}+2b_{f}S+1=0;\quad S=S_{\pm}=-b_{f}\pm\sqrt{b_{f}^{2}-1},\label{eq:bfKpsi1bas}
\end{gather}
readily implying
\begin{gather}
s_{\pm}=e^{\mathrm{i}k_{\pm}}=e^{\mathrm{i}\omega}S_{\pm}=e^{\mathrm{i}\omega}\left(-b_{f}\pm\sqrt{b_{f}^{2}-1}\right),\quad b_{f}=b_{f}\left(\omega\right)=K\left(\omega\right)\sin\left(f\right)-\cos\left(f\right),\label{eq:bfKpsi1bbs}\\
K\left(\omega\right)=K_{0}\frac{\omega^{2}}{\omega^{2}-\omega_{0}^{2}},\quad K_{0}=\frac{b^{2}\beta_{0}}{2f}=\frac{b^{2}g_{\mathrm{B}}}{c_{0}},\nonumber 
\end{gather}
Equations (\ref{eq:bfKpsi1bas}) show that\emph{ parameter $b_{f}$
completely determines the two Floquet multipliers $s_{\pm}$} justifying
its its name the instability parameter.\emph{ Importantly, the characteristic
equation (\ref{eq:bfKpsi1bas}) can be viewed as an expression of
the dispersion relations between the frequency $\omega$ and the wavenumber
$k$} as we discuss in Section \ref{sec:disprel}. Equation (\ref{eq:bfKpsi1bas})
can be readily recast as

\begin{equation}
S+2b_{f}+S^{-1}=0,\quad S=s\exp\left\{ -\mathrm{i}{\it \omega}\right\} =\exp\left\{ \mathrm{i}\left(k-{\it \omega}\right)\right\} ,\quad s=\exp\left\{ \mathrm{i}k\right\} ,\label{eq:dispbf1a}
\end{equation}
or, equivalently, a
\begin{gather}
\cos\left(k-{\it \omega}\right)+b_{f}\left(\omega\right)=0,\quad b_{f}\left(\omega\right)=K\left(\omega\right)\sin\left(f\right)-\cos\left(f\right),\label{eq:dispbf1b}
\end{gather}
Equations (\ref{eq:dispbf1a}) and (\ref{eq:dispbf1b}) can be viewed
as expressions of the dispersion relations between the frequency $\omega$
and the wavenumber $k$ and we will refer to it as the MCK dispersion
relations.\emph{ }Dispersion relation (\ref{eq:dispbf1b}) can be
readily recast as
\begin{gather}
k_{\pm}\left(\omega\right)=\omega\pm\arccos\left(-b_{f}\left(\omega\right)\right),\quad b_{f}\left(\omega\right)=K\left(\omega\right)\sin\left(f\right)-\cos\left(f\right),\label{eq:dispbf1c}\\
K\left(\omega\right)=K_{0}\frac{\omega^{2}}{\omega^{2}-\omega_{0}^{2}},\quad K_{0}=\frac{b^{2}\beta_{0}}{2f}=\frac{b^{2}g_{\mathrm{B}}}{c_{0}}.\nonumber 
\end{gather}

Equation (\ref{eq:dispbf1c}) in turn can recast into even more explicit
form as stated in the following theorem, \cite{FigKly}.
\begin{thm}[MCK dispersion relations]
\label{thm:mckdis} Let $s_{\pm}$ be the MCK Floquet multipliers,
that is solutions to equations (\ref{eq:dispbf1a}), and let $k_{\pm}\left(\omega\right)$
be the corresponding complex-valued wave numbers satisfying
\begin{equation}
s_{\pm}=s_{\pm}\left(\omega\right)=\exp\left\{ \mathrm{i}k_{\pm}\left(\omega\right)\right\} ,\label{eq:spmpo1ds}
\end{equation}
Then the following representation for $k_{\pm}\left(\omega\right)$
holds
\begin{equation}
k_{\pm}\left(\omega\right)=\left\{ \begin{array}{rcr}
-\frac{1+\mathrm{sign}\,\left\{ b_{f}\left(\omega\right)\right\} }{2}\pi+\omega+2\pi m\pm\mathrm{i}\ln\left[\left(\left|b_{f}\left(\omega\right)\right|+\sqrt{b_{f}^{2}\left(\omega\right)-1}\right)\right] & \text{if } & b_{f}^{2}>1\\
-\frac{1+\mathrm{sign}\,\left\{ b_{f}\left(\omega\right)\right\} }{2}\pi+\omega+2\pi m\pm\arccos\left(\left|b_{f}\left(\omega\right)\right|\right) & \text{if } & b_{f}^{2}\leq1
\end{array}\right.,\quad m\in\mathbb{Z},\label{eq:spmpo1es}
\end{equation}
where $0<f<\pi$ and 
\begin{equation}
b_{f}\left(\omega\right)=K_{0}\frac{\omega^{2}}{\omega^{2}-\omega_{0}^{2}}\sin\left(f\right)-\cos\left(f\right),\quad K_{0}=\frac{b^{2}\beta_{0}}{2}=\frac{b^{2}g_{\mathrm{B}}}{c_{0}},\quad g_{\mathrm{B}}=\frac{\sigma_{\mathrm{B}}}{4\lambda_{\mathrm{rp}}}.\label{eq:spmpol1eas}
\end{equation}
Requirement for $\Re\left\{ k_{\pm}\left(\omega\right)\right\} $
to be in the first (main) Brillouin zone $\left(-\pi,\pi\right]$
effectively selects the band number $m$ that depend on $\omega$
as follows. For any given $\omega>0$ and $0<f<\pi$ the band number
$m\in\mathbb{Z}$ is determined by the requirement to satisfy the
following inequalities:
\begin{equation}
\begin{array}{rcr}
-\pi<-\frac{1+\mathrm{sign}\,\left\{ b_{f}\left(\omega\right)\right\} }{2}\pi+\omega+2\pi m\leq\pi, & \text{if } & b_{f}^{2}\left(\omega\right)>1\\
-\pi<-\frac{1+\mathrm{sign}\,\left\{ b_{f}\left(\omega\right)\right\} }{2}\pi\pm\arccos\left(-b_{f}\left(\omega\right)\right)+\omega+2\pi m\leq\pi, & \text{if } & b_{f}^{2}\left(\omega\right)<1
\end{array}.\label{eq:spmpo1hs}
\end{equation}
\emph{The equations (\ref{eq:spmpo1es}) for the complex-valued wave
numbers $k_{\pm}\left(\omega\right)$ represent the dispersion relations
of the MCK.} 
\end{thm}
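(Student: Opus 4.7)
The plan is to invert the reduced dispersion equation (\ref{eq:dispbf1b}), namely $\cos(k-\omega) = -b_f(\omega)$, which is algebraically equivalent to the quadratic characteristic equation (\ref{eq:bfKpsi1bas}) for the Floquet multipliers. Writing $k_\pm(\omega) = \omega + \cos^{-1}(-b_f(\omega)) + 2\pi m$ and using the standard branch relations for the complex arccosine in the two regimes $|b_f|\le 1$ and $|b_f|>1$, the formulas (\ref{eq:spmpo1es}) will drop out after tracking the sign of $b_f$ and absorbing an occasional $2\pi$ into the band index $m$.

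\textbf{Oscillatory regime $b_f^2 \le 1$.} Here $\cos^{-1}(-b_f) = \pm \arccos(-b_f) \pmod{2\pi}$, so $k_\pm = \omega \pm \arccos(-b_f) + 2\pi m$. Using $\arccos(-b_f) = \pi - \arccos(b_f)$ together with the elementary case analysis $\arccos(b_f) = \mathrm{sign}(b_f)\arccos(|b_f|) + \tfrac{1-\mathrm{sign}(b_f)}{2}\pi$, one rewrites the resulting branch constant in the compact form $-\tfrac{1+\mathrm{sign}(b_f)}{2}\pi$. After possibly swapping the $\pm$ label and shifting $m$ by one (to reconcile a $+\pi$ versus $-\pi$ in the $b_f>0$ subcase), this delivers the second line of (\ref{eq:spmpo1es}).

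\textbf{Instability regime $b_f^2 > 1$.} When $b_f > 1$ the argument $-b_f$ lies below $-1$, and the identity $\cos^{-1}(z) = \pi \pm \mathrm{i}\log(|z|+\sqrt{z^2-1})$ for real $z<-1$ gives $k_\pm = \omega + \pi \pm \mathrm{i}\log(|b_f|+\sqrt{b_f^2-1}) + 2\pi m$. When $b_f<-1$ the argument $-b_f$ lies above $1$, and the companion identity $\cos^{-1}(z) = \pm \mathrm{i}\log(z+\sqrt{z^2-1})$ for $z>1$ gives $k_\pm = \omega \pm \mathrm{i}\log(|b_f|+\sqrt{b_f^2-1}) + 2\pi m$. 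Replacing $+\pi$ by $-\pi$ via a unit shift of $m$ in the first subcase, both subcases merge into the uniform expression $k_\pm = -\tfrac{1+\mathrm{sign}(b_f)}{2}\pi + \omega + 2\pi m \pm \mathrm{i}\log(|b_f|+\sqrt{b_f^2-1})$, which is the first line of (\ref{eq:spmpo1es}).

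The band-selection inequalities (\ref{eq:spmpo1hs}) are then immediate: substituting $\Re k_\pm(\omega)$ (with the $\pm\mathrm{i}\log(\cdots)$ term dropping out in the instability regime and the $\pm\arccos(|b_f|)$ term contributing in the oscillatory regime) into the requirement $\Re k_\pm(\omega)\in(-\pi,\pi]$ reproduces the two listed inequalities verbatim. The main obstacle is not the derivation itself but the uniform \emph{signed bookkeeping} of principal branches that compresses the four natural subcases ($b_f \gtrless 0$ crossed with $b_f^2 \gtrless 1$) into the single shift $-\tfrac{1+\mathrm{sign}(b_f)}{2}\pi$; organizing the argument first by the sign of $b_f$ and only then by $|b_f|\gtrless 1$ keeps the principal-value choices consistent and avoids ad hoc branch selections.
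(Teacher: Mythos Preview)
Your derivation is correct. The paper does not actually prove this theorem here; it quotes the result from \cite{FigKly} after recording the reduced dispersion relation $\cos(k-\omega)=-b_f(\omega)$ in equations (\ref{eq:dispbf1a})--(\ref{eq:dispbf1c}). Your argument is exactly the intended one: solve $\cos(k-\omega)=-b_f$ via the complex arccosine, split according to $|b_f|\gtrless 1$, and then collapse the $b_f\gtrless 0$ subcases into the single additive shift $-\tfrac{1+\mathrm{sign}(b_f)}{2}\pi$ at the cost of relabeling $\pm$ and translating $m$. The bookkeeping you carry out (the identities $\arccos(-b_f)=\pi-\arccos(b_f)$ and $\arccos(b_f)=\mathrm{sign}(b_f)\arccos(|b_f|)+\tfrac{1-\mathrm{sign}(b_f)}{2}\pi$, and the standard $\cos^{-1}$ formulas on $(-\infty,-1)$ and $(1,\infty)$) is accurate, and the $+\pi\to -\pi$ shift via $m\mapsto m+1$ in the $b_f>0$ subcases is precisely what is needed to match (\ref{eq:spmpo1es}).

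One small observation: the second inequality in (\ref{eq:spmpo1hs}) is written in the paper with $\arccos(-b_f(\omega))$ rather than $\arccos(|b_f(\omega)|)$; your claim that the Brillouin-zone condition is reproduced ``verbatim'' from the real part of (\ref{eq:spmpo1es}) is therefore slightly off in notation, though of course $\arccos(-b_f)$ and $\arccos(|b_f|)$ differ only by the same sign-dependent constant you have already absorbed, so the content is unchanged.
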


\begin{rem}
Note that according to expression\emph{ (\ref{eq:spmpo1es}) }in Theorem
\ref{thm:mckdis} we have\emph{
\begin{equation}
\Re\left\{ k_{\pm}\left(\omega\right)\right\} =\pi+\omega+2\pi m,\quad\omega_{0}<\omega<\varOmega_{f}^{+},\label{eq:spmpo1js}
\end{equation}
}where $\varOmega_{f}^{+}$ is the upper boundary of instability frequencies.
Figure \ref{fig:mck-disp3s} illustrates graphically equation\emph{
(\ref{eq:spmpo1js}) }by perfect straight lines parallel to $\Re\left\{ k\right\} =\omega$
in the shadowed area.
\end{rem}

There is yet another form of the dispersion relation (\ref{eq:dispbf1a})
and (\ref{eq:dispbf1b}) which is the high-frequency form:
\begin{gather}
D_{\mathrm{K}}\left(\omega,k\right)=D_{\mathrm{K}}^{\left(0\right)}\left(\omega,k\right)+\frac{K_{0}\omega_{0}^{2}}{\omega^{2}-\omega_{0}^{2}}=0,\label{eq:DKomk1a}\\
D_{\mathrm{K}}^{\left(0\right)}\left(\omega,k\right)=\cos\left({\it \omega}-k\right)+b_{f}^{\infty},\quad b_{f}^{\infty}=K_{0}\sin\left(f\right)-\cos\left(f\right).\label{eq:DKomk1b}
\end{gather}
We refer to function $D_{\mathrm{K}}\left(\omega,k\right)$ as the
\emph{MCK dispersion function}.

This form readily yields the following high-frequency approximation
to the MCK dispersion relations
\begin{equation}
\cos\left({\it \omega}-k\right)+b_{f}^{\infty}=0,\quad b_{f}^{\infty}=K_{0}\sin\left(f\right)-\cos\left(f\right),\quad\left|b_{f}^{\infty}\right|\leq1,\label{eq:DKomk1c}
\end{equation}
or, equivalently
\begin{equation}
\omega=k\pm\arccos\left(-b_{f}^{\infty}\right)+2\pi m,\quad b_{f}^{\infty}=K_{0}\sin\left(f\right)-\cos\left(f\right),\quad\left|b_{f}^{\infty}\right|\leq1,\quad m\in\mathbb{Z},\label{eq:DKom1d}
\end{equation}
where inequality $\left|b_{f}^{\infty}\right|\leq1$ is necessary
and sufficient for the existence of real-valued $\omega$ and $k$
satisfying the dispersion relation.

Theorem \ref{thm:dispfac} shows how the MCK dispersion function $D_{\mathrm{K}}\left(\omega,k\right)$
and its high-frequency approximation $D_{\mathrm{K}}^{\left(0\right)}\left(\omega,k\right)$
are integrated into the relevant dispersion functions associated with
the CCTWT.

Figures \ref{fig:mck-disp2s} and \ref{fig:mck-disp3s} illustrate
graphically the dispersion relations $k_{\pm}\left(\omega\right)$
described by equations (\ref{eq:spmpo1es}). The pairs of nearly straight
lines above the shadowed instability zone depicted in Figure \ref{fig:mck-disp2s}
are consistent with the high-frequency approximation (\ref{eq:DKom1d})
to the MCK dispersion relation.
\begin{figure}[h]
\begin{centering}
\includegraphics[scale=0.2]{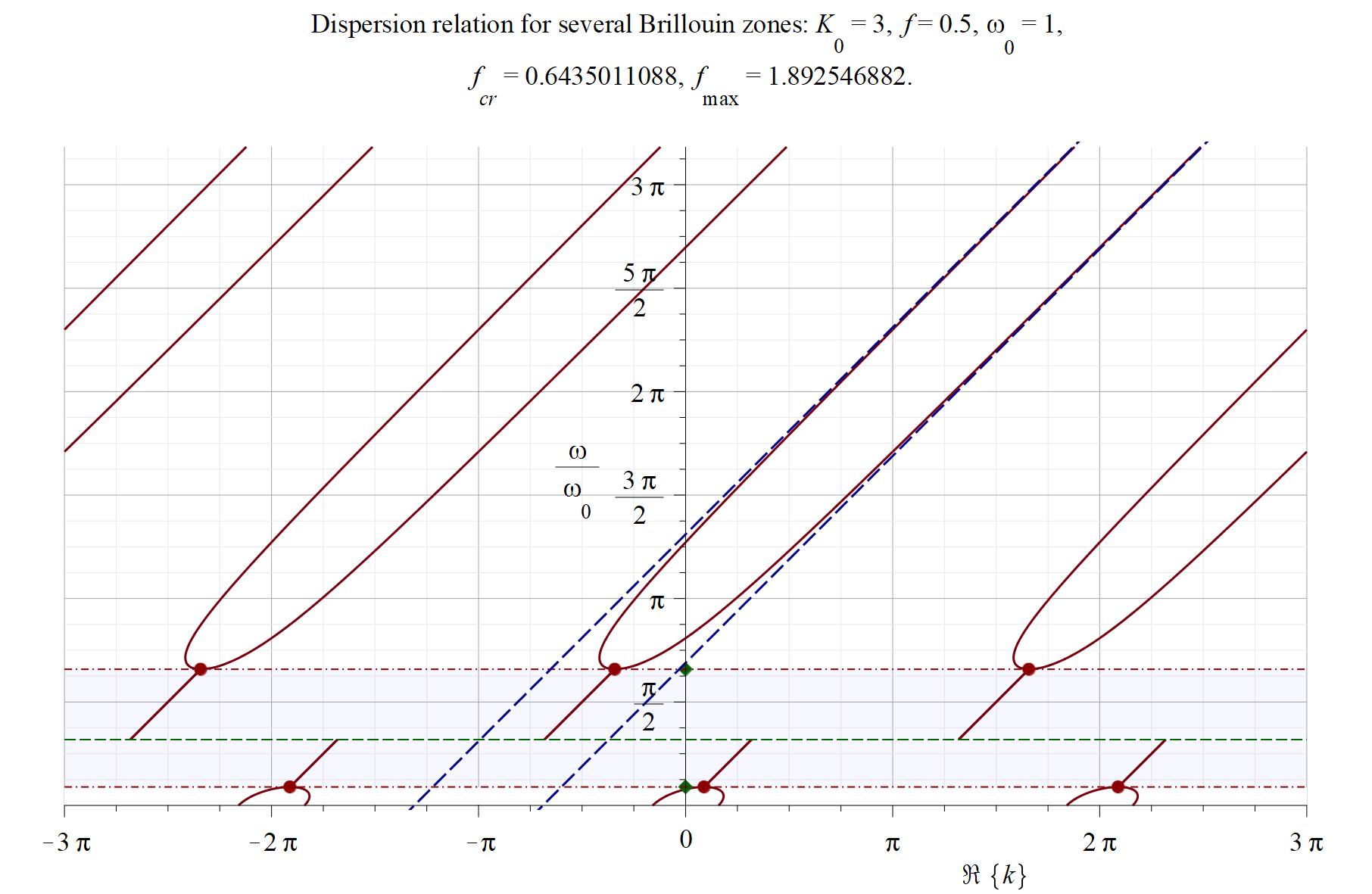}
\par\end{centering}
\centering{}\caption{\label{fig:mck-disp2s} The MCK dispersion-instability plot (solid
brown curves and lines) over 3 Brillouin zones $3\left[-\pi,\pi\right]$
for $K_{0}=3$, $\omega_{0}=1$ for which $f_{\mathrm{cr}}\protect\cong0.6435011088$,
$f_{\mathrm{max}}\protect\cong1.892546882$ and $f=0.5<f_{\mathrm{cr}}\protect\cong0.6435011088$.
The horizontal and vertical axes represent respectively $\Re\left\{ k\right\} $
and $\frac{\omega}{\omega_{0}}$. Two solid (green) diamond dots identify
the values of $\varOmega_{f}^{-}$ and $\varOmega_{f}^{+}$ which
are the frequency boundaries of the instability. Solid (brown) disk
dots identify points of the transition from the instability to the
stability which are also EPD points. Two horizontal (brown) dash-dot
lines $\omega=\varOmega_{f}^{\pm}$ identify the frequency boundaries
of the instability and the shaded (light blue) region between the
lines identify points $\left(\Re\left\{ k\right\} ,\omega\right)$
of instability. Dashed horizontal (green) line $\omega=\omega_{0}$
identifies the resonance frequency $\omega_{0}$. Note the plot has
a jump-discontinuity along the dashed (green) line, namely $\Re\left\{ k_{\pm}\left(\omega\right)\right\} $
jumps by $\pi$ according to equations (\ref{eq:spmpo1es})\emph{
}as the frequency $\omega$ passes through the resonance frequency
$\omega_{0}$ and the sign of $b_{f}\left(\omega\right)$ changes.
The shadowed area marks points $\left(\Re\left\{ k\right\} ,\omega\right)$
associated with the instability. The dashed (blue) straight lines
lines correspond to the high frequency approximation defined by equations
(\ref{eq:DKom1d}).}
\end{figure}
\begin{figure}[h]
\begin{centering}
\includegraphics[scale=0.2]{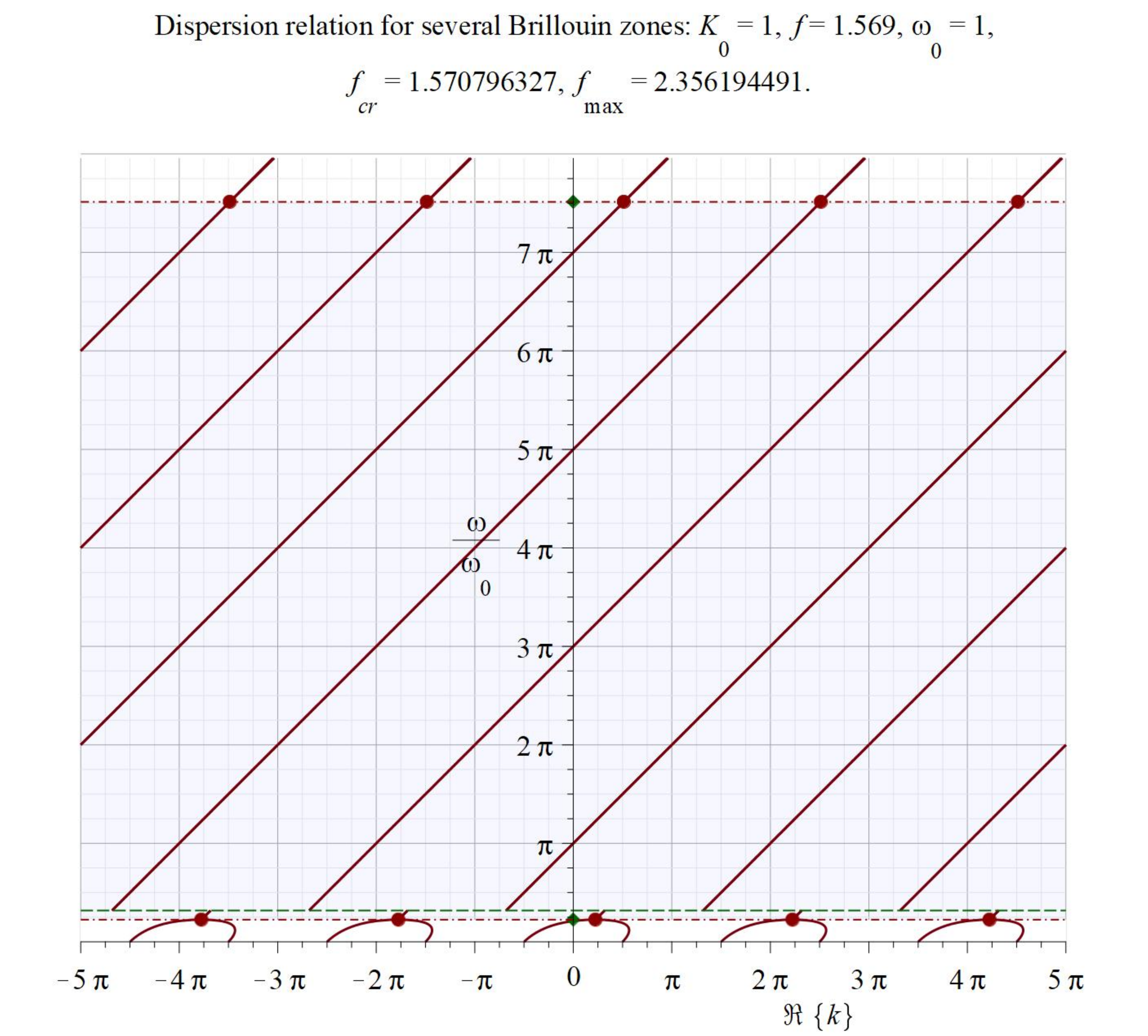}
\par\end{centering}
\centering{}\caption{\label{fig:mck-disp3s} The MCK dispersion-instability plot (solid
brown curves and lines) over 3 Brillouin zones $3\left[-\pi,\pi\right]$
for $K_{0}=1$, $\omega_{0}=1$ for which $f_{\mathrm{cr}}\protect\cong1.570796327$,
$f_{\mathrm{max}}\protect\cong1.892546882$ and $f=1.569\protect\cong f_{\mathrm{cr}}\protect\cong1.570796327$.
The horizontal and vertical axes represent respectively $\Re\left\{ k\right\} $
and $\frac{\omega}{\omega_{0}}$. Two solid (green) diamond dots identify
the values of $\varOmega_{f}^{-}$ and $\varOmega_{f}^{+}$ which
are the frequency boundaries of the instability. Solid (brown) disk
dots identify points of the transition from the instability to the
stability which are also EPD points. Two horizontal (brown) dash-dot
lines $\omega=\varOmega_{f}^{\pm}$ identify the frequency boundaries
of the instability and the shaded (light blue) region between the
lines identify points $\left(\Re\left\{ k\right\} ,\omega\right)$
of instability. Dashed horizontal (green) line $\omega=\omega_{0}$
identifies the resonance frequency $\omega_{0}$. Note the plot has
a jump-discontinuity along the dashed (green) line, namely $\Re\left\{ k_{\pm}\left(\omega\right)\right\} $
jumps by $\pi$ according to equations (\ref{eq:spmpo1es})\emph{
}as the frequency $\omega$ passes through the resonance frequency
$\omega_{0}$ and the sign of $b_{f}\left(\omega\right)$ changes.
The shadowed area marks points $\left(\Re\left\{ k\right\} ,\omega\right)$
associated with the instability.}
\end{figure}

Interestingly, there is an empirical formula due to Tsimring that
shows the dependence of the maximum power gain $G_{\mathrm{T}}\left(N\right)$
on the number $N$ of cavities in the klystron, \cite[7.7.1]{Tsim},
\cite[7.2.6]{Grigo}, \cite[16]{ValMid}:
\begin{equation}
G_{\mathrm{T}}\left(N\right)=15+20\left(N-2\right)\,\mathrm{dB}.\label{eq:GTmaxdB1as}
\end{equation}
Realistically achievable maximum amplification values though are smaller
and are of the order of $50\,\mathrm{dB}$ to $70\,\mathrm{dB}$.
The main limiting factors are noise and self-excitation of the klystron
because of parasitic feedback between cavities.

\section{Coupled cavity structure\label{sec:ccs}}

We introduce and study here basic properties of the \emph{coupled
cavity structure (CCS)}. Since CCS is naturally an integral part of
CCTWT the knowledge of its properties would allow to find out its
contribution to the properties of CCTWT. As to particular designs
of coupled cavities and the way they interact with TWTs see \cite[9.1, 9.3.3]{BenSweScha}.

The Lagrangian $\mathcal{L}_{\mathrm{C}}\left(\left\{ Q\right\} \right)$
of the CCS system can be readily obtained from the Lagrangian $\mathcal{L}$
of CCTWT defined by equations (\ref{eq:aZep1L}), (\ref{eq:aZep1c})
and (\ref{eq:aZep1d}) by assuming $b=0$ and omitting component $\mathcal{L}_{\mathrm{B}}$,
that is
\begin{equation}
\mathcal{L}_{\mathrm{C}}\left(\left\{ Q\right\} \right)=\frac{L}{2}\left(\partial_{t}Q\right)^{2}-\frac{1}{2C}\left(\partial_{z}Q\right)^{2}-\frac{1}{2c_{0}}\sum_{\ell=-\infty}^{\infty}\delta\left(z-a\ell\right)\left[Q\left(a\ell\right)\right]^{2}.\label{eq:ccTL1a}
\end{equation}
Then the corresponding EL equations (\ref{eq:eZep1e}) and (\ref{eq:eZep1fa})
are reduced to
\begin{equation}
L\partial_{t}^{2}Q-C^{-1}\partial_{z}^{2}Q=0,\quad\left[\partial_{z}Q\right]\left(a\ell\right)=C_{0}\left(\frac{\partial_{t}^{2}}{\omega_{0}^{2}}+1\right)Q\left(a\ell\right),\quad C_{0}=\frac{C}{c_{0}},\quad a\ell\in\mathbb{Z},\label{eq:ccTL1b}
\end{equation}
where jumps $\left[Q\right]\left(a\ell\right)$ and $\left[\partial_{z}Q\right]\left(a\ell\right)$
are defined by equation (\ref{eq:limpm1b}), and consequently
\begin{equation}
X\left(a\ell+0\right)=\mathsf{S}_{\mathrm{b}}X\left(a\ell-0\right),\quad\mathsf{S}_{\mathrm{b}}=\left[\begin{array}{rr}
1 & 0\\
C_{0}\left(\frac{\partial_{t}^{2}}{\omega_{0}^{2}}+1\right) & 1
\end{array}\right],\quad X=\left[\begin{array}{r}
Q\\
\partial_{z}Q
\end{array}\right]\label{eq:ccTL1c}
\end{equation}
\emph{Using the same set of dimensionless variables as in Section
\ref{subsec:dim-par} and omitting prime symbol for notation simplicity}
we obtain the following dimensionless form of the EL equations (\ref{eq:ccTL1b})
\begin{equation}
\partial_{t}^{2}Q-\frac{1}{\chi^{2}}\partial_{z}^{2}Q=0,\quad z\neq\ell;\quad\left[\partial_{z}Q\right]\left(\ell\right)=C_{0}\left(\frac{\partial_{t}^{2}}{\omega_{0}^{2}}+1\right)Q\left(\ell\right),\quad\ell\in\mathbb{Z}.\label{eq:ccTL1d}
\end{equation}
The Fourier transform in $t$ (see Appendix \ref{sec:four}) of equations
(\ref{eq:ccTL1d}) yields
\begin{equation}
\partial_{z}^{2}\check{Q}+\frac{\omega^{2}}{\chi^{2}}\check{Q}=0,\quad z\neq\ell;\quad\left[\partial_{z}Q\right]\left(\ell\right)=C_{0}\left(1-\frac{\omega^{2}}{\omega_{0}^{2}}\right)Q\left(\ell\right),\quad\ell\in\mathbb{Z}\label{eq:ccTL1e}
\end{equation}
where $\check{Q}$ and $\check{q}$ are the time Fourier transform
of the corresponding quantities.

An alternative form the system of equations (\ref{eq:ccTL1e}) is
the following second-order vector ODE with the periodic singular potential:
\begin{equation}
\partial_{z}^{2}\check{Q}+\frac{\omega^{2}}{\chi^{2}}\check{Q}+C_{0}\left(1-\frac{\omega^{2}}{\omega_{0}^{2}}\right)\sum_{\ell=-\infty}^{\infty}\delta\left(z-\ell\right)\check{Q}=0.\label{eq:ccTL1f}
\end{equation}

According to Appendix \ref{sec:dif-jord} second-order differential
equation (\ref{eq:ccTL1f}) is equivalent to the first-order differential
equation of the form:
\begin{gather}
\partial_{z}X=A_{\mathrm{C}}\left(z\right)X,\quad A_{\mathrm{C}}\left(z\right)=\left[\begin{array}{rr}
0 & 1\\
-\frac{\omega^{2}}{\chi^{2}}-p\left(z\right) & 0
\end{array}\right],\quad X=\left[\begin{array}{r}
\check{Q}\\
\partial_{z}\check{Q}
\end{array}\right],\label{eq:ccTL2a}\\
p\left(z\right)=C_{0}\left(1-\frac{\omega^{2}}{\omega_{0}^{2}}\right)\sum_{\ell=-\infty}^{\infty}\delta\left(z-\ell\right).\nonumber 
\end{gather}
Using results of Appendix \ref{subsec:speHam} we find that system
(\ref{eq:ccTL2a}) is Hamiltonian for the following choice of nonsingular
Hermitian matrix $G$:
\begin{equation}
G_{\mathrm{C}}=G_{\mathrm{C}}^{*}=\left[\begin{array}{rr}
0 & \mathrm{i}\\
-\mathrm{i} & 0
\end{array}\right],\quad\det\left\{ G_{\mathrm{C}}\right\} =-1\label{eq:ccTL2b}
\end{equation}
In particular, it is an elementary exercise to verify that for each
value of $z$ matrix $A_{\mathrm{C}}\left(z\right)$ is $G_{\mathrm{C}}$-skew-Hermitian,
that is
\begin{equation}
G_{\mathrm{C}}A_{\mathrm{C}}\left(z\right)+A_{\mathrm{C}}^{*}\left(z\right)G_{\mathrm{C}}=0.\label{eq:ccTL2c}
\end{equation}
Then if $\Phi\left(z\right)$ is the matrizant of Hamiltonian equation
(\ref{eq:ccTL2a}) then according to results of Appendix \ref{sec:Ham}
$\Phi_{\mathrm{C}}\left(z\right)$ is $G_{\mathrm{C}}$-unitary matrix
\begin{equation}
\Phi_{\mathrm{C}}^{*}\left(z\right)G_{\mathrm{C}}\Phi_{\mathrm{C}}\left(z\right)=G_{\mathrm{C}},\label{eq:ccTL2d}
\end{equation}
 and consequently its spectrum $\sigma\left\{ \Phi_{\mathrm{C}}\left(z\right)\right\} $
is invariant with respect to the inversion transformation $\zeta\rightarrow\frac{1}{\bar{\zeta}}$,
that is it symmetric with respect to the unit circle: 
\begin{equation}
\zeta\in\sigma\left\{ \Phi_{\mathrm{C}}\left(z\right)\right\} \Rightarrow\frac{1}{\bar{\zeta}}\in\sigma\left\{ \Phi_{\mathrm{C}}\left(z\right)\right\} .\label{eq:ccTL2e}
\end{equation}
To simplify analytic evaluations we assume as before that Assumption
\ref{ass:cavcoup} holds.

\subsection{Monodromy matrix and the dispersion-instability relations\label{subsec:mondis}}

Under simplifying Assumption \ref{ass:cavcoup} ($\chi=\omega_{0}=1$)
the monodromy matrix matrix $\mathscr{T}_{\mathrm{C}}$ defined by
equation (\ref{eq:PTP1b}) takes the form
\begin{equation}
\mathscr{T}_{\mathrm{C}}=\left[\begin{array}{rr}
\cos\left(\omega\right) & \omega^{-1}\sin\left(\omega\right)\\
\left(1-\omega^{2}\right){\it C_{0}}\cos\left(\omega\right)-\omega\,\sin\left(\omega\right) & \cos\left(\omega\right)-C_{0}\left(\omega-\omega^{-1}\right)\sin\left(\omega\right)
\end{array}\right].\label{eq:monTc1a}
\end{equation}
The corresponding characteristic equation (\ref{eq:PTP1d}) turns
into
\begin{gather}
P_{\mathrm{C}}\left(\omega,s\right)=s^{2}+2W_{\mathrm{C}}\left(\omega\right)s+1=0,\quad W_{\mathrm{C}}\left(\omega\right)=\frac{C_{0}}{2}\left(\omega-\frac{1}{\omega}\right)\sin\left(\omega\right)-\cos\left(\omega\right),\label{eq:monTc1b}\\
s=\exp\left\{ \mathrm{i}k\right\} ,\nonumber 
\end{gather}
where quadratic polynomial $P_{\mathrm{C}}$ is referred to as the
\emph{CCS characteristic polynomial} and quantity $W_{\mathrm{C}}\left(\omega\right)$
is the CCS instability parameter which is depicted in Fig. \ref{fig:dis-ccs-wom2}(b).

The two Floquet multipliers $s_{\pm}$ which are the eigenvalues of
the monodromy matrix $\mathscr{T}_{\mathrm{C}}$ defined by equation
(\ref{eq:monTc1a}) and consequently are the solutions to its characteristic
equation (\ref{eq:monTc1b}) can be represented as follows
\begin{equation}
s_{\pm}=e^{\mathrm{i}k_{\pm}}=W_{\mathrm{C}}\pm\sqrt{W_{\mathrm{C}}^{2}-1},\quad W_{\mathrm{C}}=W_{\mathrm{C}}\left(\omega\right)=\frac{C_{0}}{2}\left(\omega-\frac{1}{\omega}\right)\sin\left(\omega\right)-\cos\left(\omega\right).\label{eq:monTc1c}
\end{equation}
Equations (\ref{eq:monTc1c}) imply that\emph{ instability parameter
$W_{\mathrm{C}}\left(\omega\right)$ there completely determines the
two Floquet multipliers $s_{\pm}$} justifying its its name.

\emph{Importantly, the characteristic equation} (\ref{eq:monTc1b})
\emph{can be viewed as an expression of the dispersion relations between
the frequency $\omega$ and the wavenumber $k$}. To obtain an explicit
form of the \emph{dispersion relations for the CCS} under simplifying
Assumption \ref{ass:cavcoup} ($\chi=\omega_{0}=1$) we divide the
characteristic equation (\ref{eq:monTc1b}) by $2s$, substitute $s=\exp\left\{ \mathrm{i}k\right\} $
obtaining the following equations:
\begin{equation}
\cos\left(k\right)+W_{\mathrm{C}}\left(\omega\right)=0,\quad W_{\mathrm{C}}\left(\omega\right)=\frac{C_{0}}{2}\left(\omega-\frac{1}{\omega}\right)\sin\left(\omega\right)-\cos\left(\omega\right),\label{eq:monTc1d}
\end{equation}
where $W_{\mathrm{C}}\left(\omega\right)$ is principle CCS function
(see Fig. \ref{fig:dis-ccs-wom2}(b)). Alternatively, the dispersion-instability
relations (\ref{eq:monTc1c}) can be represented in the form
\begin{gather}
k_{\pm}=k_{\pm}\left(\omega\right)=-\mathrm{i}\ln\left(W_{\mathrm{C}}\pm\sqrt{W_{\mathrm{C}}^{2}-1}\right),\label{eq:monTc1e}\\
W_{\mathrm{C}}=W_{\mathrm{C}}\left(\omega\right)=\frac{C_{0}}{2}\left(\omega-\frac{1}{\omega}\right)\sin\left(\omega\right)-\cos\left(\omega\right).\nonumber 
\end{gather}
Dividing equation (\ref{eq:monTc1d}) by $\omega$ we obtain the following\emph{
high-frequency form of the dispersion relations} for the MCK:
\begin{gather}
D_{\mathrm{C}}\left(\omega,k\right)=D_{\mathrm{C}}^{\left(0\right)}\left(\omega,k\right)+\frac{2\left(\cos\left(k\right)-\cos\left(\omega\right)\right)}{\omega}-\frac{C_{0}\sin\left(\omega\right)}{\omega^{2}}=0.\label{eq:monTc1f}\\
D_{\mathrm{C}}^{\left(0\right)}\left(\omega,k\right)=C_{0}\sin\left(\omega\right).\nonumber 
\end{gather}
We refer to function $D_{\mathrm{C}}\left(\omega,k\right)$ as the
\emph{CCS dispersion function}.

As to the e-beam transforming characteristic equation (\ref{eq:PTP1e})
for the e-beam the same way we obtain the following explicit form
of the \emph{dispersion relations for the e-beam}
\begin{equation}
\cos\left(k-\omega\right)-\cos\left(f\right)=0;\quad\omega=k\pm f.\label{eq:ccWom1a}
\end{equation}

Expression (\ref{eq:monTc1d}) for the CCS dispersion relation readily
implies that its EPD frequencies are solution to the following \emph{CCS
EPD equation}
\begin{equation}
W_{\mathrm{C}}\left(\omega\right)=\frac{C_{0}}{2}\left(\omega-\frac{1}{\omega}\right)\sin\left(\omega\right)-\cos\left(\omega\right)=\pm1.\label{eq:ccWom1b}
\end{equation}
where $W_{\mathrm{C}}\left(\omega\right)$ is principle CCS function
defined by the second equation in (\ref{eq:monTc1b}) and its plot
is depicted in Fig. \ref{fig:dis-ccs-wom2}(b). Straightforward evaluations
show that $W_{\mathrm{C}}\left(\omega\right)$ satisfies the following
equations:
\begin{equation}
W_{\mathrm{C}}\left(\pi n\right)=\left(-1\right)^{n+1},\quad\partial_{\omega}W_{\mathrm{C}}\left(\pi n\right)=\left(-1\right)^{n},\quad n=1,2,\ldots.\label{eq:ccWom1c}
\end{equation}
Consequently $\pi n$ for positive integers $n$ are CCS EPD points.
Remaining set of EPD frequencies $\xi_{m},\:m\geq1$ are found by
solving CCS EPD equation (\ref{eq:ccWom1b}).

Instability (oscillatory spectrum) bands (intervals)
\begin{equation}
\left[0,\xi_{0}\right],\quad\left[\xi_{m},\pi m\right],\quad m\geq1:0<\xi_{0}<\xi_{1}<\pi,\quad\pi\left(m-1\right)<\xi_{m}<\pi m,\quad m\geq2\label{eq:ccWom2a}
\end{equation}
where numbers $\xi_{m}$ satisfy also the following equations:
\begin{equation}
W_{\mathrm{C}}\left(\xi_{0}\right)=-1,\quad W_{\mathrm{C}}\left(\omega\right)\left(\xi_{1}\right)=1,\quad W_{\mathrm{C}}\left(\xi_{m}\right)=\left(-1\right)^{m-1};\;m\geq2.\label{eq:ccWom2b}
\end{equation}
Stability (oscillatory) spectrum bands (intervals)
\begin{equation}
\left[\xi_{0},\xi_{1}\right],\quad\left[\pi m,\xi_{m+1}\right],\quad m\geq1:0<\xi_{0}<\xi_{1}<\pi,\quad\pi\left(m-1\right)<\xi_{m}<\pi m,\quad m\geq2\label{eq:ccWom2c}
\end{equation}
\begin{figure}[h]
\begin{centering}
\includegraphics[scale=0.15]{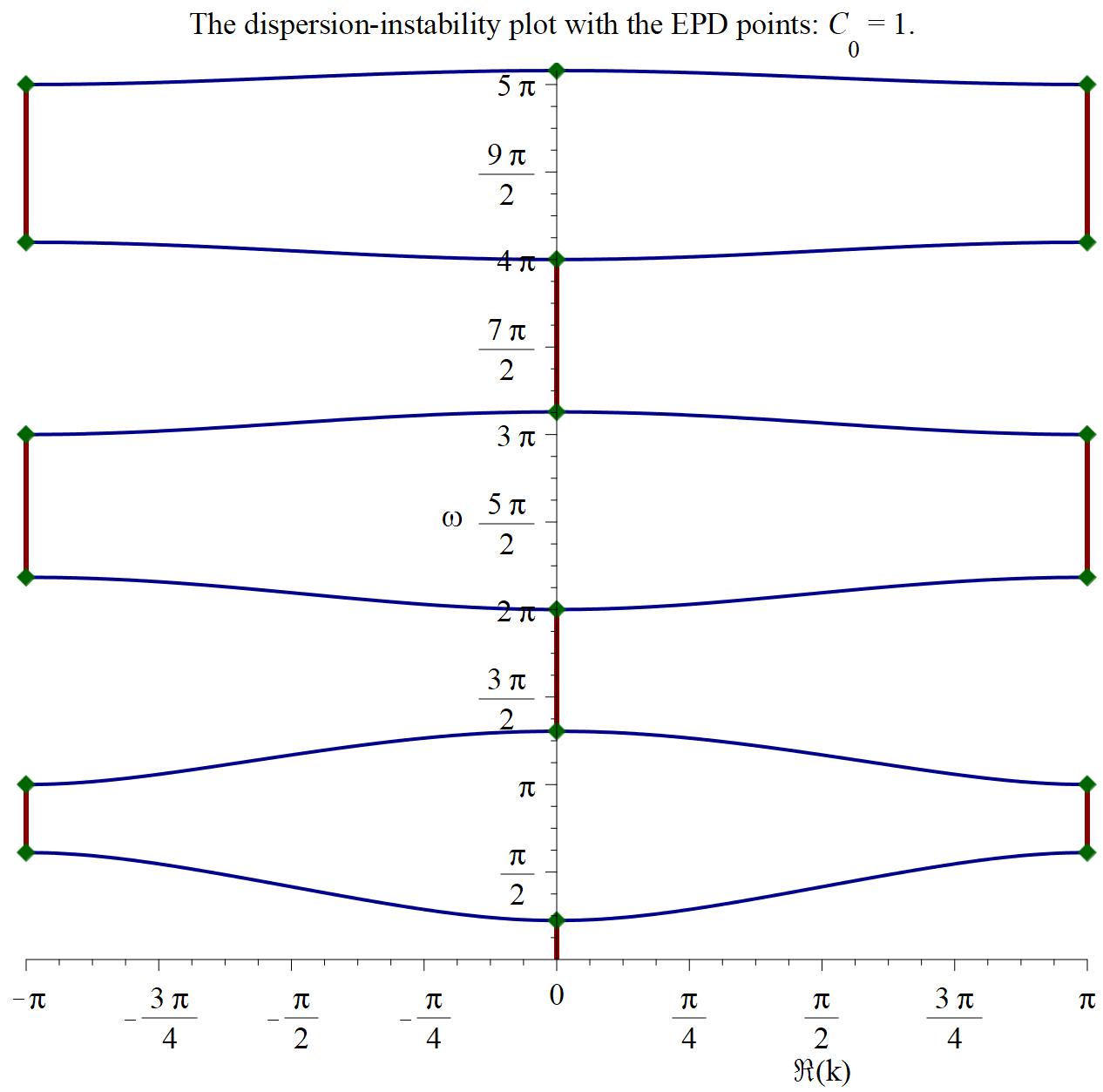}\hspace{0.5cm}\includegraphics[scale=0.15]{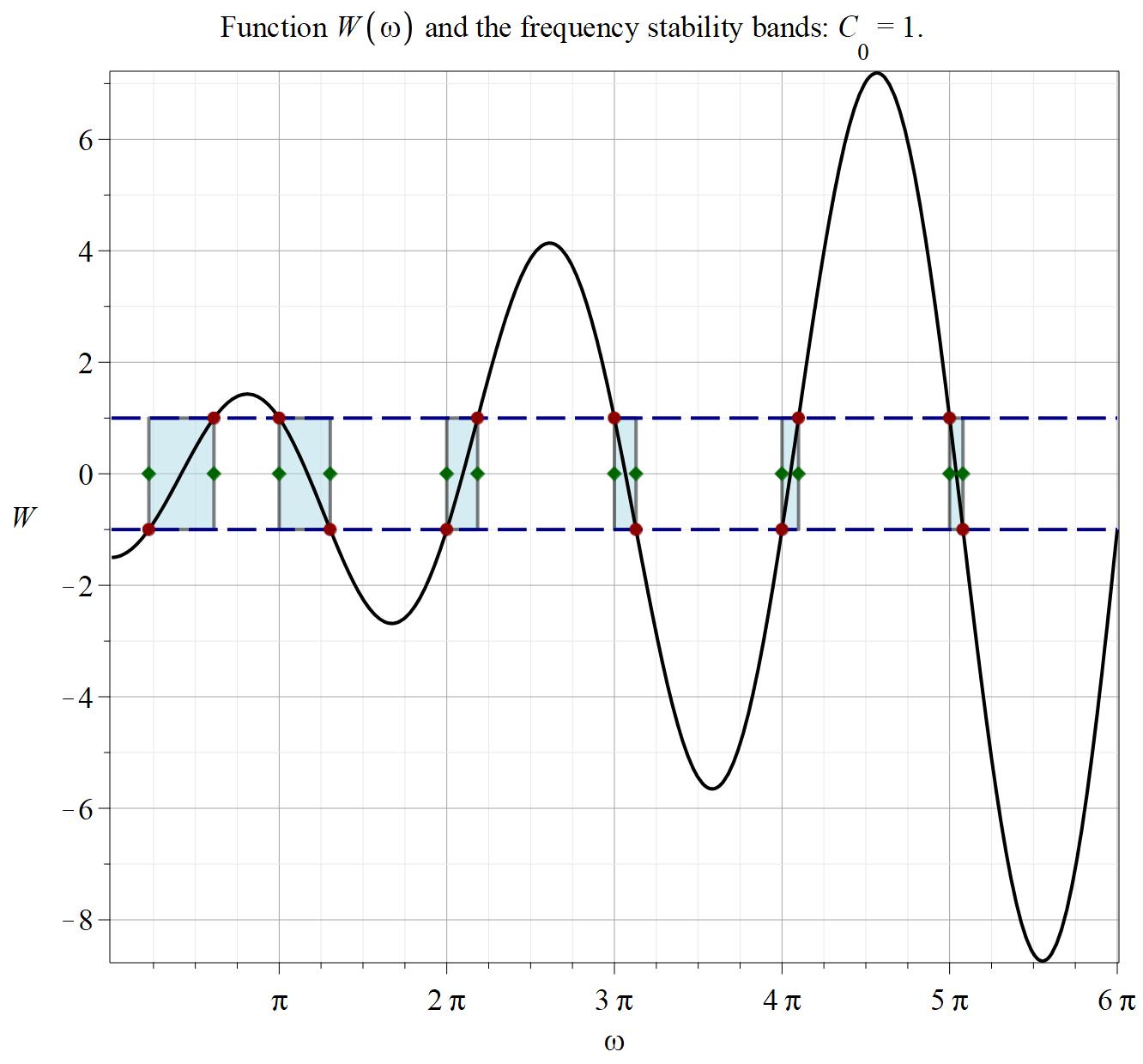}
\par\end{centering}
\centering{}(a)\hspace{7cm}(b)\caption{\label{fig:dis-ccs-wom2} The CCS for $C_{0}=1$: (a) dispersion-instability
graph, horizontal axis is $\Re\left\{ k\right\} $ and vertical axis
is $\omega$; (b) the plot of the instability parameter $W_{\mathrm{C}}\left(\omega\right)$
defined by the second equation in (\ref{eq:monTc1b}). The horizontal
axis is $\Re\left\{ k\right\} $ and the vertical axis is $W$.}
\end{figure}

Based on the prior analysis we introduce the CCS gain $G_{\mathrm{C}}$
in $\mathrm{dB}$ per one period as a the rate of the exponential
growth of the CCS eigenmodes associated with Floquet multipliers $s_{\pm}$
defined by equations (\ref{eq:monTc1e}). More precisely the definition
is as follows.
\begin{defn}[CCS gain per one period]
\label{def:gainccs} Let $s_{\pm}$ be the CCS Floquet multipliers
defined by equations (\ref{eq:monTc1e}). Then the corresponding to
them gain $G_{\mathrm{C}}$ in $\mathrm{dB}$ per one period is defined
by
\begin{gather}
G_{\mathrm{C}}=G_{\mathrm{C}}\left(\omega,C_{0}\right)=\left\{ \begin{array}{ccc}
20\left|\log\left(\left|s_{+}\right|\right)\right|=20\left|\log\left(\left|\left|W_{\mathrm{C}}\right|+\sqrt{W_{\mathrm{C}}^{2}-1}\right|\right)\right| & \text{if} & \left|W_{\mathrm{C}}\right|>1\\
0 & \text{if} & \left|W_{\mathrm{C}}\right|\leq1
\end{array}\right.,\label{eq:ccWom2d}\\
W_{\mathrm{C}}=W_{\mathrm{C}}\left(\omega\right)=\frac{C_{0}}{2}\left(\omega-\frac{1}{\omega}\right)\sin\left(\omega\right)-\cos\left(\omega\right).\nonumber 
\end{gather}
\end{defn}

Fig. \ref{fig:gain-ccs} shows the frequency dependence of the gain
$G_{\mathrm{C}}$ per one period. Growing in magnitude ``bumps''
in Figure \ref{fig:gain-ccs} indicate the presence of gain/amplification
inside of stopbands, known also as spectral gaps in the system (oscillatory)
spectrum, of the CCS, see Remark \ref{rem:ampstop}.
\begin{figure}[h]
\centering{}\includegraphics[scale=0.22]{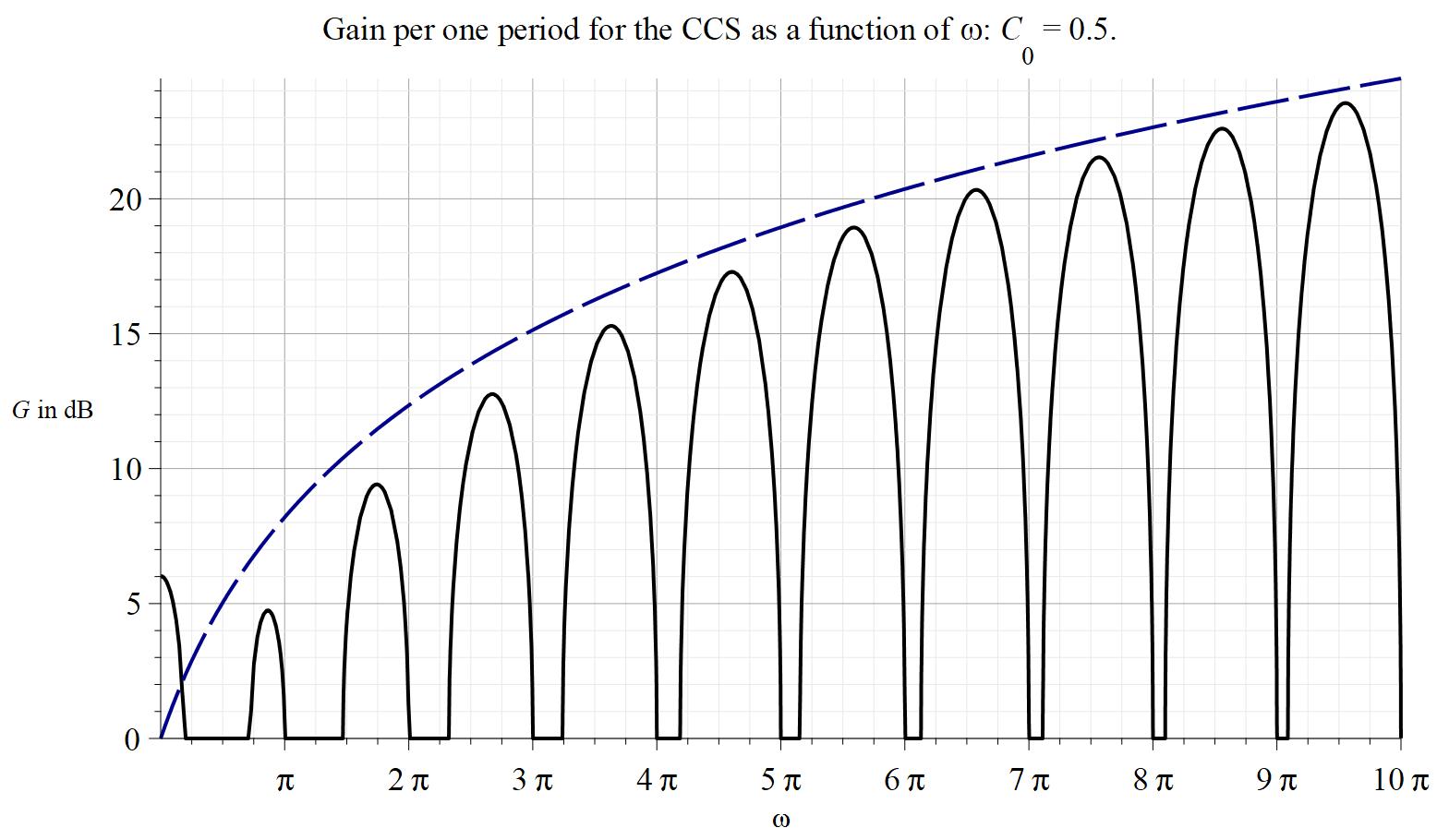}\caption{\label{fig:gain-ccs} The plot of the CCS gain $G_{\mathrm{C}}\left(\omega,C_{0}\right)$
per one period for $C_{0}=0.5.$ The horizontal and vertical axes
represent respectively frequency $\omega$ and gain $G$ in $\mathrm{dB}$.
The instability frequencies $\omega$ are identified by condition
$G_{\mathrm{C}}\left(\omega,C_{0}\right)>0$. The envelope of the
local maxima of the gain $G_{\mathrm{C}}\left(\omega,C_{0}\right)$
behaves asymptotically for large values of frequency $\omega$ as
$20\left|\log\left(C_{0}\omega\right)\right|$ as it follows from
equations (\ref{eq:ccWom2d}). It is shown as dashed (blue) curve.}
\end{figure}

\subsection{Exceptional points of degeneracy}

The monodromy matrix $\mathscr{T}_{\mathrm{C}}$ defined by equation
(\ref{eq:monTc1a}) and its Jordan form at $\omega=\pi n$ are as
follows
\[
\mathscr{T}_{\mathrm{C}}=\left[\begin{array}{rr}
\left(-1\right)^{n} & 0\\
\left(1-\pi^{2}n^{2}\right){\it C_{0}}\left(-1\right)^{n} & \left(-1\right)^{n}
\end{array}\right]=\mathscr{Z}_{\mathrm{C}}\left[\begin{array}{rr}
\left(-1\right)^{n} & 1\\
0 & \left(-1\right)^{n}
\end{array}\right]\mathscr{Z}_{\mathrm{C}}^{-1},\quad\omega=\pi n
\]
where matrix $\mathscr{Z}_{\mathrm{C}}$ is
\[
\mathscr{Z}_{\mathrm{C}}=\left[\begin{array}{rr}
0 & 1\\
\left(1-\pi^{2}n^{2}\right){\it C_{0}}\left(-1\right)^{n} & 0
\end{array}\right],
\]
and columns of the matrix $\mathscr{Z}_{\mathrm{C}}$ is the Jordan
basis of the monodromy matrix $\mathscr{T}_{\mathrm{C}}$

The monodromy matrix expression at EPDs is as follows
\[
\mathscr{T}_{\mathrm{C}}=\left[\begin{array}{rr}
\cos\left(\omega\right) & \omega^{-1}\sin\left(\omega\right)\\
\frac{\omega\sin\left(\omega\right)\left(\cos\left(\omega\right)-1\right)}{\cos\left(\omega\right)+1} & 2-\cos\left(\omega\right)
\end{array}\right]=\mathscr{Z}_{\mathrm{C}}\left[\begin{array}{rr}
1 & 1\\
0 & 1
\end{array}\right]\mathscr{Z}_{\mathrm{C}}^{-1},\quad W_{\mathrm{C}}\left(\omega\right)=-1,\quad\omega\neq\pi n,
\]
where
\[
\mathscr{Z}_{\mathrm{C}}=\left[\begin{array}{rr}
\cos\left(\omega\right)-1 & 1\\
\frac{\omega\sin\left(\omega\right)\left(\cos\left(\omega\right)-1\right)}{\cos\left(\omega\right)+1} & 0
\end{array}\right],
\]
and
\[
\mathscr{T}_{\mathrm{C}}=\left[\begin{array}{rr}
\cos\left(\omega\right) & \omega^{-1}\sin\left(\omega\right)\\
-\frac{\omega\left(\cos\left(\omega\right)+1\right)^{2}}{\sin\left(\omega\right)} & -2-\cos\left(\omega\right)
\end{array}\right]=\mathscr{Z}_{\mathrm{C}}\left[\begin{array}{rr}
-1 & 1\\
0 & -1
\end{array}\right]\mathscr{Z}_{\mathrm{C}}^{-1},\quad W_{\mathrm{C}}\left(\omega\right)=1,\quad\omega\neq\pi n,
\]
where
\[
\mathscr{Z}_{\mathrm{C}}=\left[\begin{array}{rr}
\cos\left(\omega\right)+1 & 1\\
-\frac{\omega\left(\cos\left(\omega\right)+1\right)^{2}}{\sin\left(\omega\right)} & 0
\end{array}\right].
\]

\section{The kinetic and field points of view on the gap interaction\label{sec:fitkit}}

We compare here some of the features of our field theory with the
relevant features of the kinematic/ballistic theory of the CCTWT operation.
Before going into technical details we would like to point out that
from the outset our Lagrangian field theory takes into account the
space-charge forces, that is the electron-to-electron repulsion, whereas
the standard hydrokinetic analysis completely neglects them.

\subsection{Some points from the kinetic theory\label{subsec:kinpoi}}

We briefly review here some points the kinetic/ballistic theory. Kinematic
analysis of the CCTWT operation involves: (i) the electron velocity
modulation in gaps of the klystron cavities; (ii) consequent electron
bunching; (iii) the energy exchange between the e-beam to the EM field;
(iv) the energy transfer from the e-beam to the EM field under proper
conditions and consequent RF signal amplification. The listed subjects
were thoroughly studied by many scholars, see, for instance, \cite{Caryo},
\cite{ChoWes}, \cite[15]{Gilm1}, \cite[7.2]{Grigo}, \cite[6.1-6.3; 7.1-7.7]{Tsim},
\cite[II]{Shev} and references therein. When presenting relevant
to us conclusions of the studies we follow mostly to the \emph{hydrokinetic
(ballistic) approach} that utilizes \emph{the Eulerian (spatial) and
the Lagrangian (material) descriptions (points of view)} as in \cite[7.1-7.7]{Tsim}
and \cite[II]{Shev}. As to general aspects of the hydrokinetic approach
in continua, that includes in particular the Eulerian and the Lagrangian
descriptions, we refer the reader to \cite[I.4-I.8]{Lamb}, \cite[3.1-3.2]{Redd},
\cite[1.7]{Gran}.

Our field theory assumes that the cavity width $l_{\mathrm{g}}$ and
the corresponding transit time $\tau_{g}$ are zeros, see equations
(\ref{eq:transL1d}) and Assumptions \ref{ass:idealmod}). Consequently
the most sophisticated developments of the kinetic theory dealing
with cavity gaps of finite lengths are outside the scope of our studies.
In our simpler case when $l_{\mathrm{g}}=0$ and $\tau_{g}=0$ following
to \cite[II.5]{Shev} we suppose that $\mathring{U}$ is the constant
accelerating voltage so the the stationary dc electron flow velocity
$\mathring{v}=\sqrt{\frac{2e}{m}\mathring{U}}$ where $m$ and $-e$
is respectively the electron mass and its charge. Suppose also that
$U_{1}\sin\left(\omega t\right)$ is the gap voltage. Then based on
the elementary energy conservation law one gets
\begin{equation}
\frac{mv^{2}}{2}=\frac{m\mathring{v}^{2}}{2}+U_{1}\sin\left(\omega t\right)\label{eq:mmvvU1a}
\end{equation}
where $v$ is the modulated velocity. Solving equation (\ref{eq:mmvvU1a})
for $v$ and assuming ``small signal'' approximation we obtain
\begin{equation}
v\cong\mathring{v}\left(1+\frac{\xi}{2}\sin\left(\omega t\right)\right),\quad\xi=\frac{U_{1}}{\mathring{U}}\ll1.\label{eq:mmvvU1b}
\end{equation}
Then following to \cite[II.6, II.7]{Shev} we suppose the velocity-modulated
in the cavity electron beam as described by equation (\ref{eq:mmvvU1b})
enters the field-free\emph{ drift space} beyond the gap. Then, \cite[II.6, II.7]{Shev}:
\begin{quotation}
``Whilst passing through the drift space, some electrons overtake
other, slower, electrons which entered the drift space earlier, and
the initial distribution of charge in the beam is changed. If the
drift space is long enough the initial velocity modulation can lead
to substantial density modulation of the electron beam.''
\end{quotation}
In other words, according the above scenario electron bunching takes
place. More precisely, the velocity-modulated, uniformly-dense electron
beam, becomes a density-modulated beam with nearly constant dc velocity
$\mathring{v}$.

\subsection{Field theory point of view on the kinetic properties of the electron
flow\label{subsec:fiekin}}

According the CCTWT design all the interactions between the electron
flow and the EM field occur in cavity gaps. In what follows to use
notations and results from Section \ref{subsec:cwvel}. Let us consider
first \emph{the action of the cavity ac EM field on the electron flow}.
The cavity ac EM field acts upon the e-beam by accelerating and decelerating
its electrons and effectively modulating their velocities by the relatively
small compare to $\mathring{v}$ electron velocity field $v=v\left(z,t\right)$.
So as to this part of the interaction we may view the electron density
to be essentially constant $\mathring{n}$ whereas its ac velocity
field $v=v\left(z,t\right)$ is modulated by ac EM field. Consider
now \emph{the action of the e-beam on the cavity ac EM field}. The
space charge acts upon the cavity ac EM field essentially quasi electrostatically
through relatively small ac electron number density field $n=n\left(z,t\right)$.
So for this part of the interaction we may view the electron flow
to be of nearly constant velocity $\mathring{v}$ perturbed by relatively
small ac electron number density $n=n\left(z,t\right)$. Following
to the results of Section \ref{sec:cctwtmod} let us take a look at
the variation of the ac electron velocity $v=v\left(z,t\right)$ and
ac electron number density $n=n\left(z,t\right)$ in the vicinity
of centers $a\ell$ of the cavity gaps.

Note first that the action of the ac cavity EM field on the e-beam
is manifested directly through a variation of the electron velocity
$v=v\left(z,t\right)$ in a vicinity of the gap center $a\ell$. The
action of the e-beam on the cavity EM field is produced by the electron
number density $n=n\left(z,t\right)$. As to the quantitative assessment
of the variations note that equations (\ref{eq:chavar1ab}) and (\ref{eq:chavar1ab})
imply that the electron velocity $v=v\left(z,t\right)$ and number
density $n=n\left(z,t\right)$ have the following jumps $\left[n\right]\left(a\ell\right)$
and $\left[v\right]\left(a\ell\right)$ at the interaction points
$a\ell$:
\begin{equation}
\left[n\right]\left(a\ell,t\right)=\frac{\left[\partial_{z}q\right]\left(a\ell,t\right)}{\sigma_{\mathrm{B}}e},\quad\left[v\right]\left(a\ell,t\right)=-\frac{\mathring{v}\left[\partial_{z}q\right]\left(a\ell,t\right)}{e\sigma_{\mathrm{B}}\mathring{n}}=-\frac{v\mathring{n}\left(a\ell,t\right)}{\mathring{n}},\label{eq:naLvaL1a}
\end{equation}
readily implying
\begin{equation}
\frac{\left[v\right]\left(a\ell,t\right)}{\mathring{v}}=-\frac{\left[n\right]\left(a\ell,t\right)}{\mathring{n}}.\label{eq:naLvaL1b}
\end{equation}
Equations (\ref{eq:chavar1a}) and (\ref{eq:naLvaL1b}) in turn yield
\begin{equation}
\left[j\right]\left(a\ell,t\right)=-e\left\{ \mathring{n}\left[v,t\right]+\mathring{v}\left[n\right]\right\} \left(a\ell,t\right)=0\label{eq:naLvaL1c}
\end{equation}
 signifying that the e-beam current density $j$ is continuous in
$z$ at the interaction points $a\ell$. In view of the Poisson equation
(\ref{eq:charvar1c}) and the first equation in (\ref{eq:chavar1a})
the following representation holds for the $\left[\partial_{z}E\right]\left(a\ell\right)$
at the interaction points $a\ell$:
\begin{equation}
\left[\partial_{z}E\right]\left(a\ell,t\right)=-\frac{4\pi}{\sigma_{\mathrm{B}}}\left[\partial_{z}q\right]\left(a\ell,t\right)=-4\pi e\left[n\right]\left(a\ell,t\right).\label{eq:naLvaL1d}
\end{equation}

Note that according to equations (\ref{eq:naLvaL1b}) the jumps in
the velocity the number density are in antiphase.

\subsection{Relation between the kinetic and the field points of view on the
gap interaction\label{subsec:relfiekin}}

An insightful comparative analysis of ``electron-wave theory'' and
the kinetic/ballistic theory of bunching is provided in \cite[II.15]{Shev}:

``A description of the mechanism of phase focusing as a phenomenon
of oscillating space-charge waves, is only a mathematical description
of a process the essence of which is as follows. The initial velocity
modulation gives rise to periodic concentration and dispersion of
electron space charge. The amount of bunching, and the associated
alternating current, increase through the bunching region provided
there are no repulsive space-charge forces affecting this process.
Space-charge forces oppose the initial velocity modulation, and cause
additional retardation and acceleration of the electrons. ... Thus
the law of conservation of energy is obeyed. On the other hand, the
ballistic theory is fundamentally contradictory to this.

In fact, the ballistic theory of bunching assumes that the alternating
velocity acquired by the electrons in the modulator remains constant
along the whole path. However, the potential energy necessarily increases
after electron bunching, and so the total energy of the electron beam
constantly varies, and this conflicts with the law of conservation
of energy. Despite this contradiction, the ballistic theory is a good
enough approximation for many of the cases met with in practice ...In
this case, both ballistic and electron-wave theories lead to identical
results. ``

In agreement with the above quotation our field theory of the space-charge
wave can be viewed as an effective mathematical descriptions of the
underlying physical complexity involving the electron velocity and
the electron number densities.

As to the energy conservation unlike the kinetic theory our Lagrangian
field theory surely provides for that. The field theory under some
conditions agrees at least with some points of the kinetic/ballistic
theory as we discuss below.

Hydrokinetic point of view on our simplifying assumption that the
cavity width $l_{\mathrm{g}}$ and the corresponding transit time
$\tau_{g}$ are zeros, see equations (\ref{eq:transL1d}) and Assumptions
\ref{ass:idealmod}), is as follows, \cite[II.5]{Shev}:
\begin{quotation}
``Let us assume further that the transit time of electrons between
grids 1 and 2 is infinitesimally small, which means a physically small
transit time compared with the period of oscillation of the high-frequency
field. If the transit time is negligible, electrons can be considered
to move through a constant (momentarily) alternating field, i. e.
virtually in a static field. The electrons acquire or lose an amount
of energy equal to the product of the electron charge and the momentary
value of the voltage. Therefore electrons entering the space between
the grids at different moments in time, with equal velocities, pass
out of this space at different velocities which are determined by
the momentary value of the alternating voltage. The electron beam
is thus velocity modulated and has a uniform density of space charge.''
\end{quotation}
The direct link between our field theory and the hydrokinetic theory
is provided by the e-beam Lagrangian $\mathcal{L}_{\mathrm{B}}$ defined
by equations (\ref{eq:T1B1be1b}) and (\ref{eq:aZep1c})
\begin{equation}
\mathcal{L}_{\mathrm{B}}\left(\left\{ q\right\} \right)=\frac{1}{2\beta}\left(D_{t}q\right)^{2}-\frac{2\pi}{\sigma_{\mathrm{B}}}q^{2},\quad D_{t}=\partial_{t}+\mathring{v}\partial_{z}.\label{eq:LaBqD1a}
\end{equation}
Indeed, its first kinetic term $\frac{1}{2\beta}\left(D_{t}q\right)^{2}$
involves the material time derivative which represents an important
concept of ``particle'' in the hydrokinetic theory. The second term
$-\frac{2\pi}{\sigma_{\mathrm{B}}}q^{2}$ in the e-beam Lagrangian
accounts for the electron-to-electron repulsion, a phenomenon neglected
by the standard ballistic analysis of the electron bunching.

Another link between the field and the kinetic theories comes from
our analysis in Section \ref{subsec:fiekin}. In view of equations
(\ref{eq:naLvaL1a}) and (\ref{eq:naLvaL1b}) jumps $\left[\partial_{z}q\right]\left(a\ell,t\right)$
that are explicitly allowed by the field theory represent jumps $\left[n\right]\left(a\ell,t\right)$
and $\left[v\right]\left(a\ell,t\right)$ related the kinetic properties
of the electron flow, see Remark \ref{rem:physjump}. Namely, jump
$\left[n\right]\left(a\ell,t\right)$ manifests the electron bunching,
jump $\left[v\right]\left(a\ell,t\right)$ manifests the ac electron
velocity modulation and equation (\ref{eq:naLvaL1b}) relates the
two of them. 

\section{Lagrangian variational framework\label{sec:lagvar}}

We construct here the Lagrangian variational framework for our model
of CCTWT. According to Assumption \ref{ass:idealmod} the model integrates
into it quantities associated with continuum of real numbers on one
hand and features associated with discrete points on the another hand.
The continuum features are represented by Lagrangian densities $\mathcal{L}_{\mathrm{T}}$
and $\mathcal{L}_{\mathrm{B}}$ in equations (\ref{eq:aZep1c}) whereas
discrete features are represented by Lagrangian $\mathcal{L}_{\mathrm{TB}}$
in equations (\ref{eq:aZep1d}) with energies concentrated in a set
of discrete points $a\mathbb{Z}$. One possibility for constructing
the desired Lagrangian variational framework is to apply the general
approach developed in \cite{FigRey2} when the ``rigidity'' condition
holds. Another possibility is to directly construct the Lagrangian
variational framework using some ideas from \cite{FigRey2} and that
is what we actually pursue here.

Following to the standard procedures of the Least Action principle
\cite[II.3]{ArnMech}, \cite[3]{GantM}, \cite[7]{GelFom}, \cite[8.6]{GoldM}
we start with setting up the action integral $S$ based on the Lagrangian
$\mathcal{L}$ defined by equations (\ref{eq:aZep1L}), (\ref{eq:aZep1c})
and (\ref{eq:aZep1d}). Using notations (\ref{eq:aZep1Q}) and (\ref{eq:eZep1x})
we define the action integral $S$ as follows:
\begin{gather}
S\left(\left\{ x\right\} \right)=\int_{t_{0}}^{t_{1}}\mathrm{d}t\int_{z_{1}}^{z_{2}}\mathcal{L}\left(\left\{ x\right\} \right)\,\mathrm{d}z=S_{\mathrm{T}}\left(\left\{ Q\right\} \right)+S_{\mathrm{B}}\left(\left\{ q\right\} \right)+S_{\mathrm{TB}}\left(x\right),\quad t_{0}<t_{1},\quad z_{0}<z_{1},\label{eq:Sact1a}
\end{gather}
where
\begin{equation}
S_{\mathrm{T}}\left(\left\{ Q\right\} \right)=\int_{t_{0}}^{t_{1}}\mathrm{d}t\int_{z_{1}}^{z_{2}}\mathcal{L}_{\mathrm{T}}\left(\left\{ Q\right\} \right)\,\mathrm{d}z=\int_{t_{0}}^{t_{1}}\mathrm{d}t\int_{z_{1}}^{z_{2}}\left[\frac{L}{2}\left(\partial_{t}Q\right)^{2}-\frac{1}{2C}\left(\partial_{z}Q\right)^{2}\right]\,\mathrm{d}z,\label{eq:Sact1b}
\end{equation}
\begin{equation}
S_{\mathrm{B}}\left(\left\{ q\right\} \right)=\int_{t_{0}}^{t_{1}}\mathrm{d}t\int_{z_{1}}^{z_{2}}\mathcal{L}_{\mathrm{B}}\left(\left\{ q\right\} \right)\,\mathrm{d}z=\int_{t_{0}}^{t_{1}}\mathrm{d}t\int_{z_{1}}^{z_{2}}\left[\frac{1}{2\beta}\left(\partial_{t}q+\mathring{v}\partial_{z}q\right)^{2}-\frac{2\pi}{\sigma_{\mathrm{B}}}q^{2}\right]\,\mathrm{d}z,\label{eq:Sact1c}
\end{equation}
\begin{gather}
S_{\mathrm{TB}}\left(\left\{ Q\right\} \right)=\int_{t_{0}}^{t_{1}}\mathrm{d}t\int_{z_{1}}^{z_{2}}\mathcal{L}_{\mathrm{TB}}\left(Q,q\right)\,\mathrm{d}z=\label{eq:Sact1d}\\
=\sum_{z_{1}<a\ell<z_{2}}\int_{t_{0}}^{t_{1}}\mathrm{d}t\left[\frac{l_{0}}{2}\left(\partial_{t}Q\left(a\ell\right)\right)^{2}-\frac{1}{2c_{0}}\left(Q\left(a\ell\right)+bq\left(a\ell\right)\right)\right]^{2}.\nonumber 
\end{gather}
To make expressions of the action integrals less cluttered we suppress
notationally their dependence on intervals $\left(z_{0},z_{1}\right)$
and $\left(t_{0},t_{1}\right)$ that can be chosen arbitrarily. We
consider then variation $\delta S$ of action $S$ assuming that variations
$\delta Q$ and $\delta q$ of charges $q=q\left(z,t\right)$ and
$Q=Q\left(z,t\right)$ vanish outside intervals $\left(z_{0},z_{1}\right)$
and $\left(t_{0},t_{1}\right)$, that is 
\begin{equation}
\delta Q\left(z,t\right)=\delta q\left(z,t\right)=0,\quad\left(z,t\right)\notin\left(z_{0},z_{1}\right)\times\left(t_{0},t_{1}\right),\label{eq:Sact1e}
\end{equation}
implying, in particular, that $\delta Q$ and $\delta q$ vanish on
the boundary of the rectangle $\left(z_{0},z_{1}\right)\times\left(t_{0},t_{1}\right)$,
that is
\begin{equation}
\delta Q\left(z,t\right)=\delta q\left(z,t\right)=0,\text{if }z=z_{0},\text{\ensuremath{z_{1}} or if }t=t_{0},t_{1}.\label{eq:Sact1f}
\end{equation}
We refer to variations $\delta Q$ and $\delta q$ satisfying equations
(\ref{eq:Sact1e}) and hence (\ref{eq:Sact1e}) for a rectangle $\left(z_{0},z_{1}\right)\times\left(t_{0},t_{1}\right)$
as\emph{ admissible}.

Following to the least action principle we introduce the functional
differential $\delta S$ of the action by the following formula \cite[7(35)]{GelFom}
\begin{equation}
\delta S=\lim_{\varepsilon\rightarrow0}\frac{S\left(\left\{ x+\varepsilon\delta x\right\} \right)-S\left(\left\{ x\right\} \right)}{\varepsilon}.\label{eq:Sact2a}
\end{equation}
Then the system configurations $x=x\left(z,t\right)$ that actually
can occur must satisfy
\begin{equation}
\delta S=\lim_{\varepsilon\rightarrow0}\frac{S\left(\left\{ x+\varepsilon\delta x\right\} \right)-S\left(\left\{ x\right\} \right)}{\varepsilon}=0\text{ for all admissible variations.}\label{eq:Sact2b}
\end{equation}
Let us choose now any $z$ outside lattice $a\mathbb{Z}$. Then there
always exist a sufficiently small $\xi>0$ and an integer $\ell_{0}$
such that
\begin{equation}
a\ell_{0}<z_{0}=z-\xi<z<z_{1}=z+\xi<a\left(\ell_{0}+1\right).\label{eq:Sact2c}
\end{equation}
If we apply now the variational principle (\ref{eq:Sact2c}) for all
admissible variations $\delta Q$ and $\delta q$ such that space
interval $\left(z_{0},z_{1}\right)$ is compliant with inequalities
(\ref{eq:Sact2b}) we readily find that
\begin{equation}
\delta S=\delta S_{\mathrm{T}}+\delta S_{\mathrm{B}}=0,\label{eq:Sact2d}
\end{equation}
where $S_{\mathrm{T}}$ and $S_{\mathrm{B}}$ are defined by expressions
(\ref{eq:Sact1b}) and (\ref{eq:Sact1c}). Using equations (\ref{eq:Sact1f})
and carrying out in the standard way the integration by parts transformations
we arrive at
\begin{gather}
\delta S_{\mathrm{T}}=-\int_{t_{0}}^{t_{1}}\mathrm{d}t\int_{z_{1}}^{z_{2}}\left[L\partial_{t}^{2}Q-C^{-1}\partial_{z}^{2}Q\right]\delta Q\,\mathrm{d}z,\label{eq:Sact2e}\\
\delta S_{\mathrm{B}}=-\int_{t_{0}}^{t_{1}}\mathrm{d}t\int_{z_{1}}^{z_{2}}\left[\frac{1}{\beta}\left(\partial_{t}+\mathring{v}\partial_{z}\right)^{2}q+\frac{4\pi}{\sigma_{\mathrm{B}}}q\right]\delta q\,\mathrm{d}z.\label{eq:Sact2f}
\end{gather}
Combining equations (\ref{eq:Sact2d}), (\ref{eq:Sact2e}) and (\ref{eq:Sact2f})
we arrive at the following EL equations

\begin{gather}
L\partial_{t}^{2}Q-C^{-1}\partial_{z}^{2}Q=0,\quad\frac{1}{\beta}\left(\partial_{t}+\mathring{v}\partial_{z}\right)^{2}q+\frac{4\pi}{\sigma_{\mathrm{B}}}q=0,\quad z\neq a\ell,\quad\ell\in\mathbb{Z}.\label{eq:Sact2g}
\end{gather}

Consider now the case when $z=a\ell_{0}$ for an integer $\ell_{0}$
and select space interval $\left(z_{0},z_{1}\right)$ as follows
\begin{equation}
a\left(\ell_{0}-1\right)<z_{0}=a\left(\ell_{0}-\frac{1}{2}\right)<z=a\ell_{0}<z_{1}=a\left(\ell_{0}+\frac{1}{2}\right)<a\left(\ell_{0}+1\right).\label{eq:Sact2h}
\end{equation}
Notice that in this case all actions $S_{\mathrm{T}}$, $S_{\mathrm{T}}$
and $S_{\mathrm{TB}}$ contribute to the variation $\delta S$. In
particular, as consequence of the presence of delta functions $\delta\left(z-a\ell\right)$
in the expression of the Lagrangian $\mathcal{L}_{\mathrm{TB}}$ defined
by equation (\ref{eq:aZep1d}) the space derivatives $\partial_{z}Q$
and $\partial_{z}q$ can have jumps at $z=a\ell_{0}$ as it was already
acknowledged by Assumption \ref{ass:jumpcon}. Based on this circumstance
we proceed as follows: (i) we split the integral with respect to the
space variable $z$ into two integrals:

\begin{equation}
\int_{z_{0}}^{z_{1}}=\int_{a\left(\ell_{0}-\frac{1}{2}\right)}^{a\ell_{0}}+\int_{a\ell_{0}}^{a\left(\ell_{0}+\frac{1}{2}\right)};\label{eq:Sact3a}
\end{equation}
(ii) we carry out the integration by parts for each of the two integrals
in the right-hand side of equation (\ref{eq:Sact3a}); (iii) we use
already established EL equations (\ref{eq:Sact2g}) to simplify the
integral expressions. When that is all done we arrive at the following:

\begin{equation}
\delta S_{\mathrm{T}}=\int_{t_{0}}^{t_{1}}\frac{1}{C}\left[\partial_{z}Q\right]\left(a\ell_{0},t\right)\delta Q\left(a\ell_{0},t\right)\,\mathrm{d}t,\quad\delta S_{\mathrm{B}}=-\int_{t_{0}}^{t_{1}}\frac{\mathring{v}^{2}}{\beta}\left[\partial_{z}q\right]\left(a\ell_{0},t\right)\delta q\left(a\ell_{0},t\right)\,\mathrm{d}t,\label{eq:Sact3b}
\end{equation}
where jumps $\left[\partial_{z}Q\right]\left(a\ell\right)$ and $\left[\partial_{z}q\right]\left(a\ell\right)$
are defined by equation (\ref{eq:limpm1b}), and
\begin{gather}
\delta S_{\mathrm{TB}}=-\int_{t_{0}}^{t_{1}}\left\{ l_{0}\partial_{t}^{2}Q\left(a\ell\right)+\frac{1}{c_{0}}\left[Q\left(a\ell_{0},t\right)+bq\left(a\ell_{0},t\right)\right]\right\} \delta Q\left(a\ell_{0},t\right)\,\mathrm{d}t-\label{eq:Sact3c}\\
-\int_{t_{0}}^{t_{1}}\left\{ \frac{b}{c_{0}}\left[Q\left(a\ell_{0},t\right)+bq\left(a\ell_{0},t\right)\right]\right\} \delta q\left(a\ell_{0},t\right)\,\mathrm{d}t.\nonumber 
\end{gather}
Using the variational principle (\ref{eq:Sact2b}), that is
\begin{equation}
\delta S_{\mathrm{T}}+\delta S_{\mathrm{B}}+\delta S_{\mathrm{TB}}=0,\label{eq:Sact3d}
\end{equation}
and the fact that variations $\delta Q\left(a\ell_{0},t\right)$ and
$\delta q\left(a\ell_{0},t\right)$ can be chosen arbitrarily we arrive
at the following equations

\begin{gather}
\frac{1}{C}\left[\partial_{z}Q\right]\left(a\ell_{0},t\right)=l_{0}\partial_{t}^{2}Q\left(a\ell\right)+\frac{1}{c_{0}}\left[Q\left(a\ell_{0},t\right)+bq\left(a\ell_{0},t\right)\right],\label{eq:Sact3e}\\
\frac{\mathring{v}^{2}}{\beta}\left[\partial_{z}q\right]\left(a\ell_{0},t\right)=-\frac{b}{c_{0}}\left[Q\left(a\ell_{0}\right)+bq\left(a\ell_{0},t\right)\right],\nonumber 
\end{gather}
where jumps $\left[\partial_{z}Q\right]\left(a\ell\right)$ and $\left[\partial_{z}q\right]\left(a\ell\right)$
are defined by equation (\ref{eq:limpm1b}). Equations (\ref{eq:Sact3e})
can be ready recast into the following boundary conditions
\begin{equation}
\left[\partial_{z}Q\right]\left(a\ell\right)=C_{0}\left[\left(\frac{\partial_{t}^{2}}{\omega_{0}^{2}}+1\right)Q\left(a\ell\right)+bq\left(a\ell\right)\right],\quad\left[\partial_{z}q\right]\left(a\ell\right)=-\frac{b\beta_{0}}{\mathring{v}^{2}}\left[Q\left(a\ell\right)+bq\left(a\ell\right)\right],\label{eq:Sact3ea}
\end{equation}
where
\begin{equation}
C_{0}=\frac{C}{c_{0}},\quad\omega_{0}=\frac{1}{\sqrt{l_{0}c_{0}}},\quad\beta_{0}=\frac{\beta}{c_{0}}.\label{eq:Sact3eb}
\end{equation}
We remind also that as consequence of continuity of $Q$ and $q$
we also have
\begin{equation}
\left[Q\right]\left(a\ell_{0},t\right)=0,\quad\left[q\right]\left(a\ell_{0},t\right)=0.\label{eq:Sact3g}
\end{equation}
Hence equations (\ref{eq:Sact3ea}) and (\ref{eq:Sact3g}) can be
viewed as the EL equations at point $a\ell_{0}$.

Equations (\ref{eq:Sact3ea}) at an interaction point $a\ell_{0}$
are perfectly consistent with boundary conditions (2.12) of the general
treatment in \cite{FigRey2}, which are
\begin{align}
-\frac{\partial L_{\mathrm{D}}}{\partial\partial_{1}\psi_{\mathrm{D}}^{\ell}}(b_{1},t)+\frac{\partial L_{\mathrm{B}}}{\partial\psi_{\mathrm{B}}^{\ell}(b_{1},t)}-\partial_{0}\left(\frac{\partial L_{\mathrm{B}}}{\partial\partial_{0}\psi_{\mathrm{B}}^{\ell}(b_{1},t)}\right) & =0;\label{eq:Sact3f}\\
\frac{\partial L_{\mathrm{D}}}{\partial\partial_{1}\psi_{\mathrm{D}}^{\ell}}(b_{2},t)+\frac{\partial L_{\mathrm{B}}}{\partial\psi_{\mathrm{B}}^{\ell}(b_{2},t)}-\partial_{0}\left(\frac{\partial L_{\mathrm{B}}}{\partial\left(\partial_{0}\psi_{\mathrm{B}}^{\ell}(b_{2},t)\right)}\right) & =0.\nonumber 
\end{align}
where (i) $b_{1}=a\ell_{0}-0$ and $b_{2}=a\ell_{0}+0$; (ii) $L_{\mathrm{D}}$
corresponds to $\mathcal{L}_{\mathrm{T}}+\mathcal{L}_{\mathrm{B}}$;
(iii) $L_{\mathrm{B}}$ corresponds to $\mathcal{L}_{\mathrm{TB}}$;
(iv) fields $\psi_{\mathrm{D}}^{\ell}$ correspond to charges $Q$
and $q$; (v) boundary fields $\psi_{\mathrm{B}}^{\ell}$ correspond
to $Q\left(a\ell_{0},t\right)$ and $q\left(a\ell_{0},t\right)$.
We remind the reader that boundary conditions (2.12) in \cite{FigRey2}
is an implementation of the ``rigidity'' requirement which is appropriate
for Lagrangian $\mathcal{L}_{\mathrm{TB}}$ defined by equation (\ref{eq:aZep1d}).
If fact, the signs of the terms containing $L_{\mathrm{D}}$ in equations
(\ref{eq:Sact3f}) are altered compare to original equations (2.12)
in \cite{FigRey2} to correct an unfortunate typo there.

\emph{Thus equations (\ref{eq:Sact3ea}) and (\ref{eq:Sact3g}) form
a complete set of the EL equations}.

\section{Root degeneracy for a special polynomial of the forth degree\label{sec:degpol4}}

The complex plane transformation $z\rightarrow\frac{1}{\bar{z}}$
is known as the \emph{unit (circle) inversion}, \cite[III.13]{YagCG},
and if a set is invariant under the transformation we refer to it
inversion symmetric set. Let us consider general form of polynomial
equation (\ref{eq:monS2f}) of the order 4
\begin{equation}
S^{4}+aS^{3}+bS^{2}+\bar{a}S+1=0,\quad a\in\mathbb{C},\quad b\in\mathbb{R}.\label{eq:speSab1a}
\end{equation}
If $S$ is a solution to equation (\ref{eq:speSab1a}) which is a
degenerate one then the following equation must hold also
\begin{equation}
\partial_{S}\left(S^{4}+aS^{3}+bS^{2}+\bar{a}S+1\right)=4S^{3}+3aS^{2}+2bS+\bar{a}=0.\label{eq:speSab1b}
\end{equation}
Subtracting from 2 times equation (\ref{eq:speSab1a}) $S$ times
equation (\ref{eq:speSab1b}) and dividing the result by $S^{2}$
we obtain
\begin{equation}
-2S^{2}+\frac{2}{S^{2}}-aS+\frac{\bar{a}}{S^{2}}=0.\label{eq:speSab1c}
\end{equation}
If a solution $S$ to the system of equations (\ref{eq:speSab1a})
and (\ref{eq:speSab1c}) lies on the unit circle, that is $\left|S\right|=1$,
then $S^{-1}=\bar{S}$ and the system is equivalent to the following
system of equations
\begin{equation}
\Re\left\{ S^{2}+aS+\frac{b}{2}\right\} =0,\quad\Im\left\{ 2S^{2}+aS\right\} =0,\quad S=\mathrm{e}^{\mathrm{i}\varphi},\quad\varphi\in\mathbb{R}.\label{eq:speSab1d}
\end{equation}
A trigonometric version of the system equations (\ref{eq:speSab1d})
is
\begin{gather}
\cos\left(2\varphi\right)+\left|a\right|\cos\left(\varphi+\alpha\right)=-\frac{b}{2},\quad2\sin\left(2\varphi\right)+\left|a\right|\sin\left(\varphi+\alpha\right)=0,\label{eq:speSab1e}\\
a=\left|a\right|\exp\left\{ \mathrm{i}\alpha\right\} ,\quad S=\exp\left\{ \mathrm{i}\varphi\right\} ,\quad\varphi,\alpha\in\mathbb{R}.\nonumber 
\end{gather}

\textbf{\vspace{0.2cm}
}

\textbf{ACKNOWLEDGMENT:} This research was supported by AFOSR MURI
under Grant No. FA9550-20-1-0409 administered through the University
of New Mexico. The author is grateful to E. Schamiloglu for sharing
his deep and vast knowledge of high power microwave devices and inspiring
discussions.\textbf{\vspace{0.2cm}
}

\textbf{NOMENCLATURE:} 
\begin{itemize}
\item $\mathbb{C}$ set of complex number.
\item $\mathbb{C}^{n}$ set of $n$ dimensional column vectors with complex
complex-valued entries.
\item $\mathbb{C}^{n\times m}$ set of $n\times m$ matrices with complex-valued
entries.
\item $D\left(\omega,k\right)$ CCTWT dispersion function.
\item $D_{\mathrm{C}}\left(\omega,k\right)$ CCS dispersion function.
\item $D_{\mathrm{K}}\left(\omega,k\right)$ MCK dispersion function.
\item $\det\left\{ A\right\} $ the determinant of matrix $A$.
\item $\mathrm{diag}\,\left(A_{1},A_{2},\ldots,A_{r}\right)$ block diagonal
matrix with indicated blocks.
\item $\mathrm{dim}\,\left(W\right)$ dimension of the vector space $W$.
\item EL the Euler-Lagrange (equations).
\item $\mathbb{I}_{\nu}$ $\nu\times\nu$ identity matrix.
\item $\mathrm{ker}\,\left(A\right)$ kernel of matrix $A$, that is the
vector space of vector $x$ such that $Ax=0$.
\item $M^{\mathrm{T}}$ matrix transposed to matrix $M$.
\item ODE ordinary differential equations.
\item $\bar{s}$ is complex-conjugate to complex number $s$.
\item $\sigma$$\left\{ A\right\} $ spectrum of matrix $A$.
\item $\mathbb{R}^{n\times m}$ set of $n\times m$ matrices with real-valued
entries.
\item $\chi_{A}\left(s\right)=\det\left\{ s\mathbb{I}_{\nu}-A\right\} $
characteristic polynomial of a $\nu\times\nu$ matrix $A$.
\end{itemize}
\renewcommand{\sectionname}{Appendix}
\counterwithin{section}{part}
\renewcommand{\thesection}{\Alph{section}}
\setcounter{section}{0}
\renewcommand{\theequation}{\Alph{section}.\arabic{equation}}

\section{Fourier transform\label{sec:four}}

Our preferred form of the Fourier transforms as in \cite[7.2, 7.5]{Foll},
\cite[20.2]{ArfWeb} is as follows:
\begin{gather}
f\left(t\right)=\int_{-\infty}^{\infty}\hat{f}\left(\omega\right)\mathrm{e}^{-\mathrm{i}\omega t}\,\mathrm{d}\omega,\quad\hat{f}\left(\omega\right)=\frac{1}{2\pi}\int_{-\infty}^{\infty}f\left(t\right)e^{\mathrm{i}\omega t}\,\mathrm{d}t,\label{eq:fourier1a}\\
f\left(z,t\right)=\int_{-\infty}^{\infty}\hat{f}\left(k,\omega\right)\mathrm{e}^{-\mathrm{i}\left(\omega t-kz\right)}\,\mathrm{d}k\mathrm{d}\omega,\label{eq:fourier1b}\\
\hat{f}\left(k,\omega\right)=\frac{1}{\left(2\pi\right)^{2}}\int_{-\infty}^{\infty}f\left(z,t\right)e^{\mathrm{i}\left(\omega t-kz\right)}\,dz\mathrm{d}t.\nonumber 
\end{gather}
This preference was motivated by the fact that the so-defined Fourier
transform of the convolution of two functions has its simplest form.
Namely, the convolution $f\ast g$ of two functions $f$ and $g$
is defined by \cite[7.2, 7.5]{Foll},
\begin{gather}
\left[f\ast g\right]\left(t\right)=\left[g\ast f\right]\left(t\right)=\int_{-\infty}^{\infty}f\left(t-t^{\prime}\right)g\left(t^{\prime}\right)\,\mathrm{d}t^{\prime},\label{eq:fourier2a}\\
\left[f\ast g\right]\left(z,t\right)=\left[g\ast f\right]\left(z,t\right)=\int_{-\infty}^{\infty}f\left(z-z^{\prime},t-t^{\prime}\right)g\left(z^{\prime},t^{\prime}\right)\,\mathrm{d}z^{\prime}\mathrm{d}t^{\prime}.\label{eq:fourier2b}
\end{gather}
 Then its Fourier transform as defined by equations (\ref{eq:fourier1a})
and (\ref{eq:fourier1b}) satisfies the following properties:
\begin{gather}
\widehat{f\ast g}\left(\omega\right)=\hat{f}\left(\omega\right)\hat{g}\left(\omega\right),\label{eq:fourier3a}\\
\widehat{f\ast g}\left(k,\omega\right)=\hat{f}\left(k,\omega\right)\hat{g}\left(k,\omega\right).\label{eq:fourier3b}
\end{gather}

\section{Jordan canonical form\label{sec:jord-form}}

We provide here very concise review of Jordan canonical forms following
mostly to \cite[III.4]{Hale}, \cite[3.1,3.2]{HorJohn}. As to a demonstration
of how Jordan block arises in the case of a single $n$-th order differential
equation we refer to \cite[25.4]{ArnODE}.

Let $A$ be an $n\times n$ matrix and $\lambda$ be its eigenvalue,
and let $r\left(\lambda\right)$ be the least integer $k$ such that
$\mathcal{N}\left[\left(A-\lambda\mathbb{I}\right)^{k}\right]=\mathcal{N}\left[\left(A-\lambda\mathbb{I}\right)^{k+1}\right]$,
where $\mathcal{N}\left[C\right]$ is a null space of a matrix $C$.
Then we refer to $M_{\lambda}=\mathcal{N}\left[\left(A-\lambda\mathbb{I}\right)^{r\left(\lambda\right)}\right]$
is the \emph{generalized eigenspace} of matrix $A$ corresponding
to eigenvalue $\lambda$. Then the following statements hold, \cite[III.4]{Hale}.
\begin{prop}[generalized eigenspaces]
\label{prop:gen-eig} Let $A$ be an $n\times n$ matrix and $\lambda_{1},\ldots,\lambda_{p}$
be its distinct eigenvalues. Then generalized eigenspaces $M_{\lambda_{1}},\ldots,M_{\lambda_{p}}$
are linearly independent, invariant under the matrix $A$ and
\begin{equation}
\mathbb{C}^{n}=M_{\lambda_{1}}\oplus\ldots\oplus M_{\lambda_{p}}.\label{eq:Mgeneig1a}
\end{equation}
Consequently, any vector $x_{0}$ in $\mathbb{C}^{n}$can be represented
uniquely as
\begin{equation}
x_{0}=\sum_{j=1}^{p}x_{0,j},\quad x_{0,j}\in M_{\lambda_{j}},\label{eq:Mgeneig1b}
\end{equation}
and
\begin{equation}
\exp\left\{ At\right\} x_{0}=\sum_{j=1}^{p}e^{\lambda_{j}t}p_{j}\left(t\right),\label{eq:Mgeneig1c}
\end{equation}
where column-vector polynomials $p_{j}\left(t\right)$ satisfy
\begin{gather}
p_{j}\left(t\right)=\sum_{k=0}^{r\left(\lambda_{j}\right)-1}\left(A-\lambda_{j}\mathbb{I}\right)^{k}\frac{t^{k}}{k!}x_{0,j},\quad x_{0,j}\in M_{\lambda_{j}},\quad1\leq j\leq p.\label{eq:Mgeneig1d}
\end{gather}
\end{prop}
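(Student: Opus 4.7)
The plan is to establish the three structural claims (invariance, independence, direct-sum decomposition) first, and then obtain the representation of $\exp\left\{At\right\}x_{0}$ as an easy corollary by using that each generalized eigenspace is annihilated by a nilpotent power of $A-\lambda_{j}\mathbb{I}$.

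First I would verify $A$-invariance of $M_{\lambda_{j}}$: if $x\in M_{\lambda_{j}}$, then $\left(A-\lambda_{j}\mathbb{I}\right)^{r\left(\lambda_{j}\right)}x=0$, and since $A$ commutes with $A-\lambda_{j}\mathbb{I}$, one has $\left(A-\lambda_{j}\mathbb{I}\right)^{r\left(\lambda_{j}\right)}\left(Ax\right)=A\left(A-\lambda_{j}\mathbb{I}\right)^{r\left(\lambda_{j}\right)}x=0$, so $Ax\in M_{\lambda_{j}}$. Next, I would use the Cayley--Hamilton theorem, which asserts that the characteristic polynomial $\chi_{A}\left(s\right)=\prod_{j=1}^{p}\left(s-\lambda_{j}\right)^{m_{j}}$ annihilates $A$. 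Because the stabilization index $r\left(\lambda_{j}\right)$ does not exceed the algebraic multiplicity $m_{j}$, one also has the annihilation identity $\prod_{j=1}^{p}\left(A-\lambda_{j}\mathbb{I}\right)^{r\left(\lambda_{j}\right)}=0$.

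The core step is the direct-sum decomposition \eqref{eq:Mgeneig1a}, which I would prove via partial fractions (the polynomial Chinese Remainder Theorem). For each $j$ the polynomials $p_{j}\left(s\right)=\prod_{i\neq j}\left(s-\lambda_{i}\right)^{r\left(\lambda_{i}\right)}$ and $\left(s-\lambda_{j}\right)^{r\left(\lambda_{j}\right)}$ are coprime since the $\lambda_{i}$ are distinct, so by Bezout there exist polynomials $a_{j}\left(s\right),b_{j}\left(s\right)$ with $a_{j}\left(s\right)p_{j}\left(s\right)+b_{j}\left(s\right)\left(s-\lambda_{j}\right)^{r\left(\lambda_{j}\right)}=1$. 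Setting $\Pi_{j}=a_{j}\left(A\right)p_{j}\left(A\right)$, the annihilation identity above gives $\sum_{j=1}^{p}\Pi_{j}=\mathbb{I}$, $\Pi_{j}^{2}=\Pi_{j}$, $\Pi_{i}\Pi_{j}=0$ for $i\neq j$, and $\mathrm{range}\left(\Pi_{j}\right)\subseteq M_{\lambda_{j}}$. The identity $\sum_{j}\Pi_{j}x=x$ then produces the representation \eqref{eq:Mgeneig1b}; uniqueness (hence linear independence of the $M_{\lambda_{j}}$) follows because applying $\Pi_{i}$ to any decomposition $x=\sum_{j}x_{0,j}$ with $x_{0,j}\in M_{\lambda_{j}}$ recovers $\Pi_{i}x=x_{0,i}$, since $\Pi_{i}$ acts as the identity on $M_{\lambda_{i}}$ (because $\left(A-\lambda_{j}\mathbb{I}\right)^{r\left(\lambda_{j}\right)}$ annihilates $M_{\lambda_{j}}$, forcing $\Pi_{j}|_{M_{\lambda_{j}}}=\mathbb{I}$ from $\sum_{j}\Pi_{j}=\mathbb{I}$ and $\Pi_{i}|_{M_{\lambda_{j}}}=0$ for $i\neq j$).

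Finally, for \eqref{eq:Mgeneig1c}--\eqref{eq:Mgeneig1d}, I would decompose $x_{0}=\sum_{j}x_{0,j}$ and write
\begin{equation}
\exp\left\{At\right\}x_{0,j}=\exp\left\{\lambda_{j}t\mathbb{I}\right\}\exp\left\{\left(A-\lambda_{j}\mathbb{I}\right)t\right\}x_{0,j}=e^{\lambda_{j}t}\sum_{k=0}^{\infty}\frac{t^{k}}{k!}\left(A-\lambda_{j}\mathbb{I}\right)^{k}x_{0,j},
\end{equation}
using that $\lambda_{j}\mathbb{I}$ commutes with $A-\lambda_{j}\mathbb{I}$. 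Since $x_{0,j}\in M_{\lambda_{j}}$ means $\left(A-\lambda_{j}\mathbb{I}\right)^{k}x_{0,j}=0$ for all $k\geq r\left(\lambda_{j}\right)$, the series truncates to the polynomial $p_{j}\left(t\right)$ of \eqref{eq:Mgeneig1d}, and summing over $j$ yields \eqref{eq:Mgeneig1c}.

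The main obstacle is the direct-sum decomposition; once the spectral projections $\Pi_{j}$ are constructed from the Bezout identity together with Cayley--Hamilton, every remaining assertion (invariance, independence, the exponential formula) follows by routine algebraic manipulation.
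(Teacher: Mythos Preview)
Your argument is correct and follows the classical route (spectral projections built from polynomial Bezout/CRT, then nilpotent truncation of the exponential series). Note, however, that the paper does not actually prove this proposition: it is stated with a citation to Hale \cite[III.4]{Hale} and no in-paper argument is given, so there is nothing to compare your approach against.

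Two minor points on the write-up. First, you reuse the symbol $p_{j}$ for the auxiliary scalar polynomials $p_{j}(s)=\prod_{i\neq j}(s-\lambda_{i})^{r(\lambda_{i})}$, which collides with the vector-valued polynomials $p_{j}(t)$ in the statement \eqref{eq:Mgeneig1d}; renaming the former (e.g.\ $q_{j}(s)$) would avoid confusion. Second, the step ``the annihilation identity above gives $\sum_{j}\Pi_{j}=\mathbb{I}$'' is correct but compressed: the individual Bezout identities $a_{j}(s)q_{j}(s)+b_{j}(s)(s-\lambda_{j})^{r(\lambda_{j})}=1$ only yield $\sum_{j}a_{j}(s)q_{j}(s)\equiv 1$ modulo each factor $(s-\lambda_{i})^{r(\lambda_{i})}$, hence modulo their product by CRT, and it is precisely the annihilation $\prod_{j}(A-\lambda_{j}\mathbb{I})^{r(\lambda_{j})}=0$ that upgrades this congruence to the matrix identity $\sum_{j}\Pi_{j}=\mathbb{I}$. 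Making that sentence explicit would leave no doubt.
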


For a complex number $\lambda$ a Jordan block $J_{r}\left(\lambda\right)$
of size $r\geq1$ is a $r\times r$ upper triangular matrix of the
form
\begin{gather}
J_{r}\left(\lambda\right)=\lambda\mathbb{I}_{r}+K_{r}=\left[\begin{array}{ccccc}
\lambda & 1 & \cdots & 0 & 0\\
0 & \lambda & 1 & \cdots & 0\\
0 & 0 & \ddots & \cdots & \vdots\\
\vdots & \vdots & \ddots & \lambda & 1\\
0 & 0 & \cdots & 0 & \lambda
\end{array}\right],\quad J_{1}\left(\lambda\right)=\left[\lambda\right],\quad J_{2}\left(\lambda\right)=\left[\begin{array}{cc}
\lambda & 1\\
0 & \lambda
\end{array}\right],\label{eq:Jork1a}
\end{gather}
\begin{equation}
K_{r}=J_{r}\left(0\right)=\left[\begin{array}{ccccc}
0 & 1 & \cdots & 0 & 0\\
0 & 0 & 1 & \cdots & 0\\
0 & 0 & \ddots & \cdots & \vdots\\
\vdots & \vdots & \ddots & 0 & 1\\
0 & 0 & \cdots & 0 & 0
\end{array}\right].\label{eq:Jork1k}
\end{equation}
The special Jordan block $K_{r}=J_{r}\left(0\right)$ defined by equation
(\ref{eq:Jork1k}) is an nilpotent matrix that satisfies the following
identities

\begin{gather}
K_{r}^{2}=\left[\begin{array}{ccccc}
0 & 0 & 1 & \cdots & 0\\
0 & 0 & 0 & \cdots & \vdots\\
0 & 0 & \ddots & \cdots & 1\\
\vdots & \vdots & \ddots & 0 & 0\\
0 & 0 & \cdots & 0 & 0
\end{array}\right],\cdots,\;K_{r}^{r-1}=\left[\begin{array}{ccccc}
0 & 0 & \cdots & 0 & 1\\
0 & 0 & 0 & \cdots & 0\\
0 & 0 & \ddots & \cdots & \vdots\\
\vdots & \vdots & \ddots & 0 & 0\\
0 & 0 & \cdots & 0 & 0
\end{array}\right],\quad K_{r}^{r}=0.\label{eq:Jord1a}
\end{gather}
A general Jordan $n\times n$ matrix $J$ is defined as a direct sum
of Jordan blocks, that is

\begin{equation}
J=\left[\begin{array}{ccccc}
J_{n_{1}}\left(\lambda_{1}\right) & 0 & \cdots & 0 & 0\\
0 & J_{n_{2}}\left(\lambda_{2}\right) & 0 & \cdots & 0\\
0 & 0 & \ddots & \cdots & \vdots\\
\vdots & \vdots & \ddots & J_{n_{q-1}}\left(\lambda_{n_{q}-1}\right) & 0\\
0 & 0 & \cdots & 0 & J_{n_{q}}\left(\lambda_{n_{q}}\right)
\end{array}\right],\quad n_{1}+n_{2}+\cdots n_{q}=n,\label{eq:Jork1b}
\end{equation}
where $\lambda_{j}$ need not be distinct. Any square matrix $A$
is similar to a Jordan matrix as in equation (\ref{eq:Jork1b}) which
is called \emph{Jordan canonical form} of $A$. Namely, the following
statement holds, \cite[3.1]{HorJohn}.
\begin{prop}[Jordan canonical form]
\label{prop:jor-can} Let $A$ be an $n\times n$ matrix. Then there
exists a non-singular $n\times n$ matrix $Q$ such that the following
block-diagonal representation holds
\begin{equation}
Q^{-1}AQ=J\label{eq:QAQC1a}
\end{equation}
where $J$ is the Jordan matrix defined by equation (\ref{eq:Jork1b})
and $\lambda_{j}$, $1\leq j\leq q$ are not necessarily different
eigenvalues of matrix $A$. Representation (\ref{eq:QAQC1a}) is known
as the \emph{Jordan canonical form} of matrix $A$, and matrices $J_{j}$
are called \emph{Jordan blocks}. The columns of the $n\times n$ matrix
$Q$ constitute the \emph{Jordan basis} providing for the Jordan canonical
form (\ref{eq:QAQC1a}) of matrix $A$.
\end{prop}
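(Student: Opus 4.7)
The plan is to reduce the general case to the nilpotent case via the generalized eigenspace decomposition already recorded as Proposition~\ref{prop:gen-eig}, and then to construct the cyclic basis for a nilpotent operator by descending through the kernel filtration.

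First I would invoke Proposition~\ref{prop:gen-eig}: if $\lambda_{1},\ldots,\lambda_{p}$ are the distinct eigenvalues of $A$, then $\mathbb{C}^{n}=M_{\lambda_{1}}\oplus\cdots\oplus M_{\lambda_{p}}$ and each $M_{\lambda_{j}}$ is $A$-invariant. On $M_{\lambda_{j}}$ the operator $N_{j}:=(A-\lambda_{j}\mathbb{I})|_{M_{\lambda_{j}}}$ is nilpotent of index $r(\lambda_{j})$. If a basis of each $M_{\lambda_{j}}$ can be chosen so that the matrix of $N_{j}$ is a direct sum of blocks $K_{r}$ (the standard nilpotent shift matrix in \eqref{eq:Jork1k}), then the matrix of $A|_{M_{\lambda_{j}}}=\lambda_{j}\mathbb{I}+N_{j}$ is a direct sum of blocks $J_{r}(\lambda_{j})=\lambda_{j}\mathbb{I}_{r}+K_{r}$, and assembling these bases yields the nonsingular $Q$ such that $Q^{-1}AQ=J$ has the block-diagonal form \eqref{eq:Jork1b}. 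Thus the problem reduces to proving the nilpotent case: \emph{every nilpotent $m\times m$ matrix $N$ of index $r$ is similar to a direct sum of blocks $K_{r_{1}}\oplus\cdots\oplus K_{r_{q}}$.}

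To handle the nilpotent case I would work with the strictly increasing kernel filtration
\begin{equation*}
\{0\}=\ker N^{0}\subsetneq\ker N^{1}\subsetneq\cdots\subsetneq\ker N^{r}=\mathbb{C}^{m}.
\end{equation*}
The plan is to build a Jordan basis from the top down. Pick a subspace $W_{r}\subset\ker N^{r}$ complementary to $\ker N^{r-1}$ and choose a basis $v_{1}^{(r)},\ldots,v_{k_{r}}^{(r)}$ of $W_{r}$. Then verify the key linear-algebra fact that $N$ maps $W_{r}$ injectively into $\ker N^{r-1}$ and that the images $Nv_{i}^{(r)}$ are linearly independent modulo $\ker N^{r-2}$; extend these images by further vectors $v_{1}^{(r-1)},\ldots,v_{k_{r-1}}^{(r-1)}$ to a basis of a complement $W_{r-1}$ to $\ker N^{r-2}$ inside $\ker N^{r-1}$. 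Iterating, at level $k$ one has images $N^{r-k}v_{i}^{(r)},\ldots,Nv_{i}^{(k+1)}$ which are independent modulo $\ker N^{k-1}$ and are completed by new ``top'' vectors $v_{i}^{(k)}$. Collecting, for each chosen top vector $v_{i}^{(k)}$ the cyclic chain $v_{i}^{(k)},Nv_{i}^{(k)},\ldots,N^{k-1}v_{i}^{(k)}$ lies in successive layers and is sent by $N$ to the next vector of the chain (and to $0$ at the bottom). The resulting family is a basis of $\mathbb{C}^{m}$ in which $N$ acts block by block exactly as $K_{k}$ on each chain, producing the desired decomposition.

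The main obstacle is the inductive step just sketched, namely establishing that if $v_{1},\ldots,v_{\ell}$ are linearly independent modulo $\ker N^{k-1}$ inside $\ker N^{k}$, then $Nv_{1},\ldots,Nv_{\ell}$ are linearly independent modulo $\ker N^{k-2}$ inside $\ker N^{k-1}$. This is what makes the chains fit together across layers and what forces the dimensions $\dim\ker N^{k}-\dim\ker N^{k-1}$ to be non-increasing in $k$, thereby pinning down the block-size multiplicities. Once that lemma is in hand, the completion-and-descent procedure is routine and yields simultaneously existence of the basis and the block structure of \eqref{eq:Jork1b}. Uniqueness of the block sizes (up to reordering) then follows from the identity $\#\{\text{blocks of size}\ge k\}=\dim\ker N^{k}-\dim\ker N^{k-1}$, though uniqueness is not asserted in the proposition as stated.
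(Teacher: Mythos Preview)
Your proof sketch is correct and follows one of the standard routes to the Jordan canonical form: reduce to the nilpotent case via the generalized eigenspace decomposition (which the paper has already stated as Proposition~\ref{prop:gen-eig}), then build cyclic chains for a nilpotent operator by descending through the kernel filtration $\ker N^{0}\subsetneq\ker N^{1}\subsetneq\cdots\subsetneq\ker N^{r}$. The key lemma you flag---that vectors independent modulo $\ker N^{k-1}$ map under $N$ to vectors independent modulo $\ker N^{k-2}$---is indeed the crux, and your outline handles it correctly.

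However, there is nothing to compare against: the paper does \emph{not} prove this proposition. It is stated as a standard result in an appendix devoted to reviewing background material, with the attribution ``\cite[3.1]{HorJohn}'' given just before the statement and the appendix opening with ``We provide here very concise review of Jordan canonical forms following mostly to \cite[III.4]{Hale}, \cite[3.1,3.2]{HorJohn}.'' So your proposal is not an alternative to the paper's argument but rather a proof of a result the paper simply quotes.
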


A function $f\left(J_{r}\left(s\right)\right)$ of a Jordan block
$J_{r}\left(s\right)$ is represented by the following equation \cite[7.9]{MeyCD},
\cite[10.5]{BernM}

\begin{gather}
f\left(J_{r}\left(s\right)\right)=\left[\begin{array}{ccccc}
f\left(s\right) & \partial f\left(s\right) & \frac{\partial^{2}f\left(s\right)}{2} & \cdots & \frac{\partial^{r-1}f\left(s\right)}{\left(r-1\right)!}\\
0 & f\left(s\right) & \partial f\left(s\right) & \cdots & \frac{\partial^{r-2}f\left(s\right)}{\left(r-2\right)!}\\
0 & 0 & \ddots & \cdots & \vdots\\
\vdots & \vdots & \ddots & f\left(s\right) & \partial f\left(s\right)\\
0 & 0 & \cdots & 0 & f\left(s\right)
\end{array}\right].\label{eq:JJordf1a}
\end{gather}
Note that any function $f\left(J_{r}\left(s\right)\right)$ of the
Jordan block $J_{r}\left(s\right)$ is evidently an upper triangular
Toeplitz matrix.

There are two particular cases of formula (\ref{eq:JJordf1a}), which
can also be derived straightforwardly using equations (\ref{eq:Jord1a}),
\begin{gather}
\exp\left\{ K_{r}t\right\} =\sum_{k=0}^{r-1}\frac{t^{k}}{k!}K_{r}^{k}=\left[\begin{array}{ccccc}
1 & t & \frac{t^{2}}{2!} & \cdots & \frac{t^{r-1}}{\left(r-1\right)!}\\
0 & 1 & t & \cdots & \frac{t^{r-2}}{\left(r-2\right)!}\\
0 & 0 & \ddots & \cdots & \vdots\\
\vdots & \vdots & \ddots & 1 & t\\
0 & 0 & \cdots & 0 & 1
\end{array}\right],\label{eq:Jord1c}
\end{gather}
 
\begin{gather}
\left[J_{r}\left(s\right)\right]^{-1}=\sum_{k=0}^{r-1}s^{-k-1}\left(-K_{r}\right)^{k}=\left[\begin{array}{ccccc}
\frac{1}{s} & -\frac{1}{s^{2}} & \frac{1}{s^{3}} & \cdots & \frac{\left(-1\right)^{r-1}}{s^{r}}\\
0 & \frac{1}{s} & -\frac{1}{s^{2}} & \cdots & \frac{\left(-1\right)^{r-2}}{s^{r-1}}\\
0 & 0 & \ddots & \cdots & \vdots\\
\vdots & \vdots & \ddots & \frac{1}{s} & -\frac{1}{s^{2}}\\
0 & 0 & \cdots & 0 & \frac{1}{s}
\end{array}\right].\label{eq:JJordf1b}
\end{gather}

\section{Companion matrix and cyclicity condition\label{sec:co-mat}}

The companion matrix $C\left(a\right)$ for the monic polynomial
\begin{equation}
a\left(s\right)=s^{\nu}+\sum_{1\leq k\leq\nu}a_{\nu-k}s^{\nu-k}\label{eq:compas1a}
\end{equation}
where coefficients $a_{k}$ are complex numbers is defined by \cite[5.2]{BernM}
\begin{equation}
C\left(a\right)=\left[\begin{array}{ccccc}
0 & 1 & \cdots & 0 & 0\\
0 & 0 & 1 & \cdots & 0\\
0 & 0 & 0 & \cdots & \vdots\\
\vdots & \vdots & \ddots & 0 & 1\\
-a_{0} & -a_{1} & \cdots & -a_{\nu-2} & -a_{\nu-1}
\end{array}\right].\label{eq:compas1b}
\end{equation}
Note that
\begin{equation}
\det\left\{ C\left(a\right)\right\} =\left(-1\right)^{\nu}a_{0}.\label{eq:compas1c}
\end{equation}

An eigenvalue is called \emph{cyclic (nonderogatory)} if its geometric
multiplicity is 1. A square matrix is called \emph{cyclic (nonderogatory)}
if all its eigenvalues are cyclic \cite[5.5]{BernM}. The following
statement provides different equivalent descriptions of a cyclic matrix
\cite[5.5]{BernM}.
\begin{prop}[criteria for a matrix to be cyclic]
\label{prop:cyc1} Let $A\in\mathbb{C}^{n\times n}$ be an $n\times n$
matrix with complex-valued entries. Let $\mathrm{spec}\,\left(A\right)=\left\{ \zeta_{1},\zeta_{2},\ldots,\zeta_{r}\right\} $
be the set of all distinct eigenvalues and $k_{j}=\mathrm{ind}{}_{A}\,\left(\zeta_{j}\right)$
is the largest size of Jordan block associated with $\zeta_{j}$.
Then the minimal polynomial $\mu_{A}\left(s\right)$ of the matrix
$A$, that is a monic polynomial of the smallest degree such that
$\mu_{A}\left(A\right)=0$, satisfies
\begin{equation}
\mu_{A}\left(s\right)=\prod_{j=1}^{r}\left(s-\zeta_{j}\right)^{k_{j}}.\label{eq:compas1d}
\end{equation}
Furthermore, the following statements are equivalent:
\end{prop}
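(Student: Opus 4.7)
The plan is to reduce every equivalent condition to a statement about the Jordan canonical form of $A$ guaranteed by Proposition~\ref{prop:jor-can}, and then shuttle back and forth between those statements. Specifically, cyclicity of $A$ is, by definition, equivalent to the geometric multiplicity $\dim\mathrm{ker}\,(A-\zeta_j\mathbb{I})$ being $1$ for every $j$, and this is in turn equivalent to there being exactly one Jordan block associated with each distinct eigenvalue $\zeta_j$. Thus my first step is to rewrite the cyclicity hypothesis as the normal form
\[
Q^{-1}AQ \;=\; \mathrm{diag}\,\bigl(J_{k_1}(\zeta_1),\ldots,J_{k_r}(\zeta_r)\bigr),
\qquad k_1+\cdots+k_r=n,
\]
with distinct $\zeta_j$, and to keep this as the pivot around which all other equivalences rotate.

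The second step is to establish the equivalence of cyclicity with the equality of the minimal and the characteristic polynomials. Using formula (\ref{eq:compas1d}) and the fact that $\chi_A(s)=\prod_j(s-\zeta_j)^{m_j}$ where $m_j$ is the algebraic multiplicity, the two polynomials have the same degree exactly when $k_j=m_j$ for all $j$; since $k_j\le m_j$ always, this forces each $\zeta_j$ to contribute a single block, which is cyclicity. The third step is the equivalence with similarity to the companion matrix $C(\chi_A)$: the companion matrix of any monic polynomial $p$ has $p$ as both its characteristic and its minimal polynomial (the standard basis vector $e_1$ is a cyclic vector for $C(p)$, and $C(p)$ has $\chi_{C(p)}=p$ by a direct expansion of (\ref{eq:compas1b})), so by the previous equivalence $C(\chi_A)$ is cyclic; then similarity classes of cyclic matrices with a prescribed characteristic polynomial are uniquely determined by that polynomial via the Jordan form, giving the two-way implication.

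The fourth step, which I expect to be the most delicate, is the equivalence with the existence of a \emph{cyclic vector}, i.e.\ a vector $v$ such that $\{v,Av,\ldots,A^{n-1}v\}$ is a basis of $\mathbb{C}^n$. The forward direction uses Proposition~\ref{prop:gen-eig}: write $v=\sum_j v_j$ with $v_j\in M_{\zeta_j}$ and choose each $v_j$ to be a top generator of the unique Jordan chain in $M_{\zeta_j}$; one verifies via (\ref{eq:Mgeneig1c})--(\ref{eq:Mgeneig1d}) that the Krylov matrix $[v,Av,\ldots,A^{n-1}v]$ is invertible, because its determinant factors into a product of Vandermonde-type determinants in the distinct $\zeta_j$ multiplied by nonzero contributions from each chain. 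Conversely, if some $\zeta_j$ has two independent eigenvectors, then the Krylov subspace generated by any single $v$ misses a one-dimensional subspace of $\mathrm{ker}\,(A-\zeta_j\mathbb{I})$, forbidding a cyclic vector.

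The main obstacle is the careful bookkeeping in the cyclic-vector step: showing that the natural ``one top generator per block'' choice really yields an invertible Krylov matrix requires expanding $A^p v$ in the Jordan basis via (\ref{eq:Mgeneig1d}) and recognizing the resulting coefficient matrix as a block-Vandermonde built from the $\zeta_j$ with multiplicities $k_j$. Once this nonvanishing determinant computation is in place, all the other implications are routine dimension and polynomial-degree comparisons, and the chain of equivalences closes.
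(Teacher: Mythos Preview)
The paper does not actually supply a proof of this proposition: it is stated with a citation to \cite[5.5]{BernM} and no argument is given in the text. So there is no ``paper's own proof'' to compare against; the result is imported wholesale from Bernstein.

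Your outline for the four equivalences that the paper actually lists---(i) $\mu_A=\chi_A$, (ii) $A$ cyclic, (iii) exactly one Jordan block per distinct eigenvalue, (iv) $A$ similar to $C(\chi_A)$---is correct and standard: steps~1--3 of your plan cover them completely via the Jordan form pivot you describe. One small redundancy: in your third step you can avoid the separate observation that ``similarity classes of cyclic matrices with a prescribed characteristic polynomial are uniquely determined'' by simply noting that (iv)$\Rightarrow$(i) is immediate (since $\mu_{C(p)}=\chi_{C(p)}=p$ by the argument you sketch, and both are similarity invariants), closing the cycle (i)$\Leftrightarrow$(ii)$\Leftrightarrow$(iii)$\Rightarrow$(iv)$\Rightarrow$(i).

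Your fourth step, the existence of a cyclic vector, is \emph{not} among the four equivalent conditions in the proposition as stated in the paper. You have added an extra equivalence and then flagged it as ``the most delicate''---but it is simply not required here. (It is true and standard, and your sketch via block-Vandermonde determinants is the usual route, but it is work you need not do for this proposition.) Drop step~4 and your proposal is a clean, self-contained proof of what the paper merely cites.
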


\begin{enumerate}
\item $\mu_{A}\left(s\right)=\chi_{A}\left(s\right)=\det\left\{ s\mathbb{I}-A\right\} $.
\item $A$ is cyclic.
\item For every $\zeta_{j}$ the Jordan form of $A$ contains exactly one
block associated with $\zeta_{j}$.
\item $A$ is similar to the companion matrix $C\left(\chi_{A}\right)$.
\end{enumerate}
\begin{prop}[companion matrix factorization]
\label{prop:cyc2} Let $a\left(s\right)$ be a monic polynomial having
degree $\nu$ and $C\left(a\right)$ is its $\nu\times\nu$ companion
matrix. Then, there exist unimodular $\nu\times\nu$ matrices $S_{1}\left(s\right)$
and $S_{2}\left(s\right)$, that is $\det\left\{ S_{m}\right\} =\pm1$,
$m=1,2$, such that
\begin{equation}
s\mathbb{I}_{\nu}-C\left(a\right)=S_{1}\left(s\right)\left[\begin{array}{lr}
\mathbb{I}_{\nu-1} & 0_{\left(\nu-1\right)\times1}\\
0_{1\times\left(\nu-1\right)} & a\left(s\right)
\end{array}\right]S_{2}\left(s\right).\label{eq:compas1e}
\end{equation}
Consequently, $C\left(a\right)$ is cyclic and
\begin{equation}
\chi_{C\left(a\right)}\left(s\right)=\mu_{C\left(a\right)}\left(s\right)=a\left(s\right).\label{eq:compas1f}
\end{equation}
\end{prop}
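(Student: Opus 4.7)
The plan is to establish \eqref{eq:compas1e} by putting $s\mathbb{I}_\nu - C(a)$ in its Smith normal form over the principal ideal domain $\mathbb{C}[s]$, and then to read off the remaining assertions. First I would write out
\[
s\mathbb{I}_\nu - C(a) = \begin{pmatrix} s & -1 & 0 & \cdots & 0 \\ 0 & s & -1 & \cdots & 0 \\ \vdots & \ddots & \ddots & \ddots & \vdots \\ 0 & 0 & \cdots & s & -1 \\ a_0 & a_1 & \cdots & a_{\nu-2} & s+a_{\nu-1} \end{pmatrix},
\]
and extract two facts about its minors. Fact (a): striking the last row and the first column leaves a lower triangular $(\nu-1)\times(\nu-1)$ matrix with $-1$ along its main diagonal and $s$ along its subdiagonal, whose determinant is $(-1)^{\nu-1}$; therefore the GCD of all $(\nu-1)\times(\nu-1)$ minors divides $1$, hence equals $1$. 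Fact (b): Laplace expansion along the first column (or a direct induction on $\nu$) yields $\det(s\mathbb{I}_\nu - C(a)) = a(s)$.

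By the structure theorem for finitely generated modules over $\mathbb{C}[s]$, the invariant factors $d_{1}\mid d_{2}\mid\cdots\mid d_{\nu}$ of $s\mathbb{I}_\nu - C(a)$ are characterized by $d_{1}d_{2}\cdots d_{k} = \gcd$ of the $k\times k$ minors. Facts (a) and (b) give $d_{1}d_{2}\cdots d_{\nu-1} = 1$ and $d_{1}d_{2}\cdots d_{\nu} = a(s)$, so $d_{1} = \cdots = d_{\nu-1} = 1$ and $d_{\nu} = a(s)$. The existence half of the Smith normal form theorem then supplies unimodular $S_{1}(s), S_{2}(s) \in \mathbb{C}[s]^{\nu\times\nu}$ (determinants nonzero constants) realizing the block form in \eqref{eq:compas1e}. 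To upgrade ``nonzero constant'' to ``$\pm 1$'', I would realize $S_{1}$ and $S_{2}$ as products of standard elementary operations over $\mathbb{C}[s]$ (row/column swaps of determinant $-1$ and unit-triangular shears of determinant $1$), so that each factor, and hence the product, has determinant $\pm 1$; alternatively, any residual nonzero constant $c$ can be absorbed by the rescaling $S_1 \to c^{-1}S_1$ on one side.

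The remaining assertions follow immediately. Taking determinants of \eqref{eq:compas1e} and comparing monic polynomials of degree $\nu$ gives $\chi_{C(a)}(s) = \det(s\mathbb{I}_\nu - C(a)) = a(s)$. The largest invariant factor of $s\mathbb{I}_\nu - C(a)$ coincides with the minimal polynomial $\mu_{C(a)}(s)$ (a standard consequence of the Smith form), so $\mu_{C(a)}(s) = a(s) = \chi_{C(a)}(s)$; the equivalence of statements (1) and (2) in Proposition \ref{prop:cyc1} then delivers cyclicity of $C(a)$. I expect the only delicate step to be the $\det S_{m} = \pm 1$ normalization, since the Smith form theorem supplies unimodular factors only up to units of $\mathbb{C}[s]$; however this is bookkeeping rather than a genuine conceptual obstacle, handled either by the explicit elementary-matrix construction indicated above or by rescaling.
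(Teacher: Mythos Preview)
Your proposal is correct. The paper does not supply its own proof of Proposition~\ref{prop:cyc2}; it is stated there as a known result drawn from standard references (in particular \cite[5.5]{BernM}), so there is no in-paper argument to compare against. Your Smith normal form approach over $\mathbb{C}[s]$ is exactly the standard route to this factorization and is sound: the unit $(\nu-1)\times(\nu-1)$ minor forces $D_{\nu-1}=1$, hence $d_1=\cdots=d_{\nu-1}=1$, and $D_\nu=a(s)$ gives $d_\nu=a(s)$; the consequences for $\chi_{C(a)}$, $\mu_{C(a)}$, and cyclicity via Proposition~\ref{prop:cyc1} then follow as you indicate.

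One cosmetic remark on the normalization step: the sentence ``rescaling $S_1\to c^{-1}S_1$'' is imprecise as written, since scaling the entire matrix multiplies the determinant by $c^{-\nu}$ rather than $c^{-1}$. What you want is to scale a single row (or column) of $S_1$ by $c^{-1}$ and the matching column (or row) of $S_2$ by $c$; because the corresponding diagonal entry of the middle factor is $1$, the product is unchanged and both determinants become $1$. Your first alternative---building $S_1,S_2$ from row/column swaps and shears, each of determinant $\pm1$---already handles this cleanly, so the proof stands.
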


The following statement summarizes important information on the Jordan
form of the companion matrix and the generalized Vandermonde matrix,
\cite[5.16]{BernM}, \cite[2.11]{LanTsi}, \cite[7.9]{MeyCD}.
\begin{prop}[Jordan form of the companion matrix]
\label{prop:cycJ} Let $C\left(a\right)$ be an $n\times n$ a companion
matrix of the monic polynomial $a\left(s\right)$ defined by equation
(\ref{eq:compas1a}). Suppose that the set of distinct roots of polynomial
$a\left(s\right)$ is $\left\{ \zeta_{1},\zeta_{2},\ldots,\zeta_{r}\right\} $
and $\left\{ n_{1},n_{2},\ldots,n_{r}\right\} $ is the corresponding
set of the root multiplicities such that
\begin{equation}
n_{1}+n_{2}+\cdots+n_{r}=n.\label{eq:compas2a}
\end{equation}
Then 
\begin{equation}
C\left(a\right)=RJR^{-1},\label{eq:compas2b}
\end{equation}
where
\begin{equation}
J=\mathrm{diag}\,\left\{ J_{n_{1}}\left(\zeta_{1}\right),J_{n_{2}}\left(\zeta_{2}\right),\ldots,J_{n_{r}}\left(\zeta_{r}\right)\right\} \label{eq:compas2c}
\end{equation}
is the the Jordan form of companion matrix $C\left(a\right)$ and
$n\times n$ matrix $R$ is the so-called generalized Vandermonde
matrix defined by
\begin{equation}
R=\left[R_{1}|R_{2}|\cdots|R_{r}\right],\label{eq:compas2d}
\end{equation}
 where $R_{j}$ is $n\times n_{j}$ matrix of the form
\begin{equation}
R_{j}=\left[\begin{array}{rrcr}
1 & 0 & \cdots & 0\\
\zeta_{j} & 1 & \cdots & 0\\
\vdots & \vdots & \ddots & \vdots\\
\zeta_{j}^{n-2} & \binom{n-2}{1}\,\zeta_{j}^{n-3} & \cdots & \binom{n-2}{n_{j}-1}\,\zeta_{j}^{n-n_{j}-1}\\
\zeta_{j}^{n-1} & \binom{n-1}{1}\,\zeta_{j}^{n-2} & \cdots & \binom{n-1}{n_{j}-1}\,\zeta_{j}^{n-n_{j}}
\end{array}\right].\label{eq:compas2f}
\end{equation}
As a consequence of representation (\ref{eq:compas2c}) $C\left(a\right)$
is a cyclic matrix.
\end{prop}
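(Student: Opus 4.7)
The plan is to produce explicit Jordan chains of $C(a)$ by differentiating the Vandermonde vector in the eigenvalue parameter. First, Proposition \ref{prop:cyc2} already gives $\chi_{C(a)}(s)=\mu_{C(a)}(s)=a(s)$, so $C(a)$ is cyclic; hence its Jordan form consists of exactly one block $J_{n_j}(\zeta_j)$ per distinct eigenvalue $\zeta_j$ (of algebraic multiplicity $n_j$), which is precisely the matrix $J$ in \eqref{eq:compas2c}. What remains is to exhibit a basis of generalized eigenvectors that assembles into the claimed matrix $R$.

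The central computational step is the identity
\begin{equation*}
\bigl(C(a)-\zeta \mathbb{I}_n\bigr)\,e(\zeta)\;=\;-a(\zeta)\,e_n,\qquad e(\zeta):=\bigl(1,\zeta,\zeta^2,\ldots,\zeta^{n-1}\bigr)^{\mathrm T},
\end{equation*}
where $e_n$ is the last standard basis vector. This is immediate from the form \eqref{eq:compas1b}: the first $n-1$ rows of $C(a)\,e(\zeta)$ reproduce $\zeta,\zeta^2,\ldots,\zeta^{n-1}$ by the shift, while the last row contributes $-\sum_{k=0}^{n-1}a_k\zeta^k=\zeta^n-a(\zeta)$. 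I would then differentiate this identity $k$ times in $\zeta$ and divide by $k!$; by Leibniz,
\begin{equation*}
\bigl(C(a)-\zeta \mathbb{I}_n\bigr)E_k(\zeta)\;=\;E_{k-1}(\zeta)\;-\;\tfrac{a^{(k)}(\zeta)}{k!}\,e_n,\qquad E_k(\zeta):=\tfrac{1}{k!}\,\partial_\zeta^{k}e(\zeta),\quad E_{-1}:=0.
\end{equation*}
Specialising at $\zeta=\zeta_j$ and using $a^{(k)}(\zeta_j)=0$ for $0\le k\le n_j-1$ yields
\begin{equation*}
\bigl(C(a)-\zeta_j \mathbb{I}_n\bigr)E_k(\zeta_j)\;=\;E_{k-1}(\zeta_j),\qquad 0\le k\le n_j-1,
\end{equation*}
which is exactly the defining relation for a Jordan chain associated with the block $J_{n_j}(\zeta_j)$. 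A direct computation of $E_k(\zeta_j)$ entrywise gives $\bigl(E_k(\zeta_j)\bigr)_i=\binom{i-1}{k}\zeta_j^{i-1-k}$, so the columns $E_0(\zeta_j),E_1(\zeta_j),\ldots,E_{n_j-1}(\zeta_j)$ are precisely the columns of $R_j$ in \eqref{eq:compas2f}.

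Collecting these chains for $j=1,\ldots,r$ gives an $n\times n$ matrix $R=[R_1|R_2|\cdots|R_r]$ satisfying $C(a)R=RJ$ by construction. Linear independence of the columns of $R$, i.e.\ invertibility of $R$, follows from the standard fact that Jordan chains attached to distinct eigenvalues are linearly independent, combined with linear independence of each individual chain (guaranteed because $E_0(\zeta_j)=(1,\zeta_j,\ldots,\zeta_j^{n-1})^{\mathrm T}\ne 0$ is a genuine eigenvector and the chain relation is non-degenerate). This yields $C(a)=RJR^{-1}$, which is \eqref{eq:compas2b}; cyclicity was already established.

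I do not expect a significant obstacle: once the differentiation identity is written down, everything reduces to bookkeeping. The only mildly delicate point is the invertibility of $R$, which I would handle abstractly via the Jordan-chain independence argument rather than by attempting a brute-force generalised Vandermonde determinant evaluation, since the former is shorter and already sufficient for the claim.
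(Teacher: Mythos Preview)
Your proof is correct. The paper does not actually prove this proposition; it states the result with citations to \cite[5.16]{BernM}, \cite[2.11]{LanTsi}, \cite[7.9]{MeyCD}, and then merely records the differentiation description $R_j=[Y^{(0)}|\cdots|Y^{(n_j-1)}]$ with $Y^{(m)}=\tfrac{1}{m!}\partial_{s_j}^{m}Y(\zeta_j)$ in equation \eqref{eq:compas3a}, which is exactly your $E_m(\zeta_j)$. So your argument is not only sound but coincides with the mechanism the paper alludes to without carrying out.
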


As to the structure of matrix $R_{j}$ in equation (\ref{eq:compas2f}),
if we denote by $Y\left(\zeta_{j}\right)$ its first column then it
can be expressed as follows \cite[2.11]{LanTsi}:
\begin{equation}
R_{j}=\left[Y^{\left(0\right)}|Y^{\left(1\right)}|\cdots|Y^{\left(n_{j}-1\right)}\right],\quad Y^{\left(m\right)}=\frac{1}{m!}\partial_{s_{j}}^{m}Y\left(\zeta_{j}\right),\quad0\leq m\leq n_{j}-1.\label{eq:compas3a}
\end{equation}
In the case when all eigenvalues of a cyclic matrix are distinct then
the generalized Vandermonde matrix turns into the standard Vandermonde
matrix
\begin{equation}
V=\left[\begin{array}{rrcr}
1 & 1 & \cdots & 1\\
\zeta_{1} & \zeta_{2} & \cdots & \zeta_{n}\\
\vdots & \vdots & \ddots & \vdots\\
\zeta_{1}^{n-2} & \zeta_{2}^{n-2} & \cdots & \zeta_{n}^{n-2}\\
\zeta_{1}^{n-1} & \zeta_{2}^{n-1} & \cdots & \zeta_{n}^{n-1}
\end{array}\right].\label{eq:compas3c}
\end{equation}

\section{Matrix polynomials\label{sec:mat-poly}}

An important incentive for considering matrix polynomials is that
they are relevant to the spectral theory of the differential equations
of the order higher than 1, particularly the Euler-Lagrange equations
which are the second-order differential equations in time. We provide
here selected elements of the theory of matrix polynomials following
mostly \cite[II.7, II.8]{GoLaRo}, \cite[9]{Baum}. The general matrix
polynomial eigenvalue problem reads
\begin{equation}
A\left(s\right)x=0,\quad A\left(s\right)=\sum_{j=0}^{\nu}A_{j}s^{j},\quad x\neq0,\label{eq:Aux1a}
\end{equation}
where $s$ is a complex number, $A_{k}$ are constant $m\times m$
matrices and $x\in\mathbb{C}^{m}$ is an $m$-dimensional column-vector.
We refer to problem (\ref{eq:Aux1a}) of funding complex-valued $s$
and non-zero vector $x\in\mathbb{C}^{m}$ as the \emph{polynomial
eigenvalue problem}. 

If a pair of a complex $s$ and non-zero vector $x$ solves problem
(\ref{eq:Aux1a}) we refer to $s$ as an \emph{eigenvalue} or as a\emph{
characteristic value} and to $x$ as the corresponding value to the
$s$ \emph{eigenvector}. Evidently the characteristic values of problem
(\ref{eq:Aux1a}) can be found from polynomial \emph{characteristic
equation} as follows:
\begin{equation}
\det\left\{ A\left(s\right)\right\} =0.\label{eq:Aux1b}
\end{equation}
We refer to matrix polynomial $A\left(s\right)$ as \emph{regular}
if $\det\left\{ A\left(s\right)\right\} $ is not identically zero.
We denote by $m\left(s_{0}\right)$ the \emph{multiplicity} (called
also \emph{algebraic multiplicity}) of eigenvalue $s_{0}$ as a root
of polynomial $\det\left\{ A\left(s\right)\right\} $. In contrast,
the \emph{geometric multiplicity} of eigenvalue $s_{0}$ is defined
as $\dim\left\{ \ker\left\{ A\left(s_{0}\right)\right\} \right\} $,
where $\ker\left\{ A\right\} $ defined for any square matrix $A$
stands for the subspace of solutions $x$ to equation $Ax=0$. Evidently,
the geometric multiplicity of eigenvalue does not exceed its algebraic
one, see Corollary \ref{cor:dim-ker}. 

It turns out that the matrix polynomial eigenvalue problem (\ref{eq:Aux1a})
can be always recast as the standard ``linear'' eigenvalue problem,
namely
\begin{equation}
\left(s\mathsf{B}-\mathsf{A}\right)\mathsf{x}=0,\label{eq:Aux1c}
\end{equation}
where $m\nu\times m\nu$ matrices $\mathsf{A}$ and $\mathsf{B}$
are defined by
\begin{gather}
\mathsf{B}=\left[\begin{array}{ccccc}
\mathbb{I} & 0 & \cdots & 0 & 0\\
0 & \mathbb{I} & 0 & \cdots & 0\\
0 & 0 & \ddots & \cdots & \vdots\\
\vdots & \vdots & \ddots & \mathbb{I} & 0\\
0 & 0 & \cdots & 0 & A_{\nu}
\end{array}\right],\quad\mathsf{A}=\left[\begin{array}{ccccc}
0 & \mathbb{I} & \cdots & 0 & 0\\
0 & 0 & \mathbb{I} & \cdots & 0\\
0 & 0 & 0 & \cdots & \vdots\\
\vdots & \vdots & \ddots & 0 & \mathbb{I}\\
-A_{0} & -A_{1} & \cdots & -A_{\nu-2} & -A_{\nu-1}
\end{array}\right],\label{eq:CBA1b}
\end{gather}
with $\mathbb{I}$ being $m\times m$ the identity matrix. Matrix
$\mathsf{A}$, particularly in the monic case, is often referred to
as \emph{companion matrix}. In the case of \emph{monic polynomial}
$A\left(\lambda\right)$, when $A_{\nu}=\mathbb{I}$ is the $m\times m$
identity matrix, matrix $\mathsf{B}=\mathsf{I}$ is the $m\nu\times m\nu$
identity matrix. The reduction of original polynomial problem (\ref{eq:Aux1a})
to an equivalent linear problem (\ref{eq:Aux1c}) is called \emph{linearization}.

The linearization is not unique, and one way to accomplish is by introducing
the so-called known ``companion polynomial'', which is the $m\nu\times m\nu$
matrix
\begin{gather}
\mathsf{C}_{A}\left(s\right)=s\mathsf{B}-\mathsf{A}=\left[\begin{array}{ccccc}
s\mathbb{I} & -\mathbb{I} & \cdots & 0 & 0\\
0 & s\mathbb{I} & -\mathbb{I} & \cdots & 0\\
0 & 0 & \ddots & \cdots & \vdots\\
\vdots & \vdots & \vdots & s\mathbb{I} & -\mathbb{I}\\
A_{0} & A_{1} & \cdots & A_{\nu-2} & sA_{\nu}+A_{\nu-1}
\end{array}\right].\label{eq:CBA1a}
\end{gather}
Notice that in the case of the EL equations the linearization can
be accomplished by the relevant Hamilton equations.

To demonstrate the equivalency between the eigenvalue problems for
the $m\nu\times m\nu$ companion polynomial $\mathsf{C}_{A}\left(s\right)$
and the original $m\times m$ matrix polynomial $A\left(s\right)$
we introduce two $m\nu\times m\nu$ matrix polynomials $\mathsf{E}\left(s\right)$
and $\mathsf{F}\left(s\right)$. Namely,
\begin{gather}
\mathsf{E}\left(s\right)=\left[\begin{array}{ccccc}
E_{1}\left(s\right) & E_{2}\left(s\right) & \cdots & E_{\nu-1}\left(s\right) & \mathbb{I}\\
-\mathbb{I} & 0 & 0 & \cdots & 0\\
0 & -\mathbb{I} & \ddots & \cdots & \vdots\\
\vdots & \vdots & \ddots & 0 & 0\\
0 & 0 & \cdots & -\mathbb{I} & 0
\end{array}\right],\label{eq:CBA1c}\\
\det\left\{ \mathsf{E}\left(s\right)\right\} =1,\nonumber 
\end{gather}
where $m\times m$ matrix polynomials $E_{j}\left(s\right)$ are defined
by the following recursive formulas
\begin{gather}
E_{\nu}\left(s\right)=A_{\nu},\quad E_{j-1}\left(s\right)=A_{j-1}+sE_{j}\left(s\right),\quad j=\nu,\ldots,2.\label{eq:CBA1d}
\end{gather}
Matrix polynomial $\mathsf{F}\left(s\right)$ is defined by
\begin{gather}
\mathsf{F}\left(s\right)=\left[\begin{array}{ccccc}
\mathbb{I} & 0 & \cdots & 0 & 0\\
-s\mathbb{I} & \mathbb{I} & 0 & \cdots & 0\\
0 & -s\mathbb{I} & \ddots & \cdots & \vdots\\
\vdots & \vdots & \ddots & \mathbb{I} & 0\\
0 & 0 & \cdots & -s\mathbb{I} & \mathbb{I}
\end{array}\right],\quad\det\left\{ \mathsf{F}\left(s\right)\right\} =1.\label{eq:CBA1e}
\end{gather}
Notice, that both matrix polynomials $\mathsf{E}\left(s\right)$ and
$\mathsf{F}\left(s\right)$ have constant determinants readily implying
that their inverses $\mathsf{E}^{-1}\left(s\right)$ and $\mathsf{F}^{-1}\left(s\right)$
are also matrix polynomials. Then, it is straightforward to verify
that
\begin{gather}
\mathsf{E}\left(s\right)\mathsf{C}_{A}\left(s\right)\mathsf{F}^{-1}\left(s\right)=\mathsf{E}\left(s\right)\left(s\mathsf{B}-\mathsf{A}\right)\mathsf{F}^{-1}\left(s\right)=\left[\begin{array}{ccccc}
A\left(s\right) & 0 & \cdots & 0 & 0\\
0 & \mathbb{I} & 0 & \cdots & 0\\
0 & 0 & \ddots & \cdots & \vdots\\
\vdots & \vdots & \ddots & \mathbb{I} & 0\\
0 & 0 & \cdots & 0 & \mathbb{I}
\end{array}\right].\label{eq:CBA1f}
\end{gather}
The identity (\ref{eq:CBA1f}) where matrix polynomials $\mathsf{E}\left(s\right)$
and $\mathsf{F}\left(s\right)$ have constant determinants can be
viewed as the definition of equivalency between matrix polynomial
$A\left(s\right)$ and its companion polynomial $\mathsf{C}_{A}\left(s\right)$. 

Let us take a look at the eigenvalue problem for eigenvalue $s$ and
eigenvector $\mathsf{x}\in\mathbb{C}^{m\nu}$ associated with companion
polynomial $\mathsf{C}_{A}\left(s\right)$, that is
\begin{gather}
\left(s\mathsf{B}-\mathsf{A}\right)\mathsf{x}=0,\quad\mathsf{x}=\left[\begin{array}{c}
x_{0}\\
x_{1}\\
x_{2}\\
\vdots\\
x_{\nu-1}
\end{array}\right]\in\mathbb{C}^{m\nu},\quad x_{j}\in\mathbb{C}^{m},\quad0\leq j\leq\nu-1,\label{eq:CBAx1a}
\end{gather}
where
\begin{equation}
\left(s\mathsf{B}-\mathsf{A}\right)\mathsf{x}=\left[\begin{array}{c}
sx_{0}-x_{1}\\
sx_{1}-x_{2}\\
\vdots\\
sx_{\nu-2}-x_{\nu-1}\\
\sum_{j=0}^{\nu-2}A_{j}x_{j}+\left(sA_{\nu}+A_{\nu-1}\right)x_{\nu-1}
\end{array}\right].\label{eq:CBAx1b}
\end{equation}
With equations (\ref{eq:CBAx1a}) and (\ref{eq:CBAx1b}) in mind we
introduce the following vector polynomial
\begin{equation}
\mathsf{x}_{s}=\left[\begin{array}{c}
x_{0}\\
sx_{0}\\
\vdots\\
s^{\nu-2}x_{0}\\
s^{\nu-1}x_{0}
\end{array}\right],\quad x_{0}\in\mathbb{C}^{m}.\label{eq:CBAx1c}
\end{equation}
Not accidentally, the components of the vector $\mathsf{x}_{s}$ in
its representation (\ref{eq:CBAx1c}) are in evident relation with
the derivatives $\partial_{t}^{j}\left(x_{0}\mathrm{e}^{st}\right)=s^{j}x_{0}\mathrm{e}^{st}$.
That is just another sign of the intimate relations between the matrix
polynomial theory and the theory of systems of ordinary differential
equations, see Appendix \ref{sec:dif-jord}. 
\begin{thm}[eigenvectors]
\label{thm:matpol-eigvec} Let $A\left(s\right)$ as in equations
(\ref{eq:Aux1a}) be regular, that $\det\left\{ A\left(s\right)\right\} $
is not identically zero, and let $m\nu\times m\nu$ matrices $\mathsf{A}$
and $\mathsf{B}$ be defined by equations (\ref{eq:Aux1b}). Then,
the following identities hold
\begin{equation}
\left(s\mathsf{B}-\mathsf{A}\right)\mathsf{x}_{s}=\left[\begin{array}{c}
0\\
0\\
\vdots\\
0\\
A\left(s\right)x_{0}
\end{array}\right],\;\mathsf{x}_{s}=\left[\begin{array}{c}
x_{0}\\
sx_{0}\\
\vdots\\
s^{\nu-2}x_{0}\\
s^{\nu-1}x_{0}
\end{array}\right],\label{eq:CBAx1d}
\end{equation}
\begin{gather}
\det\left\{ A\left(s\right)\right\} =\det\left\{ s\mathsf{B}-\mathsf{A}\right\} ,\quad\det\left\{ \mathsf{B}\right\} =\det\left\{ A_{\nu}\right\} ,\label{eq:CBAx1g}
\end{gather}
where $\det\left\{ A\left(s\right)\right\} =\det\left\{ s\mathsf{B}-\mathsf{A}\right\} $
is a polynomial of the degree $m\nu$ if $\det\left\{ \mathsf{B}\right\} =\det\left\{ A_{\nu}\right\} \neq0$.
There is one-to-one correspondence between solutions of equations
$A\left(s\right)x=0$ and $\left(s\mathsf{B}-\mathsf{A}\right)\mathsf{x}=0$.
Namely, a pair $s,\:\mathsf{x}$ solves eigenvalue problem $\left(s\mathsf{B}-\mathsf{A}\right)\mathsf{x}=0$
if and only if the following equalities hold
\begin{gather}
\mathsf{x}=\mathsf{x}_{s}=\left[\begin{array}{c}
x_{0}\\
sx_{0}\\
\vdots\\
s^{\nu-2}x_{0}\\
s^{\nu-1}x_{0}
\end{array}\right],\quad A\left(s\right)x_{0}=0,\quad x_{0}\neq0;\quad\det\left\{ A\left(s\right)\right\} =0.\label{eq:CBAx1e}
\end{gather}
\end{thm}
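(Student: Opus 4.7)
The plan is to prove the three assertions in order by exploiting the block structure of $\mathsf{A}$, $\mathsf{B}$ and the already-established companion polynomial factorization (\ref{eq:CBA1f}). Nothing deep is needed beyond block-matrix bookkeeping; the heart of the argument is that the relation $x_{j+1}=sx_{j}$ is forced block-row by block-row, and in the last block it collapses to $A(s)x_0=0$.

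First I would verify identity (\ref{eq:CBAx1d}) by direct block computation. Writing $(s\mathsf{B}-\mathsf{A})\mathsf{x}_{s}$ using the block forms (\ref{eq:CBA1b}) and the vector (\ref{eq:CBAx1c}), each of the first $\nu-1$ block-rows evaluates to $s\cdot s^{j-1}x_{0}-s^{j}x_{0}=0$, while the last block-row yields
\begin{equation*}
\sum_{j=0}^{\nu-2}A_{j}(s^{j}x_{0})+(sA_{\nu}+A_{\nu-1})(s^{\nu-1}x_{0})=\sum_{j=0}^{\nu}A_{j}s^{j}x_{0}=A(s)x_{0}.
\end{equation*}
This gives the right-hand side of (\ref{eq:CBAx1d}) immediately.

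Next I would deduce the determinant identities (\ref{eq:CBAx1g}). Since $\mathsf{B}$ is block-upper-triangular (in fact block-diagonal) with diagonal blocks $\mathbb{I},\ldots,\mathbb{I},A_{\nu}$, one has $\det\mathsf{B}=\det A_{\nu}$ by the standard block-triangular determinant rule. For $\det(s\mathsf{B}-\mathsf{A})=\det A(s)$ I would invoke the factorization (\ref{eq:CBA1f}) that the text has already stated, namely
\begin{equation*}
\mathsf{E}(s)(s\mathsf{B}-\mathsf{A})\mathsf{F}^{-1}(s)=\mathrm{diag}\,(A(s),\mathbb{I},\ldots,\mathbb{I}),
\end{equation*}
together with the fact, recorded in (\ref{eq:CBA1c}) and (\ref{eq:CBA1e}), that $\det\mathsf{E}(s)=\det\mathsf{F}(s)=1$. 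Taking determinants of both sides kills the $\mathsf{E},\mathsf{F}$ factors and leaves $\det(s\mathsf{B}-\mathsf{A})=\det A(s)$. When $\det A_{\nu}\neq 0$, the leading-coefficient count shows this polynomial has degree exactly $m\nu$.

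Finally, the one-to-one correspondence in (\ref{eq:CBAx1e}) follows from the block computation already done. The forward direction is the computation above: if $A(s)x_{0}=0$ with $x_{0}\neq 0$, then the vector $\mathsf{x}_{s}$ built from $x_{0}$ as in (\ref{eq:CBAx1c}) is nonzero and solves $(s\mathsf{B}-\mathsf{A})\mathsf{x}_{s}=0$ by (\ref{eq:CBAx1d}). For the converse, suppose $\mathsf{x}=[x_{0},x_{1},\ldots,x_{\nu-1}]^{\mathrm{T}}$ satisfies $(s\mathsf{B}-\mathsf{A})\mathsf{x}=0$. Reading the first $\nu-1$ block-rows of (\ref{eq:CBAx1b}) gives the recursion $x_{j+1}=sx_{j}$, hence $x_{j}=s^{j}x_{0}$, so $\mathsf{x}=\mathsf{x}_{s}$. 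The last block-row then reduces to $A(s)x_{0}=0$. Moreover $x_{0}\neq 0$, for otherwise all $x_{j}=s^{j}x_{0}=0$ would contradict $\mathsf{x}\neq 0$. There is no real obstacle in this proof; the only point requiring care is correctly bookkeeping the indices in the last block-row so that the coefficients $A_{0},\ldots,A_{\nu-1}$ and the ``split'' term $sA_{\nu}+A_{\nu-1}$ reassemble exactly into the full polynomial $A(s)$.
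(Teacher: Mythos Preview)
Your proof is correct and follows essentially the same approach as the paper's own proof: direct block computation for identity (\ref{eq:CBAx1d}), the factorization (\ref{eq:CBA1f}) together with $\det\mathsf{E}(s)=\det\mathsf{F}(s)=1$ for the determinant identities, and the block-row recursion $x_{j+1}=sx_j$ for the one-to-one correspondence. The only cosmetic difference is that you spell out the $x_0\neq 0$ argument in the converse direction a bit more explicitly than the paper does.
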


\begin{proof}
Polynomial vector identity (\ref{eq:CBAx1d}) readily follows from
equations (\ref{eq:CBAx1b}) and (\ref{eq:CBAx1c}). Identities (\ref{eq:CBAx1g})
for the determinants follow straightforwardly from equations (\ref{eq:CBAx1c}),
(\ref{eq:CBAx1e}) and (\ref{eq:CBA1f}). If $\det\left\{ \mathsf{B}\right\} =\det\left\{ A_{\nu}\right\} \neq0$
then the degree of the polynomial $\det\left\{ s\mathsf{B}-\mathsf{A}\right\} $
has to be $m\nu$ since $\mathsf{A}$ and $\mathsf{B}$ are $m\nu\times m\nu$
matrices.

Suppose that equations (\ref{eq:CBAx1e}) hold. Then combining them
with proven identity (\ref{eq:CBAx1d}) we get $\left(s\mathsf{B}-\mathsf{A}\right)\mathsf{x}_{s}=0$
proving that expressions (\ref{eq:CBAx1e}) define an eigenvalue $s$
and an eigenvector $\mathsf{x}=\mathsf{x}_{s}$.

Suppose now that $\left(s\mathsf{B}-\mathsf{A}\right)\mathsf{x}=0$
where $\mathsf{x}\neq0$. Combing that with equations (\ref{eq:CBAx1b})
we obtain
\begin{gather}
x_{1}=sx_{0},\quad x_{2}=sx_{1}=s^{2}x_{0},\cdots,\quad x_{\nu-1}=s^{\nu-1}x_{0},\label{eq:CBAx2a}
\end{gather}
implying that
\begin{equation}
\mathsf{x}=\mathsf{x}_{s}=\left[\begin{array}{c}
x_{0}\\
sx_{0}\\
\vdots\\
s^{\nu-2}x_{0}\\
s^{\nu-1}x_{0}
\end{array}\right],\quad x_{0}\neq0,\label{eq:CBAx2b}
\end{equation}
 and 
\begin{equation}
\sum_{j=0}^{\nu-2}A_{j}x_{j}+\left(sA_{\nu}+A_{\nu-1}\right)x_{\nu-1}=A\left(s\right)x_{0}.\label{eq:CBAx2c}
\end{equation}
Using equations (\ref{eq:CBAx2b}) and identity (\ref{eq:CBAx1d})
we obtain
\begin{equation}
0=\left(s\mathsf{B}-\mathsf{A}\right)\mathsf{x}=\left(s\mathsf{B}-\mathsf{A}\right)\mathsf{x}_{s}=\left[\begin{array}{c}
0\\
0\\
\vdots\\
0\\
A\left(s\right)x_{0}
\end{array}\right].\label{eq:CBAx2d}
\end{equation}
 Equations (\ref{eq:CBAx2d}) readily imply $A\left(s\right)x_{0}=0$
and $\det\left\{ A\left(s\right)\right\} =0$ since $x_{0}\neq0$.
That completes the proof.
\end{proof}
\begin{rem}[characteristic polynomial degree]
\label{rem:char-pol-deg} Note that according to Theorem \ref{thm:matpol-eigvec}
the characteristic polynomial $\det\left\{ A\left(s\right)\right\} $
for $m\times m$ matrix polynomial $A\left(s\right)$ has the degree
$m\nu$, whereas in linear case $s\mathbb{I}-A_{0}$ for $m\times m$
identity matrix $\mathbb{I}$ and $m\times m$ matrix $A_{0}$ the
characteristic polynomial $\det\left\{ s\mathbb{I}-A_{0}\right\} $
is of the degree $m$. This can be explained by observing that in
the non-linear case of $m\times m$ matrix polynomial $A\left(s\right)$
we are dealing effectively with many more $m\times m$ matrices $A$
than just a single matrix $A_{0}$.
\end{rem}

Another problem of our particular interest related to the theory of
matrix polynomials is eigenvalues and eigenvectors degeneracy and
consequently the existence of non-trivial Jordan blocks, that is Jordan
blocks of dimensions higher or equal to 2. The general theory addresses
this problem by introducing so-called ``Jordan chains'' which are
intimately related to the theory of system of differential equations
expressed as $A\left(\partial_{t}\right)x\left(t\right)=0$ and their
solutions of the form $x\left(t\right)=p\left(t\right)e^{st}$ where
$p\left(t\right)$ is a vector polynomial, see Appendix \ref{sec:dif-jord}
and \cite[I, II]{GoLaRo}, \cite[9]{Baum}. Avoiding the details of
Jordan chains developments we simply notice that an important to us
point of Theorem \ref{thm:matpol-eigvec} is that there is one-to-one
correspondence between solutions of equations $A\left(s\right)x=0$
and $\left(s\mathsf{B}-\mathsf{A}\right)\mathsf{x}=0$, and it has
the following immediate implication.
\begin{cor}[equality of the dimensions of eigenspaces]
\label{cor:dim-ker} Under the conditions of Theorem \ref{thm:matpol-eigvec}
for any eigenvalue $s_{0}$, that is, $\det\left\{ A\left(s_{0}\right)\right\} =0$,
we have
\begin{equation}
\dim\left\{ \ker\left\{ s_{0}\mathsf{B}-\mathsf{A}\right\} \right\} =\dim\left\{ \ker\left\{ A\left(s_{0}\right)\right\} \right\} .\label{eq:CBAx2e}
\end{equation}
In other words, the geometric multiplicities of the eigenvalue $s_{0}$
associated with matrices $A\left(s_{0}\right)$ and $s_{0}\mathsf{B}-\mathsf{A}$
are equal. In view of identity (\ref{eq:CBAx2e}) the following inequality
holds for the (algebraic) multiplicity $m\left(s_{0}\right)$
\begin{equation}
m\left(s_{0}\right)\geq\dim\left\{ \ker\left\{ A\left(s_{0}\right)\right\} \right\} .\label{eq:CBAx2f}
\end{equation}
\end{cor}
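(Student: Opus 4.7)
The plan is to read off the corollary as a direct consequence of the bijective correspondence $x_0 \longleftrightarrow \mathsf{x}_{s_0}$ that Theorem \ref{thm:matpol-eigvec} already supplies, and then to invoke the standard fact that the geometric multiplicity of an eigenvalue of a matrix pencil never exceeds its algebraic multiplicity.

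First I would fix an eigenvalue $s_0$ with $\det\{A(s_0)\}=0$ and define the linear map
\begin{equation*}
T:\ker\{A(s_0)\}\longrightarrow\ker\{s_0\mathsf{B}-\mathsf{A}\},\qquad T(x_0)=\mathsf{x}_{s_0}=\bigl[x_0,\,s_0 x_0,\,\ldots,\,s_0^{\nu-1}x_0\bigr]^{\mathrm T}.
\end{equation*}
That $T$ is well defined is exactly the identity (\ref{eq:CBAx1d}) of Theorem \ref{thm:matpol-eigvec}: if $A(s_0)x_0=0$ then $(s_0\mathsf{B}-\mathsf{A})\mathsf{x}_{s_0}=0$. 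Linearity is obvious, and injectivity is trivial since the top block of $\mathsf{x}_{s_0}$ is $x_0$. Surjectivity is the content of the second part of Theorem \ref{thm:matpol-eigvec}: any $\mathsf{x}\in\ker\{s_0\mathsf{B}-\mathsf{A}\}$ has the structured form $\mathsf{x}=\mathsf{x}_{s_0}$ with $A(s_0)x_0=0$, hence $\mathsf{x}=T(x_0)$. Thus $T$ is a linear isomorphism and the dimensions agree, proving (\ref{eq:CBAx2e}).

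For the inequality (\ref{eq:CBAx2f}), I would appeal to the general fact that for any square matrix pencil $s\mathsf{B}-\mathsf{A}$ the geometric multiplicity of an eigenvalue $s_0$ is bounded above by its algebraic multiplicity, i.e.\ by the order of $s_0$ as a root of $\det\{s\mathsf{B}-\mathsf{A}\}$. In the monic case $A_\nu=\mathbb{I}$ this is the classical inequality $\dim\ker(s_0\mathsf{I}-\mathsf{A})\le m(s_0)$ which follows at once from the Jordan form of $\mathsf{A}$ (Appendix \ref{sec:jord-form}). Combining this with (\ref{eq:CBAx2e}) and the determinant identity $\det\{A(s)\}=\det\{s\mathsf{B}-\mathsf{A}\}$ from (\ref{eq:CBAx1g}) yields
\begin{equation*}
\dim\{\ker\{A(s_0)\}\}=\dim\{\ker\{s_0\mathsf{B}-\mathsf{A}\}\}\le m(s_0),
\end{equation*}
which is (\ref{eq:CBAx2f}).

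The only mild subtlety — and the one step I expect to require the most care — is the non-monic case $\det\{A_\nu\}=0$, where $s\mathsf{B}-\mathsf{A}$ becomes a genuinely singular pencil and the degree of $\det\{s\mathsf{B}-\mathsf{A}\}$ drops below $m\nu$. In that setting one cannot simply diagonalize via a Jordan form of $\mathsf{B}^{-1}\mathsf{A}$, and the inequality on multiplicities has to be justified through the Kronecker canonical form of the pencil, or, more simply, by linearizing inside the regular part via the equivalence (\ref{eq:CBA1f}) and invoking the Smith-form statement that the geometric multiplicity of a root of a regular matrix polynomial is bounded by its algebraic multiplicity. For the present paper it is enough to point out that regularity of $A(s)$ is assumed throughout, so (\ref{eq:CBA1f}) reduces everything to a scalar polynomial identity, and the inequality (\ref{eq:CBAx2f}) follows.
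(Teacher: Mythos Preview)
Your argument is correct and is exactly what the paper intends: the text preceding the corollary simply says it is an ``immediate implication'' of the one-to-one correspondence in Theorem~\ref{thm:matpol-eigvec}, and you have merely written that implication out explicitly by exhibiting the linear isomorphism $T$ and then invoking the standard geometric-versus-algebraic multiplicity bound. Your extra care with the non-monic case goes beyond what the paper provides.
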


The next statement shows that if the geometric multiplicity of an
eigenvalue is strictly less than its algebraic one than there exist
non-trivial Jordan blocks, that is Jordan blocks of dimensions higher
or equal to 2.
\begin{thm}[non-trivial Jordan block]
\label{thm:Jord-block} Assuming notations introduced in Theorem
\ref{thm:matpol-eigvec} let us suppose that the multiplicity $m\left(s_{0}\right)$
of eigenvalue $s_{0}$ satisfies
\begin{equation}
m\left(s_{0}\right)>\dim\left\{ \ker\left\{ A\left(s_{0}\right)\right\} \right\} .\label{eq:CBAAx3a}
\end{equation}
Then the Jordan canonical form of companion polynomial $\mathsf{C}_{A}\left(s\right)=s\mathsf{B}-\mathsf{A}$
has a least one nontrivial Jordan block of the dimension exceeding
2.

In particular, if 
\begin{equation}
\dim\left\{ \ker\left\{ s_{0}\mathsf{B}-\mathsf{A}\right\} \right\} =\dim\left\{ \ker\left\{ A\left(s_{0}\right)\right\} \right\} =1,\label{eq:CBAAx3b}
\end{equation}
 and $m\left(s_{0}\right)\geq2$ then the Jordan canonical form of
companion polynomial $\mathsf{C}_{A}\left(s\right)=s\mathsf{B}-\mathsf{A}$
has exactly one Jordan block associated with eigenvalue $s_{0}$ and
its dimension is $m\left(s_{0}\right)$.
\end{thm}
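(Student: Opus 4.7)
The plan is to reduce the assertion about the Jordan structure of the companion pencil $\mathsf{C}_A(s)=s\mathsf{B}-\mathsf{A}$ to two already-established bridges between the matrix-polynomial world and the linear pencil world, and then invoke standard Jordan-form accounting.

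First I would translate the hypothesis into pencil language. By Theorem \ref{thm:matpol-eigvec}, the determinants satisfy $\det\{A(s)\}=\det\{s\mathsf{B}-\mathsf{A}\}$, so $m(s_0)$ is simultaneously the algebraic multiplicity of $s_0$ as a root of the characteristic polynomial of the companion pencil. By Corollary \ref{cor:dim-ker}, $\dim\ker\{A(s_0)\}=\dim\ker\{s_0\mathsf{B}-\mathsf{A}\}$, so the hypothesis $m(s_0)>\dim\ker\{A(s_0)\}$ is exactly the statement that, for the pencil $s\mathsf{B}-\mathsf{A}$, the algebraic multiplicity of $s_0$ strictly exceeds its geometric multiplicity.

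Next I would invoke the standard Jordan-form bookkeeping recalled in Appendix \ref{sec:jord-form} (Propositions \ref{prop:gen-eig} and \ref{prop:jor-can}): for a finite eigenvalue $s_0$ of the pencil, the number of Jordan blocks associated with $s_0$ equals the geometric multiplicity $\dim\ker\{s_0\mathsf{B}-\mathsf{A}\}$, and the sum of the sizes of these blocks equals the algebraic multiplicity $m(s_0)$. Under the hypothesis, a strict inequality between sum and number of summands forces at least one summand to be $\geq 2$, which proves the first conclusion (a nontrivial Jordan block must appear). For the special case when both multiplicities collapse to $1$, there is precisely one Jordan block associated with $s_0$, so its dimension must absorb all of the algebraic multiplicity, i.e.\ it equals $m(s_0)$, giving the second conclusion.

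The hardest step is justifying that the classical matrix Jordan-form statements may be applied to the \emph{pencil} $s\mathsf{B}-\mathsf{A}$ as opposed to a single matrix. When $A_\nu$ is invertible, $\mathsf{B}$ is invertible by the second identity of \eqref{eq:CBAx1g}, and one may pass to the ordinary matrix $\mathsf{B}^{-1}\mathsf{A}$, whose Jordan structure coincides with that of the pencil; Appendix \ref{sec:jord-form} then applies verbatim. In the general regular case ($\det\{A(s)\}\not\equiv 0$ but $A_\nu$ possibly singular), one has to use instead the Smith-form reduction \eqref{eq:CBA1f} of the pencil $\mathsf{C}_A(s)$, which shows that the finite elementary divisors of $s\mathsf{B}-\mathsf{A}$ are exactly those of $A(s)$; the counting argument above remains valid for the finite spectrum, and $s_0$ is finite by assumption. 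Once this identification is in place, the two conclusions follow by the pure combinatorics of partitioning $m(s_0)$ into $\dim\ker\{A(s_0)\}$ positive parts.
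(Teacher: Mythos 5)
Your argument is correct and is precisely the counting argument the paper has in mind: the paper's own ``proof'' is the single remark that the theorem ``follows straightforwardly from the definition of the Jordan canonical form and its basic properties,'' and your reduction via $\det\{A(s)\}=\det\{s\mathsf{B}-\mathsf{A}\}$ and Corollary \ref{cor:dim-ker}, followed by the standard bookkeeping (number of blocks $=$ geometric multiplicity, sum of block sizes $=$ algebraic multiplicity), is exactly that. Your extra care in justifying what ``Jordan form of the pencil'' means when $A_{\nu}$ (hence $\mathsf{B}$) may be singular --- passing to $\mathsf{B}^{-1}\mathsf{A}$ when possible and otherwise reading off the finite elementary divisors from the equivalence (\ref{eq:CBA1f}) --- is a useful precision the paper leaves implicit.
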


The proof of Theorem \ref{thm:Jord-block} follows straightforwardly
from the definition of the Jordan canonical form and its basic properties.
Note that if equations (\ref{eq:CBAAx3b}) hold, it implies that the
eigenvalue $0$ is cyclic (nonderogatory) for matrix $A\left(s_{0}\right)$
and eigenvalue $s_{0}$ is cyclic (nonderogatory) for matrix $\mathsf{B}^{-1}\mathsf{A}$
provided $\mathsf{B}^{-1}$ exists, see Appendix  \ref{sec:co-mat}.

\section{Vector differential equations and the Jordan canonical form\label{sec:dif-jord}}

In this section we relate the vector ordinary differential equations
to the matrix polynomials reviewed in Appendix \ref{sec:mat-poly}
following \cite[5.1, 5.7]{GoLaRo2}, \cite[II.8.3]{GoLaRo}, \cite[III.4]{Hale},
\cite[7.9]{MeyCD}.

Equation $A\left(s\right)x=0$ with polynomial matrix $A\left(s\right)$
defined by equations (\ref{eq:Aux1a}) corresponds to the following
$m$-vector $\nu$-th order ordinary differential
\begin{equation}
A\left(\partial_{t}\right)x\left(t\right)=0,\text{ where }A\left(\partial_{t}\right)=\sum_{j=0}^{\nu}A_{j}\partial_{t}^{j},\label{eq:Adtx1}
\end{equation}
where $A_{j}=A_{j}\left(t\right)$ are $m\times m$ matrices. Introducing
$m\nu$-column-vector function
\begin{equation}
Y\left(t\right)=\left[\begin{array}{c}
x\left(t\right)\\
\partial_{t}x\left(t\right)\\
\vdots\\
\partial_{t}^{\nu-2}x\left(t\right)\\
\partial_{t}^{\nu-1}x\left(t\right)
\end{array}\right]\label{eq:yxt1b}
\end{equation}
and under the assumption that matrix $A_{\nu}\left(t\right)$ is the
identity matrix the differential equation (\ref{eq:Adtx1}) can be
recast and the first order differential equation

\begin{equation}
\partial_{t}Y\left(t\right)=\mathsf{A}Y\left(t\right),\label{eq:yxt1ba}
\end{equation}
where $\mathsf{A}$ is $m\nu\times m\nu$ matrix defined by
\begin{gather}
\mathsf{A}=\mathsf{A}\left(t\right)=\left[\begin{array}{ccccc}
0 & \mathbb{I} & \cdots & 0 & 0\\
0 & 0 & \mathbb{I} & \cdots & 0\\
0 & 0 & 0 & \cdots & \vdots\\
\vdots & \vdots & \ddots & 0 & \mathbb{I}\\
-A_{0}\left(t\right) & -A_{1}\left(t\right) & \cdots & -A_{\nu-2}\left(t\right) & -A_{\nu-1}\left(t\right)
\end{array}\right],\quad A_{\nu}\left(t\right)=\mathbb{I}.\label{eq:yxt1bb}
\end{gather}

\subsection{Constant coefficients case}

Let us consider an important special case of equation (\ref{eq:Adtx1})
when matrices $A_{j}$ are $m\times m$ that do not depended on $t$.
Then equation (\ref{eq:Adtx1}) can be recast as
\begin{equation}
\mathsf{B}\partial_{t}Y\left(t\right)=\mathsf{A}Y\left(t\right),\label{eq:yxt1a}
\end{equation}
where $\mathsf{A}$ and $\mathsf{B}$ are $m\nu\times m\nu$ companion
matrices defined by equations (\ref{eq:CBA1b}) and

In the case when $A_{\nu}$ is an invertible $m\times m$ matrix equation
(\ref{eq:yxt1a}) can be recast further as
\begin{equation}
\partial_{t}Y\left(t\right)=\dot{\mathsf{A}}Y\left(t\right),\label{eq:yxt1c}
\end{equation}
where
\begin{gather}
\dot{\mathsf{A}}=\left[\begin{array}{ccccc}
0 & \mathbb{I} & \cdots & 0 & 0\\
0 & 0 & \mathbb{I} & \cdots & 0\\
0 & 0 & 0 & \cdots & \vdots\\
\vdots & \vdots & \ddots & 0 & \mathbb{I}\\
-\dot{A}_{0} & -\dot{A}_{1} & \cdots & -\dot{A}_{\nu-2} & -\dot{A}_{\nu-1}
\end{array}\right],\quad\dot{A}_{j}=A_{\nu}^{-1}A_{j},\quad0\leq\nu-1.\label{eq:yxt1d}
\end{gather}
Note that one can interpret equation (\ref{eq:yxt1c}) as a particular
case of equation (\ref{eq:yxt1a}) where matrices $A_{\nu}$ and $\mathsf{B}$
are identity matrices of the respective dimensions $m\times m$ and
$m\nu\times m\nu$, and that polynomial matrix $A\left(s\right)$
defined by equations (\ref{eq:Aux1a}) becomes monic matrix polynomial
$\dot{A}\left(s\right)$, that is
\begin{gather}
\dot{A}\left(s\right)=\mathbb{I}s^{\nu}+\sum_{j=0}^{\nu-1}\dot{A}_{j}s^{j},\quad\dot{A}_{j}=A_{\nu}^{-1}A_{j},\quad0\leq\nu-1.\label{eq:yxt1e}
\end{gather}
Note that in view of equation (\ref{eq:yxt1b}), one recovers $x\left(t\right)$
from $Y\left(t\right)$ by using the following formula:
\begin{equation}
x\left(t\right)=P_{1}Y\left(t\right),\quad P_{1}=\left[\begin{array}{ccccc}
\mathbb{I} & 0 & \cdots & 0 & 0\end{array}\right],\label{eq:yxt2b}
\end{equation}
where $P_{1}$ evidently is $m\times m\nu$ matrix.

Observe also that, \cite[Prop. 5.1.2]{GoLaRo2}, \cite[14]{LanTsi}
\begin{gather}
\left[\dot{A}\left(s\right)\right]^{-1}=P_{1}\left[\mathbb{I}s-\dot{\mathsf{A}}\right]^{-1}R_{1},\quad P_{1}=\left[\begin{array}{ccccc}
\mathbb{I} & 0 & \cdots & 0 & 0\end{array}\right],\quad R_{1}=\left[\begin{array}{c}
0\\
0\\
\vdots\\
0\\
\mathbb{I}
\end{array}\right],\label{eq:yxt2a}
\end{gather}
where $P_{1}$ and $R_{1}$ evidently respectively $m\times m\nu$
and $m\nu\times m$ matrices.

The general form for the solution to vector differential equation
(\ref{eq:yxt1c}) is
\begin{equation}
Y\left(t\right)=\exp\left\{ \dot{\mathsf{A}}t\right\} Y_{0},\quad Y_{0}\in\mathbb{C}^{m\nu}.\label{eq:yxt2c}
\end{equation}
Then, using formulas (\ref{eq:yxt2b}) and (\ref{eq:yxt2c}) and Proposition
\ref{prop:gen-eig}, we arrive the following statement.
\begin{prop}[solution to the vector differential equation]
\label{prop:dif-sol-g} Let $\dot{\mathsf{A}}$ be $m\nu\times m\nu$
companion matrix defined by equations (\ref{eq:yxt1d}), $\zeta_{1},\ldots,\zeta_{p}$
be its distinct eigenvalues, and $M_{\zeta_{1}},\ldots,M_{\zeta_{p}}$
be the corresponding generalized eigenspaces of the corresponding
dimensions $r\left(\zeta_{j}\right)$, $1\leq j\leq p$. Then the
$m\nu$ column-vector solution $Y\left(t\right)$ to differential
equation (\ref{eq:yxt1c}) is of the form
\begin{gather}
Y\left(t\right)=\exp\left\{ \dot{\mathsf{A}}t\right\} Y_{0}=\sum_{j=1}^{p}e^{\zeta_{j}t}p_{j}\left(t\right),\quad Y_{0}=\sum_{j=1}^{p}Y_{0,j},\quad Y_{0,j}\in M_{\zeta_{j}},\label{eq:yxt2d}
\end{gather}
where $m\nu$-column-vector polynomials $p_{j}\left(t\right)$ satisfy
\begin{gather}
p_{j}\left(t\right)=\sum_{k=0}^{r\left(\zeta_{j}\right)-1}\frac{t^{k}}{k!}\left(\dot{\mathsf{A}}-\zeta_{j}\mathbb{I}\right)^{k}Y_{0,j},\quad1\leq j\leq p.\label{eq:yxt2e}
\end{gather}
Consequently, the general $m$-column-vector solution $x\left(t\right)$
to differential equation (\ref{eq:Adtx1}) is of the form
\begin{gather}
x\left(t\right)=\sum_{j=1}^{p}e^{\zeta_{j}t}P_{1}p_{j}\left(t\right),\quad P_{1}=\left[\begin{array}{ccccc}
\mathbb{I} & 0 & \cdots & 0 & 0\end{array}\right].\label{eq:yxt2f}
\end{gather}
\end{prop}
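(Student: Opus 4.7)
The plan is to derive Proposition \ref{prop:dif-sol-g} as a direct consequence of Proposition \ref{prop:gen-eig} (generalized eigenspaces, stated in Appendix \ref{sec:jord-form}) combined with the linearization $x \mapsto Y$ already set up in equations (\ref{eq:yxt1b})--(\ref{eq:yxt1d}). The argument does not require any new machinery beyond what the appendix already provides; it just assembles the pieces.

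First I would note that the linearization is standard: equations (\ref{eq:yxt1b})--(\ref{eq:yxt1d}) show that $x(t)$ satisfies the $m$-vector $\nu$-th order equation $A(\partial_t)x(t)=0$ (with $A_\nu$ invertible) if and only if the companion vector $Y(t)$ defined by (\ref{eq:yxt1b}) satisfies the first-order $m\nu$-dimensional system $\partial_t Y = \dot{\mathsf{A}} Y$ with companion matrix $\dot{\mathsf{A}}$ given by (\ref{eq:yxt1d}). This equivalence is purely algebraic: each ``new'' coordinate of $Y$ is defined to be the next derivative of $x$, and the last block-row of $\dot{\mathsf{A}}$ encodes the original equation (\ref{eq:Adtx1}) after multiplying by $A_\nu^{-1}$.

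Second, since $\dot{\mathsf{A}}$ is a constant matrix, the unique solution of $\partial_t Y = \dot{\mathsf{A}} Y$ with initial value $Y(0)=Y_0\in\mathbb{C}^{m\nu}$ is $Y(t)=\exp\{\dot{\mathsf{A}} t\}Y_0$. Here I would invoke Proposition \ref{prop:gen-eig} with $A$ replaced by $\dot{\mathsf{A}}$: writing $\mathbb{C}^{m\nu}=M_{\zeta_1}\oplus\cdots\oplus M_{\zeta_p}$ in the generalized eigenspaces of $\dot{\mathsf{A}}$, any $Y_0$ decomposes uniquely as $Y_0=\sum_{j=1}^{p}Y_{0,j}$ with $Y_{0,j}\in M_{\zeta_j}$, and
\begin{equation}
\exp\{\dot{\mathsf{A}} t\}Y_0=\sum_{j=1}^{p}e^{\zeta_j t}\,p_j(t),\qquad p_j(t)=\sum_{k=0}^{r(\zeta_j)-1}\frac{t^k}{k!}\bigl(\dot{\mathsf{A}}-\zeta_j\mathbb{I}\bigr)^k Y_{0,j},
\end{equation}
by splitting $\exp\{\dot{\mathsf{A}} t\}=e^{\zeta_j t}\exp\{(\dot{\mathsf{A}}-\zeta_j\mathbb{I})t\}$ on each $\dot{\mathsf{A}}$-invariant summand $M_{\zeta_j}$ and using that $(\dot{\mathsf{A}}-\zeta_j\mathbb{I})^{r(\zeta_j)}$ vanishes on $M_{\zeta_j}$ by definition of $r(\zeta_j)$, which truncates the exponential series to the stated polynomial. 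This establishes (\ref{eq:yxt2d})--(\ref{eq:yxt2e}).

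Finally, to obtain (\ref{eq:yxt2f}), I would apply the projection $P_1$ from (\ref{eq:yxt2b}), which simply extracts the first $m$-block of $Y(t)$, namely $x(t)$. Linearity of $P_1$ gives $x(t)=P_1 Y(t)=\sum_{j=1}^{p}e^{\zeta_j t}P_1 p_j(t)$, as claimed. There is no substantive obstacle here; the only point that merits care is verifying that the truncation index $r(\zeta_j)$ in the polynomial $p_j(t)$ is correct, which follows from the defining property $M_{\zeta_j}=\mathcal{N}\bigl[(\dot{\mathsf{A}}-\zeta_j\mathbb{I})^{r(\zeta_j)}\bigr]$ used in Appendix \ref{sec:jord-form}. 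Everything else is bookkeeping that rides on Proposition \ref{prop:gen-eig} and the linearization identity.
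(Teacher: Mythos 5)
Your proof is correct and follows essentially the same route as the paper, which derives the proposition directly from the exponential solution formula (\ref{eq:yxt2c}), Proposition \ref{prop:gen-eig} applied to $\dot{\mathsf{A}}$, and the projection identity (\ref{eq:yxt2b}). Your explicit justification of the truncation of the exponential series on each generalized eigenspace is a welcome elaboration of a step the paper leaves implicit, but it is not a different argument.
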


Note that $\chi_{\dot{\mathsf{A}}}\left(s\right)=\det\left\{ s\mathbb{I}-\dot{\mathsf{A}}\right\} $
is the characteristic function of the matrix $\dot{\mathsf{A}}$.
Then, using notations of Proposition \ref{prop:dif-sol-g}, we obtain
\begin{equation}
\chi_{\dot{\mathsf{A}}}\left(s\right)=\prod_{j=1}^{p}\left(s-\zeta_{j}\right)^{r\left(\zeta_{j}\right)}.\label{eq:yxt3a}
\end{equation}
Note also that for any values of complex-valued coefficients $b_{k}$
we have
\begin{gather}
\left(\partial_{t}-\zeta_{j}\right)^{r\left(\zeta_{j}\right)}\left[e^{\zeta_{j}t}p_{j}\left(t\right)\right]=0,\quad p_{j}\left(t\right)=\sum_{k=0}^{r\left(\zeta_{j}\right)-1}b_{k}t^{k},\label{eq:yxt3b}
\end{gather}
implying together with representation (\ref{eq:yxt3a}) that
\begin{gather}
\chi_{\dot{\mathsf{A}}}\left(\partial_{t}\right)\left[e^{\zeta_{j}t}p_{j}\left(t\right)\right]=0,\quad p_{j}\left(t\right)=\sum_{k=0}^{r\left(\zeta_{j}\right)-1}b_{k}t^{k}.\label{eq:yxt3c}
\end{gather}
Combining now Proposition \ref{prop:dif-sol-g} with equation (\ref{eq:yxt3c}),
we obtain the following statement.
\begin{cor}[property of a solution to the vector differential equation]
\label{cor:dif-sol-g} Let $x\left(t\right)$ be the general $m$-column-vector
solution $x\left(t\right)$ to differential equation (\ref{eq:Adtx1}).
Then $x\left(t\right)$ satisfies
\begin{equation}
\chi_{\dot{\mathsf{A}}}\left(\partial_{t}\right)x\left(t\right)=0.\label{eq:yxt3d}
\end{equation}
\end{cor}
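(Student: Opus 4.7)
The plan is to combine the explicit representation of the general solution from Proposition~\ref{prop:dif-sol-g} with the scalar annihilation identity (\ref{eq:yxt3c}). First I would invoke formula (\ref{eq:yxt2f}) to write the general solution in the form $x(t)=\sum_{j=1}^{p}e^{\zeta_{j}t}P_{1}p_{j}(t)$, where each $p_{j}(t)$ is an $m\nu$-vector polynomial in $t$ of degree at most $r(\zeta_{j})-1$ with coefficients in $\mathbb{C}^{m\nu}$, and $P_{1}$ is the constant $m\times m\nu$ selection matrix defined in (\ref{eq:yxt2b}).

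Next, since $\chi_{\dot{\mathsf{A}}}(\partial_{t})$ is a scalar differential operator with constant complex coefficients, it commutes with the constant matrix $P_{1}$ and acts componentwise on any vector-valued function. Therefore $\chi_{\dot{\mathsf{A}}}(\partial_{t})x(t)=\sum_{j=1}^{p}P_{1}\,\chi_{\dot{\mathsf{A}}}(\partial_{t})\bigl[e^{\zeta_{j}t}p_{j}(t)\bigr]$, so it suffices to show that every summand on the right vanishes.

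To dispatch the $j$-th summand I would use the factorization (\ref{eq:yxt3a}), namely $\chi_{\dot{\mathsf{A}}}(s)=\prod_{i=1}^{p}(s-\zeta_{i})^{r(\zeta_{i})}$, together with the fact that constant-coefficient scalar factors $(\partial_{t}-\zeta_{i})^{r(\zeta_{i})}$ commute pairwise. This allows me to rearrange the product so that the single factor $(\partial_{t}-\zeta_{j})^{r(\zeta_{j})}$ is applied first, after which the identity (\ref{eq:yxt3b}) applied componentwise to the entries of $p_{j}(t)$ yields $(\partial_{t}-\zeta_{j})^{r(\zeta_{j})}\bigl[e^{\zeta_{j}t}p_{j}(t)\bigr]=0$. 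Summing over $j$ then produces (\ref{eq:yxt3d}).

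The main (and essentially only) obstacle is clerical bookkeeping: one must verify that the commutation of the scalar operator $\chi_{\dot{\mathsf{A}}}(\partial_{t})$ with the constant projection $P_{1}$, and the commutation of the polynomial factors $(\partial_{t}-\zeta_{i})^{r(\zeta_{i})}$ among themselves, are legitimate on polynomial-exponential data, and that the scalar annihilation identity (\ref{eq:yxt3b}) transfers componentwise to vector polynomials. No deeper estimates or new ingredients are required; the corollary is a direct consequence of the spectral decomposition of $\exp\{\dot{\mathsf{A}}t\}$ already recorded in Proposition~\ref{prop:dif-sol-g}.
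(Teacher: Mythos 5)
Your proposal is correct and follows essentially the same route as the paper: the paper also combines the representation $x(t)=\sum_{j}e^{\zeta_{j}t}P_{1}p_{j}(t)$ from Proposition \ref{prop:dif-sol-g} with the factorization (\ref{eq:yxt3a}) and the annihilation identity (\ref{eq:yxt3b}) to conclude (\ref{eq:yxt3c}) and hence (\ref{eq:yxt3d}). The only difference is that you make explicit the routine commutation of the scalar constant-coefficient operator with the constant matrix $P_{1}$ and among the factors $(\partial_{t}-\zeta_{i})^{r(\zeta_{i})}$, which the paper leaves implicit.
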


\section{Floquet theory\label{sec:floquet}}

We provide here a concise review of the Floquet theory following \cite[III]{DalKre},
\cite[III.7]{Hale} and \cite[II.2]{YakSta}. The primary subject
of the Floquet theory is the general form of solutions to the ordinary
differential equations with periodic coefficients. With that in mind
suppose that: (i) $z$ is real valued variable, (ii) $x\left(z\right)$
is an $n$-vector valued function of $z$, (iii) $A\left(z\right)$
is an $n\times n$ matrix-valued $\varsigma$-periodic function of
$z$, and consider the following homogeneous linear periodic system:
\begin{equation}
\partial_{z}x\left(z\right)=A\left(z\right)x\left(z\right),\quad A\left(z+\varsigma\right)=A\left(z\right),\quad\varsigma>0.\label{eq:floq1a}
\end{equation}
We would like to give a complete characterization of the general structure
of the solutions to equation (\ref{eq:floq1a}). We start with the
following statement showing how to define the logarithm $B$ of a
matrix $C$ so that $C=\mathrm{e}^{B}$.
\begin{lem}[logarithm of a matrix]
\label{lem:log-mat} Let $C$ be an $n\times n$ matrix with $\det\left\{ C\right\} \neq0$.
Suppose that $C=Z^{-1}JZ$ where $J$ is the Jordan canonical form
of $C$ as described in Proposition \ref{prop:jor-can}. Then using
the block representation (\ref{eq:Jork1b}) for $J$, that is
\begin{equation}
J=J=\mathrm{diag}\,\left\{ J_{n_{1}}\left(\zeta_{1}\right),J_{n_{2}}\left(\zeta_{2}\right),\ldots,J_{n_{r}}\left(\zeta_{r}\right)\right\} ,\quad n_{1}+n_{2}+\cdots n_{q}=n,\label{eq:floq1b}
\end{equation}
we decompose $J$ into its diagonal and nilpotent components:
\begin{equation}
J=\mathrm{diag}\,\left\{ \lambda_{1}\mathbb{I}_{n_{1}},\lambda_{2}\mathbb{I}_{n_{2}},\ldots,\lambda_{q}\mathbb{I}_{n_{q}}\right\} +K\label{eq:floq1c}
\end{equation}
where
\begin{gather}
D=\mathrm{diag}\,\left\{ \lambda_{1}\mathbb{I}_{n_{1}},\lambda_{2}\mathbb{I}_{n_{2}},\ldots,\lambda_{q}\mathbb{I}_{n_{q}}\right\} ,\quad K=\mathrm{diag}\,\left\{ K_{n_{1}},K_{n_{2}},\ldots,K_{n_{q}}\right\} ,\label{eq:floq1d}\\
K_{n_{j}}=J_{n_{j}}\left(\lambda_{j}\right)-\lambda_{j}\mathbb{I}_{n_{j}},\quad1\leq j\leq q.\nonumber 
\end{gather}
Then let $\ln\left(\ast\right)$ be a branch of the logarithm and
let
\begin{equation}
H=\ln J=\mathrm{diag}\,\left\{ \ln\left(\lambda_{1}\right)\mathbb{I}_{n_{1}},\ln\left(\lambda_{2}\right)\mathbb{I}_{n_{2}},\ldots,\ln\left(\lambda_{q}\right)\mathbb{I}_{n_{q}}\right\} +S\label{eq:floq1e}
\end{equation}
where $\mathbb{I}_{n_{j}}$ are identity matrices of identified dimensions
and
\begin{equation}
S=\mathrm{diag}\,\left\{ S_{n_{1}},S_{n_{2}},\ldots,S_{n_{q}}\right\} ,\quad S_{n_{j}}=\sum_{m=1}^{n_{j}-1}\left(-1\right)^{m-1}\frac{1}{m\lambda_{j}^{m}}K_{n_{j}}^{m},\quad1\leq j\leq q.\label{eq:floq1g}
\end{equation}
Then
\begin{equation}
C=\mathrm{e}^{B},\quad B=\ln C=Z^{-1}HZ,\label{eq:floq1h}
\end{equation}
where matrix $H$ is defined by equation (\ref{eq:floq1e}).
\end{lem}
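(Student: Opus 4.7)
The plan is to reduce the claim, through similarity and block-diagonal structure, to the verification of a single identity on one Jordan block, namely $\exp(H_j)=J_{n_j}(\lambda_j)$, and then to recognize that identity as the formal power series identity $\exp(\ln(1+x))=1+x$ applied to a nilpotent element, where all infinite series collapse to finite sums.

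First, I would observe that $\exp$ is invariant under conjugation: $\exp(Z^{-1}HZ)=Z^{-1}\exp(H)Z$. Thus, since $C=Z^{-1}JZ$, it suffices to prove $\exp(H)=J$. Because $H$ and $J$ are both block diagonal with matching block decompositions, this reduces further to proving $\exp(H_j)=J_{n_j}(\lambda_j)$ for each $j$, where
\begin{equation*}
H_j=\ln(\lambda_j)\mathbb{I}_{n_j}+S_{n_j},\qquad S_{n_j}=\sum_{m=1}^{n_j-1}\frac{(-1)^{m-1}}{m\lambda_j^m}K_{n_j}^m,
\end{equation*}
and $J_{n_j}(\lambda_j)=\lambda_j\mathbb{I}_{n_j}+K_{n_j}=\lambda_j(\mathbb{I}_{n_j}+\lambda_j^{-1}K_{n_j})$. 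The hypothesis $\det C\neq 0$ enters exactly here, as it guarantees $\lambda_j\neq 0$ so that the series defining $S_{n_j}$ makes sense and a branch of $\ln(\lambda_j)$ can be chosen.

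Next, since $\ln(\lambda_j)\mathbb{I}_{n_j}$ is a scalar multiple of the identity, it commutes with $S_{n_j}$, so the exponential splits:
\begin{equation*}
\exp(H_j)=\exp(\ln(\lambda_j)\mathbb{I}_{n_j})\,\exp(S_{n_j})=\lambda_j\exp(S_{n_j}).
\end{equation*}
Therefore the proof reduces to the single identity $\exp(S_{n_j})=\mathbb{I}_{n_j}+\lambda_j^{-1}K_{n_j}$. Writing $N=\lambda_j^{-1}K_{n_j}$, one has $N^{n_j}=0$ by the nilpotency relation $K_{n_j}^{n_j}=0$ recorded in \eqref{eq:Jord1a}, so $S_{n_j}=\sum_{m=1}^{n_j-1}(-1)^{m-1}N^m/m$ is exactly the truncation of $\ln(1+N)$, and this truncation equals the full series because $N^{n_j}=N^{n_j+1}=\cdots=0$.

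The main (and essentially only) obstacle is establishing the scalar-to-nilpotent transfer principle: the formal power series identity $\exp(\ln(1+x))=1+x$ in $\mathbb{Q}[[x]]$ implies $\exp(\ln(\mathbb{I}+N))=\mathbb{I}+N$ for any nilpotent matrix $N$. I would handle this by expanding $\exp(S_{n_j})=\sum_{k\ge 0}S_{n_j}^k/k!$, which is a finite sum because every monomial contains a factor $N^r$ with $r\ge k$ and thus vanishes once $k\ge n_j$; then collecting terms by powers of $N$ and invoking the classical scalar identity coefficient-by-coefficient shows that the coefficient of $N^0$ is $1$, the coefficient of $N^1$ is $1$, and the coefficients of $N^r$ for $2\le r\le n_j-1$ all vanish. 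Combined with $N^{n_j}=0$, this yields $\exp(S_{n_j})=\mathbb{I}_{n_j}+N$, and multiplying by $\lambda_j$ gives $\exp(H_j)=J_{n_j}(\lambda_j)$, completing the reduction and hence the proof.
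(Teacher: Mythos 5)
Your proof is correct and follows essentially the same route the paper indicates: the paper does not spell out a formal proof but remarks immediately after the lemma that $S_{n_j}$ is the series for $\ln(1+s)$ applied to the nilpotent part $K_{n_j}$, terminating because $K_{n_j}^{m}=0$ for $m\geq n_j$, which is precisely the mechanism you make rigorous. Your reduction through conjugation invariance, block-diagonality, commutation of the scalar part, and the coefficient-by-coefficient transfer of the formal identity $\exp(\ln(1+x))=1+x$ to the nilpotent argument is a complete and correct elaboration of that sketch.
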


Note that matrix $S$ in equations (\ref{eq:floq1e}) and (\ref{eq:floq1g})
is associated with the nilpotent part of Jordan canonical form $J$.
The expression for $S_{n_{j}}$ originates in the series
\begin{equation}
\ln\left(1+s\right)=\sum_{m=1}^{\infty}\left(-1\right)^{m-1}\frac{1}{m}s^{m}=s-\frac{s^{2}}{2}+\frac{s^{3}}{3}+\cdots,\label{eq:floq2a}
\end{equation}
and it is a finite sum since $K_{n_{j}}$ is a nilpotent matrix such
that 
\begin{equation}
K_{n_{j}}^{m}=0,\quad m\geq n_{j},\quad1\leq j\leq q.\label{eq:floq2b}
\end{equation}

An $n\times n$ matrix $\Phi\left(z\right)$ is called \emph{matrizant
(matriciant)} of equation (\ref{eq:floq1a}) if it satisfies the following
equation:
\begin{equation}
\partial_{z}\Phi\left(z\right)=A\left(z\right)\Phi\left(z\right),\quad\Phi\left(0\right)=\mathbb{I},\quad A\left(z+\varsigma\right)=A\left(z\right),\quad\varsigma>0,\label{eq:floq2c}
\end{equation}
where $\mathbb{I}$ is the $n\times n$ identity matrix. Matrix $\Phi\left(z\right)$
is also called \emph{principal fundamental matrix} solution to equation
(\ref{eq:floq1a}). Evidently $x\left(z\right)=\Phi\left(z\right)x_{0}$
is the a solution to equation (\ref{eq:floq1a}) with the initial
condition $x\left(0\right)=x_{0}$. Using the fundamental solution
$\Phi\left(z\right)$ we can represent any matrix solution $\Psi\left(z\right)$
to equation (\ref{eq:floq1a}) based on its initial values as follows
\begin{equation}
\partial_{z}\Psi\left(z\right)=A\left(z\right)\Psi\left(z\right),\quad\Psi\left(z\right)=\Phi\left(z\right)\Psi\left(0\right).\label{eq:floq2d}
\end{equation}
In the case of $\varsigma$-periodic matrix function $A\left(z\right)$
the matrix function $\Psi\left(z\right)=\Phi\left(z+\varsigma\right)$
is evidently a solution to equation (\ref{eq:floq2d}) and consequently
\begin{equation}
\Phi\left(z+\varsigma\right)=\Phi\left(z\right)\Phi\left(\varsigma\right).\label{eq:floq2ca}
\end{equation}
It turns out that matrix $M_{\varsigma}=\Phi\left(\varsigma\right)$
called the \emph{monodromy matrix} is of particular importance for
the analysis of solutions to equation (\ref{eq:floq2c}) with $\varsigma$-periodic
matrix function $A\left(z\right)$. 

The monodromy matrix is integrated into the formulation of the main
statement of the Floquet theory describing the structure of solutions
to equation (\ref{eq:floq2d}) for $\varsigma$-periodic matrix function
$A\left(z\right)$.
\begin{thm}[Floquet]
\label{thm:floquet} Suppose that $A\left(z\right)$ is a $\varsigma$-periodic
continuous function of $z$. Let $\Phi\left(z\right)$ be the matrizant
of equation (\ref{eq:floq2c}) and let $M_{\varsigma}=\Phi\left(\varsigma\right)$
be the corresponding monodromy matrix. Using the statement of Lemma
\ref{lem:log-mat} we introduce matrix $\Gamma$ defined by
\begin{equation}
\Gamma=\frac{1}{\varsigma}\ln M_{\varsigma}=\frac{1}{\varsigma}\ln\Phi\left(\varsigma\right),\text{implying }M_{\varsigma}=\Phi\left(\varsigma\right)=\mathrm{e}^{\Gamma\varsigma}.\label{eq:floq2f}
\end{equation}
Then matrizant $\Phi\left(z\right)$ satisfies the following equation
called Floquet representation
\begin{equation}
\Phi\left(z\right)=P\left(z\right)\mathrm{e}^{\Gamma z},\quad P\left(z+\varsigma\right)=P\left(z\right),\quad P\left(0\right)=\mathbb{I},\label{eq:floq2g}
\end{equation}
where $P\left(z\right)$ is a differentiable $\varsigma$-periodic
matrix function of $z$. 
\end{thm}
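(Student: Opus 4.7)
The plan is to establish the Floquet representation by constructing $P(z)$ explicitly as $P(z) = \Phi(z)e^{-\Gamma z}$ and verifying the required properties (periodicity, differentiability, normalization). The key structural fact to exploit is the multiplicative property (\ref{eq:floq2ca}) of the matrizant, together with the existence of the matrix logarithm furnished by Lemma \ref{lem:log-mat}.

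First I would verify that the logarithm defining $\Gamma$ in (\ref{eq:floq2f}) actually exists, i.e. that $\det M_\varsigma \neq 0$. This follows from Liouville's formula (Jacobi's identity): since $\Phi(z)$ satisfies (\ref{eq:floq2c}) with $\Phi(0)=\mathbb{I}$, we have $\det\Phi(z) = \exp\bigl\{\int_0^z \operatorname{tr}A(z')\,\mathrm{d}z'\bigr\}$, which is nonzero for all $z$ and in particular for $z=\varsigma$. Hence Lemma \ref{lem:log-mat} applies to $M_\varsigma = \Phi(\varsigma)$ and produces a matrix $B = \ln M_\varsigma$ with $e^{B} = M_\varsigma$; setting $\Gamma = B/\varsigma$ gives $e^{\Gamma\varsigma} = M_\varsigma$ as required.

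Next I would establish the multiplicative relation (\ref{eq:floq2ca}) carefully, since this is the heart of the argument. Consider $\Psi(z) = \Phi(z+\varsigma)$. By the chain rule and periodicity of $A$,
\begin{equation*}
\partial_z \Psi(z) = \partial_z \Phi(z+\varsigma) = A(z+\varsigma)\Phi(z+\varsigma) = A(z)\Psi(z),
\end{equation*}
so $\Psi$ solves the same matrix ODE as $\Phi$. By the uniqueness theorem for linear ODEs, the matrix solution is determined by its value at $z=0$; thus $\Psi(z) = \Phi(z)\Psi(0) = \Phi(z)\Phi(\varsigma) = \Phi(z)M_\varsigma$, giving (\ref{eq:floq2ca}).

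Finally, I would define
\begin{equation*}
P(z) = \Phi(z)\,e^{-\Gamma z}
\end{equation*}
and verify the three required properties. Differentiability and continuity of $P$ are inherited from those of $\Phi$ and the entire function $e^{-\Gamma z}$. Normalization is immediate: $P(0) = \Phi(0)\,e^{0} = \mathbb{I}\cdot\mathbb{I} = \mathbb{I}$. For $\varsigma$-periodicity, using (\ref{eq:floq2ca}), the fact that $\Gamma$ commutes with itself (so $e^{-\Gamma(z+\varsigma)} = e^{-\Gamma\varsigma}e^{-\Gamma z}$), and $e^{\Gamma\varsigma} = M_\varsigma$, we compute
\begin{equation*}
P(z+\varsigma) = \Phi(z+\varsigma)\,e^{-\Gamma(z+\varsigma)} = \Phi(z)\,M_\varsigma\,e^{-\Gamma\varsigma}\,e^{-\Gamma z} = \Phi(z)\,e^{\Gamma\varsigma}\,e^{-\Gamma\varsigma}\,e^{-\Gamma z} = \Phi(z)\,e^{-\Gamma z} = P(z).
\end{equation*}
Rearranging the definition of $P$ then yields the Floquet representation (\ref{eq:floq2g}).

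The only delicate point is the construction of a matrix logarithm for $M_\varsigma$, which is not unique and may force a complex-valued $\Gamma$ even when $A(z)$ is real; but this subtlety is entirely handled by Lemma \ref{lem:log-mat}, so the remaining steps above are essentially bookkeeping. The main conceptual obstacle, if any, is recognizing that the periodicity of $A$ transfers to $\Phi$ only in the multiplicative form (\ref{eq:floq2ca}) rather than as additive periodicity, which is precisely why the matrix exponential $e^{\Gamma z}$ absorbs the non-periodic growth and isolates the periodic factor $P(z)$.
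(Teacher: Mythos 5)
Your proposal is correct and follows essentially the same route as the paper: defining $P(z)=\Phi(z)\mathrm{e}^{-\Gamma z}$ and using the multiplicative relation $\Phi(z+\varsigma)=\Phi(z)\Phi(\varsigma)$ together with $\mathrm{e}^{\Gamma\varsigma}=M_{\varsigma}$ to verify periodicity and normalization. Your added checks (invertibility of $M_{\varsigma}$ via Liouville's formula and the uniqueness argument behind the multiplicative relation) are details the paper establishes in the surrounding text rather than inside the proof, but the argument is the same.
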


\begin{proof}
Let us define matrix $P\left(z\right)$ by the following equation
\begin{equation}
P\left(z\right)=\Phi\left(z\right)\mathrm{e}^{-\Gamma z}.\label{eq:floq3a}
\end{equation}
Then combining representation (\ref{eq:floq3a}) for $P\left(z\right)$
with equations (\ref{eq:floq2ca}) and (\ref{eq:floq2f}) we obtain
\begin{equation}
P\left(z+\varsigma\right)=\Phi\left(z+\varsigma\right)\mathrm{e}^{-\Gamma\left(z+\varsigma\right)}=\Phi\left(z\right)\Phi\left(\varsigma\right)\mathrm{e}^{-\Gamma\varsigma}\mathrm{e}^{-\Gamma z}=\Phi\left(z\right)\mathrm{e}^{-\Gamma z}=P\left(z\right),\label{eq:floq3b}
\end{equation}
that is $P\left(z\right)$ is a differentiable $\varsigma$-periodic
matrix function of $z$. Equality $P\left(0\right)=\mathbb{I}$ readily
follows from equation (\ref{eq:floq3a}) and equality $\varPhi\left(0\right)=\mathbb{I}$.
\end{proof}
The eigenvalues of the monodromy matrix $\Phi\left(\varsigma\right)=\mathrm{e}^{\Gamma\varsigma}$
are called\emph{ Floquet (characteristic) multipliers} and their logarithms
(not uniquely defined) are called\emph{ characteristic exponents.}
\begin{defn}[Floquet multipliers, characteristic exponents and eigenmodes]
\label{def:floqmul} Using notation of Theorem \ref{thm:floquet}
let us consider complex numbers $\kappa$, $s_{\kappa}$ and vector
$y_{\kappa}$ satisfying the following equations
\begin{equation}
\Gamma y_{\kappa}=\kappa y_{\kappa},\quad\Phi\left(\varsigma\right)y_{\kappa}=\mathrm{e}^{-\Gamma\varsigma}y_{\kappa}=s_{\kappa}y_{\kappa},\quad s_{\kappa}=\mathrm{e}^{\kappa\varsigma},\label{eq:Gamyk1a}
\end{equation}
where evidently $\kappa$ and $y_{\kappa}$ are respectively an eigenvalue
and the corresponding eigenvector of matrix $\Gamma$. We refer to
$\kappa$ and $s_{\kappa}$ respectively as the \emph{Floquet characteristic
exponent} and the \emph{Floquet (characteristic) multiplier}.

Using $\kappa$ and $y_{\kappa}$ defined above we introduce the following
special solution to the original differential equation (\ref{eq:floq1a}):
\begin{equation}
\psi_{\kappa}\left(z\right)=p_{\kappa}\left(z\right)\mathrm{e}^{\kappa z}=\Phi\left(z\right)y_{\kappa}=P\left(z\right)\mathrm{e}^{\Gamma z}y_{\kappa},\quad p_{\kappa}\left(z\right)=P\left(z\right)y_{\kappa},\label{eq:Gamyk1b}
\end{equation}
and we refer to it as the \emph{Floquet eigenmode}. Note that $p_{\kappa}\left(z\right)$
in equations (\ref{eq:Gamyk1b}) is a $\varsigma$-periodic vector-function
of $z$.
\end{defn}

\begin{rem}[Floquet eigenmodes]
 If $\psi_{\kappa}\left(z\right)$ is the Floquet eigenmode defined
by equations (\ref{eq:Gamyk1b}) and $\Re\left\{ \kappa\right\} >0$
or, equivalently, $\left|s_{\kappa}\right|>1$ then $\psi_{\kappa}\left(z\right)$
grows exponentially as $z\rightarrow+\infty$, and we refer to such
$\psi_{\kappa}\left(z\right)$ as \emph{exponentially growing Floquet
eigenmode}. In the case when $\Re\left\{ \kappa\right\} =0$ or equivalently
$\left|s_{\kappa}\right|=1$ function $\psi_{\kappa}\left(z\right)$
is bounded and we refer to such $\psi_{\kappa}\left(z\right)$ as
an \emph{oscillatory Floquet eigenmode}. 
\end{rem}

\begin{rem}[dispersion relations]
 \label{rem:disprel} In physical applications of the Floquet theory
$\varsigma$-periodic matrix valued function $A\left(z\right)$ in
differential equation (\ref{eq:floq1a}) depends on the frequency
$\omega$, that is $A\left(z\right)=A\left(z,\omega\right)$. In this
case we also have $\kappa=\kappa\left(\omega\right)$. If we naturally
introduce the wave number $k$ by
\begin{equation}
k=k\left(\omega\right)=-\mathrm{i}\kappa\left(\omega\right),\label{eq:Gamyk1c}
\end{equation}
then the relation between $\omega$ and $k$ provided by equation
(\ref{eq:Gamyk1c}) is called the \emph{dispersion relation}.
\end{rem}

\section{Hamiltonian system of linear differential equations\label{sec:Ham}}

We follow here to \cite[I.8, V.1]{DalKre} and \cite[III]{YakSta}.
We introduce first\emph{ indefinite scalar product} $\left\langle x,y\right\rangle $
on the vector space $\mathbb{C}^{n}$ associated with a nonsingular
Hermitian $n\times n$ matrix $G$, namely
\begin{equation}
\left\langle x,y\right\rangle =\overline{\left\langle y,x\right\rangle }=x^{*}Gy,\quad G^{*}=G,\quad\det\left\{ G\right\} \neq0,\quad x,y\in\mathbb{C}^{n}.\label{eq:Gindef1a}
\end{equation}
We refer to matrix the $G$ \emph{metric matrix}. We define, then,
for any $n\times n$ matrix $A$ another matrix $A^{\dagger}$ called
adjoint by using the following relations:
\begin{equation}
\left\langle Ax,y\right\rangle =\left\langle x,A^{\dagger}y\right\rangle \text{ or equivalently }A^{\dagger}=G^{-1}A^{*}G.\label{eq:Gindef1b}
\end{equation}
Note that relations (\ref{eq:Gindef1b}) readily imply that
\begin{equation}
\left(AB\right)^{\dagger}=B^{\dagger}A^{\dagger}.\label{eq:Gindef1c}
\end{equation}
\begin{table}
\centering{}%
\begin{tabular}{|c|c|c|}
\hline 
$G$-unitary & $G$-skew-Hermitian & $G$-Hermitian\tabularnewline
\hline 
\hline 
$\left\langle Ax,Ay\right\rangle =\left\langle x,y\right\rangle $ & $\left\langle Ax,y\right\rangle =-\left\langle x,Ay\right\rangle $ & $\left\langle Ax,y\right\rangle =\left\langle x,Ay\right\rangle $\tabularnewline
\hline 
$A^{\dagger}A=G^{-1}A^{*}GA=\mathbb{I},$ & $A^{\dagger}=G^{-1}A^{*}G=-A$, & $A^{\dagger}=G^{-1}A^{*}G=A,$\tabularnewline
\hline 
$A^{*}=GA^{-1}G^{-1}$ & $GA+A^{*}G=0$ & $GA-A^{*}G=0$\tabularnewline
\hline 
$A^{*}GA=G$ & $A=\mathrm{i}G^{-1}H,\;H=H^{*}$ & $A=G^{-1}H,\;H=H^{*}$\tabularnewline
\hline 
\end{tabular}\vspace{0.3cm}
\caption{\label{tab:Gunit}$G$-unitary, $G$-skew-Hermitian and $G$-Hermitian
matrices. }
\end{table}

Let $G$ and $H\left(t\right)$ be Hermitian $n\times n$ matrices
and suppose that matrix $G$ is nonsingular. We define the \emph{Hamiltonian
system of equations} to be a system of the form.
\begin{equation}
-\mathrm{i}G\partial_{t}x\left(t\right)=H\left(t\right)x\left(t\right),\quad H^{*}\left(t\right)=H\left(t\right).\label{eq:Gindefe1d}
\end{equation}
If based on matrices $G$ and $H\left(t\right)$ we introduce $G$-skew-Hermitian
matrix,
\begin{equation}
A\left(t\right)=\mathrm{i}G^{-1}H\left(t\right),\label{eq:Gindefe1da}
\end{equation}
we can recast the Hamiltonian system (\ref{eq:Gindefe1d}) in the
following equivalent form:
\begin{equation}
\partial_{t}x\left(t\right)=A\left(t\right)x\left(t\right),\quad A^{\dagger}\left(t\right)=-A\left(t\right).\label{eq:Gindefe1e}
\end{equation}
It turns out that the matrizant $\Phi\left(t\right)$ of equation
(\ref{eq:Gindefe1e}) with $G$-skew-Hermitian matrix $A\left(t\right)$
is a $G$-unitary matrix for each value of $t$. Indeed, using equation
(\ref{eq:Gindefe1e}) together with equations (\ref{eq:Gindef1b})
and (\ref{eq:Gindef1c}), we obtain
\begin{gather}
\partial_{t}\left[\Phi^{\dagger}\left(t\right)\Phi\left(t\right)\right]=\left\{ \partial_{t}\left[\Phi\left(t\right)\right]\right\} ^{\dagger}\Phi\left(t\right)+\Phi^{\dagger}\left(t\right)\partial_{t}\left[\Phi\left(t\right)\right]=\label{eq:Gindefe1f}\\
=-\Phi^{\dagger}\left(t\right)A\left(t\right)\Phi\left(t\right)+\Phi^{\dagger}\left(t\right)A\left(t\right)\Phi\left(t\right)=0,\nonumber 
\end{gather}
implying that matrizant $\Phi\left(t\right)$ satisfies
\begin{equation}
\Phi^{\dagger}\left(t\right)\Phi\left(t\right)=\mathbb{I},\text{or equivalently }\Phi^{*}\left(t\right)G\Phi\left(t\right)=G,\label{eq:Gindefe2a}
\end{equation}
implying that $\Phi\left(t\right)$ is a $G$-unitary matrix for each
value of $t$. Identity (\ref{eq:Gindefe2a}) implies in turn that
for any two solutions $x\left(t\right)$ and $y\left(t\right)$ to
the Hamiltonian system (\ref{eq:Gindefe1d}) we always have
\begin{equation}
\left\langle x\left(t\right),y\left(t\right)\right\rangle =x^{*}\left(t\right)Gy\left(t\right)=x^{*}\left(0\right)\Phi^{*}\left(t\right)G\Phi\left(t\right)y\left(0\right)=\left\langle x\left(0\right),y\left(0\right)\right\rangle ,\label{eq:Gindefe2aa}
\end{equation}
that is $\left\langle x\left(t\right),y\left(t\right)\right\rangle $
does not depend on $t$.

\subsection{Symmetry of the spectra}

$G$-unitary, $G$-skew-Hermitian and $G$-Hermitian matrices have
special properties described in Table \ref{tab:Gunit}. These properties
can viewed as symmetries, and not surprisingly, they imply consequent
symmetries of the spectra of the matrices. Let $\sigma$$\left\{ A\right\} $
denote the spectrum of matrix $A$. It is a straightforward exercise
to verify based on matrix properties described in Table \ref{tab:Gunit}
that the following statements hold.
\begin{thm}[spectral symmetries]
\label{thm:sp-sym} Suppose that matrix $A$ is either $G$-unitary
or $G$-skew-Hermitian or $G$-Hermitian. Then the following statements
hold:
\begin{enumerate}
\item If $A$ is $G$-unitary then $\sigma$$\left\{ A\right\} $ is symmetric
with respect to the unit circle, that is
\begin{equation}
\zeta\in\sigma\left\{ \Phi\right\} \Rightarrow\frac{1}{\bar{\zeta}}\in\sigma\left\{ \Phi\right\} .\label{eq:Gindef2b}
\end{equation}
\item If $A$ is $G$-skew-Hermitian then $\sigma$$\left\{ A\right\} $
is symmetric with respect the imaginary axis, that is
\begin{equation}
\zeta\in\sigma\left\{ \Phi\right\} \Rightarrow-\bar{\zeta}\in\sigma\left\{ \Phi\right\} .\label{eq:Gindef2ba}
\end{equation}
\item If $A$ is $G$-Hermitian then $\sigma$$\left\{ A\right\} $ is symmetric
with respect to real axis, that is
\begin{equation}
\zeta\in\sigma\left\{ \Phi\right\} \Rightarrow\bar{\zeta}\in\sigma\left\{ \Phi\right\} .\label{eq:Gindef2bb}
\end{equation}
\end{enumerate}
\end{thm}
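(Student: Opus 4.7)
The plan is to unify all three parts of Theorem \ref{thm:sp-sym} by combining two elementary facts and then reading off the symmetries from the algebraic identities collected in Table \ref{tab:Gunit}. The two facts I will lean on are: (i) for any square matrix $B$, one has $\sigma\{B^*\}=\overline{\sigma\{B\}}$, since $\det(s\mathbb{I}-B^*)=\overline{\det(\bar s\mathbb{I}-B)}$; and (ii) similar matrices share the same spectrum, i.e.\ $\sigma\{PCP^{-1}\}=\sigma\{C\}$ for any nonsingular $P$. Both are standard and independent of $G$.

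Next, I would extract the three similarity relations directly from the last two rows of Table \ref{tab:Gunit}, noting that each defining identity can be rewritten to express $A^{*}$ as a conjugate of something simple by $G$:
\begin{itemize}
\item $G$-Hermitian: $GA-A^{*}G=0 \Longrightarrow A^{*}=GAG^{-1}$, so $A^{*}\sim A$ and $\sigma\{A^{*}\}=\sigma\{A\}$.
\item $G$-skew-Hermitian: $GA+A^{*}G=0 \Longrightarrow A^{*}=-GAG^{-1}$, so $A^{*}\sim -A$ and $\sigma\{A^{*}\}=-\sigma\{A\}$.
\item $G$-unitary: $A^{*}GA=G \Longrightarrow A^{*}=GA^{-1}G^{-1}$, so $A^{*}\sim A^{-1}$ and $\sigma\{A^{*}\}=\{1/\zeta:\zeta\in\sigma\{A\}\}$.
\end{itemize}
In the $G$-unitary case I would first observe that $A$ must be invertible: taking determinants in $A^{*}GA=G$ gives $|\det A|^{2}\det G=\det G$, and since $G$ is nonsingular this yields $|\det A|=1$, in particular $\det A\neq 0$, so $0\notin\sigma\{A\}$ and $A^{-1}$ exists.

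Finally, I would equate the two expressions for $\sigma\{A^{*}\}$, namely $\overline{\sigma\{A\}}$ from fact (i) and the three similarity-based expressions above, obtaining in turn
\begin{equation*}
\overline{\sigma\{A\}}=\sigma\{A\},\qquad \overline{\sigma\{A\}}=-\sigma\{A\},\qquad \overline{\sigma\{A\}}=\{1/\zeta:\zeta\in\sigma\{A\}\},
\end{equation*}
which are respectively the symmetries with respect to the real axis, the imaginary axis, and the unit circle, i.e.\ the conclusions \eqref{eq:Gindef2bb}, \eqref{eq:Gindef2ba}, \eqref{eq:Gindef2b}. There is no substantive obstacle in this argument; the only subtlety worth flagging is the invertibility of $A$ in the $G$-unitary case, which I address above, and a notational remark that the indefinite scalar product $\langle\cdot,\cdot\rangle$ plays no explicit role in the spectral argument itself, since everything reduces to similarity and the elementary conjugation symmetry of characteristic polynomials.
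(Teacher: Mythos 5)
Your proof is correct and follows essentially the route the paper intends: the paper leaves Theorem \ref{thm:sp-sym} as a ``straightforward exercise'' based on the identities in Table \ref{tab:Gunit}, and your argument fills in exactly that exercise by reading those identities as similarity relations for $A^{*}$ and combining them with the elementary fact $\sigma\left\{ B^{*}\right\} =\overline{\sigma\left\{ B\right\} }$. The explicit invertibility check for $A$ in the $G$-unitary case is a worthwhile detail to record.
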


The following statement describes the $G$-orthogonality of invariant
subspaces of $G$-unitary, $G$-skew-Hermitian and $G$-Hermitian
matrices, \cite[1.8]{DalKre}.
\begin{thm}[eigenspaces]
\label{thm:G-eig} Suppose that matrix $A$ is either $G$-unitary
or $G$-skew-Hermitian or $G$-Hermitian. Then the following statements
hold. Let $\Lambda\subset\sigma\left\{ A\right\} $ be a subset of
the spectrum $\sigma\left\{ A\right\} $ of the matrix $A$, and let
$\tilde{\Lambda}$ be the relevant symmetric image of $\Lambda$ defined
by
\[
\tilde{\Lambda}=\left\{ \begin{array}{rrr}
\left\{ \frac{1}{\bar{\zeta}}:\zeta\in\Lambda\right\}  & \text{if} & A\text{ is \ensuremath{G}-unitary }\\
\left\{ -\bar{\zeta}:\zeta\in\Lambda\right\}  & \text{if} & A\text{ is \ensuremath{G}-skew-Hermitian }\\
\left\{ \bar{\zeta}:\zeta\in\Lambda\right\}  & \text{if} & \text{ is \ensuremath{G}-Hermitian }
\end{array}\right..
\]
Let $\Lambda_{1},\Lambda_{2}\subset\sigma\left\{ A\right\} $ be two
subsets of the spectrum $\sigma\left\{ A\right\} $ so that $\tilde{\Lambda}_{1}$
and $\Lambda_{2}$ are separated from each other by non-intersecting
contours $\tilde{\Gamma}_{1}$ and $\varGamma_{2}$. Then the invariant
subspaces $E_{1}$ and $E_{1}$ of the matrix $A$ corresponding to
$\Lambda_{1}$ and $\Lambda_{2}$ are $G$-orthogonal.
\end{thm}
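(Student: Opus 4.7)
The plan is to establish the $G$-orthogonality $E_{1}\perp_{G}E_{2}$ by passing to the Riesz spectral projectors associated with $\Lambda_{1},\Lambda_{2}\subset\sigma\{A\}$. With $\Gamma_{1},\Gamma_{2}$ closed contours enclosing $\Lambda_{1}$ and $\Lambda_{2}$ respectively (and no other eigenvalues of $A$), I would set
\[
P_{i}=\frac{1}{2\pi\mathrm{i}}\oint_{\Gamma_{i}}(zI-A)^{-1}\,dz,\qquad i=1,2.
\]
These commute with $A$ and satisfy $\mathrm{range}(P_{i})=E_{i}$. The assertion that $\langle x,y\rangle=0$ for all $x\in E_{1}$, $y\in E_{2}$ is equivalent to the single operator identity $P_{1}^{*}GP_{2}=0$, since every such $x,y$ can be written as $x=P_{1}u$, $y=P_{2}v$ and then $\langle x,y\rangle=u^{*}P_{1}^{*}GP_{2}v$.

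The central step is the \emph{reflection identity} $G^{-1}P_{1}^{*}G=\tilde{P}_{1}$, where $\tilde{P}_{1}$ is the Riesz projector of $A$ onto the symmetric image $\tilde{\Lambda}_{1}$. Starting from $P_{1}^{*}=-\frac{1}{2\pi\mathrm{i}}\oint_{\Gamma_{1}}(\bar{z}I-A^{*})^{-1}\,d\bar{z}$ and using the symmetry relations collected in Table~\ref{tab:Gunit}, one inserts the appropriate substitution for $A^{*}$. In the $G$-Hermitian case the identity $A^{*}=GAG^{-1}$ yields $(\bar{z}I-A^{*})^{-1}=G(\bar{z}I-A)^{-1}G^{-1}$, and the substitution $w=\bar{z}$ (which reverses orientation) transforms the remaining integral into a Riesz projector of $A$ onto $\bar{\Lambda}_{1}=\tilde{\Lambda}_{1}$. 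In the $G$-skew-Hermitian case $A^{*}=-GAG^{-1}$ gives $(\bar{z}I-A^{*})^{-1}=G(\bar{z}I+A)^{-1}G^{-1}$, and the further substitution $w=-\bar{z}$ leads to a contour enclosing $-\bar{\Lambda}_{1}=\tilde{\Lambda}_{1}$. In the $G$-unitary case $A^{*}=GA^{-1}G^{-1}$ produces, after $w=\bar{z}$, a Riesz projector of $A^{-1}$ onto $\bar{\Lambda}_{1}$, which by the general principle that an operator and its inverse share invariant subspaces (with eigenvalues related by $\zeta\mapsto 1/\zeta$) coincides with the Riesz projector of $A$ onto $\{1/\bar{\zeta}:\zeta\in\Lambda_{1}\}=\tilde{\Lambda}_{1}$. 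Made explicit via the substitution $s=1/w$ and the algebraic identity $(s^{-1}I-A^{-1})^{-1}=-sA(sI-A)^{-1}$, followed by $s^{-1}A(sI-A)^{-1}=(sI-A)^{-1}-s^{-1}I$, this works because the $s^{-1}I$ contribution integrates to zero: the transformed contour does not enclose the origin, since $0\notin\sigma\{A\}$ for a $G$-unitary $A$.

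Once the reflection identity $P_{1}^{*}G=G\tilde{P}_{1}$ is in hand, the proof concludes by invoking the standard orthogonality of Riesz projectors of the \emph{same} operator onto disjoint spectral subsets. By hypothesis, $\tilde{\Gamma}_{1}$ and $\Gamma_{2}$ are non-intersecting, so $\tilde{\Lambda}_{1}\cap\Lambda_{2}=\emptyset$; consequently $\tilde{P}_{1}P_{2}=0$, a fact that can be verified either by deforming the double contour integral $\frac{1}{(2\pi\mathrm{i})^{2}}\oint_{\tilde{\Gamma}_{1}}\oint_{\Gamma_{2}}(zI-A)^{-1}(wI-A)^{-1}\,dw\,dz$ using the resolvent identity, or by recognizing $\tilde{P}_{1}P_{2}$ as itself a Riesz projector of $A$ onto the empty spectral set. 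This gives $P_{1}^{*}GP_{2}=G\tilde{P}_{1}P_{2}=0$, which is what was required.

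The principal obstacle I anticipate is the $G$-unitary case, which demands the most delicate handling because the symmetric image involves a Möbius-type inversion rather than a plain conjugation or sign flip: one must correctly track the orientation of the contour under $s=1/w$, verify that the image contour encloses precisely the points of $\tilde{\Lambda}_{1}$ and no other spectrum of $A$, and ensure that the extra $s^{-1}I$ term contributes nothing. All three cases can ultimately be treated uniformly by observing that a Riesz projector is determined by its range (and kernel), which by the reasoning above is the $A$-invariant subspace attached to $\tilde{\Lambda}_{1}$; this viewpoint sidesteps the explicit contour gymnastics in favor of the uniqueness of the spectral projector. A routine extension by polynomial approximation on Jordan chains is not needed separately, since the Riesz projector formulation handles generalized eigenspaces automatically.
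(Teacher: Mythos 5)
The paper does not actually prove this theorem; it is stated as a quoted result with a citation to Daleckii--Krein, so there is no in-paper argument to compare against. Your Riesz-projector proof is correct and is essentially the standard argument from that source: the reflection identity $P_{1}^{*}=G\tilde{P}_{1}G^{-1}$ follows in each of the three cases from the relations in Table \ref{tab:Gunit} exactly as you describe, the delicate points in the $G$-unitary case (orientation under $s=1/w$, invertibility of $A$, and the vanishing of the $s^{-1}I$ contribution when $\Gamma_{1}$ is chosen not to enclose the origin) are all handled correctly, and the conclusion $P_{1}^{*}GP_{2}=G\tilde{P}_{1}P_{2}=0$ is exactly the claimed $G$-orthogonality of $E_{1}$ and $E_{2}$.
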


The statement below describes a special property of eigenvectors of
a $G$-unitary matrix.
\begin{lem}[isotropic eigenvector]
\label{lem:isoeig} Let $A$ be a $G$-unitary matrix and $\zeta$
be its eigenvalue that does not lie on the unit circuit, that is,
$\left|\zeta\right|\neq1$. Then if $x$ is the eigenvector corresponding
to $\zeta$ it is isotropic, that is
\begin{equation}
\left\langle x,x\right\rangle =x^{*}Gx=0,\quad Ax=\zeta x,\quad\left|\zeta\right|\neq1.\label{eq:Gindef2d}
\end{equation}
\end{lem}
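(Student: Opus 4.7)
The plan is to derive the isotropy of $x$ directly from the defining property of a $G$-unitary matrix, namely $\langle Au, Av\rangle = \langle u, v\rangle$ for all $u,v \in \mathbb{C}^n$, as recorded in Table \ref{tab:Gunit}. First I would specialize this identity to $u=v=x$, the eigenvector associated with $\zeta$.

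Next, I would substitute $Ax = \zeta x$ on the left-hand side and use the sesquilinearity of the indefinite scalar product $\langle \cdot,\cdot\rangle$ defined in \eqref{eq:Gindef1a}. Pulling the scalar out of each slot yields
\begin{equation}
\langle x, x\rangle \;=\; \langle Ax, Ax\rangle \;=\; \langle \zeta x, \zeta x\rangle \;=\; \bar{\zeta}\,\zeta\,\langle x, x\rangle \;=\; |\zeta|^{2}\,\langle x,x\rangle .
\end{equation}
Rearranging gives $(1-|\zeta|^{2})\langle x,x\rangle = 0$. Since by hypothesis $|\zeta|\neq 1$, the scalar factor $1-|\zeta|^{2}$ is nonzero, and therefore $\langle x, x\rangle = x^{*}Gx = 0$, which is exactly the claimed isotropy.

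There is no real obstacle here: the argument is a two-line computation once the $G$-unitarity identity from Table \ref{tab:Gunit} and the conjugate-linearity of $\langle\cdot,\cdot\rangle$ in the first slot are invoked. The only point worth emphasizing in the written proof is the justification for pulling $\bar\zeta$ out of the first argument (conjugate-linearity) versus $\zeta$ out of the second (linearity), since it is precisely this combination that produces the factor $|\zeta|^{2}$ rather than $\zeta^{2}$ and hence yields the dichotomy $|\zeta|=1$ or $\langle x,x\rangle = 0$.
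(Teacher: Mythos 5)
Your argument is correct and is essentially identical to the paper's own proof: both apply the $G$-unitarity identity $\langle Ax,Ax\rangle=\langle x,x\rangle$ to the eigenvector, use sesquilinearity to extract the factor $|\zeta|^{2}$, and conclude from $|\zeta|\neq 1$ that $\langle x,x\rangle=0$. No gaps.
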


\begin{proof}
Since $Ax=\zeta x$ and $A$ is a $G$-unitary, we have

\[
\left\langle Ax,Ax\right\rangle =\left\langle \zeta x,\zeta x\right\rangle =\left|\zeta\right|^{2}\left\langle x,x\right\rangle ,\quad\left\langle Ax,Ax\right\rangle =\left\langle x,x\right\rangle .
\]
Combining the two equation above with $\left|\zeta\right|\neq1$ we
conclude that $\left\langle x,x\right\rangle =0$ which is the desired
equation (\ref{eq:Gindef2b}).
\end{proof}

\subsection{Special Hamiltonian systems\label{subsec:speHam}}

With equation (\ref{eq:Lagdim2e}) in mind we introduce the following
system
\begin{equation}
\partial_{t}x\left(t\right)=A\left(t\right)x\left(t\right),\label{eq:spHer1a}
\end{equation}
where 4$\times4$ matrix function $A\left(z\right)$ is of the following
form special form
\begin{equation}
A\left(z\right)=\left[\begin{array}{rrrr}
0 & 0 & 1 & 0\\
0 & 0 & 0 & 1\\
a_{1}\left(z\right) & a\left(z\right)a_{2} & \mathrm{i}c_{1} & 0\\
a\left(z\right)a_{3} & a_{4}\left(z\right) & 0 & \mathrm{i}c_{2}
\end{array}\right],\quad a_{1}\left(z\right),a_{2},a_{3},a_{4}\left(z\right),c_{1},c_{2},a\left(z\right)\in\mathbb{R}.\label{eq:spHer1b}
\end{equation}
The system (\ref{eq:spHer1a}), (\ref{eq:spHer1a}) is Hamiltonian
if we select Hermitian matrix $G$ to be
\begin{equation}
G=\left[\begin{array}{rrrr}
c_{1} & 0 & \mathrm{i} & 0\\
0 & \frac{c_{2}a_{2}}{a_{3}}0 & 0 & \mathrm{i}\frac{a_{2}}{a_{3}}\\
-\mathrm{i} & 0 & 0 & 0\\
0 & -\mathrm{i}\frac{a_{2}}{a_{3}} & 0 & 0
\end{array}\right],\quad G^{-1}=\left[\begin{array}{rrrr}
0 & 0 & \mathrm{i} & 0\\
0 & 0 & 0 & \mathrm{i}\frac{a_{3}}{a_{2}}\\
-\mathrm{i} & 0 & -c_{1} & 0\\
0 & -\mathrm{i}\frac{a_{3}}{a_{2}} & 0 & -\frac{c_{2}a_{3}}{a_{2}}
\end{array}\right],\quad\det\left\{ G\right\} =\frac{a_{2}^{2}}{a_{3}^{2}}.\label{eq:spHer1c}
\end{equation}
Indeed, it is an elementary exercise to verify that for each value
of $z$ matrix $A\left(z\right)$ is $G$-skew-Hermitian, that is
\begin{equation}
GA\left(z\right)+A^{*}\left(z\right)G=0.\label{eq:spHer1d}
\end{equation}

\section{Folded waveguide TWT dispersion relations}

Using a number of approximations the authors of \cite{GanArm} arrive
at the following expression of the dispersion relation similar to
that of the Pierce theory
\begin{equation}
\delta k^{\prime}\left(\Delta-\delta k^{\prime}\right)^{2}\left(1+\frac{\delta k^{\prime}}{2}\right)-C^{3}=0,\quad k^{\prime}=\frac{k}{k_{\mathrm{c}}},\label{eq:fowgd1a}
\end{equation}
where (i) $\omega_{\mathrm{c}}$ is the cut-off frequency of $\mathrm{TE}_{10}$-mode;
(ii) $k_{\mathrm{c}}=\frac{\omega_{\mathrm{c}}}{c}$, $c$ is the
velocity of light; (iii) $C$ are respectively $\Delta$ are the coupling
(Pierce) parameter and the detuning parameter represented by explicit
formulas involving folded waveguide TWT parameters and frequency $\omega$;
(iv) $\delta k$ is defined by
\begin{equation}
k_{m}=k_{m}^{0}+\delta k,\quad k_{m}^{0}=\frac{2\pi m}{a}+\frac{a+h}{a}\sqrt{\frac{\omega^{2}-\omega_{\mathrm{c}}^{2}}{c^{2}}},\label{eq:fowgd1b}
\end{equation}
where $k_{m}^{0}$ is unperturbed propagation constant. In the case
$C^{3}\rightarrow0$ there four solutions to equation (...): $\delta k=0$
(unperturbed forward propagating wave), $2k_{0}^{0}$ (unperturbed
contra propagating wave), and $k=k_{\mathrm{c}}\Delta$ (degenerate
e-beam mode). If the interaction with the contra propagating wave
is neglected, $\left|\delta k\right|\ll2\left|k_{0}^{0}\right|$,
then we obtain from (\ref{eq:fowgd1a}) the following third-order
dispersion equation
\begin{equation}
\delta k^{\prime}{}^{3}-2\Delta\delta k^{\prime}{}^{2}+\Delta^{2}\delta k^{\prime}-C^{3}=0.\label{eq:fowgd1c}
\end{equation}

\section{Capacitance}

According to \cite[I.1]{GreEM}, \cite[3.5.2]{Zahn} the following
formulas hold for capacitance of capacitors of different geometries
in Gaussian system of units.

Capacitance for the parallel-plate capacitor consisting of two parallel
plates of area $A$ that are separated by distance $d$ is
\begin{equation}
C=\frac{A}{4\pi d}.\label{eq:capaci1a}
\end{equation}

Capacitance for the spherical capacitor consisting of two concentric
spherical shells of radii $r_{1}<r_{2}$ is
\begin{equation}
C=\frac{r_{1}r_{2}}{r_{2}-r_{1}};\quad C\cong\frac{r_{1}^{2}}{2\left(r_{2}-r_{1}\right)},\text{ if }0<\frac{r_{2}-r_{1}}{r_{1}}\ll1.\label{eq:capaci1b}
\end{equation}

Capacitance of the cylindrical capacitor consisting of two coaxial
cylinders of radii $r_{1}<r_{2}$ and height $h$ is
\begin{equation}
C=\frac{h}{2\ln\left\{ \frac{r_{2}}{r_{1}}\right\} };\quad C\cong\frac{hr_{1}}{2\left(r_{2}-r_{1}\right)},\text{ if }0<\frac{r_{2}-r_{1}}{r_{1}}\ll1.\label{eq:capaci1c}
\end{equation}

Capacitances for a number of different geometric shapes are available
in \cite[II.3]{Landk} including capacitance of the disk of radius
$r$:
\begin{equation}
C=\frac{2r}{\pi^{2}}.\label{eq:capaci1d}
\end{equation}

Since often the data is available in $\mathrm{SI}$ system of units
rather than in Gaussian it is useful to know that the capacitances
in these two systems are related as follows, \cite[App. on units, 4]{Jack}
\begin{equation}
C_{\mathrm{SI}}=4\pi\varepsilon_{0}C_{\mathrm{Gaussian}},\quad\varepsilon_{0}=8.854187813\cdot10^{-12}\,\mathrm{F/m}.\label{eq:capaci1e}
\end{equation}

\textbf{\vspace{0.1cm}
}

\textbf{DATA AVAILABILITY:} The data that supports the findings of
this study are available within the article.


\begin{thebibliography}{BenSweScha}
\bibitem[ArnODE]{ArnODE} Arnold V., \textsl{Ordinary Differential
Equations}, 3rd ed., Springer, 1992.

\bibitem[ArnMech]{ArnMech} Arnold V., \textsl{Mathematical Methods
of Classical Mechanics,} Springer, (1989).

\bibitem[ArfWeb]{ArfWeb} Arfken G. and Weber H., \emph{Mathematical
Methods for Physicists - A Comprehensive Guide}, 7th edn., Academic
Press, 2013.

\bibitem[Baum]{Baum} Baumgartel H., \textsl{Analytic Perturbation
Theory for Matrices and Operators}, Birkhauser, 1985.

\bibitem[BenSweScha]{BenSweScha} Benford J., Swegle A. and Schamiloglu
E., \textsl{High Power Microwaves}, 3rd ed., CRC Press, 2016.

\bibitem[BernM]{BernM} Bernstein D., \textsl{Matrix Mathematics:
Theory, Facts, and Formulas}, 2 edn., Princeton University Press,
2009.

\bibitem[BraMih]{BraMih} Branch G. and Mihran T., \textsl{Plasma-frequency
Reduction Factors in Electron Beams}, IRE Trans.-Electron Devices,
April, 3-11, 1955.

\bibitem[Caryo]{Caryo} Caryotakis G., \textsl{High Power Klystrons:
Theory and Practice at the Stanford Linear Accelerator Center, Part
I}, SLAC-PUB 10620, 2005.

\bibitem[CheN]{CheN} Chen W. et. al., \textsl{Exceptional points
enhance sensing in an optical microcavity}, Nature, \textbf{548},
192-196, (2017).

\bibitem[ChoCra]{ChoCra} Chodorow M. and Craig R., \textsl{Some new
circuits for high power traveling wave tubes}, Proc. IRE ., Aug.,
1106-1118, (1957).

\bibitem[ChoWes]{ChoWes} Chodorow M. and Wessel-Berg T., \textsl{A
high-efficiency klystron with distributed interaction}, IRE Trans.
on Electron Devices, \textbf{8}(1): 44\textendash 55, (1961).

\bibitem[DalKre]{DalKre} Daleckii Ju. and Krein M., Stability of
solutions of differential equations in Banach space, AMS 1974.

\bibitem[FigKly]{FigKly} Figotin A., \textsl{Analytic theory of multicavity
klystrons,} J. Math. Phys.,\textbf{ 6}3(6), (2022).

\bibitem[FigTWTbk]{FigTWTbk} Figotin A., \textsl{An Analytic Theory
of Multi-stream Electron Beams in Traveling Wave Tubes}, World Scientific,
2020.

\bibitem[FigtwtEPD]{FigtwtEPD} Figotin A., \textsl{Exceptional points
of degeneracy in traveling wave tubes}, J. Math. Phys.,\textbf{ 62},
082701 (2021).

\bibitem[FigRey1]{FigRey1} Figotin A. and Reyes G., \textsl{Multi-transmission-line-beam
interactive system}, J. Math. Phys.,\textbf{ 54}, 111901, (2013).

\bibitem[FigRey2]{FigRey2} Figotin A. and Reyes G., \textsl{Lagrangian
variational framework for boundary value-problems}, J. Math. Phys.,\textbf{
56}, 093506, (2015).

\bibitem[Foll]{Foll} Folland G., \textsl{Fourier analysis and its
applications}, Wadsworth \& Brooks, 1992.

\bibitem[CanArm]{GanArm} Ganguly A., Choi J. and Armstrong C., \textsl{Linear
theory of slow wave cyclotron interaction in double-ridged folded
rectangular waveguide}, IEEE Trans. on Electronic Devices, ,\textbf{
42}, 2, 348-355, (1995).

\bibitem[GantM]{GantM} Gantmacher F., \textsl{Lectures in Analytical
Mechanics}, Mir, 1975.

\bibitem[GelFom]{GelFom} Gelfand I. and Fomin S., \textsl{Calculus
of Variations,} Dover Publications (2000).

\bibitem[GewWat]{GerWat} Gewartowski J. and Watson H., \textsl{Principles
of Electron Tubes}, Van Nostrand, 1965.

\bibitem[Gilm1]{Gilm1} Gilmour A., \textsl{Principles of Klystrons,
Traveling Wave Tubes, Magnetrons, Cross-Field Amplifiers, and Gyrotrons},
Artech House, 2011.

\bibitem[Gilm]{Gilm} Gilmour A.,\textsl{ Principles of Traveling
Wave Tubes}, Artech House, 1994.

\bibitem[GoldM]{GoldM} Goldstein H. et. al. \textsl{Classical\_Mechanics},
3rd edition, Addison Wesley, 2000.

\bibitem[GoLaRo]{GoLaRo} Gohberg I., Lancaster P., L. and Rodman
L., \textsl{Matrix Polynomials}, SIAM, 2009.

\bibitem[GoLaRo2]{GoLaRo2} Gohberg I., Lancaster P. and Rodman L.,
\textsl{Invariant Subspaces of Matrices with Applications}, SIAM,
2006.

\bibitem[Gran]{Gran} Granger R., \textsl{Fluid Mechanics}, Dover,
1995.

\bibitem[GraParArm]{GraParArm} Grandstein V. Parker R. Armstrong
C., Vacuum Electronics at the Dawn of the Twenty-First Century, Proc.
IEEE ., \textbf{87}, No.5, 702-716, (1999).

\bibitem[GreEM]{GreEM} Greiner W., \textsl{Classical Electrodynamics},
Springer, 1998.

\bibitem[Grigo]{Grigo} Grigoriev A. et.al., \textsl{Microwave Electronics},
Springer, 2018.

\bibitem[Jack]{Jack} Jackson J., \textsl{Classical Electrodynamics},
Wiley, 3rd edition, 1999.

\bibitem[Hale]{Hale} Hale J., \textsl{Ordinary Differential Equations},
2nd ed., Krieger Publishing Co., 1980.

\bibitem[HorJohn]{HorJohn} Horn R. and Johnson C., \textsl{Matrix
Analysis}, 2nd ed., Cambridge University Press, 2013.

\bibitem[Kato]{Kato} Kato T., \textsl{Perturbation theory for linear
operators}, Springer 1995.

\bibitem[KNAC]{KNAC} Kazemi H., Nada M., Mealy T., Abdelshafy A.
and Capolino F., \textsl{Exceptional Points of Degeneracy Induced
by Linear Time-Periodic Variation}, Phys. Rev. Applied, \textbf{11},
014007 (2019).

\bibitem[Landk]{Landk} Landkof N., \textsl{Foundations of Modern
Potential Theory}, S, 1972.

\bibitem[LanTsi]{LanTsi} Lancaster P. and Tismenetsky M., \textsl{The
Theory of Matrices}, 2nd ed., Academic Press, 1985.

\bibitem[Lamb]{Lamb} Lamb H. \textsl{Hydrodynamics}, 6th ed., Cambridge
University Press, 1975.

\bibitem[MAEAD]{MAEAD} Minenna D., Andre F., Elskens Y., Auboin J-F.,
Doveil F., \textsl{The Traveling-Wave Tube in the History of Telecommunication},
Eur. Phys. J., \textbf{44}(1), 1-36, 2019.

\bibitem[MeyCD]{MeyCD} Meyer C., \textsl{Matrix analysis and applied
linear algebra,} SIAM, 2010.

\bibitem[OGC]{OGC} Othman M., Galdi V. and Capolino F., \textsl{Exceptional
points of degeneracy and PT symmetry in photonic coupled chains of
scatterers}, Phys. Rev. B, \textbf{95}, 104305 (2017).

\bibitem[OTC]{OTC} Othman M., Tamma V., and Capolino F., Theory and
new amplification regime in periodic multimodal slow wave structures
with degeneracy interacting with an electron beam, IEEE Trans. Plasma
Sci., \textbf{44}, 594 (2016).

\bibitem[OVFC]{OVFC} Othman M, Veysi M., A. Figotin A. and Capolino
F., \textsl{Low starting electron beam current in degenerate band
edge oscillators}, IEEE Trans. Plasma Sci., \textbf{44}, 918 (2016).

\bibitem[OVFC1]{OVFC1} Othman M., Veysi M., Figotin A. and Capolino
F., \textsl{Giant amplification in degenerate band edge slow-wave
structures interacting with an electron beam}, Phys. Plasmas, \textbf{23},
033112 (2016).

\bibitem[Nusi]{Nusi} Barker R., Booske J., Luhmann N. and G. Nusinovich,
\textsl{Modern Microwave and Millimeter-Wave Power Electronics}, Wiley,
2005.

\bibitem[PierTWT]{PierTWT} Pierce J., \textsl{Traveling-Wave Tubes},
D. van Nostrand, 1950.

\bibitem[Pier51]{Pier51} Pierce J., \textsl{Waves in Electron Streams
and Circuits}, Bell Sys. Tech. J., \textbf{30}, 626-651, 1951.

\bibitem[Redd]{Redd} Reddy J., \textsl{An introduction to continuum
mechanics}, Cambridge University Press, 2008.

\bibitem[SchaB]{SchaB} Schachter L., \textsl{Beam-Wave Interaction
in Periodic and Quasi-Periodic Structures}, 2nd ed., Springer, 2011.

\bibitem[Solnt]{Solnt} Solntsev V., \textsl{Beam\textendash Wave
Interaction in the Passbands and Stopbands of Periodic Slow-Wave Systems},
IEEE Trans. on Plasma Sc., \textbf{4}3, No.7, 2114-2122, (2015).

\bibitem[Shev]{Shev} Shevchik V., \textsl{Fundamentals of Microwave
Electronics}, Pergamon Press, 1963.

\bibitem[Stap]{Stap} Staprans E. et.al., \textsl{High-power linear-beam
tubes}, Proc. IEEE, \textbf{61}, No.3, 299-330, (1973).

\bibitem[Tsim]{Tsim} Tsimring S., \textsl{Electron Beams and Microwave
Vacuum Electronics}, Wiley, 2007.

\bibitem[ValMid]{ValMid} Valkenburg M. Middleton W., eds., \textsl{Reference
Data for Engineers - Radio, Electronics, Computer}, and Communications,
9th ed., Newnes, 2002.

\bibitem[VOFC]{VOFC} Veysi M., Othman M., Figotin A. and Capolino
F., \textsl{Degenerate band edge laser}, Phys. Rev. B, \textbf{97},
195107 (2018).

\bibitem[Werne]{Werne} Warnecke R. et.al., \textsl{Velocity Modulated
Tubes}, a chapter in Morton L. et. al. eds. \textquotedbl Advances
in Electronics\textquotedbl , v. III, Academic Press, 1951.

\bibitem[Wie]{Wie} Wiersig J., \textsl{Enhancing the Sensitivity
of Frequency and Energy Splitting Detection by Using Exceptional Points
- Application to Microcavity Sensors for Single-Particle Detection},
Phys. Rev. Lett., \textbf{112}, 203901 (2014).

\bibitem[Wie1]{Wie1} Wiersig J., \textsl{Sensors operating at exceptional
points: General theory}, Phys. Rev. A, \textbf{93}, 033809 (2016).

\bibitem[YagCG]{YagCG} Yaglom I., \textsl{Complex Numbers in Geometry},
Academic Press, 1968.

\bibitem[YakStar]{YakSta} Yakubovich V. and Starzhinskij V., \textsl{Linear
Differential Equation with Periodic Coefficients}, Vol. 1, Wiley \&
Sons, 1975.

\bibitem[Zahn]{Zahn} Zahn M., \textsl{Electromagnetic Field theory
- A Problem-Solving Approach}, 2nd ed., Wiley, 1979.
\end{thebibliography}
\end{document}